\def\be{\begin{equation}} \def\ee{\end{equation}}
\def\ri{\mathrm{i}}
\def\Zh{\widehat{Z}}
\def\BR{\mathbb{R}}
\def\BN{\mathbb{N}}
\def\BC{\mathbb{C}}
\def\BZ{\mathbb{Z}}
\def\BE{\mathbb{E}}
\def\BH{\mathbb{H}}
\def\CF{\mathcal{F}}
\def\CG{\mathcal{G}}
\def\CS{\mathcal{S}}
\def\wv{\vec{\rho}}
\def\ss{ \tilde{\mathcal{S}} }
\def\m{m_\ast}
\newcommand{\CC}{\mathbb{C}}
\newcommand{\RR}{\mathbb{R}}
\newcommand{\QQ}{\mathbb{Q}}
\newcommand{\HH}{\mathbb{H}}
\newcommand{\ZZ}{\mathbb{Z}}
\newcommand{\NN}{\mathbb{N}}
\newcommand{\Zhlong}[3]{\ensuremath{\widehat{Z}^{#1}_{#2}\left(#3;\tau\right)}}
\newcommand{\Zhshort}[3]{\ensuremath{\widehat{Z}^{#1}_{#2}\left(#3\right)}}
\newcommand{\Zchlong}[3]{\ensuremath{\widecheck{Z}^{#1}_{#2}\left(#3;\tau,\bar{\tau}\right)}}
\def\a{\alpha}
\def\bon{\mathbf{n}}
\def\bos{\varrho}
\def\al{\alpha}
\def\bv{{\vec{\underline{b}}}}
\def\bal{\boldsymbol{\alpha}}
\def\bmu{\boldsymbol{\mu}}
\def\bb{\boldsymbol{\beta}}
\def\bol{\boldsymbol{\ell}}
 \def\bon{\boldsymbol{n}}
\def\brho{\boldsymbol{1}}
\def\boxx{\boldsymbol{x}}
\def\bosi{\bos}
\def\ex{{\rm e}}
\def\sgn{\mathrm{sgn}}
\def\tmod{\mathrm{mod}}
\def\id{\boldsymbol{1}}
\def\t{\tau}
\def\im{\operatorname{Im}}
\DeclareFontFamily{U}{mathx}{\hyphenchar\font45}
\DeclareFontShape{U}{mathx}{m}{n}{
	<5> <6> <7> <8> <9> <10>
	<10.95> <12> <14.4> <17.28> <20.74> <24.88>
	mathx10
}{}
\DeclareSymbolFont{mathx}{U}{mathx}{m}{n}
\DeclareMathAccent{\widecheck}{0}{mathx}{"71}
\newcommand{\pder}[3][\@nil]{%
	\def\tmp{#1}
	\ifx\tmp\@nnil
	\ensuremath{\dfrac{\partial#2}{\partial#3}}
	\else
	\ensuremath{\dfrac{\partial^{#1}#2}{\partial#3^{#1}}}
	\fi}
\newcommand{\der}[3][\@nil]{%
	\def\tmp{#1}
	\ifx\tmp\@nnil
	\ensuremath{\dfrac{d #2}{d #3}}
	\else
	\ensuremath{\dfrac{d ^{#1}#2}{d #3^{#1}}}
	\fi}
\newcommand\labelAndRemember[2]
\gdef\csname labeled:#1\endcsname{#2}%
\newcommand\recallLabel[1]
\endcsname\tag{\ref{#1}}}
\newtheorem{thm}{Theorem}[section]
\newtheorem{prop}[thm]{Proposition}
\newtheorem{lem}[thm]{Lemma}
\newtheorem{conj}[thm]{Conjecture}
\newtheorem*{conj*}{General Conjecture}
\theoremstyle{definition}
\newtheorem{defn}[thm]{Definition}
\numberwithin{equation}{section}
\begin{document}

\allowdisplaybreaks

\newcommand{\arXivNumber}{2304.03934}

\renewcommand{\PaperNumber}{018}

\FirstPageHeading

\ShortArticleName{Quantum Modular $\widehat Z^G$-Invariants}

\ArticleName{Quantum Modular $\boldsymbol{\widehat Z^G}$-Invariants}

\Author{Miranda C.N. CHENG~$^{\rm abc}$, Ioana COMAN~$^{\rm bd}$,
Davide PASSARO~$^{\rm b}$ and Gabriele SGROI~$^{\rm b}$}

\AuthorNameForHeading{M.C.N.~Cheng, I.~Coman, D.~Passaro and G.~Sgroi}

\Address{$^{\rm a)}$~Korteweg-de Vries Institute for Mathematics, University of Amsterdam,\\
\hphantom{$^{\rm a)}$}~Amsterdam, The Netherlands}

\Address{$^{\rm b)}$~Institute of Physics, University of Amsterdam, Amsterdam, The Netherlands}
\EmailD{\href{mailto:c.n.cheng@uva.nl}{c.n.cheng@uva.nl}, \href{mailto:d.passaro@uva.nl}{d.passaro@uva.nl}, \href{mailto:gabrielesgroi94@gmail.com}{gabrielesgroi94@gmail.com}}

\Address{$^{\rm c)}$~Institute for Mathematics, Academica Sinica, Taipei, Taiwan}
\Address{$^{\rm d)}$~Kavli Institute for the Physics and Mathematics of the Universe, University of Tokyo,\\
\hphantom{$^{\rm d)}$}~Kashiwa, Japan}
\EmailD{\href{mailto:ioana.coman@ipmu.jp}{ioana.coman@ipmu.jp}}

\ArticleDates{Received May 25, 2023, in final form February 07, 2024; Published online March 09, 2024}

\Abstract{We study the quantum modular properties of $\widehat Z^G$-invariants of closed three-manifolds. Higher depth quantum modular forms are expected to play a central role for general three-manifolds and gauge groups $G$. In particular, we conjecture that for plumbed three-manifolds whose plumbing graphs have $n$ junction nodes with definite signature and for rank $r$ gauge group $G$, that \smash{$\widehat Z^G$} is related to a quantum modular form of depth $nr$. We prove this for $G={\rm SU}(3)$ and for an infinite class of three-manifolds (weakly negative Seifert with three exceptional fibers). We also investigate the relation between the quantum modularity of \smash{$\widehat Z^G$}-invariants of the same three-manifold with different gauge group~$G$. We~conjecture~a~recursive relation among the iterated Eichler integrals relevant for \smash{$\widehat Z^G$} with~$G={\rm SU}(2)$ and~${\rm SU}(3)$, for negative Seifert manifolds with three exceptional fibers. This is reminiscent of the recursive structure among mock modular forms playing the role of Vafa--Witten invariants for ${\rm SU}(N)$. We prove the conjecture when the three-manifold is moreover an integral homological sphere.}

\Keywords{3-manifolds; quantum invariants; higher depth quantum modular forms; low-dimensional topology}

\Classification{57K31; 57K16; 11F37; 11F27}

\vspace{-2mm}

\section{Introduction and summary}
\label{sec:intro_sum}
\subsection{Introduction}

Modular forms play an indispensable role in modern mathematics and theoretical physics, as various topics in number theory, geometry, topology, conformal field theory, string theory, holography and more require their presence. Extending the theory of modular forms, the past two decades have seen the development of the theory of mock modular forms, starting with the seminal papers \cite{ BringmannOno,2002math12286H, ZagierBourbaki,zwegers2008mock}. In the meanwhile, mock modular forms have found applications in various areas in mathematics, including combinatorics, representation theory, topology, and mathematical physics. See \cite{Bringmann2017HarmonicMF} for a selection of these applications. In physics, mock modularity often captures the subtle adjustment of modularity properties when the appropriately defined space of quantum states of the physical system is non-compact, for instance due to the presence of a continuous part of the spectrum. See \cite{Dabholkar:2021lzt} and the references therein.

A bit more than a decade ago, the notion of modified modularity was further expanded into that of {\it quantum modular forms} \cite{zagier2010quantum}, which were initially introduced as a broad philosophy to include many very different types of functions. This expanded notion encompasses that of mock modularity (in the sense of \cite{10.2307/procamermathsoci.144.6.2337}). Another familiar family of functions that falls under this category as specific examples is the so-called false theta functions \cite{zagier2010quantum}, which are $q$-series of the form
\[\sum_{\vec n\in \vec\mu + \Lambda} \prod_{i}\operatorname{sgn}(\langle \vec n, \vec v_i\rangle) q^{|\vec n|^2/2}\]
for some choices of $\vec v_i$ and $\vec\mu$ for a given lattice $\Lambda$. Roughly speaking, quantum modular forms are functions whose modular anomalies, measured in terms of differences between the functions and their images under the appropriate modular group action evaluated near a rational number, behave better than the original functions themselves (cf.\ Definition \ref{dfn:qmf}). In particular, they are in general not modular since these differences (the ``cocycles'') do not vanish.\footnote{If the cocycles were to vanish, then the corresponding functions would be rather boring, only taking finitely many values due precisely to the behaviour under the action of the congruence subgroup.}
Nevertheless, we can think of them as having some sort of modular behaviour since these differences unexpectedly have nicer analytic properties than they do. In particular, the quantum modular forms we discuss in the context of this paper are the so-called {\it holomorphic quantum modular forms}~\cite{ZagierHausdorf}, which are holomorphic functions on the upper-half plane whose cocycles can be defined everywhere on ${\mathbb C}$ with a half-line removed. Some of the initial examples of quantum modular forms arose from quantum invariants of knots and from Chern--Simons (or Witten--Reshetikhin--Turaev, WRT) invariants of three-manifolds \cite{lawrence1999modular}, and we might wonder where else in physics or topology does this type of functions play a role.

Before going into the specific mathematical context of the present paper, from the role modular and mock modular forms have so far played in two-dimensional conformal field theories (CFTs) and in the holographic principle of quantum gravity, and from the defining properties of quantum modular forms, we expect quantum modular forms to have the following interpretations in theoretical physics.
 \begin{itemize}\itemsep=0pt
\item{A more subtle UV/IR connection in 2d CFT and vertex operator algebra. }
The symmetry $Z(\tau)=Z(-1/\tau)$ under $S$-transformation of the partition function of a 2d CFT can be viewed as a manifestation of the UV/IR connection of the theory. In this context, quantum modularity of a $q$-series \big(as usual we write $q:={\rm e}^{2\pi {\rm i} \tau}$\big) indicates a more subtle UV/IR connection: while the leading behaviour of the higher energy states is still determined by the low-temperature limit, there are additional sub-leading contributions arising due to the non-vanishing cocycle.
\item{Summing over geometries.} When a $q$-series gives rise to a quantum modular form, it is often the case that its Fourier coefficients can be expressed using the circle method, whose applicability only requires a reasonable or understandable transformation of the function of interest, generalizing the circle method to express the coefficients of modular forms.
The $q$-series in this case admits an expression in terms of an appropriately-regularised sum of images of a certain function under the action of the modular group. See, for instance, \cite{Bringmann:2021wuk, Bringmann:2019vyd, Cheng:2012qc} for discussions on false theta functions and numerous work, starting with \cite{10.2307/1997003},
on the Rademacher sums of mock modular forms.
This structure has played a role in the gravitational black hole interpretation of the BPS indices in the context of ${\rm AdS}_3/{\rm CFT}_2$ \cite{deBoer:2006vg, Denef:2007vg,Dijkgraaf:2000fq,Maldacena:1998bw} and ${\rm AdS}_2/{\rm CFT}_1$ \cite{Dabholkar:2014ema, Ferrari:2017msn,Murthy:2009dq}, where the summation over the modular images is interpreted as a sum over different saddle point geometries.
We expect that a similar gravitational interpretation also holds for the Rademacher sum expression of certain quantum modular $q$-series playing the role of BPS indices, as we will discuss further in Section~\ref{sec:discussions} in the context of 3d indices.
\end{itemize}
Next we come to the specific context in which we will discuss quantum modularity in the present paper. The main goal of this paper is to further investigate the quantum modular properties of certain $q$-series topological invariants of three-manifolds. These invariants, schematically denoted as $\widehat Z$-invariants and sometimes also referred to as the $q$-series invariant or the {\it homological blocks}, have been recently proposed in \cite{Gukov:2017kmk,Gukov:2016gkn} in terms of 3d-3d correspondence in M-theory.
This novel type of invariants has since been the subject of various recent studies in the literature. See \cite{Chauhan:2022cni,chengUpC,cheng20193d,
Chung:2018rea, Chung:2019jgw, Chung:2020efy, Chung:2022ypb,
Ekholm:2020lqy,2018arXiv181105376E, Ferrari:2020avq, Gukov:2020frk, Matsusaka:2021vbw, Murakami:2023oam,Park:2019xey, Park:2021ufu} for some examples.
In short, $\widehat Z_b(M_3)$ is physically defined as the half-index (also called vortex partition function) of the three-dimensional ${\cal N}=2$ supersymmetric quantum field theory $T[M_3]$ obtained by compactifying a~six-dimensional ${\cal N}=(2,0)$ superconformal field theory on the closed three-manifold $M_3$. Here~$b$ labels the specific choices of boundary conditions that leave some of the supersymmetries unbroken. That said,
as the exact content of $T[M_3]$ is in general still not known, this physical definition does not always lead to a method to explicitly compute the $\widehat Z_b(M_3)$ invariants in practice. On the other hand, a relation is conjectured between $\widehat Z_b(M_3)$ and the Chern--Simons invariant {CS$(M_3)$}
\cite{Gukov:2017kmk}. Specializing to the case $b_1(M_3)=0$, this relation reads
\begin{equation}
\label{ZhatfromCS-SU2-intro}
 \mathrm{CS}(M_3;k) = \frac{1}{{\rm i}\sqrt{2k}}
 \sum_{a,b }
 {\rm e}^{2\pi {\rm i} k {\rm CS}(a) }
 S_{ab}^{(A)} \widehat{Z}_b ( M_3 ; \tau)\lvert_{\tau \to \frac{1}{k}},
\end{equation}
where the sum can be thought of as over the connected components of the moduli spaces of Abelian flat connections up to Weyl group actions, or the inequivalent ${\rm Spin}^c$ structure and \smash{$S_{ab}^{(A)}$} is a concrete matrix whose form can be found in \cite{cheng20193d,gukov2019two,Gukov:2017kmk}. This relation suggests that \smash{$\widehat Z$}-invariants can be viewed as a function that extends, and categorifies via its BPS states counting interpretation, the WRT invariants. Using the above, the known expression for~$\mathrm{CS}(M_3;k)$, and inspired by the localization expressions for the half-indices of certain known theories, a~mathematical definition for \smash{$\Zhshort{G}{\bv}{M_3}$} invariants has been proposed for classes of three-manifolds~$M_3$~\cite{Gukov:2017kmk}, as well as knot complements in \cite{gukov2019two}. As the six-dimensional ${\cal N}=(2,0)$ superconformal field theories are labelled by an ADE gauge group $G$, we expect \smash{$\widehat Z^G_\bv(M_3)$} to be similarly defined for all ADE gauge groups $G$. Indeed, the mathematical definition for an arbitrary simply-laced gauge group $G$ is given in \cite{Park:2019xey}, generalizing the definition of \cite{Gukov:2017kmk} which corresponds to $G={\rm SU}(2)$. Concrete examples for $G={\rm SU}(N)$ and an infinite family of Seifert manifolds have been investigated in \cite{Chung:2018rea}.

For $G={\rm SU}(2)$, a relation between \smash{$\Zhshort{{\rm SU}(2)}{\bv}{M_3}$} and quantum modular forms, in particular false \cite{Bringmann_2020,bringmann2019quantum,cheng20193d} and mock theta functions \cite{cheng20193d, Cheng:2019uzc, Park:2021ufu} 
have been proposed.\footnote{See also Cheng M.C.N., Coman I., Kucharski P., Passaro D., Sgroi G., in preparation.} Generally, we have the following conjecture \cite{cheng20193d}, which can be traced all the way back to the relation between false and mock theta functions and WRT invariants of three-manifolds \cite{2005RCD10509H, HIKAMI_2005, 2004mathph9016H,lawrence1999modular}.
\begin{conj*}
\smash{$\widehat Z^G_\bv(M_3)$} is closely related to a quantum modular form of some kind for any closed three-manifold $M_3$, any ADE gauge group $G$, and any boundary condition label $\bv$.
\end{conj*}

Consider for instance $G={\rm SU}(2)$ and $M_3$ a Seifert manifold, while \smash{$\widehat Z^G_\bv(M_3)$} is a linear combination of quantum modular forms when $M_3$ is a Seifert manifold with three or four exceptional fibers, it is given more generally by linear combinations of derivatives of quantum modular forms when $M_3$ has at least five exceptional fibers \cite{bringmann2019quantum}. To expand our understanding of the above conjectural phenomenon, in this paper we study quantum modularity of \smash{$\widehat Z^G_\bv(M_3)$} for gauge groups~$G$ with rank larger than one, and $G={\rm SU}(3)$ in particular. In short, in these cases the type of quantum modular forms will be {\it higher depth} quantum modular forms. Analogous to higher depth mock modular forms\footnote{See also Zagier D., Zwegers S., {u}npublished.} (see also \cite{Alexandrov:2020dyy,Alexandrov:2020bwg,Alexandrov:2016enp,Alexandrov:2018lgp, bringmann2019vectorvalued, bringmann2016indefinite,Bringmann:2018cov, KumarGupta:2018rac, males2021higher,Manschot:2017xcr}), higher depth quantum modular forms can be defined recursively: the cocycles of a depth two quantum modular form are sums of depth one or zero quantum modular forms multiplied by analytic functions, and so on (cf.\ Definition \ref{dfn:higher_depth_qmf}). Before we go to the concrete results, let us mention a~manifestation of quantum modularity in this context.

As discussed in \cite{cheng20193d} in the context of $\widehat Z$ invariants for $G={\rm SU}(2)$, the transseries expression of the WRT invariant at the semi-classical regime can be understood as a consequence of the following two facts: (1)~the relation between WRT and $\widehat Z$ invariant \eqref{ZhatfromCS-SU2-intro}, and~(2)~the quantum modularity of the $\widehat Z$ invariant. Schematically, when the rank-one $\widehat Z$ is a component of a vector-valued quantum modular form (see Definition \ref{dfn:qmf}) $z=(z_{b'})$ with weight $w$ and $S$-matrix $S^{(q)}$, the above leads to
\begin{align}
 \mathrm{CS}(M_3; k)& ={\frac{1}{{\rm i}\sqrt{2k}}} \sum_{a }{\rm e}^{2\pi {\rm i} k {\rm CS}(a)}\sum_b S^{(A)}_{ab} \lim_{\tau\to {\frac{1}{ k}}} \Zhlong{{\rm SU}(2)}{\bv}{M_3} \nonumber \\
 &={\frac{1}{{\rm i}\sqrt{2k}}}
 \sum_{a } {\rm e}^{2\pi {\rm i} k {\rm CS}(a)}\sum_b S^{(A)}_{ab} \left( k^w \sum_{b'} S^{(q)}_{bb'} \lim_{\tau\to {-k}}z_{b'}(\tau) + r_b\left(\frac{1}{ k}\right)\right).\label{CS_transseries}
\end{align}
In the second line of the above equation, the first term inside the bracket arises from the $S$-transformation of $\widehat Z$, while $r_b\bigl(\tfrac{1}{ k}\bigr)$ is an asymptotic perturbative series in $\tfrac{1}{ k}$ capturing the non-vanishing cocycle. The above equality turns out to capture many intricate structures related to flat ${\rm SL}_2({\mathbb C})$ connections on $M_3$. Note first that the terms involving $z_{b'}(-k)$ are responsible for the contributions from the saddle points corresponding to {non-Abelian} flat connections. Moreover, since the summation over $a$ can be interpreted as a summation over the Abelian flat connections, it is clear from above that the transseries for semi-classical WRT invariants of this class of three-manifolds has the feature that only the saddle contributions from Abelian flat connections carry a factor given by a perturbative series, having the form \smash{${\rm e}^{-k{\rm CS}(a)} \frac{1}{ \sqrt{k}}R_a\bigl(\tfrac{1}{k}\bigr)$} where $R_a$ is again a~perturbative series. When the $\widehat Z$ invariant is a depth $N$ quantum modular form, one sees that the above structure gets generalized. Now there are up to $N$ ``classes'' of saddle points, with different complexity of the accompanying perturbative series. For instance, as before there will be no asymptotic series of $\tfrac{1}{k}$ multiplying the terms arising from the $S$-transformation of $\widehat Z_b$, and more generally there are saddle point contributions multiplied by products of $\ell$ perturbative series, for $0\leq \ell \leq N$. Again, we expect that the quantum modularity structure controls the intricate topological structure of the flat connections on the 3-manifolds, and we will return to this point in Section~\ref{sec:discussions}. Finally, in light of the proposed relation between vertex algebras and $\widehat Z$-invariants \cite{chengUpC,cheng20193d}, we expect it to be also fruitful to understand quantum modularity of $\widehat Z$-invariants in the context of vertex algebras.

\subsection{Summary of results}\label{subsec:sum}
In the remainder of this section we briefly discuss the main results of the paper.

\subsubsection*{Quantum modularity}
First, we make the following conjecture about the quantum modular properties of \smash{$\Zh^G_{\vec{\underline{b}}}(M_3)$} invariants for weakly-negative or weakly-positive plumbed three-manifolds $M_3$, where we let $G$ to be an arbitrary ADE gauge group and $\vec{\underline{b}}$ to be any allowed boundary condition as detailed in Section~\ref{sec:Zhat} (cf.\ \eqref{eq:bfromcoker}). As will be explained in Section~\ref{sec:Zhat}, a plumbed three-manifold can be defined in terms of its plumbing matrix $M$ encoding its plumbing graph. Following \cite{cheng20193d}, we~say a plumbed manifold is weakly-negative/positive when $M^{-1}$, when restricted to the subspace generated by the ``junction vertices'' (those with degree at least~3), is negative/positive-definite. A Seifert manifold can always be realized as a plumbed manifold with one junction vertex, and in this case we will simply refer to a Seifert manifold as a {\it negative/positive Seifert} manifold depending on the signature of the inverse plumbing matrix in the direction of the unique junction vertex.\looseness=-1

We first make the following conjecture, specializing and refining the {\it General Conjecture} outlined earlier.
\begin{conj}\label{conj_qmf}
Let $G$ be an ADE group with rank $r$. For $M_3$ a weakly-negative or weakly-positive plumbed manifold with $n$ junction vertices, the invariant \smash{$\Zhlong{G}{\bv}{M_3}$} is related to quantum modular forms of depth up to $r\times n$.
\end{conj}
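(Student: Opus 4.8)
Conjecture \ref{conj_qmf} is an existence/structure statement about the depth of quantum modular forms associated to $\Zhlong{G}{\bv}{M_3}$, so the natural strategy is not to prove it in full generality (the conjecture is only \emph{proven} later in the paper for $G={\rm SU}(3)$ and weakly-negative Seifert manifolds with three exceptional fibers), but to explain the mechanism that produces depth $r\times n$. The first step would be to write down the explicit $q$-series expression for $\Zhlong{G}{\bv}{M_3}$ coming from the plumbing-graph definition referenced in Section~\ref{sec:Zhat}. For a weakly-negative plumbed manifold this is (up to prefactors) a multi-dimensional false-theta-like sum, schematically $\sum_{\vec n} \bigl(\prod \operatorname{sgn}(\cdots)\bigr) q^{Q(\vec n)/2}$, where the lattice rank and the quadratic form $Q$ are governed by the plumbing matrix $M$ and the root lattice of $G$. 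I~would then isolate the role played by the $n$ junction vertices and the rank~$r$ of~$G$: each junction vertex contributes, via the $\operatorname{sgn}$-structure attached to its degree-$\geq 3$ node, one ``indefinite direction'' per simple root of~$G$, giving $r\times n$ such directions in total.

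\textbf{Key steps, in order.} Second, I would recast the $q$-series as an iterated Eichler integral. The standard device (following \cite{cheng20193d,bringmann2019quantum}) is to express each $\operatorname{sgn}$ factor as a limit of an error-function or an integral of a Jacobi-type theta kernel, so that the false-theta sum becomes an $(r\times n)$-fold iterated integral of vector-valued theta functions along geodesics ending at the cusp. Third, I would apply the recursive definition of higher depth quantum modular forms (Definition~\ref{dfn:higher_depth_qmf}): one computes the cocycle of the $\Zhlong{G}{\bv}{M_3}$ under $S$ and~$T$ and shows that, after the Eichler-integral rewriting, the cocycle is itself a sum of iterated integrals of one lower order, multiplied by real-analytic functions. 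Iterating this peeling-off argument $r\times n$ times terminates in genuine (depth-zero) modular objects, which by definition exhibits depth at most $r\times n$. The counting of how many iterations are needed is exactly the count of indefinite directions established in the first step, so the two halves of the argument must be made to match.

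\textbf{The main obstacle.} The hardest part will be controlling the \emph{multi-variable} indefinite theta structure when $n\geq 2$ and $r\geq 2$ simultaneously: one must verify that the relevant quadratic form restricted to the junction subspace has the right signature (guaranteed by the weak-negativity/positivity hypothesis on $M^{-1}$) so that the iterated Eichler integrals actually converge and their completions are real-analytic modular of the correct weight. A~subtler point is that the naive product $\prod_i \operatorname{sgn}(\langle \vec n,\vec v_i\rangle)$ over all edges need not decompose cleanly into independent indefinite directions; the non-junction (degree-$\leq 2$) nodes must be summed out first, and one must show this reduction leaves precisely $r\times n$ surviving sign constraints. Establishing that no cancellations or degenerations lower the depth below $r\times n$ in general (and that none of the peeling steps fails to produce a genuine lower-depth quantum modular form) is exactly why the statement is phrased as a conjecture and proven only in the rank-one junction ($n=1$), $r=2$ case in the body of the paper; the general combinatorial bookkeeping across arbitrary $n$ and $r$ is what I~expect to resist a complete proof.
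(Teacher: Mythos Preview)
The statement is a \emph{conjecture}, and the paper does not provide a proof of it; only the special case $r=2$, $n=1$ (Theorem~\ref{thm1_intro}) is proven. You correctly recognize this and frame your proposal as a heuristic mechanism rather than a complete argument, which is the right call.

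Your sketch closely mirrors the paper's own reasoning. The paper's plausibility argument (Section~\ref{sec:discussions}) is precisely your first step: perform the contour integration along all non-junction vertices, leaving a rank $N=r\times n$ lattice sum which, in the weakly-negative case, is a (derivative of a) rank-$N$ false-theta-like function. Your second and third steps---rewriting as iterated Eichler integrals and peeling off one depth at a time via the cocycle---are exactly the strategy the paper implements in detail for the $A_2$ case in Section~\ref{sec:gena2} (Propositions~\ref{prop:E01asymp} and~\ref{prop:Estarless_eichler}). Your identification of the main obstacle (controlling the multi-variable indefinite structure for $n\geq 2$, $r\geq 2$, and verifying that summing out non-junction nodes leaves exactly $r\times n$ sign constraints without degeneration) is also accurate and matches why the paper leaves the general statement conjectural.
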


We also prove the following special case, where $r=2$ and $n=1$.

\begin{thm}\label{thm1_intro}
 Let $G={\rm SU}(3)$. For a negative Seifert manifold $M_3$ with three exceptional fibers and for all the allowed ${\vec{\underline{b}}}$, the invariant \smash{$\Zhlong{G}{\bv}{M_3}$} is a sum of depth-one and depth-two quantum modular forms.
\end{thm}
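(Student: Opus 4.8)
The plan is to start from the explicit $q$-series expression for $\Zhlong{{\rm SU}(3)}{\bv}{M_3}$ for a negative Seifert manifold with three exceptional fibers, which takes the form of a theta-like sum over the weight lattice of $\mathfrak{su}(3)$ twisted by sign functions. Concretely, I would write the invariant as a linear combination of terms of the shape
\begin{equation}
 \sum_{\vec n \in \vec\mu + \Lambda} \prod_{i} \sgn\bigl(\langle \vec n, \vec v_i\rangle\bigr)\, q^{|\vec n|^2/2},
\end{equation}
where now $\vec n$ ranges over a rank-two lattice built from the $\mathfrak{su}(3)$ weight lattice and the Seifert data (the three exceptional fibers and the central charge), and the vectors $\vec v_i$ encode the plumbing. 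The first step is to make this expression completely explicit, isolating the signature-definite quadratic form coming from the weak-negativity hypothesis so that the $q$-series converges and defines a holomorphic function on the upper-half plane.

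Next I would carry out the standard completion procedure. Each factor $\sgn(\langle \vec n, \vec v_i\rangle)$ is replaced by (or completed with) a corresponding error-function term, turning the false-theta-like object into a piece of a non-holomorphic completion built from the generalized (two-variable) error functions $E_2$ that appear in the theory of indefinite theta functions of signature $(r,\cdot)$; since $r=2$ here there are two sign factors per junction and this is exactly what produces depth two. I would then identify the Eichler-integral representation: the relevant non-holomorphic completions are single and double iterated integrals of vector-valued modular forms of half-integral weight, so that the holomorphic $q$-series $\Zhlong{{\rm SU}(3)}{\bv}{M_3}$ equals a finite linear combination of such iterated Eichler integrals evaluated along the imaginary axis. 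The modular weight of each constituent is fixed by the rank of the lattice, namely weight $1/2$ building blocks assembling into the depth-one and depth-two structure.

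I would then verify the defining quantum-modular transformation property directly. Using the Definition \ref{dfn:higher_depth_qmf} of higher depth quantum modular forms, I would compute the cocycle $\phi_\gamma(x) = z(x) - (\text{weight factor})\, z(\gamma x)$ for generators $\gamma$ of the relevant congruence subgroup and show that for the depth-two pieces the cocycle is a sum of depth-one (or depth-zero) quantum modular forms times analytic functions, while for the depth-one pieces the cocycle is analytic on $\CC$ minus a half-line. The key computational input is the modular transformation of the completed non-holomorphic object, which is exact because it is a genuine (non-holomorphic) modular form of the appropriate weight; the cocycle then arises entirely from the difference between the holomorphic and completed functions, i.e.\ from the period integrals of the underlying weight-$3/2$ cusp-form-like integrands. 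This is where the two nested error-function integrations manifest as the depth filtration.

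The main obstacle I anticipate is controlling the Weyl-group sum and the sign-function combinatorics so that the sum over the $\mathfrak{su}(3)$ weight lattice organizes cleanly into indefinite theta functions of the correct signature, rather than a tangle of overlapping chambers. In particular, ensuring that the quadratic form restricted to the two-dimensional junction direction is genuinely negative-definite (so the completion converges and the Vign\'eras-type differential equation applies) requires the weak-negativity hypothesis, and matching the boundary-condition labels $\bv$ to admissible residue classes $\vec\mu$ requires care. A secondary subtlety is confirming that no higher-depth contamination occurs, i.e.\ that the depth is \emph{exactly} bounded by $r\times n = 2$ and the cocycles genuinely close at depth one; this amounts to checking that the double Eichler integral's cocycle does not itself generate new depth-two objects, which follows once the constituent weight-$1/2$ theta functions are shown to be (vector-valued) modular rather than merely quantum modular.
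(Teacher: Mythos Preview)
Your outline follows essentially the same strategy as the paper: express $\widehat Z^{{\rm SU}(3)}_{\bv}$ as a rank-two false-theta-like object, relate it to generalized error functions, identify the result with iterated Eichler integrals of theta functions, and invoke the known quantum modularity of such integrals. A few differences and one genuine subtlety are worth flagging.

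First, the starting point is not a pure product of sign functions. The paper's building block (the generalized $A_2$ false theta function, equation~\eqref{dfn:gen_false}) carries a factor $\min(n_1,n_2)$ coming from the Weyl denominator expansion, together with a sum over the Weyl group and a congruence condition depending on the divisor $D$. Rewriting this into partial theta functions $F_{\nu,\bal}$ with $\nu\in\{0,1\}$ (Lemma~\ref{lem:split_2_terms}) and then shifting the $\bal$'s into a fundamental domain at the cost of one-dimensional Eichler-integral pieces (Lemma~\ref{lem:shifting}) is where the Weyl combinatorics you anticipate actually gets done; it is not just chamber bookkeeping for sign functions.

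Second, the paper works with \emph{companions} (functions with matching asymptotic expansions at all rationals, Definition~\ref{def:companions}) rather than non-holomorphic modular completions. The companions $\BE_\nu^{*(\bos)}$ are built from $M_2^*$ and their asymptotics are established via Euler--Maclaurin (Proposition~\ref{prop:Fs12asymptotics} and Proposition~\ref{prop:E01asymp}), with several vanishing lemmas (Lemmas~\ref{lem:vanishingmainterm} and~\ref{lem:vanishingtermsWB-odd}) needed to kill divergent and odd-index terms. The paper explicitly notes that one can translate to the two-variable completion language of Bringmann--Nazaroglu, so your completion route is a legitimate alternative, but you would still need analogues of these cancellation lemmas to show the completed object is well defined.

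Third, the paper does not compute cocycles directly. Once the companion is shown to equal a double Eichler integral of products of one-dimensional theta functions (Proposition~\ref{prop:Estarless_eichler}, after removing the one-dimensional correction in Lemma~\ref{lem:Estar_vs_starless}), depth-two quantum modularity is read off from the general cocycle formula for $(f_1,f_2)^*$ in Section~\ref{subsec:QM}. Your plan to verify the cocycle by hand would work but is redundant once the Eichler-integral form is established.
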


In this paper, we prove the above statement by studying the so-called companion function of~\smash{$\Zh^G_{\vec{\underline{b}}}$}, denoted by \smash{$\widecheck Z^G_{\vec{\underline{b}}}$} which is defined as a function that has the same asymptotic expansion near $\tau\to \QQ$ up to a naive $\tau\mapsto -\tau$ transformation. See Definition \ref{def:companions}. We construct this in terms of iterated non-holomorphic Eichler integrals (cf.\ \eqref{dfn:Eich_period} and \eqref{dfn:double_eichler}), using a method similar to that of \cite{bringmann2018higher}, and we will refer to the companion function constructed in this specific way simply as the companion function. One can also translate our analysis into the language of two-variable completion \cite{Bringmann:2021dxg} instead of the companion in a relatively straightforward fashion. Specifically, in the language of \cite{Bringmann:2021dxg}, \smash{$\Zh^G_{\vec{\underline{b}}}$} is a sum of depth-one and depth-two false modular forms.

In Section~\ref{sec:discussions}, we briefly discuss the possible forms of quantum modularity for the cases beyond Conjecture \ref{conj_qmf}.

\subsubsection*{A recursive structure}

Next, consider changing the gauge group $G$ while fixing the three-manifold $M_3$ in \smash{$\widehat Z^G(M_3)$},
we ask the following question:
\begin{center}
\begin{tabular}{c}
\makebox[0.7\width]
{\it Question: Given a three-manifold $M_3$, are the quantum modular properties}\\ {\it of \smash{$\widehat Z^{{\rm SU}(N)}_{\vec{\underline{b}}}(M_3)$} for different $N$ related?}
\end{tabular}
\end{center}
To motivate this question, we find it illuminating to recall the following. Higher-depth quantum modular forms have been playing a prominent role in the study of the Vafa--Witten partition functions $Z^{G}_{\rm VW}(\tau; {M}_4)$ for twisted four-dimensional ${\cal N}=4$ super Yang--Mills on four-dimensional manifolds ${M}_4$. In more details, when $b_2^+(M_4)=1$, the invariant $Z^{G}_{\rm VW}(\tau; {M}_4)$ displays mock modular properties and the ``depth'' of the corresponding (mixed) modular forms is given in terms of the rank of the gauge group $G$. The mock modular properties in particular imply that there is a modular {\it completion} of $Z^{G}_{\rm VW}(\tau; {M}_4)$, denoted $\widetilde Z^{G}_{\rm VW}(\tau; {M}_4)$, which is non-holomorphic with a canonically defined holomorphic part equaling $ Z^{G}_{\rm VW}(\tau; {M}_4)$. While \smash{$\widetilde Z^{G}_{\rm VW}(\tau; {M}_4)$} transforms as a modular object, it has a non-trivial $\bar \tau$-dependence referred to as its holomorphic anomaly. In other words, the $\bar\tau$-dependence of the completion function $\widetilde Z^{G}_{\rm VW}( {M}_4)$ captures the mock modularity of the Vafa--Witten invariant $ Z^{G}_{\rm VW}( {M}_4)$. Notably, the holomorphic anomaly of \smash{$\widetilde Z^{G}_{\rm VW}(\tau; {M}_4)$} for $G=U(N)$ is given by \smash{$\widetilde Z^{G}_{\rm VW}(\tau; {M}_4)$} for $G=U(n)$ with $0< n< N$. Schematically, the conjecture states \cite{Alim:2010cf,Minahan:1998vr}
\begin{gather}\label{eqn:VW_recur}
\partial_{\bar{\tau}} \widetilde Z^{U(N)}_{\rm VW} \sim \sum_{n_1+n_2=N} n_1 n_2 \widetilde Z^{U(n_1)}_{\rm VW}\widetilde Z^{U(n_2)}_{\rm VW}.
\end{gather}
The above recursive relation, and more generally the mock modularity in this context, has been given a physical explanation from various perspectives including four-dimensional gauge theories, two-dimensional sigma models \cite{Dabholkar:2020fde}, curve counting \cite{Minahan:1998vr}, and DT invariants \cite{Alexandrov:2016tnf, Alexandrov:2017qhn,Alexandrov:2019rth, Alexandrov:2018lgp}. Roughly speaking, the presence of a holomorphic anomaly is related to the presence of reducible connections from the gauge theory point of view, and to the possibility of separating multiple~M5 branes from the M-theory point of view.
The recursive structure (\ref{eqn:VW_recur}) then naturally follows from these interpretations.
The similar M5 brane origin of the three-manifold invariants \smash{$\widehat Z^G(\tau; M_3)$}, as detailed in \cite{Gukov:2016gkn}, in particular motivates the question on the recursive structure of the quantum modularity of $\widehat Z$ invariants that we mentioned earlier.

To explore this question, we now focus on negative Seifert manifolds with three exceptional fibers, corresponding to plumbing graphs with one junction vertex of degree three. Based on the relation between $\widehat Z$ invariants and VOA characters shown in \cite{chengUpC,cheng20193d}, we expect \smash{$\widehat Z^G_\bv (M_3)$} for $G={\rm SU}(r+1)$ to be a linear combination of rank-$r'$ false theta functions, with $r'\leq r$, up to an overall rational power of $q$ and possibly the addition of a finite polynomial in $q$ and $q^{-1}$. Since many statements in the remaining part of the section are true up to an overall rational power of $q$ and the addition of a finite polynomial in $q$ and $q^{-1}$, for the sake of simplicity we will introduce the special notation $\stackrel{\dots}{=}$, $\stackrel{\dots}{\in}$, etc., where the $\dots$ is added on top of the symbols to denote that the relation holds when replacing $\widehat Z$ with $Cq^{\Delta}\widehat Z+f(q)$, for some $C\in \CC$, $\Delta\in \QQ$, $f(q)\in \CC\big[q,q^{-1}\big]$, and similarly for $\widecheck Z$. As usual, we write $q={\rm e}^{2\pi {\rm i} \tau}$ throughout this paper.

More specifically, we expect \smash{$\widehat Z^G_b(M_3)$} to be a linear combination of functions of the following form:
\begin{gather*}
t_{}^{(0),A_r}= \sum_{\vec n\in \vec\mu + \Lambda_r} \Biggr(\prod_{i}\operatorname{sgn}(\langle \vec n, \vec v_i\rangle) q^{|\vec n|^2/2}\Biggl), \\
 t_{}^{(1),A_r}= \sum_{\vec n\in \vec\mu + \Lambda_r}\langle \vec n, \vec \sigma\rangle \left(\prod_{i}\operatorname{sgn}(\langle \vec n, \vec v_i\rangle) q^{|\vec n|^2/2}\right)
\end{gather*}
for some chosen $\vec \mu$, $\vec v_i$ and $\vec \sigma$ and rank $r$ lattice $\Lambda_r$. We denote their companion functions, which we expect to be given by linear combinations of iterated Eichler integrals, by $\check t^{(\nu),A_r}_{}$ \cite{Bringmann:2021dxg}. Then the general structure of higher rank false theta functions suggests the following. Schematically,
\begin{gather}
 {\frac{\partial}{\partial \bar \tau}}
 {\widecheck Z}^{{\rm SU}(r+1)}\nonumber\\
 \qquad\stackrel{\dots}{\in}\operatorname{span}\bigl(\bigl\{ (\operatorname{Im}\tau)^{\nu_0-3/2}\,\overline{\theta^{\nu_0} }
 \check t^{(\nu_1),A_{r_1}} \check t^{(\nu_2),A_{r_2}} \cdots \mid \nu_i\in \{0,1\},\,r\geq 1+r_1+r_2+\cdots \bigr\}\bigr)\!\!\!\label{general_recursion}
\end{gather}
with $\theta^{\nu}$ denotes the function of the type $\theta^\nu_{m,r}$, defined as the following. For $m$ a positive integer, let $\Theta_m$ be the $2m$-dimensional Weil
representation of the metaplectic group \smash{$\widetilde{{\rm SL}_2(\BZ)}$} spanned by the column
vector $\theta_m=(\theta_{m,r})_{r \bmod 2m}$ with theta function components
\begin{equation}
\label{eq:thetafunc}
 \theta_{m,r}(\tau,z) := \sum_{\ell \equiv r\bmod 2m}
 q^{\frac{\ell^2}{4m}} y^\ell, \qquad y:={\rm e}^{2\pi\mathrm{i}z} .
\end{equation}
Derivatives of \eqref{eq:thetafunc} define unary theta functions $\theta^\nu_{m,r}\colon \HH\to \CC$ for $\nu=0,1,$ as
\begin{equation}\label{dfn:theta}
 \theta^\nu_{m,r}(\tau) := \left(\left({\frac{1}{ 2\pi {\rm i}}} \frac{\partial}{\partial z}\right)^{\nu} \!\theta_{m,r}(\tau,z) \right)\bigg\lvert_{z=0}.
\end{equation}
An important role in the study of $\widehat Z^{{\rm SU}(2)}$ is played by Eichler integrals unary theta functions. The Eichler integral of a weight $w\in \tfrac{1}{2}{\mathbb Z}$ cusp form $g(\tau)=\sum_{n>0}a_g(n) q^n$ is a function given by\footnote{To avoid an unnecessary proliferation of constants we adopt a different normalization of the Eichler integral than that chosen in \cite{cheng20193d}.}
\begin{equation}\label{dfn:Eichler_integral}
 \tilde g(\tau) := C(w)\sum_{n>0}a_g(n)n^{1-w} q^n,
\end{equation}
where $C(w)={\rm i}\frac{\Gamma(w-1)}{(-2\pi)^{w-1}}$, or
\[
 \tilde g(\tau) = \int_{\tau}^{{\rm i}\infty} g(z')(-{\rm i}(z'-\tau))^{-2+w} {\rm d}z',
\]
with a carefully chosen contour. In particular, the Eichler integral of $\theta^1_{m,r}$ is proportional the false theta function
\[
 \tilde\theta^1_{m,r}:= \sum_{k\equiv r~(2m)}
 \operatorname{sgn}(k)q^{k^2/4m}.
\]

In the present paper we focus on the next simplest non-trivial case where $G={\rm SU}(3)$. What we find is an interesting recursion structure, which we will describe in terms of the Weil representations of the metaplectic group \smash{$\widetilde{{\rm SL}_2(\BZ)}$} that are subrepresentations of $\Theta_m$. We denote by~$\mathrm{Ex}_m$ the group of exact divisors of $m$, where a divisor $n$ of $m$ is exact if $\left(n,\frac{m}{n}\right)=1$ and the group multiplication for $\mathrm{Ex}_m$ is
\[n\ast n':=\frac{nn'}{(n,n')^2}.\] In what follows we consider a subgroup $K$ of~$\mathrm{Ex}_m$. The group $K$ labels a subrepresentation of~$\Theta_m$, denoted $\Theta^{m+K}$ which we will describe in more details in Section~\ref{sec:Zhat}. It has the property that $\Theta^{m+K'}\subset \Theta^{m+K} $ when $K'\supset K$, and in particular $\Theta^{m+K}=\Theta_m$ when $K=\{1\}$. In general, we have, in terms of the projectors $P^{m+K}$ defined in \eqref{dfn:proj} and \eqref{dfn:projB},
\begin{equation}\label{linear_weil}
 \theta^{m+K}_{r}(\tau,z) = \sum_{r'\in \ZZ/2m} P_{r,r'}^{m+K} \theta_{m,r'}(\tau,z).
\end{equation}
We will write $\theta^{\nu,m+K}_{r}$, $\nu=0,1$, to be the corresponding linear combination of $\theta^{\nu}_{m,r}$ (see \eqref{WeylorbitSU2} for $\nu=1$), and write \smash{$\sigma^{m+K}\subset \ZZ/2m$} as the set labelling (through $r$) the linearly independent~\smash{$\theta^{m+K}_{r}(\tau,z)$}. Similar notations \smash{$\tilde \theta^{\nu,m+K}_r$} and \smash{$\bigl(\theta^{\nu,m+K}_r\bigr)^\ast$} are used for the same linear combination~\eqref{linear_weil} of the corresponding Eichler integrals and the non-holomorphic Eichler integral
 \[
 ({\theta^\nu_{m,r}})^\ast(\tau) :=
 \int_{-\bar{\tau} }^{{\rm i}\infty} {\rm d}w \,
\frac{\overline{\theta^\nu_{m,r}(-\bar w)}}{(-\ri(w+\tau))^{3/2-\nu}}
 \]
 (cf.\ (\ref{dfn:Eich_period})) which is up to an overall factor a companion for $ \tilde\theta^\nu_{m,r}$. See \cite[Section~7.3]{cheng20193d} for more details in the present context.

Now we explain the role of the representations $\Theta^{m+K}$ in the study of $\widehat Z^G$-invariants.
It is shown \cite{bringmann2019quantum,cheng20193d} that for any negative Seifert $M_3$ with three exceptional fibers, there exists a~unique $m$ and some $K\subset {\rm Ex}_m$ such that for all allowed choices of $\bv$
\begin{equation}\label{eqn:intro_Weyl1}
\Zhlong{{\rm SU}(2)}{\bv}{M_3} \in \operatorname{span}\bigl(
 \bigl\{\tilde\theta_{r}^{1,m+K}\mid r \in \sigma^{m+K} \bigr\}\bigr),
\end{equation}
 which implies
\begin{equation}\label{eqn:intro_Weyl2}
 \Zchlong{{\rm SU}(2)}{\bv}{M_3} \in \operatorname{span}
 \bigl(\bigl\{\bigl(\theta^{1,m+K}_r\bigr)^\ast\mid r \in \sigma^{m+K} \bigr\}\bigr).
\end{equation}
 From now on we will take the largest $K$ such that the above is true. The following conjecture, based on observations and proven for homological spheres, indicates that the recursion of $\widehat Z^G$ have in fact finer structure than indicated in \eqref{general_recursion}.
\begin{conj}\label{conj_rec}
Let $M_3$ be a negative Seifert manifold three exceptional fibers and let $\bv$ a choice of the boundary condition. Let $m$ be the unique positive integer and $K$ be the largest subgroup of ${\rm Ex}_m$ such that
\[
 \Zhlong{{\rm SU}(2)}{\bv}{M_3} \stackrel{\dots}{\in} \operatorname{span}\bigl(
 \bigl\{\tilde\theta_{r}^{1,m+K}\mid r \in \sigma^{m+K} \bigr\}\bigr) .
\]

Let \smash{$\Zchlong{{\rm SU}(3)}{\bv}{M_3}$} be the companion of \smash{$\widehat Z^{{\rm SU}(3)}_{\bv}(\tau;M_3)$}. Then it satisfies
\begin{gather}
 {\frac{\partial}{ \partial \bar \tau}}
 \bigl( \Zchlong{{\rm SU}(3)}{\bv}{M_3} + z_{1\rm d}\bigr)\nonumber \\
 \qquad\stackrel{\dots}{\in}
 {\frac{1}{\sqrt{\operatorname{Im} \tau}}} \operatorname{span}\bigl(\bigl\{ \overline{\theta^{1,m+K}_{r'}} \bigl({\theta^{\nu}_{m,r''}}\bigr)^\ast \mid \nu=0,1, \,r'' \in \ZZ/2m,\, r' \in \sigma^{m+K} \bigr\} \bigr),\label{eqn:conj:recursive}
\end{gather}
where the $1\rm d$ piece is of the form
\begin{equation}\label{intro_1dpiece}
 z_{1\rm d}\stackrel{\dots}{\in} \operatorname{span}\left(\{ ({\theta^{\nu}_{m,r}})^\ast \mid r \in \ZZ/2m,\, \nu=0,1 \} \right).
\end{equation}
\end{conj}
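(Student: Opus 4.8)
The plan is to prove the conjecture in the homological-sphere case by making the companion construction underlying Theorem~\ref{thm1_intro} completely explicit and then differentiating it term by term. First I would invoke the closed form of \smash{$\widehat Z^{{\rm SU}(3)}_{\bv}$} for a negative Seifert manifold with three exceptional fibers: performing the Weyl-group sum for ${\rm SU}(3)=A_2$ expresses it, up to the $\stackrel{\dots}{=}$ ambiguity, as a linear combination of the rank-two false theta functions $t^{(0),A_2}$ and $t^{(1),A_2}$ built from an $A_2$-type lattice $\Lambda_2$ and the sign factors $\operatorname{sgn}(\langle\vec n,\vec v_i\rangle)$ associated to the positive roots. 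The homological-sphere hypothesis enters decisively here: the orders $a_1,a_2,a_3$ of the exceptional fibers are pairwise coprime, so that the modulus is simply $m=a_1a_2a_3$ and the exact-divisor group $\mathrm{Ex}_m$ factorizes along the $a_i$, and the cokernel of the plumbing matrix is trivial, so that there is essentially a single boundary condition $\bv$ to treat.

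Next I would build the companion \smash{$\widecheck Z^{{\rm SU}(3)}_{\bv}$} by completing the rank-two corner of the sign-function cone through the double non-holomorphic Eichler integral of \cite{bringmann2018higher}, exactly as in the proof of Theorem~\ref{thm1_intro}, so that \smash{$\widecheck Z^{{\rm SU}(3)}_{\bv}$} is a sum of depth-two double Eichler integrals and depth-one single ones. The heart of the argument is the $\bar\tau$-derivative. Using the elementary identity $\partial_{\bar\tau}({\theta^\nu_{m,r}})^\ast=(2\operatorname{Im}\tau)^{\nu-3/2}\,\overline{\theta^\nu_{m,r}(\tau)}$, differentiating a depth-two integral peels off its outer integrand as an antiholomorphic unary theta while leaving the inner integral as a single Eichler integral \smash{$({\theta^{\nu}_{m,r''}})^\ast$}, which is exactly the product form of \eqref{eqn:conj:recursive}. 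Differentiating the depth-one pieces instead produces terms of the pure form $\overline{\theta^\nu_{m,r}}\,(\operatorname{Im}\tau)^{\nu-3/2}$, carrying no Eichler integral. Since these are themselves $\bar\tau$-derivatives of single Eichler integrals, I would define $z_{1\rm d}$ as the linear combination of $({\theta^\nu_{m,r}})^\ast$ whose derivative cancels them; this gives precisely \eqref{intro_1dpiece} and leaves $\partial_{\bar\tau}(\widecheck Z^{{\rm SU}(3)}_{\bv}+z_{1\rm d})$ in product form.

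The main obstacle is the \emph{finer structure}: showing that the surviving antiholomorphic theta is not an arbitrary component of the full Weil representation $\Theta_m$, but lies in the ${\rm SU}(2)$ subrepresentation $\Theta^{m+K}$, i.e.\ that only $\nu_1=1$ and $r'\in\sigma^{m+K}$ occur in \eqref{eqn:conj:recursive}. Two points must be established. That $\nu_1=1$ follows from matching the overall prefactor: the outer kernel of the double Eichler integral carries exponent $3/2-\nu_1$, so differentiation at the lower limit yields $(\operatorname{Im}\tau)^{\nu_1-3/2}$, and the weighting of $t^{(1),A_2}$ by the Weyl vector $\vec\sigma$ forces the outer direction to contribute a $\theta^1$-type theta, consistently producing the prefactor $(\operatorname{Im}\tau)^{-1/2}$. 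That $r'\in\sigma^{m+K}$ is the delicate point: geometrically, the outer completion direction restricts the $A_2$ cone to a root hyperplane, and this restriction returns precisely the rank-one $A_1={\rm SU}(2)$ lattice sum. I would make this rigorous by showing that the boundary theta produced along the distinguished root coincides, after the projection \eqref{linear_weil}, with the vector $\theta^{1,m+K}$ governing \smash{$\widehat Z^{{\rm SU}(2)}_{\bv}$} in \eqref{eqn:intro_Weyl1}; here pairwise coprimality of the $a_i$ is what guarantees that the congruence conditions induced by the $A_2$ sum cut out exactly the index set $\sigma^{m+K}$ rather than all of $\ZZ/2m$. Proving that the \emph{maximal} $K$ for the ${\rm SU}(2)$ invariant is also the $K$ controlling the surviving ${\rm SU}(3)$ theta is where I expect most of the work to concentrate, and it is the step that most essentially uses the homological-sphere hypothesis.

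Finally I would collect the surviving product terms, check that their indices range over $\nu=0,1$ and $r''\in\ZZ/2m$ while $r'\in\sigma^{m+K}$ as claimed, and conclude that \smash{$\partial_{\bar\tau}(\widecheck Z^{{\rm SU}(3)}_{\bv}+z_{1\rm d})$} lies in the asserted span up to the $\stackrel{\dots}{\in}$ ambiguity. The absence of multiple boundary conditions and the pairwise coprimality keep the lattice decomposition and the congruence bookkeeping under complete control, which is exactly why the argument closes in the homological-sphere case even though the general conjecture remains open.
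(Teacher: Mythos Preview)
Your overall strategy matches the paper's: express \smash{$\widecheck Z^{{\rm SU}(3)}_{\bv}$} as a sum of double non-holomorphic Eichler integrals (this is Proposition~\ref{lem:rewritingBBE} and Lemma~\ref{lem:brieskorn_thetas}), then read off the recursion. Your handling of the $\bar\tau$-derivative and the definition of $z_{1\rm d}$ are correct. One small point: that the \emph{outer} theta is always of type $\theta^1$ is automatic from the integrand~\eqref{def:Theta} in Proposition~\ref{prop:Estarless_eichler}; it does not require a separate argument about the $\vec\sigma$-weighting.

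The genuine gap is in your crucial step, showing that the outer theta lands in the subrepresentation $\Theta^{m+K}$. Your plan---``restrict the $A_2$ cone to a root hyperplane and recover the $A_1$ lattice sum''---is a reasonable heuristic but is not the mechanism, and it is not clear how it would produce the projector $P^{m+K}$ rather than just some rank-one theta. What the paper actually does (Appendix~\ref{prf:sphere_recursive}) is a direct combinatorial computation over $\hat w\in W^{\otimes 3}$. Using $(aba)\vec\rho=-\vec\rho$, the sum is folded to $W_+^{\otimes 3}$ with extra signs $(-1)^{\varepsilon_i}$, and then partitioned into four cases according to how many $w_i$ equal $e$. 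In each case the outer theta index is $m\delta+\sum_i(-1)^{\varepsilon_i}\bar p_i\,\langle\vec\rho,w_i(\vec\rho)\rangle$, and the alternating sum over the $\varepsilon_i$ is identified \emph{exactly} with a component of $P^{m+K}\theta^1_m$ via the key fact that $\Omega_m(\bar p_i)$ sends the index $m\delta+\sum_j\epsilon_j\bar p_j a_j$ to the one with $\epsilon_i$ flipped (this is where pairwise coprimality of the $p_i$ enters). The resulting combinations are tabulated by the set $\mathcal R$ of Proposition~\ref{prop:sphere_recursive}. So the $\Theta^{m+K}$ structure comes from the arithmetic of the indices $\bar p_i$ under the Weyl-group and sign sums, not from a geometric restriction of the $A_2$ lattice; your sketch does not yet contain the idea that makes this step go through.
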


We see that the same Weil representation $\Theta^{m+K}$ that governs the structure of \smash{$\widehat Z^{{\rm SU}(2)}(M_3)$} also governs the structure of \smash{$\widehat Z^{{\rm SU}(3)}(M_3)$}. We will comment on its potential interpretation in Section~\ref{sec:discussions}.

When $M_3$ is moreover a homological sphere, namely when $H_1(M_3,\ZZ)$ is trivial, it is topologically equivalent to a Brieskorn sphere $\Sigma(p_1,p_2,p_3)$ with coprime $p_i$'s \eqref{dfn:brieskorn}. In this case there is only one homological block $\bv=\bv_0$ and it is known that \cite{cheng20193d}
 \[
 \frac{\partial}{\partial \bar \tau} \bigl(\Zchlong{{\rm SU}(2)}{\bv}{M_3}\bigr) \stackrel{\dots}{=} \overline{\theta_{r}^{1,m+K}}
 \]
for
$ m=p_1p_2p_3$, $K=\{1,\bar p_1,\bar p_3,\bar p_2\}$, $r=m-\bar p_1-\bar p_2-\bar p_3$,
where $\bar p_i:=m/p_i$.

For this (infinite) family of $M_3$, we explicitly show that the conjecture is true, and we have
\begin{thm}\label{thm:intro_brieskorn_recursive}
Conjecture {\rm \ref{conj_rec}} is true when $M_3$ is a homological sphere.
\end{thm}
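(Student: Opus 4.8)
The plan is to exploit the fact that for a homological sphere $M_3=\Sigma(p_1,p_2,p_3)$ there is a single homological block $\bv=\bv_0$, together with the explicitly known $SU(2)$ answer, so that the whole statement reduces to an explicit computation with iterated Eichler integrals. First I would write down the $q$-series $\widehat Z^{SU(3)}_{\bv_0}(\Sigma(p_1,p_2,p_3))$ from the Seifert definition of $\widehat Z^{G}$ for simply-laced $G$ (as in \cite{Chung:2018rea,Park:2019xey}), and recast it, up to the equivalence $\stackrel{\dots}{=}$, as a linear combination of the rank-two false theta functions $t^{(0),A_2}$ and $t^{(1),A_2}$, i.e.\ sums over a shifted $A_2$ lattice weighted by a product of two sign factors $\operatorname{sgn}(\langle\vec n,\vec v_1\rangle)\operatorname{sgn}(\langle\vec n,\vec v_2\rangle)$ coming from the Weyl ($S_3$) structure at the single junction vertex. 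The presence of exactly two independent sign factors is what will make the depth equal to $r\times n=2$.

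Next I would build the companion $\widecheck Z^{SU(3)}_{\bv_0}$ by the method of \cite{bringmann2018higher}: replace the asymptotic expansion of the rank-two false theta sum near $\tau\to\QQ$ by a combination of the double non-holomorphic Eichler integrals of \eqref{dfn:double_eichler}, each of the two nested integrations absorbing one of the two sign factors. Verifying that this candidate has the same asymptotic expansion as $\widehat Z^{SU(3)}_{\bv_0}$ (the defining property of the companion, Definition \ref{def:companions}) is the standard part of the construction. The lattice decomposition that produces the double integral also generates boundary/diagonal terms that are genuine single Eichler integrals \eqref{dfn:Eich_period}; I would collect precisely these into $z_{1\rm d}$, so that by construction $z_{1\rm d}$ has the form \eqref{intro_1dpiece} and $\widecheck Z^{SU(3)}_{\bv_0}+z_{1\rm d}$ is a pure combination of double Eichler integrals.

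I would then differentiate in $\bar\tau$. Using
\[
\frac{\partial}{\partial\bar\tau}(\theta^\nu_{m,r})^\ast(\tau)\ \propto\ \frac{\overline{\theta^\nu_{m,r}(\tau)}}{(\operatorname{Im}\tau)^{3/2-\nu}},
\]
the outer integration of each double Eichler integral is killed, leaving $\overline{\theta^{\nu_1}}$ with $\nu_1=1$ (hence the prefactor $(\operatorname{Im}\tau)^{-1/2}$ of \eqref{eqn:conj:recursive}) multiplied by the inner single Eichler integral $(\theta^{\nu}_{m,r''})^\ast$. This already yields the product structure $\overline{\theta}\,(\theta)^\ast$ on the right-hand side of \eqref{eqn:conj:recursive}, with the inner label $r''$ ranging over the full $\ZZ/2m$. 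The remaining and decisive task is to show that the surviving outer theta functions assemble into the projected representation $\overline{\theta^{1,m+K}_{r'}}$, $r'\in\sigma^{m+K}$, with exactly the same pair $(m,K)$ that governs $\widehat Z^{SU(2)}$.

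For this identification I would use the clean $SU(2)$ formula $\tfrac{\partial}{\partial\bar\tau}\widecheck Z^{SU(2)}_{\bv_0}\stackrel{\dots}{=}\overline{\theta^{1,m+K}_{r}}$ with $m=p_1p_2p_3$, $K=\{1,\bar p_1,\bar p_2,\bar p_3\}$, $r=m-\bar p_1-\bar p_2-\bar p_3$, $\bar p_i=m/p_i$. The point is that the ``outer'' direction in the factorization of the $A_2$ sum is precisely the rank-one $A_1$ sublattice that produces $\widehat Z^{SU(2)}$, so peeling off the outer sign factor as a weight-one theta derivative reproduces the same lattice sum and the same exact-divisor projection $P^{m+K}$ dictated by the $p_i$. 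The main obstacle is exactly this last point: controlling the factorization of the two-fold sign product over the $A_2$ lattice so that (a)~the double-Eichler-integral completion is valid with the correct contours, and (b)~the surviving outer theta lands in the subrepresentation $\Theta^{m+K}$ rather than the full $\Theta_m$. This requires careful bookkeeping of the $A_2$ root/weight data and the $S_3$ Weyl sum against the exact-divisor group structure, to ensure the projector $P^{m+K}$ emerges with the same $K$ as in the $SU(2)$ case, while the inner integral is left unprojected over all of $\ZZ/2m$.
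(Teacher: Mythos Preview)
Your proposal is essentially the paper's own approach: express $\widehat Z^{{\rm SU}(3)}$ via the generalized $A_2$ false theta functions, construct the companion as double Eichler integrals following \cite{bringmann2018higher}, isolate the one-dimensional boundary terms into $z_{1\rm d}$, and then identify the outer $\theta^1$ factor as living in the Weil subrepresentation $\Theta^{m+K}$. The paper carries this out via Lemma~\ref{lem:brieskorn_thetas} and Proposition~\ref{prop:sphere_recursive}.

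One caveat on your heuristic for the last step: the outer $A_1$ direction is \emph{not} literally the same rank-one lattice sum as in $\widehat Z^{{\rm SU}(2)}$, so the projector $P^{m+K}$ does not simply ``come along for the ride'' from the ${\rm SU}(2)$ answer. In the paper the mechanism is different and more concrete: one uses $aba(\vec\rho)=-\vec\rho$ to fold the sum over $\hat w\in W^{\otimes 3}$ into a sum over $W_+^{\otimes 3}$ together with sign flips $\varepsilon_i\in\ZZ/2$, and then checks by an explicit case analysis (grouped by how many of $w_1,w_2,w_3$ equal $e$) that the resulting combination $\sum_{\epsilon,\epsilon'}\epsilon\epsilon'\,\theta^1_{m,m\delta+\epsilon\bar p_1 a_1+\epsilon'\bar p_2 a_2+\bar p_3 a_3}$ is exactly $4\,\theta^{1,m+K}_{m\delta+\sum_i\bar p_i a_i}$, using that coprimality of the $p_i$ makes each $\Omega_m(\bar p_i)$ act by a single sign flip on these indices. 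So your identification of the ``main obstacle'' is right, but the resolution is a direct combinatorial computation with the exact-divisor operators rather than a structural reduction to the ${\rm SU}(2)$ sum.
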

In other words, in this case we have
\begin{gather}
 \frac{\partial}{\partial \bar \tau}\Zchlong{{\rm SU}(2)}{\bv}{M_3} \stackrel{\dots}{=} \frac{1}{\sqrt{\operatorname{Im} \tau} }\overline{\theta_{r}^{1,m+K}},\nonumber\\
 \frac{\partial}{\partial \bar \tau}
 \bigl( \Zchlong{{\rm SU}(3)}{\bv}{M_3} + z_{1\rm d} \bigr)
 \stackrel{\dots}{=} \frac{1}{\sqrt{\operatorname{Im} \tau} }\sum_{r' \in \sigma^{m+K}}
 \overline{ \theta^{1,m+K}_{r'}} B_{r' },\label{thm:recursive}
\end{gather}
where $B_{r' }$ is a linear combination of \smash{$({\theta^{\nu}_{m,r''}})^\ast$} with $\nu=0,1$, $r''\in \ZZ/2m$. In particular, note that while only one component of $\Theta^{m+K}$ appears to play a role in the quantum modularity of \smash{$\widehat Z^{{\rm SU}(2)}$}, its modular images also play a role in \smash{$\widehat Z^{{\rm SU}(3)}$}.

Here we have stated the recursive conjecture in terms of the companion function. As before, the above analysis on the recursive relation can be translated in the language of modular completions of higher-depth false theta functions \cite{Bringmann:2021dxg}. Roughly speaking, the role of $\frac{\partial}{\partial \bar\tau}$ will be played by $\frac{\partial}{\partial w}$, acting on the two-variable completion that depends on $(\tau,w)\in \HH\times \HH$ and transforms as a bi-modular form. An analogous statement should then hold also for (the $\partial_w$ derivatives of) the
 completion function defined in \cite{Bringmann:2021dxg} in a natural fashion.

\section{Notation guide}
\label{sec:notation}
\begin{table}[!h]\renewcommand{\arraystretch}{1.25}\centering
	\begin{tabular}{r l}
		${\rm e}(x)$ & Shorthand notation ${\rm e}(x)={\rm e}^{2\pi\mathrm{i}x}$.
		\\
		$B_m(x)$ & Bernoulli polynomials with generating function $\frac{t{\rm e}^{xt}}{ {\rm e}^t-1} = \sum_{n=0}^\infty B_n(x)\frac{t^n}{ n!}$.
		\\
		$\Lambda=\Lambda_{\mathfrak{g}}$ & The root lattice associated to the simply-laced Lie algebra $\mathfrak{g}$.
\\
		$\Lambda^{\vee}$ & The dual root lattice.
\\
		$\Phi_{\rm s}$ & The set of simple roots $\{\vec{\al}_i\}_{i=1}^{\rm rankG}$.
			\end{tabular}
\end{table}

\begin{table}[!ht]\renewcommand{\arraystretch}{1.3} \centering
	\begin{tabular}{r l}
		$\Phi_{\pm}$ & The sets of positive and negative roots.

\\		$\vec{\rho}$ & The Weyl vector of the root system $\vec{\rho}:=\frac{1}{2} \sum_{\vec{\al}\in\Phi_+} \vec{\al}$.
		\\
		$\langle \cdot,\cdot\rangle$ & The scalar product in the dual space of the Cartan subalgebra of $\mathfrak{g}$.
		\\
		$|\vec{x}|^2$ & For $\vec{x}\in\mathbb{C}\otimes_\mathbb{Z} \Lambda$, the norm is defined by $|\vec{x}|^2=\langle \vec{x},\vec{x}\rangle$.
		\\
		$\{\vec{\omega}_i\}_{i=1}^{\rm rankG}$ & The set of fundamental weights, satisfying $\langle \vec{\omega}_i,\vec{\alpha}_j \rangle = \delta_{i,j}$.
		\\
		$P^+$ & The set of dominant integral weights. See \eqref{eqn:P+}. For $\bar{P}^+$, see \eqref{eqn:barP+}.
		\\
		$\Delta \vec\omega$ &The difference $\Delta \vec\omega:= \vec\omega_1-\vec\omega_2$
		of the two fundamental weights in $A_2$ Lie algebra.
 		\\
		$W$ & The Weyl group of the root system.
		\\
		$w(\cdot )$ & The action of the element $w\in W$. \\ $\ell(w)$&
 The length of $w$.
		\\
		$Q(\mathbf{m})$ & The norm $ Q(\mathbf{m}):= \frac{1}{2}|\vec {\mathbf m}|^2= \bigl(3m_{1}^2+3m_1m_2+m_2^2\bigr)$ for $(m_1,m_2)\in {\mathbb R}^2$ \eqref{def:quadraticformQx}.\\
 $\bos$ & Shorthand notation $\bos = \bigl( \vec{s},\vec{k},m,D\bigr)$ introduced in Section~\ref{sec:gena2}.
		\\
		$\vec{\sigma}$ & Shorthand notation $\vec{\sigma}=\vec{s}-\tfrac{m}{D
 }\vec{k}$.
		\\
		$F^{(\bos)}$ & Generalized $A_2$ false theta function defined in equation \eqref{dfn:gen_false}.
		\\
		$F_\nu^{(\bos)}$ & Partial theta functions, defined for $\nu=0,1$ in \eqref{def:Fiq2_1} and \eqref{def:Fiq}. 
 \\
 $F_{\nu,\bal}$ & Components of false theta functions, defined in equation \eqref{def:Fiq2_1-comp}.
		\\
 $\CF_\nu$ & $\CF_\nu(\boldsymbol{x})=x_2^\nu {\rm e}^{-Q(\boldsymbol{x})}$ \eqref{dfn:Fcalnu-O} \\
		$\CS$ & The set \eqref{def:Scal} of parameters $\bal$ \eqref{eq:alphawidef0A} of the partial theta functions $F_{\nu,\bal}(\tau)$. \\
 $\ss$ & Subset of the set $\CS$ defined in equation \eqref{def:Scal2}.
		\\
		$\BE_\nu^{(\bos)}(\tau)$ & The companion functions of the functions $F_\nu^{(\bos)}(\tau)$. See \eqref{def:E01starless}.
 \\
		$M$ & The adjacency matrix \eqref{adjacencymatrix} of the weighted graph $(V,E,a)$.
		\\
		$D$ & Smallest positive integer such that $D M^{-1}_{v_0,v}\in\mathbb{Z}$ for $\forall  v\in V$; $m=D^2 \big|M^{-1}_{v_0,v_0}\big|$.
		\\
		$\vec{\underline{b}}$ & Generalised Spin$^c$ structure \eqref{eq:bfromcoker} on a plumbed three-manifold $M_3$, labelling \\
 & the boundary conditions of $T[M_3]$.
 \\
		$\widehat{Z}_{\vec{\underline{b}}}^G(M_3)$ & Topological invariant of a plumbed three-manifold $M_3$ \eqref{def:ZhatHR-3mfdvoa}.
		\\		\smash{$\widecheck{Z}_{\vec{\underline{b}}}^G(M_3)$} & Companion function of \smash{$\widehat{Z}_{\vec{\underline{b}}}^G(M_3)$}. \\
 $\stackrel{\dots}{=}$, $\stackrel{\dots}{\in}$, etc. & relations hold when replacing $\widehat Z$ with $Cq^{\Delta}\widehat Z+f(q)$, for some \\
 & $C\in \CC$, $\Delta\in \QQ$, $f(q)\in \CC\big[q,q^{-1}\big]$, and similarly for $\widecheck Z$.
	\end{tabular}
\end{table}

\section[Generalized A\_2 false theta functions]{Generalized $\boldsymbol{A_2}$ false theta functions}\label{sec:gena2}

Let $\Lambda=\Lambda_{A_2}$ be the $A_2$ root lattice, $W$ the corresponding Weyl group with $\ell\colon W \to \ZZ$ its length function. We denote by $W_+\cong \ZZ/3$ the rotation subgroup of $W$ given by the kernel of the map $w\mapsto (-1)^{\ell(w)}$. We also denote by $\Phi_{\rm s}=\{\vec \alpha_1, \vec \alpha_2\}$ a set of simple roots and $\{\vec \omega_1, \vec \omega_2\}$ the corresponding fundamental weights, $\Phi_\pm$ the set of positive resp.\ negative roots, and by
\be
\label{eqn:P+}
P^+:=\bigl\{\vec \lambda \in \Lambda^\vee
\mid \langle \vec \lambda, \vec \alpha\rangle {>} 0\, \forall \vec\alpha \in \Phi_+
\bigr\}
\ee
the set of dominant integral weights, where $\langle \cdot, \cdot \rangle$ is a quadratic form given by the $A_2$ Cartan matrix. For $\vec{x} \in \mathbb{C} \otimes_\mathbb{Z} \Lambda $, we define the norm $| \vec{x} |^2 := \langle \vec{x}, \vec{x} \rangle$ as usual. We will also define
\be
\label{eqn:barP+}
\bar P^+:=\bigl\{\vec \lambda \in \Lambda^\vee
\mid \langle \vec \lambda, \vec \alpha\rangle \geq 0 \, \forall \vec\alpha \in \Phi_+
\bigr\} .
\ee

It will be convenient to introduce the map
\[
\ZZ^2 \to \Lambda, \qquad \mathbf{m}= (m_1,m_2)\mapsto \vec {m} :=m_2 \vec\omega_1 + (3m_1+m_2)\vec \omega_2 ,
\]
the corresponding norm
\begin{equation}\label{def:quadraticformQx}
	Q(\mathbf{m}):= \frac{1}{2}|\vec { m}|^2=\bigl(3m_{1}^2+3m_1m_2+m_2^2\bigr),
\end{equation}
and the following functions on $\BR^2$
\begin{gather}\label{dfn:Fcalnu-O}
	\mathcal{F}_{\nu}( \mathbf{x} ):=x_2^\nu {\rm e}^{-Q( \mathbf{x} )}, \qquad\nu=0,1.
\end{gather}

Then given a vector $\vec s$ in the root lattice, a positive integer $m$, a divisor $D$ of $m$, and $\vec k\in \Lambda/ D\Lambda$, we define the {\em generalized $A_2$ false theta function}
\be\label{dfn:gen_false}
F^{(\bos)} (\tau) = \sum_{w\in W} (-1)^{\ell(w)} \sum_{\substack{ \vec n\in \Lambda \cap P^+\\ \vec n \in w^{}(\vec k ) + D\Lambda}} {\rm min}(n_1,n_2) q^{\frac{1}{2m} |-w(\vec s) + \frac{m}{D} \vec n|^2},
\ee
where $\bos$ encodes the data
$(\vec{s},\vec{k},m,D)$.
The $A_2$ false theta functions, whose quantum modularity has been studied in \cite{bringmann2018higher} and which appear in the character of higher rank logarithmic vertex algebra {log-${\cal V}_{\bar \Lambda}(m)$} \cite{feigin2010logarithmic}, always have $D=1$.

These generalized $A_2$ false theta functions are the building blocks of the \smash{$\widehat Z^{{\rm SU}(3)}_\bv(M_3)$} invariants, when $M_3$ is a negative Seifert manifold with three exceptional fibers \eqref{s3:hombl-2}. The study of their explicit quantum modular properties will be the subject of this section.

In the above and elsewhere in this paper, unless stated otherwise, we use the weight basis notation. For instance, we use $(n_1, n_2)$ to denote $\vec n := n_1\vec \omega_1 + n_1\vec \omega_2 \in \Lambda^\vee$. We also write $\vec k\lvert_i := \langle \vec k, \vec \alpha_i\rangle$ for $\vec k\in \CC\otimes_{\ZZ}\Lambda$, so $\vec n\lvert_i =n_i$ for $\vec n=(n_1,n_2)$.

\subsection{Identities}
\label{subsec:defns}

We now rewrite the generalised $A_2$ false theta function in a form which allows us to determine its asymptotic behaviour in the limit where the modular parameter $\tau$ approaches a rational number. Similar to \cite{bringmann2018higher} we will first rewrite \eqref{dfn:gen_false} as a sum over partial theta functions. Concretely, we have the following lemma.

\begin{lem}\label{lem:split_2_terms}
	With the notation of \eqref{dfn:gen_false}, we choose a representative of $\vec k\in \Lambda/D\Lambda$ such that $0\leq \langle \vec k, \vec\omega_i\rangle <D$ for $i=1,2$, and write $\vec{s}=\vec{\sigma}+\frac{m}{D}\vec{k}$.
 Then we have
	\begin{equation}\label{Section2-Fsplit}
		F^{(\bos)}(\tau) =
		F_{0}^{(\bos)}(m\tau)+D F_{1}^{\left(\bos\right)}(m\tau),
	\end{equation}
	where
	\begin{gather}\label{def:Fiq2_1}
			F_{0}^{\left( \bos \right)}( \tau)
			:= \frac{D}{m}\sum_{w\in W_+}\sum_{i \in \{1,2\}} w(\vec s)\lvert_i
			F_{0,\bal^{(i)}_w}(\tau),\qquad
			F_{1}^{\left( \bos \right)}( \tau) :=\sum_{w\in W_+}\sum_{i \in \{1,2\}}
			F_{1,\bal^{(i)}_w}(\tau)
	\end{gather}
	with
	\begin{equation}\label{def:Fiq2_1-comp}
		F_{\nu,\bal}(\tau) = \Bigg( \sum_{\mathbf{n}\in \bal+\mathbb{N}_{0}^{2}}+(-1)^\nu\sum_{\mathbf{n}\in {\bf 1}- \bal+\mathbb{N}_{0}^{2}} \Bigg) n_2^\nu q^{Q( \mathbf{n})}.
	\end{equation}
	The $\bal^{(i)}_{w}$ vectors are defined in terms of $\vec{s}$, $\vec{k}$, and Weyl group element $w\in W$
	\begin{gather}
		\bal_w^{(1)}=
		\left(x+ \frac{\Delta w(\vec{\sigma}) }{m},
		\xi_{w,1} - \frac{w(\vec{\sigma})\lvert_1}{m}\right), \qquad
		\bal_w^{(2)}=
		\left(1- x-\frac{\Delta w(\vec{\sigma}) }{m},
		\xi_{w,2} - \frac{w(\vec{\sigma})\lvert_2}{m}\right)
		\label{eq:alphawidef0A}
	\end{gather}
	in which
	\begin{gather*}
{ \xi_{w,i}:= \biggl\lceil - \frac{w\bigl(\vec{k}\bigr)\vert_i}{D} \biggr\rceil }, \qquad \Delta w(\vec{\sigma}) : = \frac{w(\vec{\sigma})\lvert_1-w(\vec{\sigma})\lvert_2 }{3}, \\
		x= \begin{cases} 0 & {\rm when}\ w\bigl(\vec{k}\bigr)\lvert_2\geq w\bigl(\vec{k}\bigr)\lvert_1, \\ 1 & {\rm otherwise}.
		\end{cases}
	\end{gather*}
\end{lem}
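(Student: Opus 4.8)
The plan is to prove \eqref{Section2-Fsplit} by a direct computation in the spirit of \cite{bringmann2018higher}, rewriting the Weyl-antisymmetrized lattice sum \eqref{dfn:gen_false} as a finite combination of two-dimensional partial theta functions. The first step is to \emph{linearize the Gaussian exponent}. Using $\vec s=\vec{\sigma}+\tfrac{m}{D}\vec k$ together with the congruence $\vec n\in w(\vec k)+D\Lambda$, I substitute $\vec n=\tfrac{D}{m}w(\vec s)+D\vec n'$, which is legitimate precisely because $\tfrac{1}{D}(\vec n-w(\vec k))\in\Lambda$. A one-line computation gives $-w(\vec s)+\tfrac{m}{D}\vec n=m\,\vec n'$, so the exponent collapses to $\tfrac{1}{2m}\bigl|{-}w(\vec s)+\tfrac{m}{D}\vec n\bigr|^2=\tfrac{m}{2}|\vec n'|^2=m\,Q(\mathbf n')$, where $\mathbf n'\in\QQ^2$ is the coordinate vector of $\vec n'$ under the map $\mathbf m\mapsto\vec m$. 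This is the source of the rescaled argument $m\tau$ on the right-hand side of \eqref{Section2-Fsplit}.

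Next I \emph{handle the weight $\min(n_1,n_2)$}. Under the substitution the Dynkin labels become $n_i=D\bigl(n_i'+\tfrac1m w(\vec s)\lvert_i\bigr)$, hence $\min(n_1,n_2)=D\min\!\bigl(n_1'+\tfrac1m w(\vec s)\lvert_1,\,n_2'+\tfrac1m w(\vec s)\lvert_2\bigr)$. I then cut the dominant chamber $P^+$ into the two subregions on which the first, respectively the second, argument is the smaller; these are indexed by $i\in\{1,2\}$ and account for the inner sum $\sum_{i\in\{1,2\}}$. On the $i$-th region $\min(n_1,n_2)=D\bigl(n_i'+\tfrac1m w(\vec s)\lvert_i\bigr)$, and I split this into its constant part $\tfrac{D}{m}w(\vec s)\lvert_i$ and its linear part $D\,n_i'$. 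The constant part, carrying no lattice weight, produces the $F_0^{(\bos)}$ contribution with the prefactor $\tfrac{D}{m}w(\vec s)\lvert_i$, while the linear part, which after the appropriate coordinate identification corresponds to the weight $n_2$ of \eqref{def:Fiq2_1-comp}, produces the $D\,F_1^{(\bos)}$ contribution. This is the origin of the two summands $F_0^{(\bos)}(m\tau)+D\,F_1^{(\bos)}(m\tau)$.

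It remains to \emph{reduce the Weyl sum and extract the offsets}. Folding the three reflections against the three rotations via the sign $(-1)^{\ell(w)}$ collapses the six-term Weyl sum to $\sum_{w\in W_+}$, and the reflection $\mathbf n\mapsto{\bf 1}-\mathbf n$ pairs the two partial sums appearing in $F_{\nu,\bal}$ \eqref{def:Fiq2_1-comp}; the relative sign $(-1)^\nu$ between them records the Weyl sign of the folded reflection together with the parity of the weight ($1$ for $F_0$, $n_2$ for $F_1$). The explicit offset $\bal_w^{(i)}$ of \eqref{eq:alphawidef0A} is then read off piece by piece: the fractional shifts $\tfrac1m w(\vec{\sigma})\lvert_i$ and the combination $\Delta w(\vec{\sigma})=\tfrac13\bigl(w(\vec{\sigma})\lvert_1-w(\vec{\sigma})\lvert_2\bigr)$ arise from applying $\mathbf m\mapsto\vec m$ to $\tfrac1m w(\vec{\sigma})$; the integer ceilings $\xi_{w,i}=\lceil -w(\vec k)\lvert_i/D\rceil$ come from enforcing $\vec n\in w(\vec k)+D\Lambda$ with the chosen representative $0\le\langle\vec k,\vec\omega_i\rangle<D$; and the binary parameter $x\in\{0,1\}$ marks which argument of the $\min$ is smaller on the chosen region. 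Assembling these pieces yields \eqref{def:Fiq2_1}.

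The main obstacle is the bookkeeping of the third step rather than any single hard estimate. One must verify that the region split coming from $\min$, the reduction of the Weyl sum, the residue condition modulo $D\Lambda$, and the reflection principle combine to reproduce exactly the stated offsets, including the correct treatment of the boundary lattice points where the two arguments of $\min$ coincide (so that such points are neither double-counted nor dropped). Keeping the ceilings $\xi_{w,i}$, the parameter $x$, and the reflection $\mathbf n\mapsto{\bf 1}-\mathbf n$ mutually consistent across all six Weyl elements simultaneously is the delicate part; once the offsets $\bal_w^{(i)}$ are pinned down, the remaining manipulations are routine reindexing of geometric-type sums.
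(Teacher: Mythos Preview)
Your proposal is correct and mirrors the paper's proof: linearize the exponent via the substitution $\vec n=D\vec m+w(\vec k)$ (the paper keeps the $\vec\sigma$-shift separate rather than absorbing it into $\vec n'$ as you do, which is cosmetic), split by $\min$, and fold the six-term Weyl sum over the pairs $(1,aba),(a,ba),(b,ab)$ to land on $W_+$ together with the reflection $\mathbf n\mapsto\mathbf 1-\mathbf n$ and the sign $(-1)^\nu$ in \eqref{def:Fiq2_1-comp}. One small correction to your reading of the offsets: $x$ is not tied to the $\min$-region $i$ but is fixed once per $w$ by the sign of $w(\vec k)\lvert_1-w(\vec k)\lvert_2$; in the paper's argument this case split governs how the reindexing $(m_1,m_2)\mapsto(n_1,n_2)$ is set up \emph{before} the $\min$-region split into $i\in\{1,2\}$.
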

The proof can be found in Appendix~\ref{pf_lem:split_2_terms}.

For later convenience, we will also use the following rewriting of \eqref{def:Fiq2_1}:
\begin{equation}\label{def:Fiq}
	F_{0}^{( \bos )}( \tau)
	=\sum_{\bal\in\mathcal{S}}\eta_0( \bal)
	\sum_{\mathbf{n}\in \bal+\mathbb{N}_{0}^{2}}q^{Q( \bon )},
	\qquad
	F_{1}^{( \bosi )}(\tau )
	=\sum_{\bal\in\mathcal{S}}\eta_1(\bal)
	\sum_{\bon\in \bal+\mathbb{N}_{0}^{2}}n_{2}\,q^{Q( \mathbf{n} )},
\end{equation}
where we write
\begin{equation}\label{def:Scal}
	\CS = \bigcup_{w\in W_+} \bigl\{ \bal_w^{(1)},\, \bal_w^{(2)}, \bar\bal_w^{(1)},\, \bar{\bal}_{w}^{(2)} \bigr\},
\end{equation}
and
\begin{gather}\label{def:vareps}
	\bar\bal_w^{(i)} := {\bf 1} - \bal_w^{(i)}, \qquad
	\eta_0\bigl(\bal_w^{(i)}\bigr) = \eta_0\bigl(\bar{\bal}_{w}^{(i)}\bigr) = \frac{D}{m} w(\vec{s})\lvert_i
	, \qquad\eta_1\bigl(\bal_w^{(i)}\bigr) = -\eta_1\bigl(\bar{\bal}_{w}^{(i)}\bigr)= 1,\!\!\!
\end{gather}
where ${\bf 1}=(1,1)$. We will also write
\begin{equation}\label{def:Scal2}
	\ss = \bigcup_{w\in W_+} \bigl\{ \bal_w^{(1)}, \bal_w^{(2)} \bigr\},
\end{equation}
so that
\[
	F_{\nu}^{( \bos )}( \tau)
	=\sum_{\bal\in\ss}\eta_\nu( \bal)
	F_{\nu,\bal}.
\]

Note that $m\bal \in \ZZ^2$, since $\vec \sigma \in \Lambda$ and hence $\Delta w(\vec{\sigma}) = \langle \Delta\vec \omega, w(\vec\sigma)\rangle \in \ZZ$, where we write \[\Delta\vec \omega = \vec \omega_1-\vec \omega_2 = \frac{1}{3} ( \vec \alpha_1-\vec \alpha_2 ).\]

To study the radial limit of $F^{(\bos)}$, later we will be working with functions of the form $\sum_{\mathbf{n}\in \bal+\mathbb{N}_{0}^{2}} {\cal F}_\nu({\bf n}) $ for $\nu=0,1$ with $0\leq \alpha_i \leq 1$ for $i=1,2$. See (\ref{dfn:Fcalnu-O}). It will therefore be useful to note the following result on the effect of integral shifts of $\bal$.

\begin{lem}\label{lem:shifting}
	Let ${\bb}=\bal + (\delta \a_1,\delta \a_2)$ for $\delta \a_1, \delta \a_2 \in {\mathbb Z}$.
	Consider $F_{\nu,\bal}(\tau)$ for $\nu=0,1$ as defined in \eqref{def:Fiq2_1}. Then $F_{\nu,\bb}(\tau) - F_{\nu,\bal}(\tau)$ is in the integral linear span of one-dimensional lattice sums $\bigl\{{\tilde\theta}^1[\kappa,a], {\tilde\theta}^0[\kappa,a] \mid \kappa, a \in \QQ \bigr\}$, up to the addition of a finite polynomial $p(\tau)\in q^{\Delta}{\mathbb Z}[q]$, where
	\begin{gather*}
			{\tilde \theta}^0 [\kappa,a](\t) := \sum_{n\in \ZZ} |n+a| q^{\kappa(n+a)^2},\qquad
			{\tilde\theta}^1[\kappa,a](\t) := \sum_{n\in \ZZ} \operatorname{sgn}(n+a) q^{\kappa(n+a)^2}.
	\end{gather*}
\end{lem}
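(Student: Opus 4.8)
The plan is to reduce the arbitrary integer shift to a sequence of unit shifts $\bal\mapsto\bal\pm(1,0)$ and $\bal\mapsto\bal\pm(0,1)$, and to show that each unit shift changes $F_{\nu,\bal}$ by a single one-dimensional lattice sum of the advertised type, up to a polynomial. Since $\bb-\bal\in\ZZ^2$, interpolate by a chain $\bal=\bal_0,\bal_1,\dots,\bal_N=\bb$ whose consecutive terms differ by $\pm(1,0)$ or $\pm(0,1)$, and write $F_{\nu,\bb}-F_{\nu,\bal}=\sum_j\bigl(F_{\nu,\bal_{j+1}}-F_{\nu,\bal_j}\bigr)$. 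Both the span of the $\tilde\theta^{0/1}[\kappa,a]$ and the class $q^{\Delta}\ZZ[q]$ are closed under addition, so it suffices to analyze a single unit shift for an arbitrary $\bal$; note that \eqref{def:Fiq2_1-comp} makes sense for any $\bal$, so the intermediate vectors need not lie in $[0,1]^2$.

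First I would treat $\bal\mapsto\bal+(1,0)$. The sets $\bal+\NN_0^2$ and $\bigl(\bal+(1,0)\bigr)+\NN_0^2$ differ exactly by the boundary column $\{\mathbf n=(\alpha_1,\alpha_2+j):j\geq0\}$, while under the same shift the reflected set $\mathbf 1-\bal+\NN_0^2$ gains the column $\{\mathbf n=(-\alpha_1,1-\alpha_2+j):j\geq0\}$. Hence
\[
F_{\nu,\bal+(1,0)}-F_{\nu,\bal}=-\sum_{j\geq0}(\alpha_2+j)^\nu q^{Q(\alpha_1,\alpha_2+j)}+(-1)^\nu\sum_{j\geq0}(1-\alpha_2+j)^\nu q^{Q(-\alpha_1,1-\alpha_2+j)}.
\]
Now I would use that $Q$ is even: $Q(-\alpha_1,1-\alpha_2+j)=Q(\alpha_1,\alpha_2-1-j)$, and the substitution $\ell=-1-j$ turns the second sum into $\sum_{\ell\leq-1}(\alpha_2+\ell)^\nu q^{Q(\alpha_1,\alpha_2+\ell)}$, the factor $(-1)^\nu$ being absorbed by $(1-\alpha_2+j)^\nu=\bigl(-(\alpha_2+\ell)\bigr)^\nu$. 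The two pieces then glue into a single sum over $\ell\in\ZZ$ with $n_1=\alpha_1$ fixed, the $\ell\geq0$ terms entering with coefficient $-1$ and the $\ell\leq-1$ terms with $+1$, that is
\[
F_{\nu,\bal+(1,0)}-F_{\nu,\bal}=-\sum_{\ell\in\ZZ}\operatorname{sgn}(\alpha_2+\ell)\,(\alpha_2+\ell)^\nu\,q^{Q(\alpha_1,\alpha_2+\ell)}.
\]

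Completing the square gives $Q(\alpha_1,\alpha_2+\ell)=(\ell+a)^2+\tfrac34\alpha_1^2$ with $a=\alpha_2+\tfrac32\alpha_1\in\QQ$ (the condition $m\bal\in\ZZ^2$ ensures $\kappa,a\in\QQ$). For $\nu=0$ the summand is $\operatorname{sgn}(\alpha_2+\ell)\,q^{(\ell+a)^2+\frac34\alpha_1^2}$, which is $\tilde\theta^1[1,a]$ up to an overall rational power of $q$ and up to the finitely many $\ell$ with $\operatorname{sgn}(\alpha_2+\ell)\neq\operatorname{sgn}(\ell+a)$ (those $\ell$ lie between $-\alpha_2$ and $-a$ and contribute monomials, collected into $p$). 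For $\nu=1$ I would use $\operatorname{sgn}(x)x=|x|$ together with $\alpha_2+\ell=(\ell+a)-\tfrac32\alpha_1$ to split the summand into $|\ell+a|\,q^{(\ell+a)^2}$ and $-\tfrac32\alpha_1\operatorname{sgn}(\ell+a)\,q^{(\ell+a)^2}$, i.e.\ a combination of $\tilde\theta^0[1,a]$ and $\tilde\theta^1[1,a]$, again up to finite corrections. The shift $\bal\mapsto\bal+(0,1)$ is identical with rows replacing columns; there $n_2=\alpha_2$ is constant along the boundary, so the weight $n_2^\nu=\alpha_2^\nu$ is a constant and one obtains only a $\tilde\theta^1\bigl[3,\alpha_1+\tfrac12\alpha_2\bigr]$-type sum for both $\nu$ (now $\kappa=3$ coming from the $3n_1^2$ term). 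Negative unit shifts merely interchange the removed and added boundary lines, giving the same sums with opposite sign.

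The step I expect to be the main obstacle is the gluing in the second paragraph: one must check that the reflection $\mathbf n\mapsto-\mathbf n$ matches the boundary line of the $\bal$-quadrant against that of the reflected $\mathbf 1-\bal$-quadrant with precisely the right weight and sign, so that the two half-line (partial-theta) sums combine into a full lattice sum carrying a clean $\operatorname{sgn}$; this is exactly where the symmetric form of \eqref{def:Fiq2_1-comp}, namely the pairing of $\bal+\NN_0^2$ with $\mathbf 1-\bal+\NN_0^2$ and the sign $(-1)^\nu$, is indispensable. The remaining work — verifying that the discrepancy between the location of the sign change and the centre of the Gaussian produces only finitely many exceptional terms, all of the form $q^{\Delta}\ZZ[q]$, and tracking the overall rational power of $q$ and the rational prefactors $\tfrac32\alpha_1,\alpha_2$ that accompany each $\tilde\theta$ — is routine bookkeeping, but must be carried out for each of the four unit shifts and both values of $\nu$.
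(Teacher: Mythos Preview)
Your argument is correct and is essentially the same as the paper's: you peel off boundary columns/rows one unit shift at a time, use the evenness of $Q$ to glue the two half-line sums from the $\bal$-quadrant and the reflected $\mathbf{1}-\bal$-quadrant into a full one-dimensional lattice sum carrying a sign, and absorb the offset between the sign locus and the Gaussian centre into a finite polynomial. The paper does the same thing, only bundling the unit steps into a single $k$-sum and writing the exact sign as $\operatorname{sgn}\bigl(\ell+\tfrac12\bigr)$ rather than $\operatorname{sgn}(\alpha_2+\ell)$; your ``finitely many exceptional $\ell$'' clause already accounts for that discrepancy.
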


Note that ${\tilde \theta}^0$ and ${\tilde \theta}^1$ are themselves Eichler integrals of weight $1/2$ resp.\ $3/2$ theta functions, up to finite polynomials. See \eqref{dfn:Eichler_integral}.
The proof can be found in Appendix~\ref{pf_lem:shifting}.
Using the above lemma to shift vectors by integers, in the following section we will consider vectors $\bmu=(\mu_1,\mu_2)$ satisfying $ 0\leq \mu_1, \mu_2 \leq 1$.

\subsection{Radial limits}
\label{subsec:rad_lim}

In this subsection we aim to study the radial limit $\tau \to \frac{h}{k}\in\mathbb{Q}$, approached from the upper-half~plane $\mathbb H$, of the generalized $A_{2}$ false theta functions \smash{$F^{(\bos)}(\tau)$}. To do so, we will use the Euler--Maclaurin summation formula, a strategy also employed in \cite{bringmann2018higher}. First we will recall the following asymptotic expansion formula, which goes back to \cite{Zagier77}\footnote{To apply the formulas in \cite{Zagier77} correctly, it is important that the shift vector, denoted $\bos$ in this paper, must satisfy $\mu_1, \mu_2 \geq 0$.}.

For $\bmu=(\mu_1,\mu_2)$ with $ \mu_1, \mu_2\geq 0$, and $F\colon \BR^2_{\geq 0}\to\BR$ a smooth rapidly decaying $C^\infty$ function, the asymptotic expansion in the limit $t\to 0^+$ of $F$ is given by \cite{Zagier77}
\begin{gather}
		\sum_{\mathbf{n}\in\mathbb{N}_0^2} F((\mathbf{n} + \bmu) t ) \sim \frac{\mathcal{I}_F}{t^2} - \sum_{n\in\mathbb{N}} \frac{t^{n-2} }{ n! } \int_0^\infty {\rm d}x \bigl( {B_{n}(\mu_1)} F^{(n-1,0)} (0,x) +{B_{n}(\mu_2)} F^{(0,n-1)} (x,0) \bigr) \nonumber\\
\hphantom{\sum_{\mathbf{n}\in\mathbb{N}_0^2} F((\mathbf{n} + \bmu) t ) \sim}{}
 +
		\sum_{\mathbf{n}\in\mathbb{N}^2} \frac{ t^{n_1+n_2-2} }{ n_1! n_2! }\,B_{n_1}(\mu_1) B_{n_2}(\mu_2)
		F^{(n_1-1,n_2-1)} (0,0),\label{def:EM}
\end{gather}
where $\sim$ means that the two sides agree up to $O\bigl(t^N\bigr)$ terms for any $N\in\mathbb{N}$ and $\mathcal{I}_F$ is given by
\[
\mathcal{I}_F:=\int_0^\infty\int_0^\infty F(x_1,x_2){\rm d}x_1{\rm d}x_2 .
\]
In the above expression, $B_m(x)$ are the Bernoulli polynomials whose generating function is given by $\frac{t{\rm e}^{xt}}{ {\rm e}^t-1} = \sum_{n=0}^\infty B_n(x)\frac{t^n}{n!}$. A key feature of these polynomials that follows directly from the generating function is their reflection property
\begin{equation}\label{reflectionpropertyBernoulli}
	B_{m}(x)=(-1)^m B_{m}(1-x) .
\end{equation}

In order to apply \eqref{def:EM} to derive the radial limit, we will further rewrite our generalized false theta function \eqref{def:Fiq2_1} for when $\operatorname{Re}\tau \in \QQ$: for coprime integers $h$, $k$ and $t\in \RR_{>0}$, we have for~$\nu=0,1$
\begin{gather}
	F_\nu^{(\bos)}\bigl(\tfrac{h}{k} +\tfrac{it}{2\pi}\bigr) = \bigl(\sqrt{t}\bigr)^{-\nu} \sum_{\boldsymbol{\mu} \in \CS} \eta_\nu(\boldsymbol{\mu})
	\sum_{ \boldsymbol{\ell}\in (\ZZ/k\bar m)^2 } \! \ex\bigl(\tfrac{h}{k}Q(\bol+\bmu)\bigr)
 	\sum_{\substack{\boldsymbol{n} \in \frac{1}{k\bar m}(\boldsymbol{\ell} +\boldsymbol{\mu} ) + \mathbb{Z}^2\\ k\bar m \boldsymbol{n} \in \bmu + {\mathbb N}_0^2}}\!
	\CF_\nu \bigl( k\bar m \sqrt{t} \boldsymbol{n} \bigr),\!\!\!\label{F112new}
\end{gather}
where we have defined
\begin{equation}\label{dfn:symb}
	\delta:=(h,m),\qquad\bar m:= \frac{m}{ \delta},
\end{equation}
and $\CF_\nu(\boldsymbol{x})$ is given as in (\ref{dfn:Fcalnu-O}).
To see that the sum over $\bol$ is well-defined, note that $m \bmu \in \BZ^2$ for all $\bmu \in {\cal S}$. To derive the asymptotic expansion (see Proposition \ref{prop:Fs12asymptotics}) of \smash{$F_\nu^{(\bos)}$}, we first establish the following lemma.
\begin{lem}
	\label{lem:vanishingmainterm}
	Let $\CS$, $\eta_\nu$, and $\bar m$ be as given in \eqref{def:Scal},
	\eqref{def:vareps} and \eqref{dfn:symb}.
	Then for $\nu=0,1$
	\begin{gather*}
		\sum_{\bmu\in\CS}\eta_\nu(\bmu)\sum_{\boldsymbol{\ell}\in (\ZZ/k\bar m)^2 }\ex\bigl(\tfrac{h}{k}Q(\bol+\bmu)\bigr) =0.
	\end{gather*}
\end{lem}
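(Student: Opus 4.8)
The plan is to isolate the inner Gauss sum
\[
 G(\bmu) := \sum_{\bol\in(\ZZ/k\bar m)^2}\ex\bigl(\tfrac{h}{k}Q(\bol+\bmu)\bigr)
\]
as the basic object (it is well defined on $(\ZZ/k\bar m)^2$ since $m\bmu\in\ZZ^2$), and to exploit three invariances. Reindexing $\bol\mapsto\bol+\mathbf{e}$ gives $G(\bmu)=G(\bmu+\mathbf{e})$ for every $\mathbf{e}\in\ZZ^2$; reindexing $\bol\mapsto-\bol$ together with the evenness $Q(-\mathbf{x})=Q(\mathbf{x})$ of \eqref{def:quadraticformQx} gives $G(\bmu)=G(-\bmu)$. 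The crucial third one is Weyl invariance: under the identification $\ZZ^2\cong\Lambda$ one has $Q(\mathbf{m})=\tfrac12|\vec m|^2$, so every $w\in W$ acts on the $\mathbf{m}$-coordinates by an integer matrix of determinant $\pm1$ preserving both $Q$ and the sublattice $k\bar m\,\ZZ^2$; reindexing $\bol\mapsto w\bol$ and using $Q(w\mathbf{x})=Q(\mathbf{x})$ yields $G(w\bmu)=G(\bmu)$. Hence $G$ is constant on orbits of the group generated by $W$, integer translations, and $\bmu\mapsto-\bmu$.

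The case $\nu=1$ is then immediate. Since $\eta_1(\bar\bal_w^{(i)})=-\eta_1(\bal_w^{(i)})$ by \eqref{def:vareps} and $\bar\bal_w^{(i)}=\mathbf{1}-\bal_w^{(i)}$ differs from $-\bal_w^{(i)}$ by the integer vector $\mathbf{1}$, the shift- and negation-invariances give $G(\bar\bal_w^{(i)})=G(\bal_w^{(i)})$. Thus the contributions of $\bal_w^{(i)}$ and $\bar\bal_w^{(i)}$ cancel pairwise and the sum vanishes.

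For $\nu=0$ the weights of $\bal_w^{(i)}$ and $\bar\bal_w^{(i)}$ coincide, $\eta_0=\tfrac{D}{m}\langle\vec s,w^{-1}\vec\alpha_i\rangle$, so pairing only collapses the sum to $\tfrac{2D}{m}\sum_{w\in W_+}\sum_{i}\langle\vec s,w^{-1}\vec\alpha_i\rangle\,G(\bal_w^{(i)})$. The key observation is that modulo $\ZZ^2$ the vector $\bal_w^{(i)}$ depends only on $w(\vec\sigma)$, since the integers $x$ and $\xi_{w,i}$ in \eqref{eq:alphawidef0A} drop out, and that the six resulting vectors form a single orbit under $\langle W,\ZZ^2,-1\rangle$: writing $c$ for the generator of $W_+$ and $s_1$ for a simple reflection, a direct check in $\mathbf{m}$-coordinates gives $\bal_{cw}^{(1)}\equiv c\,\bal_w^{(1)}$ and $\bal_{cw}^{(2)}\equiv -s_1\,\bal_w^{(1)}\pmod{\ZZ^2}$. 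By the invariances above all six Gauss sums therefore equal one common value $G_0$, and the expression becomes $\tfrac{2D}{m}G_0\,\bigl\langle\vec s,\sum_{w\in W_+}\sum_i w^{-1}\vec\alpha_i\bigr\rangle$. Since as $(w,i)$ ranges over $W_+\times\{1,2\}$ the vector $w^{-1}\vec\alpha_i$ hits each of $\vec\alpha_1,\vec\alpha_2,-\vec\alpha_1-\vec\alpha_2$ exactly twice, one has $\sum_{w\in W_+}\sum_i w^{-1}\vec\alpha_i=0$, which forces the whole sum to vanish.

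The main obstacle is precisely the $\nu=0$ case: the reflection pairing $\bmu\leftrightarrow\bar\bmu$ that disposes of $\nu=1$ produces no cancellation here, and since $-1\notin W$ for $A_2$ there is no single Weyl element sending every vector to its negative. The resolution, and the step demanding the most care, is to establish the full Weyl invariance of $G$ in the $\mathbf{m}$-coordinates and then to verify the single-orbit relations for the $\bal_w^{(i)}$; after that the identity $\sum_{w\in W_+}w=0$ on the reflection representation closes the argument. One could instead evaluate $G(\bmu)$ explicitly as a two-dimensional Gauss sum and sum the closed forms, but the invariance route avoids that computation entirely.
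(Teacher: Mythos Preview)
Your argument is correct. The $\nu=1$ case matches the paper's exactly, but for $\nu=0$ you and the paper diverge in how you establish that the Gauss sum $G(\bmu)$ is constant on $\ss$. The paper replaces each $\bal_w^{(i)}$ by $\tfrac{1}{m}\mathbf{a}_w^{(i)}\in\tfrac{1}{m}\ZZ^2$ (same $G$-value by the integer-shift invariance), then performs an explicit Gauss sum manipulation: splitting $\bol=\mathbf{N}+k\boldsymbol{\nu}$ it evaluates the character sum over $\boldsymbol{\nu}$, finds it vanishes unless $\vec\sigma\in\bar m\Lambda$, and in the surviving case completes the square in $\mathbf{N}$ to show $G$ depends only on $Q(\mathbf{a})=\tfrac12|\vec\sigma|^2$, which is $W$-invariant. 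Your route instead exploits that under the identification $\mathbf{m}\mapsto\vec m$ the Weyl group acts on the $\mathbf{m}$-coordinates by integer unimodular matrices preserving $Q$, so $G$ is $W$-invariant by a change of summation variable; since modulo $\ZZ^2$ the six $\bal_w^{(i)}$ correspond to $\{-\tfrac{1}{m}w(\vec\sigma):w\in W_+\}\cup\{\tfrac{1}{m}(aba\cdot w)(\vec\sigma):w\in W_+\}$, which under $\langle W,-1\rangle$ collapse to the single Weyl orbit $\tfrac{1}{m}W\vec\sigma$, constancy of $G$ on $\ss$ follows without any explicit evaluation. Both proofs then close with the same identity $\sum_{\bal\in\ss}\eta_0(\bal)=0$, equivalently $\sum_{w\in W_+}w=0$ on the reflection representation. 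Your approach is shorter and more conceptual; the paper's computation, on the other hand, extracts the additional information that $G$ vanishes unless $\vec\sigma\in\bar m\Lambda$, though this is not needed for the lemma itself.
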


See Appendix \ref{pf_lem:vanishingmainterm} for the proof.

{\samepage\begin{lem}
	\label{lem:vanishingtermsWB-odd}
	Given $w\in W_+$
	\begin{equation*}
		\sum_{\bmu\in \{ \bmu_w^{(1)},\bmu_w^{(2)} \}}\sum_{0\leq \ell_1,\ell_2 < k\bar m} B_n\left( \frac{\ell_{1}+\mu_{1}}{k\bar{m}} \right)\ex{\left(\frac{h}{k}Q\left( \bol+\bmu \right)\right)}=0
	\end{equation*}
for any odd positive integer $n$.
\end{lem}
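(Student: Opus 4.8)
The plan is to pair the $j=1$ term against the $j=2$ term by an explicit involutive change of the summation variable $\bol\in(\ZZ/k\bar m)^2$ that reflects the argument of the Bernoulli factor while leaving the phase $\ex\bigl(\tfrac hk Q(\bol+\bmu)\bigr)$ fixed, so that the odd-$n$ reflection symmetry of the Bernoulli polynomials forces a cancellation. Two structural inputs drive this. First, writing $\bmu^{(j)}:=\bal_w^{(j)}$ as in \eqref{eq:alphawidef0A}, the first components satisfy $\mu_1^{(1)}+\mu_1^{(2)}=1$ (the $x$ and $\Delta w(\vec\sigma)/m$ contributions cancel in the sum). Second, the reflection identity \eqref{reflectionpropertyBernoulli} reads $B_n(1-u)=-B_n(u)$ for odd $n$. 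Together these suggest that a map sending $\tfrac{\ell_1+\mu_1^{(1)}}{k\bar m}\mapsto 1-\tfrac{\ell_1+\mu_1^{(1)}}{k\bar m}$ in the Bernoulli argument while preserving the phase will make the two inner sums negatives of one another.

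\textbf{The involution.} I would introduce the linear involution $\mathbf y=(y_1,y_2)\mapsto(-y_1,\,3y_1+y_2)$ on $\RR^2$ and check directly from \eqref{def:quadraticformQx} that it preserves $Q$; in weight coordinates it is the exchange $n_1\leftrightarrow n_2$ of the two fundamental-weight components, an isometry of $Q$. The key computation is that this map carries the shifted lattice of the $j=1$ term onto that of the $j=2$ term:
\[
\bigl(-(\ell_1+\mu_1^{(1)}),\,3(\ell_1+\mu_1^{(1)})+(\ell_2+\mu_2^{(1)})\bigr)=\bigl(\ell_1'+\mu_1^{(2)},\,\ell_2'+\mu_2^{(2)}\bigr),
\]
with $\ell_1'=-\ell_1-1$ and $\ell_2'=3\ell_1+\ell_2+(\xi_{w,1}-\xi_{w,2}+3x)$. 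The crucial point that makes this a genuine shift by \emph{integers} is the cancellation of the $\vec\sigma$-dependent parts, namely $\mu_2^{(1)}+3\mu_1^{(1)}-\mu_2^{(2)}=\xi_{w,1}-\xi_{w,2}+3x\in\ZZ$ (with $w(\vec\sigma)\lvert_1$, $w(\vec\sigma)\lvert_2$ terms cancelling identically), together with $\mu_1^{(1)}+\mu_1^{(2)}=1$. Since the linear part has determinant $-1$, the induced map $\bol\mapsto\bol'$ is a bijection of $(\ZZ/k\bar m)^2$.

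\textbf{Conclusion and main obstacle.} With the involution in hand I would change variables in the $j=2$ inner sum. Using the fundamental-domain representative $\ell_1'\equiv k\bar m-1-\ell_1$ one gets $\tfrac{\ell_1'+\mu_1^{(2)}}{k\bar m}=1-\tfrac{\ell_1+\mu_1^{(1)}}{k\bar m}$, so by \eqref{reflectionpropertyBernoulli} the Bernoulli factor acquires the sign $(-1)^n=-1$; by the isometry property $Q(\bol'+\bmu^{(2)})=Q(\bol+\bmu^{(1)})$, so the phase is unchanged. Hence the $j=2$ inner sum equals minus the $j=1$ inner sum and the total vanishes. The one technical point I regard as the main obstacle is the well-definedness needed to pass freely between representatives: the full summand is \emph{not} periodic in $\bol$ because the Bernoulli factor is not, so the change of variables must be realized as a bijection of the fixed fundamental domain, and one must verify that the phase alone, $\ex\bigl(\tfrac hk Q(\bol+\bmu)\bigr)$, descends to a function on $(\ZZ/k\bar m)^2$. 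This follows from $m\bmu\in\ZZ^2$ together with $\delta=(h,m)\mid h$ (see \eqref{dfn:symb}), since the change of $\tfrac hk Q$ under $\ell_1\mapsto\ell_1+k\bar m$ is, up to manifest integers, $3\tfrac h\delta(2m\mu_1+m\mu_2)\in\ZZ$ (and analogously for $\ell_2$); this lets me evaluate the phase at the ``naive'' image $(-\ell_1-1,\,3\ell_1+\ell_2+c)$ and invoke the isometry.
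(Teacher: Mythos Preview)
Your proof is correct and takes a genuinely different route from the paper's. The paper first splits $\bol=\mathbf{N}+k\boldsymbol\nu$ and performs the $\nu_2$-sum to reduce to the case $\bar m\mid 3a_1+2a_2$ (the complementary case vanishing trivially); only under this divisibility does it identify $\bal_w^{(2)}\equiv(1-\alpha_1,\,1-\alpha_2-y/\delta)\pmod{(0,1)}$, and then it uses the full reflection $\bol\mapsto(k\bar m-1)\mathbf{1}-\bol$ together with an additional shift $\ell_2\mapsto\ell_2+\delta^\ast y$ and the identity $\langle\bal,(0,1)\rangle_Q=-y/\delta$ to match the phases. Your argument bypasses all of this by exploiting a single $Q$-isometry $(y_1,y_2)\mapsto(-y_1,3y_1+y_2)$ (the Weyl reflection swapping the fundamental-weight components) that sends $\bol+\bmu^{(1)}$ directly to $\bol'+\bmu^{(2)}$ with an integral shift; the cancellation of the $\vec\sigma$-dependent pieces in $3\mu_1^{(1)}+\mu_2^{(1)}-\mu_2^{(2)}$ is exactly what makes this work unconditionally. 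What your approach buys is a one-step bijection with no case split, no modular inverse $\delta^\ast$, and no auxiliary inner-product identity; what the paper's approach buys is consistency with the decomposition and tools already used in the proof of Lemma~\ref{lem:vanishingmainterm}. Your handling of the ``main obstacle'' is also clean: the phase descends to $(\ZZ/k\bar m)^2$ because $m\bmu\in\ZZ^2$ and $\delta\mid h$, so one may evaluate it at the naive image and then pass to the fundamental-domain representative $\ell_1''=k\bar m-1-\ell_1$ for the Bernoulli factor.
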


See Appendix \ref{pf_lem:vanishingtermsWB-odd} for the proof.}

After establishing the above lemmas, upon using equations \eqref{def:EM} and \eqref{F112new} we are now ready to prove the following asymptotic formula.

\begin{prop}\label{prop:Fs12asymptotics}
	For $\nu=0,1$, the asymptotic limit
 near $\tfrac{h}{k}$ is given by
	\begin{align}
				F_\nu^{(\bos)}\left(\frac{h}{k} +\frac{{\rm i}t}{2\pi}\right) \sim{}&-2
			\sum_{\bal\in\ss} \eta_\nu{(\bmu)}
			\sum_{ 0\leq \ell_1,\ell_2 < k\bar m } \ex\left(\frac{h}{k}Q(\bol+\bmu)\right) \Bigg(
			\sum_{\substack{ n> 1\\n\equiv \nu (2)} }
			\frac{ \left( k\bar{m} \right)^{n-2} t^{\frac{n-2-\nu}{2}} }{n!}\nonumber\\
			&\times\int_0^\infty {\rm d}x \left( B_{n}\left(\frac{\ell_2+\mu_2}{k\bar{m}}\right) \CF_\nu^{(0,n-1)} (x,0) +B_{n}\left(\frac{\ell_1+\mu_1}{k\bar{m}}\right)\right.\nonumber
			\\
			&\times\left.\CF_\nu^{(n-1,0)} (0,x) \right) -
			\sum_{ \substack{ \mathbf{n}\in\mathbb{N}^2 \\ n_1 \equiv n_2 +\nu (\tmod\, 2) }}
			\frac{\left( k\bar{m}\right)^{n_1+n_2-2} t^{\frac{n_1+n_2-2-\nu}{2}} }{n_1! n_2!}\nonumber\\
&\times B_{n_1}\left(\frac{\ell_1+\mu_1}{k\bar{m}}\right)B_{n_2}\left(\frac{\ell_2+\mu_2}{k\bar{m}}\right)\CF_\nu^{(n_1-1,n_2-1)} (\boldsymbol{0}) \bigg) .\label{def:F11new-2}
\end{align}
\end{prop}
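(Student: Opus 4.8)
The plan is to substitute the rewriting \eqref{F112new} into the Euler--Maclaurin expansion \eqref{def:EM} and then to exploit two reflection symmetries of the data to collapse the sum over $\CS$ to one over $\ss$ and to force the stated parities. First I would bring the innermost sum of \eqref{F112new} into Zagier's form: the two constraints on $\bon$ force $k\bar m\,\bon=\bmu+\bol+k\bar m\,\mathbf{p}$ with $\mathbf{p}\in\mathbb{N}_0^2$ (using $0\le\ell_i<k\bar m$), so that the inner sum becomes $\sum_{\mathbf{p}\in\mathbb{N}_0^2}\CF_\nu\bigl(k\bar m\sqrt{t}\,(\mathbf{p}+\tfrac{\bol+\bmu}{k\bar m})\bigr)$. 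This is exactly the left-hand side of \eqref{def:EM} with $F=\CF_\nu$, expansion parameter $k\bar m\sqrt{t}$, and shift $\tfrac{\bol+\bmu}{k\bar m}\ge 0$ (nonnegativity being required by \eqref{def:EM}). Plugging in the three groups of terms, the leading $\mathcal{I}_{\CF_\nu}$-contribution is independent of $\bmu$ and $\bol$, so after weighting by $\eta_\nu(\bmu)\,\ex\bigl(\tfrac{h}{k}Q(\bol+\bmu)\bigr)$ and summing it factors through the quantity proven to vanish in Lemma \ref{lem:vanishingmainterm}, and the main term drops out.

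Next I would pair each $\bmu\in\ss$ with its reflection $\mathbf{1}-\bmu\in\CS\setminus\ss$ and reindex $\ell_i\mapsto k\bar m-1-\ell_i$ in the reflected summand. The reflection property \eqref{reflectionpropertyBernoulli} then contributes a factor $(-1)$ for each differentiated coordinate, while $\eta_\nu(\mathbf{1}-\bmu)=(-1)^\nu\eta_\nu(\bmu)$ gives a factor $(-1)^\nu$. The crucial point to verify is that the phase is unchanged: after the reindexing one has $\bol+(\mathbf{1}-\bmu)=k\bar m\,\mathbf{1}-(\bol'+\bmu)$, and since $Q$ is a quadratic form the difference $\tfrac{h}{k}\bigl(Q(k\bar m\,\mathbf{1}-(\bol'+\bmu))-Q(\bol'+\bmu)\bigr)$ is an integer; this relies on $m\bmu\in\ZZ^2$ and on $h/\delta\in\ZZ$ (with $\delta=(h,m)$, $\bar m=m/\delta$), so that $h\bar m\,\mu_i=(h/\delta)(m\mu_i)\in\ZZ$ makes the cross term integral. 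Hence the reflected summand equals $(-1)^{\nu+n}$ (boundary, one Bernoulli) resp.\ $(-1)^{\nu+n_1+n_2}$ (corner, two Bernoullis) times the original. Summing $\bmu$ together with $\mathbf{1}-\bmu$ therefore doubles the even-parity terms and annihilates the odd-parity ones, producing the prefactor $-2$, the reduction of the $\CS$-sum to an $\ss$-sum, and the constraints $n\equiv\nu\ (2)$ for the boundary sum and $n_1\equiv n_2+\nu\ (2)$ for the corner sum.

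For the corner terms this parity is in fact forced independently by $\CF_\nu^{(n_1-1,n_2-1)}(\mathbf{0})=0$ whenever $n_1+n_2\not\equiv\nu\ (2)$, a consequence of $\CF_\nu(-\mathbf{x})=(-1)^\nu\CF_\nu(\mathbf{x})$; this is a useful cross-check. It then remains to upgrade the boundary range to $n>1$. For $\nu=0$ this is automatic since $n=1$ has the wrong parity. For $\nu=1$ the surviving $n=1$ contribution splits into $B_1(\tfrac{\ell_2+\mu_2}{k\bar m})\,\CF_1^{(0,0)}(x,0)$, which vanishes because $\CF_1(x,0)=x_2\ex^{-Q}\big|_{x_2=0}=0$, and $B_1(\tfrac{\ell_1+\mu_1}{k\bar m})\,\CF_1^{(0,0)}(0,x)$, which is removed by Lemma \ref{lem:vanishingtermsWB-odd}: the first-component Bernoulli sum, weighted by the phase, vanishes over each Weyl pair $\{\bal_w^{(1)},\bal_w^{(2)}\}$ for odd $n$. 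Finally I would collect the powers of $t$: the prefactor $(\sqrt{t})^{-\nu}$ in \eqref{F112new} together with $(k\bar m\sqrt{t})^{n-2}$ and $(k\bar m\sqrt{t})^{n_1+n_2-2}$ reproduces the factors $(k\bar m)^{n-2}t^{(n-2-\nu)/2}$ and $(k\bar m)^{n_1+n_2-2}t^{(n_1+n_2-2-\nu)/2}$ in \eqref{def:F11new-2}.

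The step I expect to be the main obstacle is the phase bookkeeping in the reflection argument: one must check that the reindexing $\ell_i\mapsto k\bar m-1-\ell_i$ leaves $\ex\bigl(\tfrac{h}{k}Q(\bol+\bmu)\bigr)$ invariant up to the parity sign, which is precisely where the arithmetic inputs $m\bmu\in\ZZ^2$ and $h/\delta\in\ZZ$ enter, and then to organize the two distinct reflections ($\bal\mapsto\mathbf{1}-\bal$ and $\bal_w^{(1)}\leftrightarrow\bal_w^{(2)}$) together with the separate origin-parity of $\CF_\nu$ so that the counting of surviving terms is consistent and free of double-counting.
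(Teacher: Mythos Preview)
Your proposal is correct and follows essentially the same route as the paper: apply the Euler--Maclaurin formula \eqref{def:EM} with the substitutions $\bmu'=\tfrac{\bol+\bmu}{k\bar m}$, $t'=k\bar m\sqrt{t}$, $F=\CF_\nu$; kill the main term via Lemma~\ref{lem:vanishingmainterm}; fold $\CS$ to $\ss$ using the Bernoulli reflection \eqref{reflectionpropertyBernoulli} together with $\eta_\nu(\mathbf{1}-\bmu)=(-1)^\nu\eta_\nu(\bmu)$ (the paper records this as the identities \eqref{id:ref1}--\eqref{id:ref2}); and dispose of the $n=1$, $\nu=1$ boundary piece via $\CF_1(x,0)=0$ and Lemma~\ref{lem:vanishingtermsWB-odd}. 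Your phase-invariance check under $\ell_i\mapsto k\bar m-1-\ell_i$ and your parity cross-check via $\CF_\nu(-\mathbf{x})=(-1)^\nu\CF_\nu(\mathbf{x})$ are more explicit than the paper's presentation but amount to the same argument.
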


\begin{proof}
	In \eqref{F112new}, choose the sum over $\bol$ to be over the range $ 0\leq \ell_1, \ell_2 < k\bar m$ and apply the Euler--Maclaurin summation formula \eqref{def:EM} to $\sum_{\mathbf{n}\in\mathbb{N}_0^2} F((\mathbf{n} + \boldsymbol\mu') t' )$, with
	\[
	\bmu'= \frac{\bol+\bmu }{k\bar m},\qquad t'=k\bar m \sqrt{t},\qquad F({\boldsymbol x}) = \CF_\nu(\boldsymbol x) .
	\]
	First note that the potential divergent, $\bol$- and $\bmu$-independent term ${{{\cal I}_F}/{t^2}}$ actually vanishes contribution due to Lemma \ref{lem:vanishingmainterm}. Second, note that the reflection property \eqref{reflectionpropertyBernoulli} of the Bernoulli polynomials leads to the identity
	\begin{gather}
			\sum_{0\leq \ell_1,\ell_2< k\bar m}\ex\left(\frac{h}{k}Q(\bol+\bmu)\right) B_n\left(\frac{\ell_i +\mu_i}{k\bar m}\right)\nonumber
			\\ \qquad = (-1)^n
			\sum_{0\leq \ell_1,\ell_2< k\bar m}\ex\left(\frac{h}{k}Q(\bol+{\bf 1}-\bmu)\right) B_n\left(\frac{\ell_i +1-\mu_i}{k\bar m}\right)\label{id:ref1}
\end{gather}
	for $i=1,2$, and
	\begin{gather}
		\sum_{0\leq \ell_1,\ell_2< k\bar m}\ex\left(\frac{h}{k}Q(\bol+\bmu)\right) B_{n_1}\left(\frac{\ell_1 +\mu_1}{k\bar m}\right)B_{n_2}\left(\frac{\ell_2 +\mu_2}{k\bar m}\right)\nonumber
			\\
\qquad= (-1)^{n_1+n_2}
			\sum_{0\leq \ell_1,\ell_2< k\bar m}\ex\left(\frac{h}{k}Q(\bol+{\bf 1}-\bmu)\right)\nonumber\\
\phantom{\qquad=}{}\times B_{n_1}\left(\frac{\ell_1 +1-\mu_1}{k\bar m}\right)
			B_{n_2}\left(\frac{\ell_2 +1-\mu_2}{k\bar m}\right),\label{id:ref2}
\end{gather}
	where we have shifted the sum over $\bol$ to $-\bol + {\bf 1}\left(k\bar{m} - 1\right)$. From \eqref{def:vareps} and \eqref{def:Scal2}, since $\bmu$ and $\bar \bmu:={\bf 1}-\bmu$ appear in $\CS$ in pairs, we can fold the sum into a sum over $\tilde \CS$. Moreover, from~$\eta_\nu(\bar \bmu) = (-1)^\nu \eta_\nu(\bmu)$, the above identity implies that terms in the sum in the third line of \eqref{def:F11new-2} vanish unless $n_1+ n_2 \equiv \nu (2)$. Similarly, the terms in the second line of \eqref{def:F11new-2} vanish unless $n\equiv \nu (2)$. To show that the potentially divergent term with $n=1$ when $\nu=1$ vanishes, we first note that $\CF_1^{(0,0)}(x,0) =\CF_1(x,0) = 0$ and we are hence left to show that
	\begin{gather*}
			\int_0^\infty {\rm d}x \CF_1(0,x) \,\sum_{\bmu\in\ss} \eta_\nu{(\bmu)}
			\sum_{ 0\leq \ell_1,\ell_2 < k\bar m } \ex\left(\frac{h}{k}Q(\bol+\bmu)\right) B_1\left( \frac{\ell_{1}+\mu_{1}}{k\bar{m}} \right) \\
			\qquad=
			\int_0^\infty {\rm d}x \CF_1(0,x) \,\sum_{w\in W_+}\sum_{\bmu \in \{ \bmu_w^{(1)}, \bmu_w^{(2)}\}}
			\sum_{ 0\leq \ell_1,\ell_2 < k\bar m } \ex\left(\frac{h}{k}Q(\bol+\bmu)\right) B_1\left( \frac{\ell_{1}+\mu_{1}}{k\bar{m}} \right) = 0,
\end{gather*}
	which is true by Lemma \ref{lem:vanishingtermsWB-odd}.
\end{proof}

\subsection{Companions}\label{subsec:companions}

Having established the asymptotic expansions of the functions $F_{\nu}^{( \bosi)}$ in the limit $\tau \to \frac{h}{k} \in \QQ$, in this subsection we will show that certain functions \smash{$\BE_\nu^{*(\bos)}(\tau)$}, consisting of generalised complementary error functions, are their companion functions in the sense that they have compatible asymptotic behaviour.

\begin{defn}\label{def:companions}
We say two functions $\widehat F$ and $\widecheck F$ on the upper-half plane are {\it companions} of each other if their asymptotic expansions near the rationals satisfy
\[
\widehat F\left(\frac{h}{k} +\frac{{\rm i}t}{2\pi}\right) \sim\sum_{m\geq 0} a_{h,k}(m) t^m,
\]
and
\[
\widecheck F\left(\frac{h}{k} +\frac{{\rm i}t}{2\pi}\right) \sim\sum_{m\geq 0} a_{-h,k}(m) (-t)^m,
\]
for all coprime integers $h$, $k$ with $k>0$.
\end{defn}
Importantly, given a function on the upper-half-plane, its companion is anything but unique; the definition of the companion function is insensitive to the addition of functions vanishing at all rationals.

To establish companions of the generalised $A_2$ false theta functions, we define for $\nu=0, 1$
\begin{gather}\label{def:E01star}
\BE_\nu^{*(\bos)}(\tau)
:= \frac{1}{2} \sum_{\bmu \in \CS}\eta_\nu(\bmu)\Biggl(\sum_{\bon \in \bmu + \NN_0^2} g_\nu(n_1,n_2)
+ \sum_{\bon \in (1-\mu_1,\mu_2) + \NN_0^2} g_\nu(-n_1,n_2) \Biggl),
\end{gather}
where
\begin{align}
 &g_\nu(n_1,n_2):= q^{-Q(\bon)} \bigg(n_2^\nu M_2^\ast \bigl( \sqrt{3} ; \sqrt{3v} (2n_1 + n_2), \sqrt{v}n_2\bigr)\nonumber\\
& \hphantom{g_\nu(n_1,n_2):=}{}
+ \delta_{\nu,1}\frac{ {\rm e}^{-\pi v (3n_1+2n_2)^2}}{2 \pi \sqrt{v}} M^\ast\bigl(\sqrt{3v} n_1\bigr) \bigg)\label{gnu}
\end{align}
and
$v:=\operatorname{Im}\tau$,
and show that, when writing the asymptotic expansion of $F_\nu^{(\bos)}$ as
\[
F_\nu^{(\bos)}\left(\frac{h}{k} +\frac{{\rm i}t}{2\pi}\right) \sim\sum_{m\geq 0} a_{h,k}^{(\nu)}(m) t^m,
\]
we have the following proposition.
\begin{prop}\label{prop:E01asymp}
	$\BE_{\nu}^{*(\bos)}\left( \tau \right)$ as defined in equation \eqref{def:E01star} is a companion of $F_\nu^{(\bosi)}(\tau)$, whose asymptotic expansion in the limit $\tau=\frac{h}{k}+\frac{{\rm i} t}{2\pi}$, $t\to 0^+$ satisfies
	\[
		\BE_{\nu}^{*(\bosi)}\left(\frac{h}{k}+\frac{{\rm i}t}{2\pi} \right)\sim
		\sum_{m\ge 0}a_{-h,k}^{(\nu)}\left( m \right)\left( -t \right)^{m}.
	\]
\end{prop}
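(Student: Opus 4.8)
The plan is to prove Proposition \ref{prop:E01asymp} by computing the asymptotic expansion of $\BE_\nu^{*(\bos)}$ near a rational $\tfrac{h}{k}$ directly and checking that the coefficients match those of $F_\nu^{(\bos)}$ obtained in Proposition \ref{prop:Fs12asymptotics}, with $t\mapsto -t$ as required by Definition \ref{def:companions}. The key structural input is that the functions $M^\ast$ and $M_2^\ast$ appearing in $g_\nu$ are (one- and two-dimensional) generalized complementary error functions, whose defining property is that they interpolate between sign functions and smooth Gaussian-damped tails. Concretely, $M^\ast(u)$ and $M_2^\ast$ differ from the corresponding products of $\operatorname{sgn}$-functions by terms that are themselves of error-function type, and it is precisely these smooth correction terms that are built to reproduce (under $t\to 0^+$) the same power-series-in-$t$ data as the Euler--Maclaurin expansion of $F_\nu^{(\bos)}$.

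First I would recast $\BE_\nu^{*(\bos)}$ as a lattice sum over $\bmu + \NN_0^2$ (folding the second sum in \eqref{def:E01star} into the reflected shift $(1-\mu_1,\mu_2)$ exactly as the pairing $\bmu\leftrightarrow\bar\bmu$ was used in the proof of Proposition \ref{prop:Fs12asymptotics}), so that the whole object is a sum over $\CS$ weighted by $\eta_\nu$. Next I would substitute $\tau = \tfrac{h}{k} + \tfrac{{\rm i}t}{2\pi}$, so $v=\operatorname{Im}\tau = \tfrac{t}{2\pi}$, and expand the $q$-power $q^{-Q(\bon)} = \ex(-\tfrac{h}{k}Q(\bon))\,{\rm e}^{t Q(\bon)}$. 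The exponential phase $\ex(-\tfrac{h}{k}Q(\bon))$ is periodic in $\bon$ modulo $k\bar m$, which lets me organize the sum into residue classes $\bol \in (\ZZ/k\bar m)^2$ times a smooth sum over the fine lattice $\tfrac{1}{k\bar m}(\bol+\bmu)+\mathbb{Z}^2$, mirroring the rewriting \eqref{F112new}. The remaining smooth sum over each residue class is then amenable to an Euler--Maclaurin / asymptotic analysis in $t\to 0^+$, now applied to the error-function integrand $g_\nu$ rather than to $\CF_\nu$.

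The heart of the matching is the following: I would use the known asymptotic expansion of the generalized error functions $M^\ast, M_2^\ast$ as their Gaussian scale goes to zero (equivalently $t\to 0^+$), which produces an asymptotic series in half-integer powers of $t$ whose coefficients are again integrals of derivatives of $\CF_\nu$ against Bernoulli polynomials. The sign flips $(-1)^n$ and $(-1)^{n_1+n_2}$ coming from the reflection property \eqref{reflectionpropertyBernoulli}, combined with $\eta_\nu(\bar\bmu)=(-1)^\nu\eta_\nu(\bmu)$, enforce the same parity selection rules ($n\equiv\nu$, $n_1+n_2\equiv\nu \bmod 2$) seen in Proposition \ref{prop:Fs12asymptotics}; these are what convert the manifest $t$-series into the required alternating $(-t)^m$ form. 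I would verify term by term that the boundary and corner contributions reproduce exactly the second- and third-line structures of \eqref{def:F11new-2}, and that Lemma \ref{lem:vanishingmainterm} and Lemma \ref{lem:vanishingtermsWB-odd} again kill the potentially divergent $t^{-2}$ and $t^{-1/2}$ terms on the companion side.

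The main obstacle I anticipate is controlling the asymptotic expansion of the two-dimensional generalized error function $M_2^\ast(\sqrt 3; \cdot,\cdot)$ uniformly as $t\to 0^+$, and in particular showing that its small-$t$ expansion produces precisely the same Bernoulli-weighted derivative integrals of $\CF_\nu$ as the Euler--Maclaurin formula gives for $F_\nu^{(\bos)}$ — including the cross term whose presence is responsible for the depth-two (rather than merely iterated depth-one) structure. This requires carefully interchanging the lattice summation with the asymptotic expansion of $M_2^\ast$ and tracking the overall normalization factor $\tfrac12$ in \eqref{def:E01star} together with the $\delta_{\nu,1}$ correction term in \eqref{gnu}, which supplies exactly the one-dimensional boundary piece needed to reconcile the $\nu=1$ case. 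Once the leading Gaussian-tail identities for $M^\ast$ and $M_2^\ast$ are in hand, the remaining bookkeeping is routine and parallels the proof of Proposition \ref{prop:Fs12asymptotics}.
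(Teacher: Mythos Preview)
Your overall strategy is exactly the paper's: rewrite $\BE_\nu^{*(\bos)}$ in the same residue-class form as \eqref{F112new} (this is \eqref{def:E0exp} in the paper), apply the two-variable Euler--Maclaurin formula \eqref{def:EM} to the smooth summand built from $M_2^\ast$ and $M^\ast$, and invoke Lemmas \ref{lem:vanishingmainterm} and \ref{lem:vanishingtermsWB-odd} to kill the would-be divergent pieces. The parity selection via \eqref{reflectionpropertyBernoulli} and $\eta_\nu(\bar\bmu)=(-1)^\nu\eta_\nu(\bmu)$ works identically.

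There is one imprecision that matters. You write that the Euler--Maclaurin coefficients for $g_\nu$ are ``again integrals of derivatives of $\CF_\nu$''; they are not --- they are integrals and point evaluations of derivatives of $\CG_\nu(x_1,x_2)=\tfrac12 x_2^\nu M_2^\ast\bigl(\sqrt3;\tfrac{\sqrt3(2x_1+x_2)}{\sqrt{2\pi}},\tfrac{x_2}{\sqrt{2\pi}}\bigr){\rm e}^{Q(\boxx)}$ and its reflected version $\tilde\CG_\nu$. The nontrivial bridge between the two expansions is Lemma \ref{lem:F1F3identities}, which gives
\[
\int_0^\infty\!\CF_\nu^{(0,n)}(x,0)\,{\rm d}x=(-1)^{\lfloor(n-1)/2\rfloor}\!\int_0^\infty\!\bigl(\CG_\nu^{(0,n)}+\tilde\CG_\nu^{(0,n)}\bigr)(x,0)\,{\rm d}x,
\]
and analogous identities for the other boundary and the corner terms. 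It is these explicit sign factors $(-1)^{\lfloor(n-1)/2\rfloor}$ and $(-1)^{\lfloor(n_1+n_2)/2\rfloor}$ --- not the parity selection alone --- that convert the $t$-series for $\BE_\nu^{*(\bos)}$ into a $(-t)$-series relative to $F_\nu^{(\bos)}$; the phase flip $\ex(\tfrac{h}{k}Q)\to\ex(-\tfrac{h}{k}Q)$ (from $q^{-Q}$ versus $q^{Q}$) supplies the $h\mapsto -h$. The paper takes these $\CF_\nu$--$\CG_\nu$ identities from \cite[Section~7]{bringmann2018higher}; you correctly flag this matching as the main obstacle, but you should name it as the specific algebraic input rather than a generic ``Gaussian-tail identity'', since without it the argument does not close.
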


For the proof of the proposition, it will be convenient to define the following functions and establish the identities in Lemma \ref{lem:F1F3identities}. For $\nu=0,1$, let the functions $\CG_\nu, \tilde \CG_\nu\colon \RR^2\to \RR$ be given~by
\[
	\CG_\nu(x_1,x_2) = \frac{1}{2} x_2^\nu M_2^* \left( \sqrt{3}; \frac{ \sqrt{3} (2x_1+x_2) }{\sqrt{2\pi}}, \frac{x_2}{\sqrt{2\pi}} \right) {\rm e}^{Q(\boxx)} = \tilde \CG_\nu(-x_1,x_2),
\]
where the functions $M_2$, $M_2^*$ are defined in equations \eqref{def:M2} and \eqref{def:M2star}, respectively, following \cite{Alexandrov:2016enp}. The following relations to ${\cal F}_\nu$ have been established in \cite[Section~7]{bringmann2018higher}.
\begin{lem}\label{lem:F1F3identities}
	For $\nu=0,1$ the following identities hold for $n\in \NN$, $n\equiv \nu +1~(2)$:
	\begin{gather*}
				\int_0^\infty {\rm d}x \CF_\nu^{(0,n)}(x,0) = (-1)^{\lfloor \frac{n-1}{2}\rfloor}
			\int_0^\infty {\rm d}x \bigl(\CG_\nu^{(0,n)}+\tilde\CG_\nu^{(0,n)} \bigr) (x,0),\\
			\int_0^\infty {\rm d}x \CF_\nu^{(n,0)}(0,x) = (-1)^{\lfloor \frac{n-1}{2}\rfloor}
			\int_0^\infty {\rm d}x \bigl(\CG_\nu^{(n,0)}+(-1)^\nu \tilde\CG_\nu^{(n,0)} \bigr) (0,x) -
			\frac{1}{\sqrt{2}} \left[\frac{{\rm d}^{n}}{{\rm d} y^{n}} {\rm e}^{\frac{3y^2}{4}} \right]_{y=0}
 \end{gather*}	
	and
	\[
		\CF_\nu^{(\bon)} (\boldsymbol{0}) =
		(-1)^{\lfloor\frac{n_1+n_2}{2}\rfloor} \bigl( \CG^{(\bon)}_\nu (\boldsymbol{0}) + {( -1 )^{n_{1}+1}}\tilde{\CG}^{(\bon)}_\nu (\boldsymbol{0}) \bigr)
	\]
	for $n_1+n_2 \equiv \nu ~(2)$.
\end{lem}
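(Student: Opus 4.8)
The plan is to establish the three identities of Lemma~\ref{lem:F1F3identities} by relating the Gaussian-type function $\CF_\nu(\boldsymbol{x}) = x_2^\nu {\rm e}^{-Q(\boldsymbol{x})}$ to the generalized complementary error functions $M_2$ and $M_2^\ast$ through their defining integral representations. The key observation is that $\CG_\nu$ and $\tilde\CG_\nu$ are built from $M_2^\ast$ by construction, so after multiplying out the factor ${\rm e}^{Q(\boxx)}$ the claimed identities become statements comparing repeated derivatives (in the directions $(0,n)$, $(n,0)$, and the mixed $(\boldsymbol{n})$) of ${\rm e}^{-Q(\boxx)}$ against the corresponding derivatives of the error-function kernels, integrated along one of the coordinate axes or evaluated at the origin. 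First I would recall from \cite{Alexandrov:2016enp} and \cite{bringmann2018higher} the precise integral definitions \eqref{def:M2} and \eqref{def:M2star} of $M_2$ and $M_2^\ast$, together with the factorization/recursion properties of $M_2^\ast$ that express its derivatives in terms of one-dimensional Gaussians and single-variable complementary error functions $M^\ast$.

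Next I would treat the three identities in turn. For the first identity (integration of $\CF_\nu^{(0,n)}$ along the $x_1$-axis with $x_2=0$), I would differentiate the relation $\CG_\nu(x_1,x_2) = \tilde\CG_\nu(-x_1,x_2)$ that is stated in the setup, observe that differentiating $n$ times in the second slot and setting $x_2=0$ interchanges $\CG_\nu$ and $\tilde\CG_\nu$ up to the reflection $x_1\mapsto -x_1$, and use the parity $n\equiv \nu+1\ (2)$ to fix the sign $(-1)^{\lfloor (n-1)/2\rfloor}$. For the second identity (integration of $\CF_\nu^{(n,0)}$ along the $x_2$-axis with $x_1=0$), the same reflection mechanism produces the combination $\CG_\nu^{(n,0)} + (-1)^\nu \tilde\CG_\nu^{(n,0)}$; the extra boundary term $-\tfrac{1}{\sqrt{2}}\bigl[\tfrac{{\rm d}^n}{{\rm d}y^n}{\rm e}^{3y^2/4}\bigr]_{y=0}$ arises from the non-decaying single-variable Gaussian piece ${\rm e}^{-\pi v(3n_1+2n_2)^2}M^\ast(\cdots)$ in $g_\nu$, i.e.\ from the boundary contribution when the error-function kernel does not vanish at the lower endpoint. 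For the pointwise identity at $\boldsymbol{0}$, I would directly evaluate the mixed derivatives $\CF_\nu^{(\boldsymbol{n})}(\boldsymbol{0})$ and match them against $\CG_\nu^{(\boldsymbol{n})}(\boldsymbol{0}) + (-1)^{n_1+1}\tilde\CG_\nu^{(\boldsymbol{n})}(\boldsymbol{0})$, using $\CG_\nu(x_1,x_2)=\tilde\CG_\nu(-x_1,x_2)$ to convert derivatives in the first variable into a sign $(-1)^{n_1}$, with the overall factor $(-1)^{\lfloor (n_1+n_2)/2\rfloor}$ again dictated by the parity constraint $n_1+n_2\equiv\nu\ (2)$.

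Throughout, the central analytic input is that ${\rm e}^{-Q(\boxx)}$ can be written, via completing the square in the quadratic form $Q$, as a product of a one-dimensional Gaussian and the kernel whose repeated antiderivatives define $M_2^\ast$; differentiating the error-function kernel returns a Gaussian, which is precisely the mechanism that links derivatives of $\CF_\nu$ to derivatives of $\CG_\nu,\tilde\CG_\nu$. I expect the main obstacle to be the careful bookkeeping of signs and of the boundary terms: the parity factors $(-1)^{\lfloor (n-1)/2\rfloor}$ and $(-1)^{\lfloor(n_1+n_2)/2\rfloor}$ emerge from repeatedly differentiating ${\rm e}^{-Q}$ and tracking which half-integer powers of $\pi v$ and which signs of the arguments survive, and the isolated boundary term in the second identity must be shown to come exactly from the endpoint of the $x$-integral where the complementary error function equals its limiting value rather than decaying. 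Since these are precisely the computations carried out in \cite[Section~7]{bringmann2018higher}, I would organize the proof to mirror that reference, verifying that the $A_2$-specific change of variables $(2x_1+x_2, x_2)$ inside $M_2^\ast$ reproduces their kernel after rescaling by $1/\sqrt{2\pi}$, and then invoke their computation for the derivative and parity identities.
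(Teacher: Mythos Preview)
Your proposal is aligned with the paper's treatment: the paper does not supply its own proof of this lemma but simply states that ``the following relations to $\CF_\nu$ have been established in \cite[Section~7]{bringmann2018higher}'' and moves on. Your plan to mirror that reference, together with the parity and reflection bookkeeping you outline, is therefore exactly the intended route.
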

Now we are ready to prove Proposition \ref{prop:E01asymp}. As before, we can re-express $\BE_\nu^{\ast(\bos)}(\tau)$ when $ \operatorname{Re}\tau\in \QQ$ as
\begin{align}
\BE_\nu^{*(\bos)}\left(\frac{h}{k} +\frac{{\rm i}t}{2\pi}\right) ={}&\bigl(\sqrt{t}\bigr)^{-\nu}
\sum_{\boldsymbol{\mu} \in \CS} \eta_\nu(\boldsymbol{\mu})
\sum_{ \boldsymbol{\ell}\in (\ZZ/k\bar m)^2 } \bigg( \ex\left(-\frac{h}{k}Q(\bol+\bmu)\right)
 \nonumber\\
&
\times\sum_{\substack{\boldsymbol{n} \in \frac{1}{k\bar m}(\boldsymbol{\ell} +\boldsymbol{\mu} ) + \mathbb{Z}^2\\
k\bar m \boldsymbol{n} \in \bmu + {\mathbb N}_0^2}}
\CG_\nu' \left( k\bar m \sqrt{t} \boldsymbol{n} \right)+\ex\left(-\frac{h}{k}Q(-1+\mu_1-\ell_1,\mu_2+\ell_2)\right)\nonumber\\
&\times
\sum_{\substack{\boldsymbol{n} \in \frac{1}{k\bar m}(\boldsymbol{\ell}+ (1-\mu_1,\mu_2) ) + \mathbb{Z}^2\\ k\bar m \boldsymbol{n} \in (1-\mu_1,\mu_2) + {\mathbb N}_0^2}} \tilde\CG_\nu' \bigl( k\bar m \sqrt{t} \bon \bigr)
\bigg),\label{def:E0exp}
\end{align}
where
\begin{equation}\label{def:tot_G}
\CG_\nu' (x_1,x_2) = \CG_\nu(x_1,x_2) + \delta_{\nu,1}{\frac{1}{2\sqrt{2\pi }}}{\rm e}^{-({\frac{3}{2}}x_1^2+3x_1x_2 +x_2^2)} M^\ast\bigg(\sqrt{\frac{3}{2\pi}}x_1\bigg).
\end{equation}

Applying \eqref{def:EM} to \eqref{def:E0exp}, from Lemma \ref{lem:vanishingmainterm} we see that
\begin{align}
\BE_\nu^{*(\bos)}\left(\frac{h}{k} +\frac{{\rm i} t}{2\pi}\right) \sim{}& 2
\sum_{\bmu\in\ss} \eta_\nu{(\bmu)}
\sum_{ 0\leq \ell_1,\ell_2 < k\bar m } \ex\left(-\frac{h}{k}Q(\bol+\bmu)\right)\biggl( -
\sum_{\substack{ n> 1\\n\equiv \nu (2)} }\frac{ \left( k\bar{m} \right)^{n-2} t^{\frac{n-2-\nu}{2}} }{n!}\nonumber
\\
& \quad\times\bigg[
B_{n}\left(\frac{\ell_2+\mu_2}{k\bar{m}}\right) \int_0^\infty {\rm d}x \bigl(\CG_\nu^{(0,n-1)}+\tilde\CG_\nu^{(0,n-1)} \bigr)(x,0) \nonumber\\
&+B_{n}\left(\frac{\ell_1+\mu_1}{k\bar{m}}\right)\int_0^\infty {\rm d}x \bigl(\CG_\nu^{(n-1,0)}+(-1)^\nu \tilde\CG_\nu^{(n-1,0)} \bigr)(0,x)
 \bigg]	\nonumber\\
& +
\sum_{ \substack{ \mathbf{n}\in\mathbb{N}^2 \\ n_1 \equiv n_2 +\nu (\tmod 2) }}
\frac{\left( k\bar{m}\right)^{n_1+n_2-2} t^{\frac{n_1+n_2-2-\nu}{2}} }{n_1! n_2!}B_{n_1}\left(\frac{\ell_1+\mu_1}{k\bar{m}}\right)\nonumber\\
&
\quad\times B_{n_2}\left(\frac{\ell_2+\mu_2}{k\bar{m}}\right)\bigl(\CG_\nu^{(n_1-1,n_2-1)}-{( -1 )^{n_{1}+1}}\tilde \CG_\nu^{(n_1-1,n_2-1)}\bigr)
(\boldsymbol{0}) \bigr)\label{ap_newBE}
\end{align}
holds for $\nu=0$, where we have also used \eqref{id:ref1}--\eqref{id:ref2} to identify the contribution from $\bmu$ and~$\bar \bmu$. For $\nu=1$, one needs to take the additional term in \eqref{def:tot_G} into account. As shown in detail in \cite[Section~7]{bringmann2018higher}, the contributions of these terms to the asymptotic expansion vanish due to Lemma \ref{lem:vanishingtermsWB-odd}.

Similarly, combining Proposition \ref{prop:Fs12asymptotics} and Lemma \ref{lem:F1F3identities}, and again evoking Lemma \ref{lem:vanishingtermsWB-odd}, the comparison with \eqref{ap_newBE} shows that the Proposition \ref{prop:E01asymp} is true.

\subsection{Eichler integrals}
\label{subsec:EichlerIntegrals}

In this subsection, we will relate the companion of the generalised $A_2$ false theta function $F^{(\rho)}$ to certain Eichler integrals. More precisely, we will show that the companion function $\BE_\nu^{*(\bos)}$ in Proposition \ref{prop:E01asymp} is an Eichler integral given in Proposition \ref{prop:Estarless_eichler}, up to one-dimensional integrals specified in Lemma \ref{lem:Estar_vs_starless}.

To show this, for $\nu =0, 1$ we first define the following functions
\begin{align}
\BE_\nu^{(\bos)}(\tau) :={}& \frac{1}{2} \sum_{\boldsymbol{\mu} \in \ss} \eta_\nu(\boldsymbol{\mu}) \sum_{\mathbf{n} \in \boldsymbol{\mu} + \BZ^2} q^{-Q(\mathbf{n})} \left[ \left({\frac{1}{2\pi {\rm i}}} {\frac{\partial}{\partial z}}\right)^\nu \right.\nonumber\\&
\left. \times\left({\rm e}^{2\pi {\rm i} \nu n_2 z} M_2 \left( \sqrt{3} ; \sqrt{3v} (2n_1 + n_2), \sqrt{v} \left(n_2 - {\frac{2\im(z)}{v}}\right)\right) \right) \right]_{z=0} \label{def:E01starless}
\end{align}
that are closely related to the companion function $\BE_\nu^{*(\bos)}$. More precisely, their difference is given in terms of one-dimensional error function (see \eqref{def:M}) as
\begin{lem}\label{lem:Estar_vs_starless}
 \[
\mathbb{E}^{(\rho)}_{\nu}(\tau)=\mathbb{E}^{*(\rho)}_{\nu}(\tau)+\sum_{\bmu\in\tilde{\CS}}\eta_{\nu}(\bmu)X_\nu(\bmu),
\]
where $X_\nu$ are given by
\begin{align*}
 X_{0}( \bmu )={}& \left( \sum_{\bon\in\bmu+\BN_{0}^{2}}+\sum_{\bon\in( 1,1 )-\bmu+\BN_{0}^{2}}-\sum_{\bon\in(1-\mu_{1},\mu_{2})+\BN_{0}^{2}}-\sum_{\bon\in(\mu_{1},1-\mu_{2})+\BN_{0}^{2}} \right)\\
 &\times\bigl( \delta_{n_{1},0}( 1-\delta_{n_{2},0} )M\bigl( 2\sqrt{v}n_{2} \bigr)+\delta_{n_{2},0}( 1-\delta_{n_{1},0} )M\bigl( 2\sqrt{3v}n_{1} \bigr)-\delta_{n_{1},0}\delta_{n_{2},0} \bigr)q^{-Q( \bon)}\\
 ={}& \begin{cases}
 ( -1 )^{\mu_{1}}\bigl( \sum_{k=\mu_{2}+\BN_{0}}-\sum_{k=1-\mu_{2}+\BN_{0}} \bigr)M\bigl( 2\sqrt{v}k \bigr)q^{-k^{2}},& \mu_{1}\in\{ 0,1 \}\not \ni \mu_{2},\\
 ( -1 )^{\mu_{2}}\bigl( \sum_{k=\mu_{1}+\BN_{0}}-\sum_{k=1-\mu_{1}+\BN_{0}} \bigr)M\bigl( 2\sqrt{3v}k \bigr)q^{-3k^{2}},& \mu_{1}\not\in\{ 0,1 \}\ni\mu_{2},\\
 ( -1 )^{\mu_{1}+\mu_{2}+1},& \mu_{1}\in\{ 0,1 \}\ni\mu_{2},\\
 0,& \mu_{1}\not\in\{ 0,1 \}\not \ni \mu_{2}\\
 \end{cases}
\end{align*}
for $\nu = 0$ and
\begin{align*}
 X_{1}( \bmu )={}& \left( \sum_{\bon\in\bmu+\BN_{0}^{2}}-\sum_{\bon\in\left( 1,1 \right)-\bmu+\BN_{0}^{2}}-\sum_{\bon\in( 1-\mu_{1},\mu_{2} )-\bmu+\BN_{0}^{2}}+\sum_{\bon\in(\mu_{1},1-\mu_{2} )-\bmu+\BN_{0}^{2}} \right)\\
&\times \delta_{n_{1},0}\left( n_{2}M\bigl( 2\sqrt{v}n_{2} \bigr)+\frac{1}{4\pi\sqrt{v}}{\rm e}^{-4\pi n_{2}^{2}v} \right)q^{-n_{2}^2} \\
 ={}& \begin{cases}
( -1 )^{\mu_{1}}\bigl( \sum_{k\in\mu_{2}+\BN_{0}^{2}}+\sum_{k\in1-\mu_{2}+\BN_{0}^{2}} \bigr)\\
\quad\times\bigl( kM\bigl( 2\sqrt{v}k \bigr)+\frac{1}{4\pi\sqrt{v}}{\rm e}^{-4\pi k^{2}v} \bigr)q^{-k^{2}}, & \mu_{1}\in \{ 0,1 \},\\
 0, & \mu_{1}\not\in\{ 0,1 \}
 \end{cases}
\end{align*}
for $\nu = 1$.

\end{lem}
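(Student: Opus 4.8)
The plan is to reduce the claimed identity to the known behaviour of the two-dimensional generalized error function $M_2$ and its one-dimensional degeneration $M$, combined with a careful folding of the lattice sum from $\ss$ over $\BZ^2$ to $\CS$ over the positive octant $\NN_0^2$.

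First I would carry out the $z$-derivative in the definition \eqref{def:E01starless} of $\BE_\nu^{(\bos)}$. For $\nu=0$ the operator is trivial, and setting $z=0$ immediately gives $\BE_0^{(\bos)}=\frac{1}{2}\sum_{\bmu\in\ss}\eta_0(\bmu)\sum_{\bon\in\bmu+\BZ^2}q^{-Q(\bon)}M_2(\sqrt{3};\sqrt{3v}(2n_1+n_2),\sqrt{v}\,n_2)$. For $\nu=1$ the operator $\frac{1}{2\pi\mathrm{i}}\partial_z$ hits both the exponential ${\rm e}^{2\pi\mathrm{i}n_2 z}$, producing the factor $n_2$, and the argument $\sqrt{v}(n_2-2\im(z)/v)$ of $M_2$, producing $\frac{1}{2\pi\sqrt{v}}\partial_{u_2}M_2$. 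Using the differential reduction of $M_2$ to one dimension from \cite{Alexandrov:2016enp} (with $\alpha=\sqrt{3}$, so that $\frac{u_1-\alpha u_2}{\sqrt{1+\alpha^2}}=\sqrt{3v}\,n_1$), this second contribution becomes exactly the Gaussian times $M^\ast(\sqrt{3v}\,n_1)$ appearing in the $\delta_{\nu,1}$ term of $g_\nu$ in \eqref{gnu}. Thus in both cases $\BE_\nu^{(\bos)}$ is rewritten as $\sum_{\bmu\in\ss}\eta_\nu(\bmu)\sum_{\bon\in\bmu+\BZ^2}\tilde g_\nu(n_1,n_2)$, where $\tilde g_\nu$ agrees with $g_\nu$ except that $M_2$ stands in place of the starred $M_2^\ast$.

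Next I would fold the full-lattice sum over $\bmu+\BZ^2$ into sums over the octant $\NN_0^2$. Splitting $\BZ^2$ into its four sign-sectors and using the parity of $M_2$ under reflection of its arguments together with the pairing $\CS=\ss\sqcup\{\mathbf{1}-\bmu\}$ and the relation $\eta_\nu(\mathbf{1}-\bmu)=(-1)^\nu\eta_\nu(\bmu)$ from \eqref{def:vareps}, the interior lattice points (those with $n_1,n_2\neq 0$) reorganize precisely into the two octant sums $\sum_{\bon\in\bmu+\NN_0^2}g_\nu(n_1,n_2)$ and $\sum_{\bon\in(1-\mu_1,\mu_2)+\NN_0^2}g_\nu(-n_1,n_2)$ defining $\BE_\nu^{\ast(\bos)}$ in \eqref{def:E01star}; it is at this step that $M_2$ passes to $M_2^\ast$, via the relation between the two generalized error functions recorded in \cite{Alexandrov:2016enp} and \cite[Section~7]{bringmann2018higher}.

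Finally I would collect the leftover boundary lattice points, namely those with $n_1=0$ or $n_2=0$, which the folding neither cancels nor absorbs into $\BE_\nu^\ast$. On these axes $M_2(\sqrt{3};u_1,u_2)$ degenerates to the one-dimensional $M$ (again by the reduction formulas of \cite{Alexandrov:2016enp}), and the surviving contributions assemble into the octant-sum form of $X_\nu(\bmu)$ stated in the lemma; summing the Kronecker deltas over each axis then yields the explicit case-by-case expressions, the cases being dictated by whether $\mu_1$ and $\mu_2$ lie in $\{0,1\}$, i.e.\ by whether the shifted lattice $\bmu+\BZ$ actually meets the integer boundary point. I expect the main obstacle to be precisely this boundary bookkeeping in the final step: one must track with care which sign-sector each axis point belongs to, how the $\bmu$ and $\mathbf{1}-\bmu$ contributions combine there, and the doubly-degenerate corner $n_1=n_2=0$ that produces the constant terms $(-1)^{\mu_1+\mu_2+1}$, all while keeping the signs from $\eta_\nu$ and the parities of $M_2$ mutually consistent.
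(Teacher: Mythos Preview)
Your proposal is correct and follows essentially the same route as the paper: unwind the $z$-derivative to expose the summand $\tilde g_\nu$ built from $M_2$, fold the full $\BZ^2$ lattice sum into octant sums by combining the parity of $M_2$ with the pairing $\CS=\ss\cup(\mathbf{1}-\ss)$ and the sign rule $\eta_\nu(\mathbf{1}-\bmu)=(-1)^\nu\eta_\nu(\bmu)$, and then isolate the axis and corner contributions where $M_2$ and $M_2^\ast$ differ. The paper organises the folding via an explicit auxiliary sum over $\varepsilon\in\{1,-1\}$ (effectively symmetrising in $n_1\mapsto -n_1$ before splitting into octants), but this is exactly your ``four sign-sectors'' step in different notation; the boundary bookkeeping you flag as the main obstacle is indeed the only place where any real care is needed.
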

The proof can be found in Appendix \ref{apx:proofEstar_vs_starless}.
Note that in Section~\ref{subsec:companions}, we used $\mathbb{E}^{*(\rho)}_{\nu}$ for the application of Euler--Maclaurin formula, as $g_\nu(n_1,n_2)$ in \eqref{gnu} is continuous on \smash{${\mathbb R}^2_{\geq0}$} as a function of $(n_1,n_2)$, unlike the counterpart in \smash{$\mathbb{E}^{(\rho)}_{\nu}$}; the difference between the two functions then comes precisely from the cases when at least one of $n_1$ and $n_2$ vanishes. Moreover, from \cite{zwegers2008mock}
\[
M(x\sqrt{v})={\rm i}\frac{x}{\sqrt{2}} q^{\frac{x^2}{4}} \int_{-\bar{\tau}}^{{\rm i}\infty} \frac{{\rm e}^{\frac{\pi {\rm i} x^2 w}{2}}}{\sqrt{-{\rm i}(w+\tau)}}{\rm d}w
\]
we see that $X_\nu(\bmu)$ can be written as a linear combination of non-holomorphic Eichler integrals of rank one theta functions \eqref{dfn:theta}, and hence
\begin{equation*}
\mathbb{E}^{(\rho)}_{\nu}=\mathbb{E}^{*(\rho)}_{\nu} + z_{1\rm d}
\end{equation*}
in the notation of \eqref{intro_1dpiece}.

Finally, by carefully rewriting the integrals in the rank two generalised complementary error functions $M_2$ in the definition of \smash{$\BE_\nu^{(\bos)}$}, we arrive at the following relation between the companion and the Eichler integrals, as shown in Appendix \ref{app:proofPropIntegral}.

\begin{prop}\label{prop:Estarless_eichler}
\begin{gather*} 
	\BE_\nu^{(\bos)}(\tau) = \sum_{{w} \in W^+} \BE_{\nu,w}^{(\bos)}(\tau),
 \end{gather*}
where
\be\label{Eichler_Enu_Com}
	\BE_{\nu,w}^{(\bos)}(\tau) := \frac{\sqrt{3}}{4 \pi^{\nu}}
	\int_{-\bar{\tau}}^{{\rm i} \infty}\int_{z_1}^{{\rm i} \infty}
	\frac{ \Theta_{\nu,w}^{(\bos)}(
	{\bf z})}{({-{\rm i}(z_1+\tau)})^{1/2} (-{\rm i}(z_2+\tau))^{\nu+1/2}}{\rm d}z_2{\rm d}z_1
\ee
and
\be\label{def:Theta}
 \Theta_{\nu,w}^{(\bos)}(\mathbf{z}) = (m)^{2\nu-3}	\left( {3D\Delta w(\vec{s})} \right)^{1-\nu} \sum_{\delta\in\BZ/2}
\theta^1_{m,m\delta + \langle\vec{\rho},w(\vec{\sigma})\rangle} \left(\frac{z_1}{m}\right)
\theta^{1-\nu}_{m,m\delta + \langle \Delta\vec\omega, w(\vec{\sigma})\rangle}\left(\frac{3z_2}{m} \right)
\ee
are given by sums of products of two theta functions of one-dimensional lattices.
\end{prop}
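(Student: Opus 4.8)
The plan is to start from the definition \eqref{def:E01starless} of $\BE_\nu^{(\bos)}$ and to convert the rank-two complementary error function $M_2$ appearing there into an iterated double integral, thereby turning the two-dimensional lattice sum over $\mathbf n$ into a nested contour integral whose integrand is a product of two one-dimensional theta functions. The one-variable prototype is exactly the formula for $M(x\sqrt v)$ quoted from \cite{zwegers2008mock} above: multiplied by the appropriate Gaussian it becomes a single Eichler-type integral from $-\bar\tau$ to ${\rm i}\infty$. The function $M_2(\sqrt 3;u_1,u_2)$ admits the analogous representation as a double iterated integral over nested contours both ending at ${\rm i}\infty$; this is precisely the representation established and used in \cite[Section~7]{bringmann2018higher}, and invoking it is the first step.

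The decisive algebraic fact is that the $A_2$ quadratic form diagonalizes in exactly the two directions that occur as arguments of $M_2$. Setting $a=2n_1+n_2$ and $b=n_2$, a direct computation from \eqref{def:quadraticformQx} gives $Q(\mathbf n)=\tfrac34 a^2+\tfrac14 b^2$, so that $q^{-Q(\mathbf n)}$ factorizes and the $M_2$-arguments $\sqrt{3v}\,a$ and $\sqrt v\,b$ each live in a single factor. Next I would substitute the double-integral representation of $M_2$ into \eqref{def:E01starless}, interchange the lattice sum with the integration, and complete the square in each integration variable separately; since both the quadratic form and the Gaussian kernels factor through $a$ and $b$, the sum over $\mathbf n\in\bmu+\ZZ^2$ splits into two essentially independent one-dimensional lattice sums. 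After rescaling $z_1\mapsto z_1/m$ and $z_2\mapsto 3z_2/m$ to match the theta conventions \eqref{eq:thetafunc}, these sums assemble into $\theta^1_{m,\cdot}(z_1/m)$ and $\theta^{1-\nu}_{m,\cdot}(3z_2/m)$. The residue classes $m\delta+\langle\vec\rho,w(\vec\sigma)\rangle$ and $m\delta+\langle\Delta\vec\omega,w(\vec\sigma)\rangle$ are read off from the explicit $\bmu=\bal_w^{(i)}$ in \eqref{eq:alphawidef0A}, while the sum over $\delta\in\ZZ/2$ encodes the parity constraint $a\equiv b\ (\mathrm{mod}\ 2)$ which couples the two directions (the original $\ZZ^2$ being a sublattice of the product of the $a$- and $b$-lattices, not the full product).

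It remains to treat the operator $\bigl(\tfrac{1}{2\pi{\rm i}}\,\partial_z\bigr)^\nu$, to fix the constants, and to reorganize the sum. For $\nu=1$ the $z$-derivative must be carried through the interchange of sum and integral: it acts both on the prefactor ${\rm e}^{2\pi{\rm i} n_2 z}$ and on the $z$-dependence $n_2-\tfrac{2\im(z)}{v}$ inside $M_2$, and one checks that the resulting combination of an explicit $n_2$-factor and a derivative of $M_2$ is exactly the object whose integral representation selects $\theta^0$ rather than $\theta^1$ in the second direction, producing the exponent $1-\nu$ of \eqref{def:Theta}. Collecting the overall constants and the $\eta_\nu$-weights reproduces the prefactor $(m)^{2\nu-3}(3D\Delta w(\vec s))^{1-\nu}$: for $\nu=0$ the two summands $i=1,2$ carry the weights $\tfrac Dm w(\vec s)\lvert_i$ and combine into the difference $\Delta w(\vec s)$, whereas for $\nu=1$ these weights are trivial. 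Finally, folding the sum over $\bmu\in\ss=\bigcup_{w\in W^+}\{\bal_w^{(1)},\bal_w^{(2)}\}$ (cf.\ \eqref{def:Scal2}) organizes everything into a single sum over $w\in W^+$ of the double integrals \eqref{Eichler_Enu_Com}, which is the claim.

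I expect the main obstacle to be the completion-of-square and contour bookkeeping in the second step: one must verify that the nested double integral produced by the $M_2$-representation matches the nested contour $\int_{-\bar\tau}^{{\rm i}\infty}\!\int_{z_1}^{{\rm i}\infty}$ of \eqref{Eichler_Enu_Com} after both rescalings, and that the two-dimensional lattice sum factorizes cleanly with the correct residue classes modulo $2m$. The $\nu=1$ derivative adds a further layer of care, since one has to confirm that no spurious boundary contributions survive the evaluation at $z=0$; handling exactly these subleading, genuinely one-dimensional pieces is what forces the separate treatment of Lemma~\ref{lem:Estar_vs_starless}, so throughout one must be careful to attribute the intrinsically two-dimensional part to \eqref{Eichler_Enu_Com} and to leave the one-dimensional remainder to $z_{1\rm d}$.
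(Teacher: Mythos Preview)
Your overall framework is right—one does expand $M_2$ as an iterated integral \`a la \cite{bringmann2018higher}, interchange sum and integral, and read off products of one-dimensional thetas—but there is a structural gap in the middle. The iterated-integral representation of $M_2(\sqrt3;\sqrt{3v}(2n_1+n_2),\sqrt v\,n_2)$ does \emph{not} produce a single term factoring through your variables $a=2n_1+n_2$, $b=n_2$. It produces \emph{two} terms, corresponding to the two diagonalizations of $Q$: one through $(2n_1+n_2,n_2)$ and a second through $(3n_1+2n_2,n_1)$ (this is exactly the content of the display for $M_2$ in Appendix~\ref{app:specialFn}; after summing over the lattice they become $\theta_1+\theta_2$ for $\nu=0$ and $2\theta_3-\theta_4$ together with a $\theta_5$-piece for $\nu=1$). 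The crucial step, which your proposal omits, is the cancellation lemma showing that, after weighting by $\eta_\nu$ and summing over $\bal\in\ss$, the contributions $\theta_1$, $\theta_3$, $\theta_5$ vanish identically; only then does a single theta product per $w\in W_+$ survive.

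Concretely, your chosen diagonalization $(a,b)=(2n_1+n_2,n_2)$ is precisely the branch that \emph{cancels}. The theta arguments it would yield are $\theta^1_{m,m(2\alpha_1+\alpha_2+\delta)}(3z_1/m)\,\theta^{1-\nu}_{m,m(\alpha_2+\delta)}(z_2/m)$, with the factors $3$ and $1$ on the wrong slots compared to \eqref{def:Theta}; and for $\bal_w^{(1)}$ one computes $m(2\alpha_1+\alpha_2)\equiv -\tfrac13\bigl(w(\vec\sigma)\lvert_1+2w(\vec\sigma)\lvert_2\bigr)$, which is neither $\langle\vec\rho,w(\vec\sigma)\rangle$ nor $\langle\Delta\vec\omega,w(\vec\sigma)\rangle$. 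The surviving branch, through $(3n_1+2n_2,n_1)$, is what gives $m(3\alpha_1+2\alpha_2)\equiv -\langle\vec\rho,w(\vec\sigma)\rangle$ and $m\alpha_1\equiv\langle\Delta\vec\omega,w(\vec\sigma)\rangle$, matching \eqref{def:Theta}. So to complete the argument you must (i) keep both terms of the $M_2$ expansion, (ii) prove the vanishing of the $\theta_1,\theta_3,\theta_5$ sums over $\ss$ (this uses pairings such as $\theta_3(\bal_w^{(1)},\cdot)=-\theta_3(\bal_{baw}^{(2)},\cdot)$ together with $ba\in W_+$), and (iii) then read off the indices from the $(3n_1+2n_2,n_1)$ branch.
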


Combining Lemma \ref{lem:Estar_vs_starless} and identity (\ref{eq:error_rewriting}), we establish that the companions of the generalised~$A_2$ false theta functions are given in terms of Eichler integrals of rank two and rank one theta functions.

\subsection{Quantum modularity}
\label{subsec:QM}

In this subsection we review the relation between the Eichler integrals discussed in the previous subsection and quantum modular forms.

To define quantum modular forms, we first recall the familiar definition of slash operators, acting on a (vector-valued) function on the compactified upper-half plane $\hat {\mathbb H} := \mathbb H\cup {\mathbb Q} \cup \{{\rm i}\infty\}$: given $k\in\frac{1}{2}\BZ$ and $n$-dimensional multiplier $\chi$ for $\Gamma\subset {\rm SL}_2(\mathbb Z)$, namely a group homomorphism~$\Gamma \to GL_n(\mathbb C)$, and for every $\gamma=\bigl(\begin{smallmatrix} a & b \\ c & d \end{smallmatrix}\bigr)\in\Gamma$, we define the action of the slash operator~$\lvert_{\chi,k}\gamma$ acting on $f=(f_r)\colon\hat {\mathbb H}\to \mathbb C^n$ as
\begin{equation*}
 f\lvert_{k,\chi}\gamma(\tau) := f(\gamma\tau)\chi(\gamma) (c\tau+d)^{-k},
\end{equation*}
where we have written $\gamma \tau = \frac{a\tau+b}{c\tau+d}$ as usual.

\begin{defn}[real-analytic vector-valued quantum modular form]\label{dfn:qmf}
A function $f\colon {\mathbb Q}\to\BC^n$, is a (vector-valued) quantum modular form of weight $k\in\frac{1}{2}\BZ$ with multiplier $\chi$ for $\Gamma \subset {\rm SL}_2(\BZ)$ if for every $\gamma\in\Gamma$ the vector-valued function, the cocycle\footnote{The terminology ``cocycle'' stems from the interpretation of $h_\gamma(\tau)$ in terms of Eichler cohomology via the Eichler--Shimura isomorphism \cite{Eichler1957, Shimura1959SurLI}.}
\[
	h_\gamma(\tau) := f(\tau) - f\lvert_{k,\chi}\gamma(\tau)
\]
can be extended to an open subset of $\BR$ and is real-analytic there. We will denote the vector space of such forms by ${Q}_k(\Gamma,\chi)$.
\end{defn}

\begin{defn}[vector-valued higher depth quantum modular form, see \cite{bringmann2019vectorvalued}]\label{dfn:higher_depth_qmf}
A function $f\colon{\mathbb Q}\to\BC^n$ is a quantum modular form of depth $N\in\BN$ and weight $k\in\frac{1}{2}\BZ$ with multiplier $\chi$ for $\Gamma\subset {\rm SL}_2(\BZ)$ if for every $\gamma\in\Gamma$
\[
	h_\gamma := f - f\lvert_{k,\chi}\gamma \in \bigoplus_j Q_{k_j}^{N_j} (\Gamma,\chi_j) \mathcal{O}(R),
\]
where $j$ runs over a finite set, $k_j\in\frac{1}{2}\BZ$, $N_j\in\BN$ with $\max (N_j)=N-1$, $\chi_j$ are multipliers, $\mathcal{O}(R)$ is the space of real-analytic functions on $R\subset\BR$ which contains an open subset of $\BR$. We also set $Q_k^1 (\Gamma,\chi)= Q_k (\Gamma,\chi)$, $Q_k^0 (\Gamma,\chi)=1$ and $Q_k^N (\Gamma,\chi)$ denotes the space of quantum modular forms of weight $k$, depth $N$, and with $n$-dimensional multiplier $\chi$ for $\Gamma$.
\end{defn}

\subsubsection*{Eichler integrals and quantum modular forms}
It is known that (holomorphic and non-holomorphic) Eichler integrals furnish examples of quantum modular forms. We define the following two vector-valued functions
\begin{equation}\label{dfn:Eich_period}
 f^\ast(\tau) := \int_{-\bar{\tau}}^{{\rm i}\infty} {\rm d}w
\frac{\overline{f(-\bar w)}}{(-{\rm i}(w+\tau))^{2-k}}, \qquad
r_{f,\frac{d}{c}} (x) := \int_{\frac{d}{c}}^{{\rm i}\infty} {\rm d}w
\frac{\overline{f(-\bar w)}}{(-{\rm i}(w+x))^{2-k}},
\end{equation}
for $f$ a vector-valued cusp form with multiplier $\chi$, and $\frac{d}{c}\in {\mathbb Q}$. We say $f^\ast$ is the {\it non-holomorphic Eichler integral} of the cusp form $f$. It is easy to verify that \smash{$r_{f,\frac{d}{c}}$} is a real analytic function on~${\mathbb R}$, which captures the error of modularity of $f^\ast$:
\[
 ( f^\ast- f^\ast\lvert_{2-k,\bar\chi}\gamma)(\tau) = r_{f,\frac{d}{c}} (\tau)
\]
for $\gamma \in \Gamma$, where $\bar\chi$ is the conjugate multiplier $\bar\chi(\gamma)=\overline{\chi(\gamma)}$. As a result, we have $f^\ast\in Q_{2-k}(\Gamma, \bar \chi)$.

Similarly, for $f_i\colon \BH \to \CC^{n_i}$, $i=1,2$ a pair of vector-valued cusp forms (or modular form if the weight is $1/2$) with weight $k_i$ and multiplier system $\chi_i$, we define the following matrix-valued (valued in $\CC^{n_1\times n_2}$) functions:
\be \label{dfn:double_eichler}
(f_1,f_2)^\ast (\tau) := \int_{-\bar{\tau}}^{{\rm i} \infty} {\rm d}w_1
\int_{w_1}^{{\rm i}\infty} {\rm d}w_2 \,
\frac{\overline{f_1(-\bar w_1)} \overline{f_2(-\bar w_2)}}{(-{\rm i}(w_1+\tau))^{2-k_1}(-{\rm i}(w_2+\tau))^{2-k_2}}
\ee
and
\[
r_{f_1,f_2,\frac{d}{c}} (x) := \int_{\frac{d}{c}}^{{\rm i} \infty} {\rm d}w_1
\int_{w_1}^{\frac{d}{c}} {\rm d}w_2 \,
\frac{\overline{f_1(-\bar w_1)} \overline{f_2(-\bar w_2)}}{(-{\rm i}(w_1+x))^{2-k_1}(-{\rm i}(w_2+x))^{2-k_2}} .
\]
The function $(f_1,f_2)^\ast$ is often referred to as a non-holomorphic double Eichler integral, or iterated non-holomorphic Eichler integral more generally.

One can show that for $\gamma \in\Gamma$,
\[
((f_1,f_2)^\ast - (f_1,f_2)^\ast\lvert_{4-k_1-k_2,\bar\chi_1,\bar\chi_2}\gamma)(\tau) =
r_{f_1,f_2,\frac{d}{c}} (\tau) + I_{f_1}(\tau) r_{f_2,\frac{d}{c}}(\tau),
\]
where the slash operator acts in the following way in terms of the components. Write $I_{i,j}:=(f_{1,i},f_{2,j})^\ast$ to denote the non-holomorphic double Eichler integral of the components of the vector-valued modular forms $f_1$ and $f_2$. Then
\begin{equation*}
 (I\lvert_{k,\bar\chi_1,\bar\chi_2}\gamma)_{i,j}(\tau)
 := (c\tau+d)^{-k}\sum_{i'=1}^{n_1}\sum_{j'=1}^{n_2}
 I_{i',j'}(\gamma\tau) \overline{(\chi_1(\gamma))_{i',i}}\overline{(\chi_2(\gamma))_{j',j}}.
\end{equation*}
We have \smash{$r_{f_1,f_2,\frac{d}{c}} (\tau) \in\mathcal{O} (\BR\backslash \{ -\frac{d}{c} \})$}, and \smash{$r_{f_1,f_2,\frac{d}{c}} (\tau) \in\mathcal{O}(\BR)$} if both $f_i$ are cusp forms \cite{bringmann2018higher}. From the above, we see that $({f_1,f_2})^\ast$ is a vector-valued depth two quantum modular form valued in~$\CC^{n_1\times n_2}$ with multiplier $\bar \chi$ given by
\begin{equation*}
 (\bar \chi(\gamma))_{(i,j),(i',j')}= \overline{(\chi_1(\gamma))_{i',i}}\overline{(\chi_2(\gamma))_{j',j}}.
\end{equation*}
In this paper, we will mainly encounter modular forms with real coefficients, satisfying
\[
\overline{f(-\bar \tau)} = f(\tau),
\]
and we will often use this property to simply write $f(\tau)$ in the integrand.

The above-mentioned quantum modular property of the double Eichler-integral $(f_1,f_2)^\ast$, together with the form of the companion of $F^{(\rho)}$ as given in Proposition \ref{prop:E01asymp}, its rewriting up to one-dimensional pieces in Lemma \ref{lem:Estar_vs_starless}, and the relation to double Eichler integrals shown in Proposition \ref{prop:Estarless_eichler}, leads to the following result.
\begin{thm}
 The generalized $A_2$ false theta functions defined in \eqref{dfn:gen_false} is, up to an overall rational power of $q$ and possibly the addition of a finite polynomial in $q$ and $q^{-1}$, a sum of depth two quantum modular forms.
\end{thm}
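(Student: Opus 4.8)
The plan is to assemble the statement from the chain of results already established in this section, the key device being the companion function. First I would invoke Lemma~\ref{lem:split_2_terms} to write $F^{(\bos)}(\tau) = F_0^{(\bos)}(m\tau) + D\,F_1^{(\bos)}(m\tau)$, so that it suffices to treat each partial theta function $F_\nu^{(\bos)}$, $\nu=0,1$; the overall rescaling $\tau\mapsto m\tau$ only conjugates the relevant congruence subgroup and multiplier and hence does not affect the depth. The radial asymptotics of $F_\nu^{(\bos)}$ near every rational $h/k$ were computed in Proposition~\ref{prop:Fs12asymptotics}, and Proposition~\ref{prop:E01asymp} exhibits $\BE_\nu^{*(\bos)}$ as a companion in the sense of Definition~\ref{def:companions}, i.e.\ a real-analytic function on $\HH$ whose radial limits at the rationals match those of $F_\nu^{(\bos)}$ after the reflection $t\mapsto -t$. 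Since a companion carries the same boundary data, it is enough to show that $\BE_\nu^{*(\bos)}$ is a sum of depth two quantum modular forms.

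Next I would identify $\BE_\nu^{*(\bos)}$ explicitly with iterated Eichler integrals. Lemma~\ref{lem:Estar_vs_starless} gives $\BE_\nu^{(\bos)} = \BE_\nu^{*(\bos)} + z_{1\rm d}$, where $z_{1\rm d}$ is a linear combination of the one-dimensional pieces $X_\nu(\bmu)$; by the integral representation of $M$ quoted from~\cite{zwegers2008mock}, each $X_\nu(\bmu)$ is a non-holomorphic Eichler integral of a rank-one unary theta function~\eqref{dfn:theta}, hence of the form $f^\ast$ for a weight-$\tfrac12$ or weight-$\tfrac32$ modular form. Proposition~\ref{prop:Estarless_eichler} then expresses $\BE_\nu^{(\bos)}$ as a sum over $w\in W^+$ of the double integrals $\BE_{\nu,w}^{(\bos)}$ in~\eqref{Eichler_Enu_Com}, whose integrands~\eqref{def:Theta} are products $\theta^1_{m,\cdot}(z_1/m)\,\theta^{1-\nu}_{m,\cdot}(3z_2/m)$ of one-dimensional theta functions; these are precisely the non-holomorphic double Eichler integrals $(f_1,f_2)^\ast$ of~\eqref{dfn:double_eichler} with $f_1,f_2$ the relevant vector-valued unary theta functions.

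I would then feed this into the quantum-modularity input of Section~\ref{subsec:QM}. The single Eichler integrals making up $z_{1\rm d}$ lie in $Q_{2-k}(\Gamma,\bar\chi)$ and so are depth one quantum modular forms, while the double Eichler integrals $(f_1,f_2)^\ast$ are, by the cocycle computation recorded there, depth two quantum modular forms of weight $4-k_1-k_2$ with the tensor-product multiplier $\bar\chi$. Summing over $w\in W^+$ and over $\nu$, and reinstating the $\tau\mapsto m\tau$ scaling, shows that $\BE_\nu^{*(\bos)}$, and hence $F^{(\bos)}$ through the matching radial limits of Definition~\ref{def:companions}, is a sum of quantum modular forms of depth at most two. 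The rational power of $q$ and the finite polynomial in $q,q^{-1}$ in the statement absorb the shift ambiguities introduced in Lemma~\ref{lem:shifting} and the normalisations in~\eqref{Section2-Fsplit}.

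The hard part will be checking that the hypotheses of the depth-two statement genuinely hold for the theta functions appearing in~\eqref{def:Theta}: one must verify that $\theta^1_{m,\cdot}$ and $\theta^{1-\nu}_{m,\cdot}$ are vector-valued modular forms for the metaplectic group with the correct weights and multipliers, and in the delicate weight-$\tfrac12$ case ($\theta^0$, which occurs for $\nu=1$) that the general construction still yields a real-analytic cocycle $r_{f_1,f_2,\frac dc}$ away from the single excluded point, even though $\theta^0$ is not a cusp form. This is exactly the borderline situation flagged after~\eqref{dfn:double_eichler}, where real-analyticity on all of $\RR$ is guaranteed only when both factors are cusp forms; so some care is needed to confirm that the cocycles nevertheless extend real-analytically to an open subset of $\RR$, as required by Definitions~\ref{dfn:qmf} and~\ref{dfn:higher_depth_qmf}.
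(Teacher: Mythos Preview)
Your proposal is correct and follows essentially the same route as the paper: assemble Proposition~\ref{prop:E01asymp} (companion), Lemma~\ref{lem:Estar_vs_starless} (reduction of $\BE_\nu^{*(\bos)}$ to $\BE_\nu^{(\bos)}$ plus one-dimensional Eichler integrals), Proposition~\ref{prop:Estarless_eichler} (double Eichler integral form), and the depth-two quantum modularity of $(f_1,f_2)^\ast$ recorded in Section~\ref{subsec:QM}. Your closing caution about the weight-$\tfrac12$ factor $\theta^0$ not being a cusp form is apt and matches the paper's own qualification after~\eqref{dfn:double_eichler}; the statement of Definition~\ref{dfn:higher_depth_qmf} only requires real-analyticity on an open subset of $\RR$, which the cocycle $r_{f_1,f_2,\frac{d}{c}}$ still enjoys on $\RR\setminus\{-d/c\}$.
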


\section[Properties of Z\^{}\{SU(3)\}]{Properties of $\boldsymbol{\widehat{Z}^{{\rm SU}(3)}}$}\label{sec:Zhat}

In this section we turn to the main object of our study: \smash{$\widehat Z^{G}_\bv(M_3)$} for $G={\rm SU}(3)$ and the simplest interesting choice of $M_3$, namely negative Seifert manifolds with three exceptional fibers. In Section~\ref{subsec:topology}, we explain how they are assembled using the generalized $A_2$ false theta functions~$F^{(\rho)}$ as building blocks. Combining with the results of the quantum modularity of the latter as established in the previous section, we are led to Theorems \ref{thm1_intro} and \ref{thm:recursive}. While we do not have a proof for Conjecture \ref{conj_rec}, we provide evidence for it through studying numerous examples in Section~\ref{sec:examples}.

\subsection{Topology}
\label{subsec:topology}

A plumbed three-manifold $M_3$ can be defined
as the boundary of glued disk bundles
associated to its plumbing graph, which is a weighted graph $(V,E,a)$, here taken to be a planar tree-shaped graph and no loops. The weights $a(v)$ give the Euler number of the disk bundle corresponding to the vertex $v\in V$.
Gluing occurs when there is an edge connecting the two vertices $v$ and~$v' $.
 The data of the weighted graph $(V,E,a)$ is equivalent to that of the adjacency or plumbing matrix $M$ of the graph $(V,E,a)$, with entries
\begin{equation}\label{adjacencymatrix}
 M_{v,v' }=\begin{cases}
 a(v) & \mathrm{if}\quad v=v', \\
 1 & {\mathrm{if}}\quad v\neq v', v {\rm ~and~} v' {\rm ~are~connected}, \\
 0 & \mathrm{otherwise} .
 \end{cases}
\end{equation}

Seifert manifolds are examples of such plumbed three-manifolds. A Seifert manifold
\begin{equation*}
 M_3=\left(b;\left\{\frac{q_i}{p_i}\right\}_{i=1}^{n}\right)
\end{equation*}
with Seifert invariants $(q_1,p_1), \ldots, (q_n, p_n)$ is specified by a star-shaped plumbing graph with a~unique junction vertex $v_0$ from which emanate $n$ legs, which represent the exceptional fibers of~$M_3$. As mentioned before, we say that $M_3$ is a negative/positive Seifert manifold depending on the sign of $M^{-1}_{v_0,v_0}$. Along the $i$-th leg, the vertices \smash{$v^{(i)}_k$} and the corresponding weights \smash{$a^{(i)}_k$}
are given by the continued fraction expansion
\begin{equation}
 \label{eqn:ctndfrac}
 \frac{q_i}{p_i}=-\cfrac{1}{a_{1}^{(i)}-\cfrac{1}{a^{(i)}_2-
 \cfrac{1}{a^{(i)}_3-\cdots}}},
\end{equation}
while $a(v_0)=b$ and the orbifold Euler number ${\rm e} \in \mathbb{Q}$ is given by
\begin{equation*}
 b={\rm e}-\sum\limits_{i=1}^n \frac{q_i}{p_i} .
\end{equation*}
See \cite[Appendix A]{chengUpC} for further useful relations between the Seifert data and the plumbing graph.

For such manifolds, define $D$ to be the smallest positive integer such that $DM_{v_0,v}^{-1}\in {\mathbb Z}$ for all~$v\in V$, and let $m=D^{2} \big\lvert M_{v_0,v_0}^{-1}\big\lvert$.\footnote{Comparison of conventions: $m$ in this paper is what is written as $mD$ in \cite{chengUpC}.} Examples are given by Brieskorn spheres $M_3 = \Sigma(p_1,p_2,p_3)$, which have trivial integral homology and are determined by three coprime integers $p_1$, $p_2$, $p_3$ through the defining equation
\begin{equation}\label{dfn:brieskorn}
 \Sigma(p_1,p_2,p_3)=
\bigl\{(x,y,z)\in\mathbb{C}^3 \mid x^{p_1}+y^{p_2}+z^{p_3} =0 \bigr\} \cap S^5 .
\end{equation}
The Seifert data that specify the plumbing diagram are related to the integers $\{ p_1,p_2,p_3 \}$ by the following relation
\begin{equation*}
 b+\sum_{i=1}^3 \frac{q_i}{p_i} =- \frac{1}{p_1p_2p_3}.
\end{equation*}
For Brieskorn spheres, which satisfy $\lvert \det (M)\lvert =1$, we have $D=1$ and $m= \big\lvert M_{v_0,v_0}^{-1}\big\lvert$.

For a weakly-negative plumbed three-manifold $M_3$, we define the $\widehat Z^G$-invariants for any ADE gauge group $G$ in the following way, where we mostly adopt the same notation of as in \cite{chengUpC}. In particular, for a given simply-laced Lie group $G$ and a plumbing graph $(V,E,a)$, we let $\Lambda$ be the root lattice and
\begin{equation*}
 \Gamma_{M,G}:=M\BZ^{\otimes |Z|}\otimes_{\BZ} \Lambda .
\end{equation*}
For $\vec{\underline{x}}\in\BR^{\otimes |Z|}\otimes_{\BZ} \Lambda $, we define its norm to be given by the inverse plumbing matrix in the direction along the vertices and by the Cartan matrix in the root lattice directions:
\begin{equation*}
|| \underline{\vec{x}} ||^2 := \sum_{v,v'\in V} M^{-1}_{v,v'} \langle \vec{x}_v, \vec{x}_{v'} \rangle.
\end{equation*}

\begin{defn}[higher rank $\widehat Z$ invariants \cite{chengUpC, Chung:2018rea,Park:2019xey}]\label{dfn:blocks}
 Let $G$ be a simply-laced Lie group and~$M_3$ a weakly negative plumbed three-manifold with plumbing matrix $M$. Let $\vec{\underline{b}}$ be a generalized~$\text{Spin}^c$ structure on the manifold, given by
 \begin{equation}
\vec{\underline{b}}\in \bigl(\BZ^{|V|} \otimes_\BZ \Lambda +\vec{\underline{b}}_0\bigr) /
\Gamma_{M,G},
\label{eq:bfromcoker}
\end{equation}
where $ \vec{b}_{0,v}=\mathrm{deg}(v)\vec\rho$.

We define
 \begin{align}
\widehat{Z}^G_{\underline{\vec{b}}}(M_3;\tau) :={}&C^G(q) \int_{\cal C} d \underline{\vec{\xi}}\,\left( \prod_{v \in V} \Delta\bigl(\vec{\xi}_v\bigr)^{2 - \deg v} \right)\nonumber\\
&\times
 \sum_{w \in {W}} \sum_{\underline{\vec{\ell}} \in \Gamma_{M,G} +w(\underline{\vec{b}}) } q^{-\frac{1}{2}|| \underline{\vec{\ell}}||^2}
 \Bigg( \prod_{v' \in V} \ex^{\langle \vec{\ell}_{v'}, \vec{\xi}_{v'} \rangle} \Bigg),\label{def:ZhatHR-3mfdvoa}
 \end{align}
 where $W$ denotes the Weyl group of the root lattice of $G$, $w\bigl(\underline{\vec{b}}\bigr) $ denotes the diagonal action \smash{$w\bigl(\underline{\vec{b}}\bigr) =\bigl(w\bigl(\vec b_v\bigr), w\bigl(\vec b_{v'}\bigr),\dots\bigr)$} and the integration measure is given by
 \[
 \int_{\cal C} d \underline{\vec{\xi}} :={\rm p.v.} \int\prod_{v\in V} \prod_{i =1}^{\text{rank}G} {\frac{{\rm d}z_{i,v}}{2\pi {\rm i} z_{i,v}}},
\]
 with the contour ${\cal C}$ given by the Cauchy principal value integral around the unique circle in the $z_{i,v}$-plane.
 Letting $\pi_M$ be the number of positive eigenvalues of $M$ and $\sigma_M$ the signature of $M$, according to \cite{Park:2019xey},
 \[
 C^G(q)= (-1)^{|\Phi_+|\pi_M}q^{{\frac{3\sigma_M-{\rm Tr} M}{2}} |\vec \rho|^2},
 \]
 where $ \Phi_+$ is a set of positive roots for $G$ and $\vec \rho$ is a Weyl vector for $G$. Lastly, $\Delta$ is the Weyl determinant.
\end{defn}

As shown in \cite{chengUpC}, the $\widehat Z$-invariants for negative Seifert manifolds with three exceptional fibers and for $G={\rm SU}(3)$ can be expressed as combinations of the generalized $A_2$ false theta functions~\eqref{dfn:gen_false} in the following way.
Given $\vec{\underline{b}}$ and a choice of $\hat w= (w_1,w_2,w_3)\in W^{\otimes 3}$, one of the following two statements is true. Either there does not exist any root vector $\vec{\ell_0}$ such that
\begin{equation}\label{eqn:condition_k}
 \vec{\underline{b}}- \bigl(\vec{\ell}_0,w_1(\vec{\rho}),w_2(\vec{\rho}),w_3(\vec{\rho})\bigr)\in M\BZ^{\left|V\right|}\otimes_{\BZ}\Lambda
\end{equation}
or there exists a unique $\vec k_{\hat w} \in \Lambda/D\Lambda$ such that such that (\ref{eqn:condition_k}) holds if and only if $\vec k_{\hat w}= \vec{\ell_0}/D\Lambda$.

Now, let ${\cal W}_{\vec{\underline{b}}} \subseteq W^{\otimes 3}$ be the subset consisting of all $\hat w$ for which the latter is true. For $\hat w\in {\cal W}_{\vec{\underline{b}}}$, let
\begin{equation*}
\vec{s}_{\hat{w}} = D\sum\limits_{v_i \in \{v_1,v_2,v_3\}}
M^{-1}_{v_0,v_i}w_i(\vec{\rho}) .
\end{equation*}
The above defines $\vec{\sigma}_{\hat{w}}\in\Lambda/m\Lambda$ via
\[\vec{s}_{\hat{w}}= \vec{\sigma}_{\hat{w}}+\frac{m}{D} \vec{k}_{\hat{w}}. \]
The $\widehat Z$-invariant is then given by
\begin{equation}\label{s3:hombl-2}
\widehat{Z}^{{\rm SU}(3)}_{\vec{\underline{b}}}(M_3;\tau) =
C_{}(q)
\sum_{\hat{w}\in {\cal W}_{\vec{\underline{b}}} }
(-1)^{\ell (\hat{w})} F^{(\rho_{\hat w})} (\tau),
\end{equation}
where ${\ell (\hat{w})} := \sum_{i=1}^3 {\ell ({w_i})}$ is the total Weyl length, $\rho_{\hat w}=\bigl(\vec{\sigma}_{\hat{w}},\vec{k}_{\hat{w}},m,D\bigr)$ specifies the functions \smash{$F^{(\rho_{\hat w})} (\tau)$} from equation \eqref{dfn:gen_false}, and
\[ 
C_{}(q) = (-1)^{\pi_M} q^{3\sigma_M-{\rm Tr}M+\delta_M},\qquad
 \delta_{M}= \sum_{v\in V_{1}}\left( \frac{\left( M^{-1}_{v_{0},v} \right)^{2}}{M^{-1}_{v_0,v_0}}-M^{-1}_{v,v} \right),
\]
with $\pi_M$ and $\sigma_M$ denoting the number of positive eigenvalues resp.\ the signature of the adjacency matrix $M$. The additional power $q^{\delta_M}$ comes from performing the integral \eqref{def:ZhatHR-3mfdvoa} along the directions corresponding to the ``non-junction'' vertices with $v$ with degree less than three.

\subsection{Companions}
\label{subsec:companionsZ}
In this subsection we will put the results obtained so far together and derive the form of the companion function for \smash{$\widehat Z^{{\rm SU}(3)}_{\bv}(M_3)$} for negative Seifert $M_3$ with three exceptional fibers, before we further specialize to the case of Brieskorn spheres.

Combining 1) (\ref{s3:hombl-2}),
the expression of \smash{$\widehat Z^{{\rm SU}(3)}_{\bv}(M_3)$} in terms of the generalized $A_2$ false theta function $F^{(\varrho)}$, 2) Lemma \ref{lem:shifting} and \eqref{Section2-Fsplit},
 the splitting of $F^{(\varrho)}$ into components, 3) Proposition~\ref{prop:E01asymp},
 the companion of the components, and 4) Proposition \ref{prop:Estarless_eichler} and Lemma \ref{lem:Estar_vs_starless}, the iterated non-holomorphic Eichler integral expressions for the companions, we finally obtain the following.
\begin{prop}
\label{lem:rewritingBBE}
For a negative Seifert manifold $M_3$ with three exceptional fibers, a companion function \smash{$\widecheck{Z}^{{\rm SU}(3)}_{\bv}(M_3)$} of the rank two homological blocks \smash{$\widehat Z^{{\rm SU}(3)}_{\bv}(M_3)$} is, up to potential one-dimensional pieces, given by the following non-holomorphic double Eichler integral
\begin{align*}
\Zchlong{{\rm SU}(3)}{\bv}{M_3}
 ={}&z_{1\rm d}+ \frac{D}{m}\,C(q^{-1}) \sum\limits_{\hat{w}\in {\cal W}_\bv}(-1)^{\ell(\hat{w})} \sum_{\nu=0,1} \frac{\sqrt{3}}{4 \pi^{\nu}}	\left(\frac {3\Delta \vec{s}_{\hat w}}{m}\right)^{1-\nu}\\
 &\times\sum_{w\in W^+}\sum_{\delta\in\BZ/2} \bigl(\vartheta'_{w,\hat w,\delta},\vartheta^{1-\nu}_{w,\hat w,\delta}\bigr)^\ast(\tau),
\end{align*}
where the non-holomorphic double Eichler integral is of the theta functions
\begin{gather*}
 \vartheta'_{w,\hat w,\delta}(\tau)=
 \theta^1_{m, m\delta+\langle \vec\rho, w(\vec\sigma_{\hat w})\rangle}(\tau),\qquad
 \vartheta^{1-\nu}_{w,\hat w,\delta}(\tau) = \theta^{1-\nu}_{m,m\delta + \langle \Delta\vec\omega, w(\vec{\sigma}_{\hat w})\rangle}(3\tau).
\end{gather*}
\end{prop}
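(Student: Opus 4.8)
The plan is to prove Proposition~\ref{lem:rewritingBBE} purely by assembling the four ingredients already established in the previous subsections, since all the analytic content has been proved there. The starting point is the closed form \eqref{s3:hombl-2}, which writes $\widehat Z^{{\rm SU}(3)}_{\bv}(M_3;\tau) = C(q)\sum_{\hat w\in\mathcal W_\bv}(-1)^{\ell(\hat w)}F^{(\rho_{\hat w})}(\tau)$. Because the companion is defined termwise through asymptotic expansions near the rationals (Definition~\ref{def:companions}) and those expansions are additive, I would build a companion of $\widehat Z$ by replacing each $F^{(\rho_{\hat w})}$ by its companion and replacing the prefactor $C(q)$ by its image under the formal substitution $q\mapsto q^{-1}$ forced by the sign flip $h\mapsto -h$ in Definition~\ref{def:companions}; this is exactly what produces the overall factor $C(q^{-1})$.

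First I would split each building block via \eqref{Section2-Fsplit} as $F^{(\rho)}(\tau)=F_0^{(\rho)}(m\tau)+D\,F_1^{(\rho)}(m\tau)$ and invoke Proposition~\ref{prop:E01asymp}, which gives $\BE_\nu^{*(\rho)}$ as a companion of $F_\nu^{(\rho)}$. A short lemma, which I would record explicitly, shows that the companion relation is stable under $\tau\mapsto m\tau$ for positive integers $m$: reducing $mh/k$ to lowest terms, the asymptotic coefficients of $\widehat F(m\tau)$ differ from those of $\widehat F$ only by an overall power of $m$ in the $t$-expansion, and the sign pattern $(-t)^n$ required by Definition~\ref{def:companions} is unaffected, so $\widecheck F(m\tau)$ is a companion of $\widehat F(m\tau)$. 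Hence a companion of $F^{(\rho)}$ is $\BE_0^{*(\rho)}(m\tau)+D\,\BE_1^{*(\rho)}(m\tau)$. Next I would trade $\BE_\nu^{*(\rho)}$ for $\BE_\nu^{(\rho)}$ using Lemma~\ref{lem:Estar_vs_starless}, whose difference $\sum_{\bmu}\eta_\nu(\bmu)X_\nu(\bmu)$ consists of one-dimensional non-holomorphic Eichler integrals; these are collected into $z_{1\rm d}$ and are precisely the ``potential one-dimensional pieces'' allowed in the statement. Finally I would substitute the double-Eichler-integral representation of Proposition~\ref{prop:Estarless_eichler}, namely $\BE_\nu^{(\rho)}=\sum_{w\in W^+}\BE_{\nu,w}^{(\rho)}$ with $\BE_{\nu,w}^{(\rho)}$ given by \eqref{Eichler_Enu_Com}--\eqref{def:Theta}.

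The remaining work, and the main obstacle, is the bookkeeping that turns $\BE_{\nu,w}^{(\rho)}(m\tau)$ into the displayed $(\vartheta'_{w,\hat w,\delta},\vartheta^{1-\nu}_{w,\hat w,\delta})^\ast(\tau)$ with the correct scalar prefactor. Here I would rescale the integration variables in \eqref{Eichler_Enu_Com} by $z_i=m w_i$: the inner theta arguments $\theta^1_{m,\cdots}(z_1/m)$ and $\theta^{1-\nu}_{m,\cdots}(3z_2/m)$ become exactly $\vartheta'_{w,\hat w,\delta}(w_1)$ and $\vartheta^{1-\nu}_{w,\hat w,\delta}(w_2)$, the base point $-m\bar\tau$ returns to $-\bar\tau$, and the change of variables reproduces the weight kernels $(-\ri(w_i+\tau))^{-(2-k_i)}$ of \eqref{dfn:double_eichler} with $k_1=3/2$ and $k_2=3/2-\nu$. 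Collecting the powers of $m$, namely $m^{2\nu-3}$ from \eqref{def:Theta}, $m^2$ from the Jacobian, and $m^{-(\nu+1)}$ from the two denominators, yields $m^{\nu-2}$, which combines with the explicit $(3D\Delta w(\vec s))^{1-\nu}$, the split coefficient ($1$ for $\nu=0$, $D$ for $\nu=1$), and the overall $C(q^{-1})$ to match $\tfrac{D}{m}C(q^{-1})(-1)^{\ell(\hat w)}\tfrac{\sqrt3}{4\pi^\nu}\bigl(\tfrac{3\Delta\vec s_{\hat w}}{m}\bigr)^{1-\nu}$. I expect the delicate point to be reconciling the Weyl-dependent weight: the factor $\Delta w(\vec s)$ sits \emph{inside} the $\sum_{w\in W^+}$ in \eqref{def:Theta} (it descends from the weights $w(\vec s)\lvert_i$ in the definition \eqref{def:Fiq2_1} of $F_0$, i.e.\ the $\nu=0$ sector), whereas the statement displays the $w$-independent $\Delta\vec s_{\hat w}=\langle\Delta\vec\omega,\vec s_{\hat w}\rangle$ outside the sum. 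Resolving this requires tracking how that weight propagates through the companion and the $\delta$-sum and fixing the normalization of $\Delta\vec s_{\hat w}$; every other step is a routine substitution.
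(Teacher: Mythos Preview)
Your assembly of the four ingredients is exactly what the paper does: the ``proof'' in the text is the single sentence preceding the proposition, listing \eqref{s3:hombl-2}, Lemma~\ref{lem:shifting} with \eqref{Section2-Fsplit}, Proposition~\ref{prop:E01asymp}, and Proposition~\ref{prop:Estarless_eichler} with Lemma~\ref{lem:Estar_vs_starless}. Your rescaling $z_i=mw_i$ and the power count $m^{2\nu-3}\cdot m^2\cdot m^{-(\nu+1)}=m^{\nu-2}$ are correct and more explicit than anything the paper writes out. One small omission: the paper also invokes Lemma~\ref{lem:shifting}, whose integer shifts of $\bal$ contribute further one-dimensional pieces to $z_{1\rm d}$; the $\sum_{\delta\in\ZZ/2}$ in \eqref{def:Theta} makes the double Eichler part invariant under those shifts, so your 2D answer is unaffected.

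On the point you flag as delicate: you are not missing an identity. The factor coming out of \eqref{def:Theta} is $\bigl(3D\,\Delta w(\vec s_{\hat w})\bigr)^{1-\nu}$ with $\Delta w(\vec s_{\hat w})=\langle\Delta\vec\omega,\,w(\vec s_{\hat w})\rangle$, and this genuinely depends on $w\in W^+$ (for instance $\langle\Delta\vec\omega,\vec s\rangle\neq\langle\Delta\vec\omega,ab(\vec s)\rangle$ in general). The displayed $\Delta\vec s_{\hat w}$ outside the $\sum_{w\in W^+}$ should be read as the $w$-dependent quantity sitting inside that sum; there is nothing further to ``resolve'' analytically. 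The subsequent Brieskorn specialisation (Lemma~\ref{lem:brieskorn_thetas}) sidesteps the issue by using $\BE_{\nu,w}^{(\rho_{\hat w})}=\BE_{\nu,e}^{(\rho_{w\hat w})}$ to absorb the $W^+$-sum into $\hat w\in W^{\otimes 3}$, after which only $w=e$ appears and $\Delta\vec s_{\hat w}$ is literally correct.
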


Note that the above, together with the quantum modular properties of the non-holomorphic double Eichler integrals discussed in Section~\ref{subsec:QM}, leads immediately to Theorem \ref{thm1_intro}.

\subsubsection*{Weil representations}
From the fact that $\theta^\nu_m=(\theta^\nu_{m,r})$ is a vector-valued modular form for $\nu=0,1$, we see from the discussion in Section~\ref{subsec:QM} that, potentially up to certain one-dimensional pieces, \smash{$\widecheck Z^{{\rm SU}(3)}_\bv(M_3)$} is a linear combination of components of vector-valued quantum modular forms of depth two. In what follows, we will investigate the recursive structure relating the quantum modular properties of \smash{$\widehat Z^{{\rm SU}(2)}_\bv(M_3)$} and \smash{$\widehat Z^{{\rm SU}(3)}_\bv(M_3)$}, or equivalently \smash{$\widecheck Z^{{\rm SU}(2)}_\bv(M_3)$} and \smash{$\widecheck Z^{{\rm SU}(3)}_\bv(M_3)$}. In order to do that, we need to take a closer look at the underlying representations of the metaplectic group \smash{$\widetilde{{\rm SL}_2}(\BZ)$}. For this purpose, we will introduce specific Weil representations specified by a positive integer~$m$ and a subgroup $K$ of the group of its exact divisors {Ex$_m$}, as mentioned in Section~\ref{sec:intro_sum}.

To such a group $K$ we associate a subrepresentation of $\Theta_m$ \eqref{dfn:theta}, which we write as $\Theta^{m+K}$, in the following way. First we make use of the fact that the space of matrices commuting with the $S$- and $T$-matrices of $\Theta_m$ is spanned by \cite{Cappelli:1987xt}
\begin{equation*}
 \Omega_m(n)_{r,r'} = \begin{cases}
1 &\mathrm{if}~r\equiv -r'\bmod\, 2n~\mathrm{and}~r\equiv r' \bmod\, 2m/n ,\\
0 & \mathrm{otherwise}, \ r,r'\in\BZ/2m
 \end{cases}
\end{equation*}
for $n|m$. Note that $\Omega_m(n)$ and $\Omega_m(n')$ commute for every pair of divisors $n$ and $n'$. For instance,~$\Omega(1)= \mathbf{1}_m$ is the identity matrix of size $2m\times 2m$.

Now define the corresponding projection operators
\begin{equation}\label{dfn:proj}
 P^\pm_m(n) :=\left( \mathbf{1}_m \pm \Omega_m(n) \right) /2, \qquad n\in{\rm Ex}_m,
\end{equation}
 satisfying $\bigl(P^\pm_m(n) \bigr)^2= P^\pm_m(n) $.

Since in our application we are mostly interested in Eichler integrals involving \[\theta^1_{m,r}(\tau) = \frac{1}{2\pi {\rm i}}\frac{\partial}{\partial z} \theta_{m,r}(\tau,z)\lvert_{z=0},\] which has the property $\theta^1_{m,r} = -\theta^1_{m,-r} $, or $P^-_m(m)\theta^1_{m}=\theta^1_{m}$, we will from now on focus on the subgroups $K$ satisfying $m\not\in K$ and define the projector
\begin{equation}\label{dfn:projB}
P^{m+K} = \left(\prod_{n\in K}P^+_m(n)\right)
P^-_m(m),
\end{equation}
using the notation of \cite{cheng20193d}. When $K$ is maximal, in the sense that $\mathrm{Ex}_m = K\cup (m\ast K)$, $\Theta^{m+K}:=P^{m+K} \Theta_m$ furnishes an irreducible representation of \smash{$\widetilde{{\rm SL}_2}(\BZ)$} when $m$ is square-free. In general, $\Theta^{m+K,{\rm irred}}:=P^{m+K,{\rm irred}} \Theta_m$ with
$K$ maximal and
\begin{equation}\label{irred_weil}
P^{m+K, {\rm irred}} := \left(\prod_{n\in K}P^+_m(n)\right) \left(
\prod_{f^2|m}\left(\mathbf{1}_m-\frac{1}{f} \Omega_m(f) \right)
\right)
P^-_m(m)
\end{equation}
is irreducible \cite{SkoruppaThesis, 2007arXiv0707.0718S}.

Using the above, we introduce the notation
\begin{equation}\label{WeylorbitSU2}
 \theta_r^{1,m+K}: =\sum_{r'\in\BZ/2m} P^{m+K}_{r,r'}\theta^1_{m,r'},
\end{equation}
which will be used extensively below.

In what follows, we will focus on the manifolds $M_3$ that are homological spheres, to obtain Theorem \ref{thm:intro_brieskorn_recursive}. First, we simplify the expression for the companions of \smash{$\widehat Z^{{\rm SU}(3)}_\bv(M_3)$} given in Proposition \ref{lem:rewritingBBE} in these cases.

\begin{lem}\label{lem:brieskorn_thetas}
 For Brieskorn spheres $\Sigma(p_1,p_2,p_3)$, we have
\begin{align*}
\Zchlong{{\rm SU}(3)}{\bv}{M_3}
 ={}&z_{1\rm d}+ \tfrac{|W|}{2m}C\bigl(q^{-1}\bigr) \sum\limits_{\hat{w}\in W^{\otimes 3}}(-1)^{\ell(\hat{w})} \\
 &\times\sum_{\nu=0,1} \frac{\sqrt{3}}{4 \pi^{\nu}}	\left(\frac {3\Delta\vec{s}_{\hat w}}{m}\right)^{1-\nu}\sum_{\delta\in\BZ/2} \bigl(\vartheta'_{\hat w,\delta},\vartheta^{1-\nu}_{\hat w,\delta}\bigr)^\ast(\tau)
,
\end{align*}
where the non-holomorphic double Eichler integral is of the theta functions
\begin{gather*}
 \vartheta'_{\hat w,\delta}(\tau)=
 \theta^1_{m, m\delta+\langle \vec\rho, \vec\sigma_{\hat w}\rangle}(\tau),\qquad
 \vartheta^{1-\nu}_{\hat w,\delta}(\tau) = \theta^{1-\nu}_{m,m\delta + \langle \Delta\vec\omega, \vec{\sigma}_{\hat w}\rangle}(3\tau),
 \end{gather*}
for $m=p_1p_2p_3$, $\bar p_i = m/p_i$ and
\begin{align*}
\vec\sigma_{\hat w} &= \vec{s}_{\hat{w}} = -\sum\limits_{i=1}^3 \bar{p}_i w_i(\vec{\rho}) .
\end{align*}
\end{lem}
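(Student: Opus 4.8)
The plan is to derive the Brieskorn-sphere formula directly from the general companion expression in Proposition~\ref{lem:rewritingBBE} by feeding in the three special features of a homological sphere: that $\lvert\det M\rvert=1$, that the diagonal data $\vec\sigma_{\hat w}$ becomes a genuinely $W$-equivariant function of $\hat w$, and that the Weyl rotation sum $\sum_{w\in W^+}$ can then be folded into the sum over $\hat w\in W^{\otimes 3}$.

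First I would collect the topological input. For $\Sigma(p_1,p_2,p_3)$ one has $\lvert\det M\rvert=1$, whence $D=1$ and, by a standard computation of the inverse plumbing matrix from the Seifert data (cf.\ \cite[Appendix~A]{chengUpC} and the relations in Section~\ref{subsec:topology}), $m=\lvert M^{-1}_{v_0,v_0}\rvert=p_1p_2p_3$ and $D\,M^{-1}_{v_0,v_i}=-\bar p_i$ for the vertex $v_i$ on the $i$-th leg carrying $w_i(\vec\rho)$ in \eqref{eqn:condition_k}. Substituting into the definition $\vec s_{\hat w}=D\sum_i M^{-1}_{v_0,v_i}w_i(\vec\rho)$ yields $\vec s_{\hat w}=-\sum_{i=1}^3 \bar p_i w_i(\vec\rho)$. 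Moreover $D=1$ forces $\vec k_{\hat w}\in\Lambda/D\Lambda=0$, so $\vec\sigma_{\hat w}=\vec s_{\hat w}-\tfrac{m}{D}\vec k_{\hat w}=\vec s_{\hat w}$. The same vanishing of $\Lambda/D\Lambda$ makes condition~\eqref{eqn:condition_k} solvable for every $\hat w$: since $M\BZ^{|V|}=\BZ^{|V|}$ when $\lvert\det M\rvert=1$, there is a unique homological block $\bv=\bv_0$ and one may always take $\vec\ell_0$ realizing $\vec k_{\hat w}=0$, so that ${\cal W}_\bv=W^{\otimes 3}$.

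With ${\cal W}_\bv=W^{\otimes 3}$ in place, the remaining step is to collapse the inner sum $\sum_{w\in W^+}$. The key observation is the equivariance $\vec\sigma_{w^{-1}\cdot\hat w}=w^{-1}(\vec\sigma_{\hat w})$, immediate from $\vec\sigma_{\hat w}=-\sum_i\bar p_i w_i(\vec\rho)$ under the diagonal left action $w\cdot\hat w=(ww_1,ww_2,ww_3)$. For fixed $w\in W^+$ I would reindex the outer sum by $\hat u:=w\cdot\hat w$, a bijection of $W^{\otimes 3}$; then every occurrence of $w(\vec\sigma_{\hat w})$ in the theta labels $\langle\vec\rho,w(\vec\sigma_{\hat w})\rangle$ and $\langle\Delta\vec\omega,w(\vec\sigma_{\hat w})\rangle$ of \eqref{def:Theta} and in the accompanying factor becomes $\vec\sigma_{\hat u}$, while $(-1)^{\ell(\hat w)}=(-1)^{\ell(w^{-1}\cdot\hat u)}=(-1)^{\ell(\hat u)}$ because $w^{-1}\in W^+$ has even length in each component. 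Hence the summand becomes independent of $w$, and $\sum_{\hat w}\sum_{w\in W^+}$ reduces to $\lvert W^+\rvert$ copies of the $w$-free summand. Combining the prefactors, $\tfrac{D}{m}\lvert W^+\rvert=\tfrac{1}{m}\cdot\tfrac{\lvert W\rvert}{2}=\tfrac{\lvert W\rvert}{2m}$, reproduces the stated formula with $\vartheta'_{\hat w,\delta}$ and $\vartheta^{1-\nu}_{\hat w,\delta}$, while the one-dimensional contribution $z_{1\rm d}$ is carried along unchanged.

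I expect the main obstacle to be the bookkeeping behind the reindexing rather than any analytic difficulty: one must verify that \emph{all} the $w$-dependent data in the general summand---the two theta labels, the $\langle\Delta\vec\omega,w(\vec\sigma_{\hat w})\rangle$ factor inherited from \eqref{def:Theta}, and the sign $(-1)^{\ell(\hat w)}$---transform compatibly under $\hat w\mapsto w^{-1}\cdot\hat u$, using the equivariance of $\hat w\mapsto\vec\sigma_{\hat w}$ together with $\ell$ being even on $W^+$. It is precisely the Brieskorn conditions $D=1$ (so that $\vec k_{\hat w}=0$, $\vec\sigma_{\hat w}=\vec s_{\hat w}$, and ${\cal W}_\bv=W^{\otimes 3}$ is closed under the diagonal $W^+$-action) that make this folding possible; for general negative Seifert manifolds the $\vec k_{\hat w}$ term obstructs the equivariance and ${\cal W}_\bv$ is only a proper subset, which is why the $W^+$ sum survives in Proposition~\ref{lem:rewritingBBE}.
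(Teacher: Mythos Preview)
Your proof is correct and follows essentially the same route as the paper: both use the Brieskorn-sphere input $D=1$, $\vec k_{\hat w}=0$, ${\cal W}_{\bv}=W^{\otimes 3}$, and the equivariance $\vec\sigma_{w\cdot\hat w}=w(\vec\sigma_{\hat w})$ to absorb the inner $w$ sum into the outer $\hat w$ sum. The only cosmetic difference is that the paper first extends $\sum_{w\in W^+}$ to a signed sum over the full $W$ via the antisymmetry $\BE_{\nu,w}^{(\bos)}=-\BE_{\nu,w(aba)}^{(\bos)}$ (coming from $aba\,\vec\rho=-\vec\rho$, $aba\,\Delta\vec\omega=\Delta\vec\omega$) before reindexing with $\BE_{\nu,w}^{(\bos_{\hat w})}=\BE_{\nu,e}^{(\bos_{w\hat w})}$, whereas you reindex directly over $W^+$; both routes produce the same factor $\tfrac{|W|}{2m}$, and yours is marginally more direct.
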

The proof of this lemma can be found in Appendix \ref{pf_lem:rewritingBBE}.\footnote{Regarding the one-dimensional non-holomorphic Eichler integral $z_{1d}$,
we also comment that, when all $\vec\sigma_{\hat w}$ satisfy $0\leq \langle \vec\sigma_{\hat w}, \vec \omega_i\rangle \leq m$, the different contributions from $X_\nu$
in Lemma~\ref{lem:Estar_vs_starless}
to $\hat Z^{{\rm SU}(3)}(M_3)$ cancel for $M_3=\Sigma(p_1,p_2,p_3)$.}

It is know that the ${\rm SU}(2)$ companion for Brieskorn spheres with three exceptional fibers is given by~\cite{cheng20193d}
\begin{equation*}
 \Zchlong{{\rm SU}(2)}{\bv}{M_{3}}\stackrel{\dots}{=}\bigl(\theta^{1,m+K}_r\bigr)^\ast
\end{equation*}
up to an overall rational power of $q$ (and the addition of a finite polynomial in $q^{-1}$ for the case~$M_3=\Sigma(2,3,5)$), where
\begin{equation}\label{brieskorn_K}
 m=p_1p_2p_3,\qquad K=\{1,\bar p_1,\bar p_2,\bar p_3\},
\end{equation}
and $r=m-\bar p_1-\bar p_2-\bar p_3$. For the ${\rm SU}(3)$ companions, we have the following non-holomorphic double Eichler integral.
\begin{prop}\label{prop:sphere_recursive}
For Brieskorn spheres $\Sigma(p_1,p_2,p_3)$, using the same notation as in Lemma~{\rm \ref{lem:brieskorn_thetas}} and in \eqref{brieskorn_K}, we have
 \begin{gather*}
 \Zchlong{{\rm SU}(3)}{\bv}{M_{3}}
 =z_{1d}+ \frac{3\sqrt{3}}{2m}C\bigl(q^{-1}\bigr) \sum_{\nu=0,1} \pi^{-\nu}	\sum_{\delta\in\BZ/2} \sum_{r\in {\cal R}} \left(\frac{r}{m}\right)^{1-\nu}\bigl(\vartheta'_{r,\delta},\vartheta^{1-\nu}_{r,\delta}\bigr)^\ast(\tau)
,
\end{gather*}
where the non-holomorphic double Eichler integral is of the theta functions
 \begin{gather*}
 \vartheta'_{r,\delta}(\tau)=
 4\theta^{1,m+K}_{m\delta+\sum_i \bar p_i c_i^{(r)}}(\tau),\qquad
 \vartheta^{1-\nu}_{r,\delta}(\tau) = \theta^{1-\nu}_{m,m\delta + r}(3\tau).
 \end{gather*}

In the above, ${\cal R} \subset \ZZ/2m$ is given by
\[
{\cal R} = {\cal R}_0\cup {\cal R}_1\cup {\cal R}_2\cup {\cal R}_3
\]
and
\begin{gather*}
 {\cal R}_0 = \{0\},\qquad
 {\cal R}_1 = {\bf P}^+ \{\bar p_1\},\qquad
 {\cal R}_2 = {\bf P}^+ \{\bar p_1+\bar p_2, \bar p_1-\bar p_2\},\\
 {\cal R}_3 = {\bf P}^+ \{\bar p_1+\bar p_2-\bar p_3, -\bar p_1-\bar p_2+\bar p_3\},
\end{gather*}

where we denote by ${\bf P}^+$ by the group of even permutations of $(p_1,p_2,p_3)$. For each $r\in {\cal R}$, we set \smash{$c_i^{(r)}:= 2-|r_i|$} if $r=\sum_i r_i \bar p_i$.
\end{prop}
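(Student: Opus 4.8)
The plan is to take the explicit non-holomorphic double Eichler integral of Lemma~\ref{lem:brieskorn_thetas} as the starting point and to reorganize its sum over $\hat w=(w_1,w_2,w_3)\in W^{\otimes 3}$ according to the two inner products $\langle\vec\rho,\vec\sigma_{\hat w}\rangle$ and $\langle\Delta\vec\omega,\vec\sigma_{\hat w}\rangle$ that index the theta factors $\vartheta'_{\hat w,\delta}$ and $\vartheta^{1-\nu}_{\hat w,\delta}$. Since for a Brieskorn sphere $\vec\sigma_{\hat w}=-\sum_{i=1}^3\bar p_i\,w_i(\vec\rho)$, both inner products split over the three legs, so the first step is to tabulate, for each of the six $w\in W$ of the $A_2$ root system, the numbers $\ell(w)$, $a(w):=\langle\vec\rho,w(\vec\rho)\rangle$ and $b(w):=\langle\Delta\vec\omega,w(\vec\rho)\rangle$. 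A direct computation using $|\vec\rho|^2=2$, $\langle\vec\omega_i,\vec\alpha_j\rangle=\delta_{ij}$ and $\langle\Delta\vec\omega,\vec\rho\rangle=0$ gives $a(w)\in\{\pm2,\pm1\}$ and $b(w)\in\{0,\pm1\}$ as $w$ runs over $\{e,w_0\}$, $\{s_1,s_1s_2\}$, $\{s_2,s_2s_1\}$, and in every case the identity $|a(w)|=2-|b(w)|$ holds.

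The crucial structural observation is that the assignment $w\mapsto(b(w),\operatorname{sgn} a(w))$ is a bijection from $W$ onto $\{-1,0,1\}\times\{\pm1\}$. I can therefore reparametrize each leg by a pair $(b_i,\epsilon_i)$ with $\epsilon_i=\operatorname{sgn} a(w_i)$, and set $c_i:=|a(w_i)|=2-|b_i|$, which matches $c_i^{(r)}=2-|r_i|$ after putting $r_i:=-b_i$. Under this bijection the second index becomes $r:=\langle\Delta\vec\omega,\vec\sigma_{\hat w}\rangle=\sum_i\bar p_i r_i$, which depends only on $(b_i)$; the first index becomes $\langle\vec\rho,\vec\sigma_{\hat w}\rangle=-\sum_i\bar p_i\epsilon_i c_i$; and the Weyl sign factorizes as $(-1)^{\ell(\hat w)}=\bigl(\prod_i\eta_i\bigr)\prod_i\epsilon_i$, where $\eta_i=+1$ if $b_i=0$ and $\eta_i=-1$ otherwise. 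Because $\vartheta^{1-\nu}_{\hat w,\delta}=\theta^{1-\nu}_{m,m\delta+r}(3\tau)$ depends on $\hat w$ only through $r$, I then group the terms by the class of $r\bmod 2m$ and, using bilinearity of the double Eichler integral \eqref{dfn:double_eichler}, pull the remaining sum over the $2^3$ sign choices $\epsilon$ inside the first slot.

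The heart of the argument is to evaluate the inner sign-sum $\sum_{\epsilon\in\{\pm1\}^3}\bigl(\prod_i\epsilon_i\bigr)\theta^1_{m,\,m\delta-\sum_i\bar p_i\epsilon_i c_i}$ and to recognize it as a multiple of the projected theta function $\theta^{1,m+K}$ of \eqref{WeylorbitSU2} with $K=\{1,\bar p_1,\bar p_2,\bar p_3\}$ as in \eqref{brieskorn_K}. The key number-theoretic fact is that flipping a single $\epsilon_i$ shifts the first index by $\pm2\bar p_i c_i$ and, by the Chinese remainder theorem together with the coprimality of the $p_i$, realizes precisely the permutation $\Omega_m(p_i)$ on $\ZZ/2m$ (one checks the resulting index satisfies $\equiv-x\bmod 2p_i$ and $\equiv x\bmod 2\bar p_i$). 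Composing flips and using the group laws $p_i\ast p_j=\bar p_k$ and $p_1\ast p_2\ast p_3=m$ in ${\rm Ex}_m$, together with $\theta^1_{m,r}=-\theta^1_{m,-r}$ and the definition \eqref{dfn:projB}, collapses the eight terms onto a single projected theta; concretely $\theta^{1,m+K}_{s}=\tfrac14\bigl(\theta^1_{m,s}+\sum_i\theta^1_{m,\Omega_m(\bar p_i)s}\bigr)$ with $s=m\delta+\sum_i\bar p_i c_i$, and the $\tfrac14$-normalization of this projector is exactly what the factor $4$ in $\vartheta'_{r,\delta}=4\,\theta^{1,m+K}_{m\delta+\sum_i\bar p_i c_i}$ undoes. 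Finally I organize the admissible patterns $(b_i)$ by the number $j\in\{0,1,2,3\}$ of nonzero entries, which yields the four families ${\cal R}_0,\dots,{\cal R}_3$ after folding $r$ and $-r$ via the parity of $\theta^{1-\nu}_{m,\cdot}(3\tau)$ and of the coefficient $(r/m)^{1-\nu}$, and collecting permuted representatives under the even-permutation symmetry ${\bf P}^+$ of $(p_1,p_2,p_3)$.

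The main obstacle I anticipate is the sign and constant bookkeeping. The eight-term collapse produces an overall $-8$, and the leg-dependent factor $\prod_i\eta_i=(-1)^{j}$ enters per family, so reconciling these with the stated prefactor $\tfrac{3\sqrt3}{2m}\pi^{-\nu}(r/m)^{1-\nu}$ requires tracking the $\nu$-dependent powers of $3$ that arise from the $3\tau$ argument of $\vartheta^{1-\nu}$ (a rescaling of the inner integration variable in \eqref{dfn:double_eichler}) and the factor gained when folding the $\{r,-r\}$ pairs. A secondary but genuine subtlety is that, a priori, distinct patterns $(b_i)$ could collapse to the same class $r\bmod 2m$; I will need to verify that for coprime $p_i$ the map $(r_1,r_2,r_3)\mapsto\sum_i r_i\bar p_i\bmod 2m$ is injective on $\{-1,0,1\}^3$ up to the overall reflection $r\mapsto-r$, so that the four families ${\cal R}_j$ are well separated and the listed representatives in ${\cal R}_1,{\cal R}_2,{\cal R}_3$ are exactly the distinct classes that occur.
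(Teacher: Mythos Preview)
Your approach is essentially the same as the paper's. Both proofs start from Lemma~\ref{lem:brieskorn_thetas}, separate the data $\langle\vec\rho,w_i(\vec\rho)\rangle$ and $\langle\Delta\vec\omega,w_i(\vec\rho)\rangle$ leg by leg, group the sum over $W^{\otimes 3}$ according to $r=\langle\Delta\vec\omega,\vec\sigma_{\hat w}\rangle$, and then collapse the remaining sign sum into the Weil projector $P^{m+K}$ with $K=\{1,\bar p_1,\bar p_2,\bar p_3\}$. The only organizational difference is that the paper factorizes $W=W_+\cdot\{e,aba\}$ (using $aba\,\vec\rho=-\vec\rho$ to extract the signs $\varepsilon_i$) and then treats the four cases ``how many $w_i'\in W_+$ equal $e$'' separately, whereas you parametrize $W$ uniformly by $(b(w),\operatorname{sgn} a(w))$ and classify by $j=\#\{i:b_i\neq 0\}$; these case splits coincide and produce exactly the families $\mathcal{R}_0,\dots,\mathcal{R}_3$. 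Your anticipated issues are minor: the sign and constant bookkeeping is the same one the paper works out case by case, and the injectivity of $(r_1,r_2,r_3)\mapsto\sum_i r_i\bar p_i\bmod 2m$ up to reflection is immediate from the Chinese remainder theorem and the pairwise coprimality of the $p_i$.
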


From the above, we see that \eqref{thm:recursive} holds, and in particular Theorem \ref{thm:intro_brieskorn_recursive} holds. That is, up to possible one-dimensional terms, the same \smash{$\widetilde{{\rm SL}_2(\BZ)}$} representation $\Theta^{m+K}$ governs not just~\smash{$\widehat Z^{{\rm SU}(2)}$} but also the ${\rm SU}(3)$ quantum modularity. Note, when all $p_i$s are square free, the underlying representation $\Theta^{m+K}$ is irreducible. Furthermore, when $2^2{\not\hspace{2.5pt}\mid}\, m$, one can replace $\theta^{1,m+K}_r$ in Proposition~\ref{prop:sphere_recursive} with the irreducible representation $\Theta^{m+K,{\rm irred}}$ (cf.\ (\ref{irred_weil})). The proof of the above proposition is given in Appendix~\ref{prf:sphere_recursive}.

\section{Examples}\label{sec:examples}
In this section we present in detail the structure of $\widehat{Z}$ invariants discussed in Section~\ref{sec:Zhat}. We further show the recursive structure, proven in Section~\ref{sec:Zhat} for homological spheres, is also present for other non-spherical negative Seifert manifolds with three exceptional fibers. In particular, we compute explicitly the underlying $\widetilde{SL}_2(\ZZ)$ Weil representations.

\subsection[Example: M(-1;1/4,3/5,1/7)]{Example: $\boldsymbol{ M\bigl(-1;\frac{1}{4},\frac{3}{5},\frac{1}{7}\bigr)}$}

We begin with the spherical Seifert manifold $X = M\left( -1,\frac{1}{4},\frac{3}{5},\frac{1}{7} \right)\cong\Sigma(4,5,7)$. To determine the plumbing matrix $M$ we compute continued fraction expansions of the Seifert data \eqref{eqn:ctndfrac}. From
\[
 \frac{3}{5}=\cfrac{-1}{-2-\cfrac{1}{-3}},
\]
we have
\begin{equation*}
 M=\left(
\begin{matrix}
 -1 & \hphantom{-}1 & \hphantom{-}0 & \hphantom{-}1 & \hphantom{-}1 \\
 \hphantom{-}1 & -4 & \hphantom{-}0 & \hphantom{-}0 & \hphantom{-}0 \\
 \hphantom{-}0 & \hphantom{-}0 & -3 & \hphantom{-}0 & \hphantom{-}1 \\
 \hphantom{-}1 & \hphantom{-}0 & \hphantom{-}0 & -7 & \hphantom{-}0 \\
 \hphantom{-}1 & \hphantom{-}0 & \hphantom{-}1 & \hphantom{-}0 & -2 \\
\end{matrix}
\right).
\end{equation*}
The corresponding plumbing graph has one junction vertex connecting to three legs. Since~$X$ is a homological sphere, the adjacency matrix $M$ is unimodular and consequently the only inequivalent generalized Spin$^c$ structure is
\[
\vec{\underline{b}}_{0}=\bigl( \wv,-\wv,-\wv,-\wv,\vec 0 \bigr).
\]
The unimodularity also leads to the parameters $D=1$ and $m=-M_{v_{0},v_{0}}^{-1}=140$.

Since $X$ is a spherical Seifert manifold the condition \eqref{eqn:condition_k} is always satisfied so \smash{$\mathcal{W}_{\vec{\underline{b}}}$} in equation~\eqref{s3:hombl-2} is equal to $W^{\otimes 3}$. Because
\[
 (-1)^{\ell(w\hat w)}F^{\varrho_{w\hat w}}(\tau)=(-1)^{\ell(\hat w)}F^{\varrho_{\hat w}}(\tau),
\]
where $w\hat{w}=(ww_1,ww_2,ww_3)$, we may simplify the sum over \smash{$\mathcal{W}_{\vec{\underline{b}}}$} in \eqref{s3:hombl-2} to a sum over representatives $\hat{w}$ in the conjugacy classes of $W^{\otimes 3}/W$
\[
\widehat{Z}^{{\rm SU}(3)}_{\vec{\underline{b}}}(M_3;\tau) =
\left|W\right|C_{}(q)
\sum_{\hat{w}\in W^{\otimes 3}/W }
(-1)^{\ell (\hat{w})}
F^{(\rho_{\hat w})} (\tau).
\]
For this manifold, we can choose the representatives $\hat{w}$ such that $\vec s_{\hat w} =(s_1,s_2)$ have components $s_{i} \in \{1,\dots , m\}$. These parameters and their associated total Weyl length $(-1)^{\ell({\hat{w}})}$ in \eqref{s3:hombl-2} are collected in Table~\ref{tab:s-brieskorn}.

\begin{table}\renewcommand{\arraystretch}{1.2}
 \centering
 \begin{minipage}{65mm}
 \begin{tabular}{ccc}
 \multicolumn{3}{c}{$\vec s_{\hat w} =( s_{1},s_{2} ),\ (-1)^{\ell(\hat{w})}=1$} \\\hline
 $(43,\,103)$ & $(103,\,43)$ & $(43,\,43)$ \\
 $(27,\,111)$ & $(27,\,51)$ & $(111,\,27)$ \\
 $(51,\,27)$ & $(43,\,19)$ & $(19,\,43)$ \\
 $(27,\,27)$ & $(13,\,118)$ & $(13,\,58)$ \\
 $(1,\,82)$ & $(61,\,22)$ & $(1,\,22)$ \\
 $(13,\,34)$ & $(118,\,13)$ & $(58,\,13)$ \\
 $(22,\,61)$ & $(82,\,1)$ & $(22,\,1)$ \\
 $(34,\,13)$ & $(27,\,6)$ & $(6,\,27)$ \\
 &$(13,\,13)$ &
 \end{tabular}
 \end{minipage}\qquad
 \begin{minipage}{55mm}
 \begin{tabular}{ccc}
 \multicolumn{3}{c}{$\vec s_{\hat w} =( s_{1},s_{2} ),\ (-1)^{\ell(\hat{w})}=-1$} \\\hline
 $(83,\,83)$ & $(83,\,23)$ & $(23,\,83)$ \\
 $(47,\,71)$ & $(71,\,47)$ & $(33,\,78)$ \\
 $(41,\,62)$ & $(41,\,2)$ & $(78,\,33)$ \\
 $(62,\,41)$ & $(2,\,41)$ \\
 \end{tabular}
 \vspace{2.75cm}
 \end{minipage}

\caption{$\vec s_{\hat w}$ and its parity $(-1)^{\ell(\hat{w})}$ for the 36 inequivalent representatives of $W^{\otimes 3}/W$ for $M_3=M\bigl(-1;\frac{1}{4},\frac{3}{5},\frac{1}{7}\bigr)$.}\label{tab:s-brieskorn}
\end{table}

Since $D=1$ and we can set $\vec{k}_{\hat{w}}=\vec{0}$, and therefore $\vec{\sigma}_{\hat w} = \vec{s}_{\hat w}$, whereby the $\widehat{Z}$ invariant in equation \eqref{s3:hombl-2} is
\[
 C(q)\sum_{\hat{w}\in\mathcal{W}_{\vec{\underline{b} }}}( -1 )^{\ell(\hat{w})}F^{(\varrho_{\hat{w}})}( \tau )= 6 q^{26} - 12 q^{37} - 12 q^{43} - 12 q^{49} + \mathcal{O}\bigl(q^{50}\bigr).
\]

For each $\vec{s}$, we can then compute the set $\tilde\CS$ \eqref{def:Scal2}. These values are collected in Table~\ref{tab:SstarBriesk}. As a~selected example, consider $( s_{1},s_{2}) = ( 83,83)$ which correspond to $\hat{w}=(aba,aba,aba)$, where~$a$,~$b$ are Weyl group elements given as in~\eqref{eq:abaction}. Using equation \eqref{eq:alphawidef0A}, we find the set $\tilde\CS$ contains
\begin{center}\renewcommand{\arraystretch}{1.2}
 \begin{tabular}{cc}
 $\bal^{( 1)}_{w}$ & $\bal^{( 2)}_{w}$ \\ \hline
 $\left(0,\,-\frac{83}{140}\right)$ & $\left(1,\,-\frac{83}{140}\right)$ \\
 $\left(\frac{83}{140},\,-\frac{83}{140}\right)$ & $\left(\frac{57}{140},\,\frac{83}{70}\right)$ \\
$\left(-\frac{83}{140},\,\frac{83}{70}\right)$ & $\left(\frac{223}{140},\,-\frac{83}{140}\right)$.
 \end{tabular}
\end{center}
The $\bal$ for all choices of $\vec{s}$ are collected in Table \ref{tab:SstarBriesk}.

For $\bal\in \tilde\CS$, let $\boldsymbol{\beta}$ be the unique vector satisfying $\bal \equiv \boldsymbol{\beta}~\bigl(\mathbb Z^2\bigr) $ and $\beta_1, \beta_2 \in \left[0,1\right)$. Lemma \ref{lem:shifting} justifies the splitting of the generalized $A_2$ false theta function into 1D and 2D contributions
\begin{alignat*}{3}
&F_{\nu}^{\text{1D}} (\tau ):=
\sum_{w\in\mathcal{W}_{\vec{\underline{b}}}} (-1 )^{\ell(\hat{w})}F_{\nu}^{ (\varrho_{\hat{w}} ),1D} (\tau ), \qquad&&
F_{\nu}^{(\varrho),\text{1D}} (\tau ):=\sum_{\bal\in\tilde\CS}\eta_\nu(\bal)\bigl(F_{\nu,\bal}^{(\varrho)} (\tau )-F_{\nu,{\boldsymbol\beta}}^{(\varrho)} (\tau )\bigr),&\\
& F_{\nu}^{\text{2D}} (\tau ):=
\sum_{w\in\mathcal{W}_{\vec{\underline{b}}}} (-1 )^{\ell(\hat{w})}F_{\nu}^{ (\varrho_{\hat{w}} ),\text{2D}} (\tau ), \qquad&& F_{\nu}^{(\varrho),\text{2D}} (\tau ):=\sum_{\bal\in\tilde\CS}\eta_\nu(\bal)F_{\nu,{\boldsymbol\beta}}^{(\varrho)} (\tau ).&
\end{alignat*}

For $X$ this splitting gives
\begin{gather*}
\tilde{F}= -\frac{9}{14} q^{4} -\frac{18}{35} q^{5} -\frac{33}{35} q^{7} -\frac{81}{70} q^{8} -\frac{57}{35} q^{10} -\frac{39}{35} q^{13} -\frac{81}{35} q^{14} \\
\phantom{\tilde{F}= }{}-\frac{261}{70} q^{16} -\frac{3}{35} q^{19} + \frac{123}{35} q^{22} + \frac{69}{35} q^{25}, \\
C(q)F_{0}^{\text{1D}} (m\tau ) = -\tilde{F}-\frac{99}{35} q^{26} - \frac{141}{35} q^{28} +\frac{18}{7} q^{37} +\frac{39}{35} q^{40} +\frac{81}{35} q^{41} + \mathcal{O}\bigl(q^{42} \bigr),\\
C(q)F_{0}^{\text{2D}} (m\tau ) = \tilde{F}+\frac{447}{70} q^{26} +\frac{141}{35} q^{28} - \frac{309}{35} q^{37} - \frac{39}{35} q^{40} - \frac{81}{35} q^{41} + \mathcal{O} \bigl(q^{42} \bigr),\\
C(q)F_{1}^{\text{1D}} (m\tau )= \tilde{F}+\frac{99}{35} q^{26} +\frac{141}{35} q^{28} - \frac{18}{7} q^{37} - \frac{39}{35} q^{40} - \frac{81}{35} q^{41} + \mathcal{O} \bigl(q^{42} \bigr),\\
C(q)F_{1}^{\text{2D}} (m\tau ) = -\tilde{F}-\frac{27}{70} q^{26} - \frac{141}{35} q^{28} - \frac{111}{35} q^{37} +\frac{39}{35} q^{40} +\frac{81}{35} q^{41} + \mathcal{O} \bigl(q^{42} \bigr) .
\end{gather*}
Here the one-dimensional contribution is
\[
 C(q) \bigl(F_{0}^{\text{1D}} (m\tau ) + F_{1}^{\text{1D}} (m\tau ) \bigr)=6 q^{109} {-6} q^{113} {-6} q^{121} {-6} q^{131} + 6 q^{157} + \mathcal{O} \bigl(q^{160} \bigr)
\]
and the total $\Zhshort{{\rm SU}(3)}{\vec{\underline{b}}}{X}$ has integral coefficients
\[
\Zhlong{{\rm SU}(3)}{\vec{\underline{b}}}{X} ={-6} q^{26} + 12 q^{37} + 12 q^{43} + 12 q^{49} + \mathcal{O}\bigl(q^{50}\bigr).
\]
The companion functions to the 2D contributions $F_{\nu}^{2\rm D}(m\tau)$ are double Eichler integrals, whose integrands computed using Lemma \ref{lem:rewritingBBE} contain
\begin{align}
\frac{1}{4}\sum_{\hat{w}\in W^{\otimes 3}}(-1)^{\ell(\hat{w})}\Theta^{(\varrho_{\hat{w}})}_{\nu,e}\left(\mathbf{z}\right)={}& {\left(63 \theta^{1}_{140,63} + 7 \theta^{1}_{140,7}\right)} \theta^{1,140+K}_{23}
+ {\left(15 \theta^{1}_{140,15} + 55 \theta^{1}_{140,55}\right)} \theta^{1,140+K}_{1} \nonumber\\
		&
- {\left( 7 \theta^{1}_{140,133} + 63 \theta^{1}_{140,77}\right)} \theta^{1,140+K}_{37}\nonumber\\
		&
	+ \left(8\theta^{1}_{140,132} + 48 \theta^{1}_{140,48} + 8\theta^{1}_{140,8} + 48 \theta^{1}_{140,92} \right) \theta^{1,140+K}_{118} \nonumber\\
	 &
+ \left(27 \theta^{1}_{140,113} + 13 \theta^{1}_{140,127} + 83 \theta^{1}_{140,57} + 43 \theta^{1}_{140,97} \right) \theta^{1,140+K}_{57}\nonumber\\
	&
+ {\left( 20 \theta^{1}_{140,120} + 20 \theta^{1}_{140,20}\right)} \theta^{1,140+K}_{6} \nonumber\\
	&
- \left( 28 \theta^{1}_{140,112} + 28 \theta^{1}_{140,28}\right) \theta^{1,140+K}_{2}\nonumber\\
	&
+ {\left(15 \theta^{1}_{140,125} + 55 \theta^{1}_{140,85}\right)} \theta^{1,140+K}_{29} \nonumber\\
	&
+ 35 \theta^{1}_{140,105} \theta^{1,140+K}_{9} - 35 \theta^{1}_{140,35} \theta^{1,140+K}_{19}\nonumber\\
	&
- \left(13 \theta^{1}_{140,13} + 27 \theta^{1}_{140,27} + 43 \theta^{1}_{140,43} + 83 \theta^{1}_{140,83}\right) \theta^{1,140+K}_{13}\!\!\!\label{eq:theta0_example1}
\end{align}
using the shorthand notation
\[
 \theta^{1}_{m,r} \theta^{1,140+K}_{r'}
\equiv
 \theta^{1}_{m,r} \left(3z_{2} \right)\theta^{1,140+K}_{r'}\left(z_{1} \right),
\]
and similarly
\begin{align}
\frac{1}{4}\sum_{\hat{w}\in W^{\otimes 3}}(-1)^{\ell(\hat{w})}\Theta^{(\varrho_{\hat{w}})}_{1,e}\left(\mathbf{z}\right)={}&{\left(\theta^{0}_{140,63} + \theta^{0}_{140,7}\right)} \theta^{1,140+K}_{23}
	+ {\left(\theta^{0}_{140,15} + \theta^{0}_{140,55}\right)} \theta^{1,140+K}_{1}\nonumber\\
		 &
	- {\left(\theta^{0}_{140,0} - \theta^{0}_{140,140}\right)} \theta^{1,140+K}_{26}
+ {\left(\theta^{0}_{140,133} + \theta^{0}_{140,77} \right)} \theta^{1,140+K}_{37} \nonumber\\
& 	- \left(\theta^{0}_{140,132} - \theta^{0}_{140,48} - \theta^{0}_{140,8} + \theta^{0}_{140,92} \right) \theta^{1,140+K}_{22}\nonumber\\
&
 - \theta^{0}_{140,105} \theta^{1,140+K}_{9} - \left(\theta^{0}_{140,113} + \theta^{0}_{140,127} + \theta^{0}_{140,57} \right.\nonumber\\
&
+ \left. \theta^{0}_{140,97} \right) \theta^{1,140+K}_{57}- \theta^{0}_{140,35} \theta^{1,140+K}_{19} \nonumber\\
&
- {\left(\theta^{0}_{140,120} - \theta^{0}_{140,20} \right)} \theta^{1,140+K}_{6}+ {\left(\theta^{0}_{140,112} - \theta^{0}_{140,2} \right)} \theta^{1,140+K}_{2} \nonumber\\
&
- {\left(\theta^{0}_{140,125} + \theta^{0}_{140,85} \right)} \theta^{1,140+K}_{29} \nonumber\\
&
- \left(\theta^{0}_{140,13} + \theta^{0}_{140,27} + \theta^{0}_{140,43} + \theta^{0}_{140,83} \right) \theta^{1,140+K}_{13},\label{eq:theta1_example1}
\end{align}
where $\theta^{0}_{m,r} \theta^{1,140+K}_{r'}
\equiv
\theta^{0}_{m,r} (3z_{2} )\theta^{1,140+K}_{r'}(z_{1} )$ and $K=\{ 1,p_{1}p_{2},p_{1}p_{3},p_{2}p_{3} \}=\{ 1,20,28,35 \}$. This example makes manifest the recursive structure described in Proposition~\ref{prop:sphere_recursive}.

\subsection[Example: M(-1;1/5,1/2,1/4)]{Example: $\boldsymbol{M\bigl(-1;\frac{1}{5},\frac{1}{2},\frac{1}{4}\bigr)}$}
Consider the Seifert manifold $X = M\left( -1;\frac{1}{5},\frac{1}{2},\frac{1}{4} \right)$, which has the plumbing matrix
\begin{equation*}
 M=\left(\begin{array}{@{}rrrr@{}}
-1 & 1 & 1 & 1 \\
1 & -5 & 0 & 0 \\
1 & 0 & -2 & 0 \\
1 & 0 & 0 & -4
\end{array}\right),
\end{equation*}
with $\mathrm{det}(M)=2$, $D=1$ and $m=20$. Since now $|\det M|>1$, $\mathcal{W}_{\vec{\underline{b}}}$ (cf.\ \eqref{s3:hombl-2}) will generically be a proper subset of \smash{$W^{\otimes 3}$}.

The manifold $X$ admits two inequivalent generalized Spin$^c$ structures: the trivial one $\vec{\underline{b}}_0$ and
\begin{equation*}
\vec{\underline{b}}_{1}= ( (1, -2 ), (1, 1 ), (-1, 2 ), (-3, 3 ) ).
\end{equation*}
 We focus on $\vec{\underline{b}}_{0}$ for simplicity. Using the same techniques as described above we can compute the set of $\vec{s}$, the $\widehat{Z}$ invariant and the companions \smash{$\mathbb{E}^{(\varrho)}_{\nu}(\tau)$}. The $\vec{s}$ and $\bal$ parameters are collected in Table~\ref{tab:balinrangePseudo}. The sum~\eqref{s3:hombl-2} is
\begin{gather*}
 \widehat Z^{{\rm SU}(3)}_\bv(M_3;\tau)= C(q)\sum_{\hat{w}\in\mathcal W_{\vec{\underline{b}}}}(-1)^{\ell(\hat{w})}F^{\left( \varrho_{\hat{w}} \right)}\left( q \right)= -6q + 12q^2 -12q^4 + 6q^5 + 6q^7 + \mathcal{O}\bigl(q^{10}\bigr),\\
 C(q)\bigl(F_{0}^{\text{1D}}(m\tau)+F_{1}^{\text{1D}}(m\tau)\bigr) ={-12} q^{16} + 12 q^{20} + 12 q^{22} {-12} q^{34} {-12} q^{38} + \mathcal{O}\left(q^{40}\right), \\
 C(q)F_{0}^{\text{2D}}(m\tau) = {-6} q + 12 q^{2} {-12} q^{4} + 6 q^{5} + 6 q^{7} + \mathcal{O}\bigl(q^{10}\bigr),
\\
 C(q)F_{1}^{\text{2D}}(m\tau) = \mathcal{O}\bigl( q^{500} \bigr),
\end{gather*}
while the double Eichler integral form of the companion has as integrand
\begin{align}
\frac{1}{80} \sum_{\hat{w}\in \mathcal{W}_{\vec{\underline{b}}}}\sum_{w\in W}(-1)^{\ell(\hat{w})}\Theta^{(\varrho_{\hat{w}})}_{0,w}(\mathbf z) ={}& \bigl(\theta^{1}_{20,1}+\theta^{1}_{20,9}\bigr)\theta_{11}^{1,20+4}-\theta^{1}_{20,5}\theta_{7}^{1,20+4}\nonumber\\
 &-\bigl(\theta^{1}_{20,11}+\theta^{1}_{20,19}\bigr)\theta_{1}^{1,20+4}+2\theta^{1}_{20,10}\bigl(\theta^{1,20+4}_{8}+\theta^{1,20+4}_{12}\bigr)\nonumber\\
 &-\theta^{1}_{20,15}\theta_{5}^{1,20+4}-2\bigl(\theta^{1}_{20,6}+\theta^{1}_{20,14}\bigr)\bigl(\theta^{1,20+4}_{4}+\theta^{1,20+4}_{16}\bigr)\!\!\!\label{eq:theta_example2}
\end{align}
and
\[\sum_{\hat{w}\in \mathcal{W}_{\vec{\underline{b}}}}\sum_{w\in W}\Theta^{(\varrho_{\hat{w}})}_{1,w}(\mathbf z)=0.\]
The vanishing of the companion $\BE_{1}^{2D}$ and the vanishing of~$F_{1}^{2D}$ to the order at which it is computed $\bigl(\mathcal{O}\bigl(q^{500}\bigr)\bigr)$ are necessary (but not sufficient) conditions for~$F_{1}^{2D}=0$.
In equation~\eqref{eq:theta_example2}, we used the same shorthand notation of equations \eqref{eq:theta0_example1} and~\eqref{eq:theta1_example1}. For both cases, the Weil representations are labelled by the set $K= \{ 1,4 \}$ which is the same as for the rank-one invariant \smash{$\widehat{Z}^{{\rm SU}(2)}_{\vec{\underline{b}}} (X;\tau)$} \cite{cheng20193d}. Furthermore, we find $\theta_{1}^{1,20+4}$ and $\theta_{11}^{1,20+4}$ in this sum which are found in the rank-one invariant too.

\subsection[Example: $M(-2;1/2,1/2,3/4)]{Example: $\boldsymbol{M\bigl(-2;\frac{1}{2},\frac{1}{2},\frac{3}{4}\bigr)}$}

Surprisingly for certain manifolds the sum of the two-dimensional contributions to the generalized~$A_2$ false theta function vanishes to the order at which we can compute it. One such example is the Seifert manifold $X=M\bigl(-2; \frac{1}{2},\frac{1}{2},\frac{3}{4} \bigr)$, with $\det(M)=4$, $D=1$, $m=4$ and $K=\{1\}$. Equation~\eqref{eq:alphawidef0A} leads to $\bal$ in Table~\ref{tab:balinrangeGeneral} and
\begin{gather*}
\begin{split}
& C(q)\sum_{\hat{w}\in\mathcal W_{\vec{\underline{b}}}} ( -1 )^{}F^{ ( \varrho_{\hat{w}} )} ( q )={-}36 q^{-1} + 36q^{1} {-}36 q^{10} + 36 q^{14} + \mathcal{O} \bigl(q^{20} \bigr),\\
& C(q)F_{0}^{\text{2D}} (m\tau )= C(q)F_{1}^{\text{2D}} (m\tau ) = \mathcal{O} \bigl( q^{500} \bigr),
\end{split}
\end{gather*}
for $\vec{\underline{b}}=\left(-\vec\rho,\vec\rho,\vec\rho,\vec\rho\right)$. Here the recursive structure is particularly simple to identify as
\begin{equation*}
\frac{1}{36}\widehat{Z}^{{\rm SU}(3)}_{\vec{\underline{b}}}(M_3;\tau)=q^{-4}+\frac{q^{21/2}}{2}\widehat{Z}^{{\rm SU}(2)}_{\vec{\underline{b}}^{\, {\rm SU}(2)}}(M_3;\tau) + \mathcal{O}\bigl(q^{500}\bigr),
\end{equation*}
where, $\vec{\underline{b}}^{ {\rm SU}(2)}=\left(-\vec\rho^{\ \prime},\vec\rho^{\ \prime},\vec\rho^{\ \prime},\vec\rho^{\ \prime}\right)$, where $\vec{\rho}^{\ \prime}$ is the Weyl vector of the $A_1$ root lattice.

\subsection{Further examples}

The above subsections have shown in great detail that the $\widehat{Z}$-invariant of Seifert manifolds, including non-spherical manifolds for which Conjecture \ref{conj_rec} is not yet proven, display a recursion relation across different ranks. In particular the companions for $\widehat Z^{{\rm SU}(2)}$ and $\widehat Z^{{\rm SU}(3)}$ are carefully analyzed.

In the following Table we provide further evidence of this phenomenon. We organize examples in blocks. In each block the data is organized as follows:
\begin{table}[ht]\renewcommand{\arraystretch}{1.2}
 \centering
 \begin{tabular}{|ll|}\hline
 Seifert data & $\sigma^{m+K}$ \\
 $m$, $D$ & $\sigma_{A_{1}}^{m+K}$ \\
 $m+K$ & $\sigma_{A_{2}}^{m+K}$ or $\bar \sigma_{A_{2}}^{m+K}$\\\hline
 \end{tabular}
\end{table}
 where $\sigma^{m+K}$ is the set of $r$ giving
inequivalent $\theta^{1,m+K}_r$ (\ref{WeylorbitSU2}), \smash{$\sigma_{A_1}^{m+K}$} is the minimal subset of $\sigma^{m+K}$ such that
(\ref{eqn:intro_Weyl1}) and (\ref{eqn:intro_Weyl2}) hold also when $\sigma^{m+K}$ is replaced by \smash{$\sigma_{A_1}^{m+K}$}, for all inequivalently choices of boundary conditions~$\bv$. Similarly,
 \smash{$\sigma_{A_2}^{m+K}$} is the minimal subset of $\sigma^{m+K}$ such that
(\ref{eqn:conj:recursive}) holds also when $\sigma^{m+K}$ is replaced by \smash{$\sigma_{A_1}^{m+K}$}, for all inequivalently choices of boundary conditions $\bv$. Note that we have $\sigma_{}^{m+K}\subset \sigma_{A_2}^{m+K}\subset \sigma_{A_1}^{m+K}$ in all cases we study.

\begin{table}[!t]\centering\renewcommand{\arraystretch}{1.22}
\begin{tabular}{|ll|}\hline
$(-2;1/2,\,1/2,\,3/5) $& $\sigma^{40} = \{1,-,39\}$ \\
$40,\ 4$& $\sigma_{A_{1}}^{40} = \{28,32,38\}$ \\
$40$ & $\bar\sigma_{A_{2}}^{40} = \{5,10,15,16,20,24,25,30,35\}$ \\
\hline
$(-1;1/2,\,1/3,\,1/8) $& $\sigma^{24 + 8} = \{1,2,4,5,7,8,10,13,16\}$ \\
$24,\ 1$& $\sigma_{A_{1}}^{24 + 8} = \{1,7\}$ \\
$24 + 8$ & $\sigma_{A_{2}}^{24 + 8} = \{1,2,7,10\}$ \\
\hline
$(-1;1/2,\,1/7,\,2/7) $& $\sigma^{14 + 7} = \{1,3,5,7\}$ \\
$14,\ 1$& $\sigma_{A_{1}}^{14 + 7} = \{3\}$ \\
$14 + 7$ & $\sigma_{A_{2}}^{14 + 7} = \{1,3,5\}$ \\
\hline
$(-1;1/4,\,1/7,\,4/7) $& $\sigma^{28 + 7} = \{1,2,3,5,6,7,9,10,13,14,17,21\}$ \\
$28,\ 1$& $\sigma_{A_{1}}^{28 + 7} = \{13,21\}$ \\
$28 + 7$ & $\sigma_{A_{2}}^{28 + 7} = \{1,2,3,5,6,7,9,10,13,14,17,21\}$ \\
\hline
$(-1;1/3,\,1/5,\,2/5) $& $\sigma^{15 + 5} = \{1,2,4,5,7,10\}$ \\
$15,\ 1$& $\sigma_{A_{1}}^{15 + 5} = \{4\}$ \\
$15 + 5$ & $\sigma_{A_{2}}^{15 + 5} = \{1,2,4,7\}$ \\
\hline
$(-1;1/3,\,1/3,\,1/4) $& $\sigma^{12 + 3} = \{1,2,3,5,6,9\}$ \\
$12,\ 1$& $\sigma_{A_{1}}^{12 + 3} = \{1,9\}$ \\
$12 + 3$ & $\bar\sigma_{A_{2}}^{12 + 3} = \{4,7,8,10,11\}$ \\
\hline
$(-2;1/2,\,1/2,\,12/13) $& $\sigma^{52} = \{1,-,51\}$ \\
$52,\ 2$& $\sigma_{A_{1}}^{52} = \{24,28,50\}$ \\
$52$ & $\sigma_{A_{2}}^{52} = \{2,4,9,11,15,17,22,24,28,30,35,37,41,43,48,50\}$ \\
\hline
$(-1;1/3,\,1/11,\,6/11) $& $\sigma^{33} = \{1,-,32\}$ \\
$33,\ 1$& $\sigma_{A_{1}}^{33} = \{16,22,28\}$ \\
$33$ & $\bar\sigma_{A_{2}}^{33} = \{3,4,6,7,9,12,15,18,21,24,26,27,29,30\}$ \\
\hline
$(-2;1/2,\,2/3,\,2/3) $& $\sigma^{6 + 3} = \{1,3\}$ \\
$6,\ 1$& $\sigma_{A_{1}}^{6 + 3} = \{1,3\}$ \\
$6 + 3$ & $\sigma_{A_{2}}^{6 + 3} = \{1,3\}$ \\
\hline
$(-2;1/2,\,1/2,\,8/9) $& $\sigma^{36} = \{1,-,35\}$ \\
$36,\ 2$& $\sigma_{A_{1}}^{36} = \{16,20,34\}$ \\
$36$ & $\sigma_{A_{2}}^{36} = \{2,4,5,7,11,13,14,16,20,22,23,25,29,31,32,34\}$ \\
\hline
$(-2;1/2,\,1/2,\,4/5) $& $\sigma^{20} = \{1,-,19\}$ \\
$20,\ 2$& $\sigma_{A_{1}}^{20} = \{8,12,18\}$ \\
$20$ & $\bar\sigma_{A_{2}}^{20} = \{5,10,15\}$ \\
\hline
$(-2;1/2,\,2/3,\,3/4) $& $\sigma^{12 + 4} = \{1,2,4,5,8\}$ \\
$12,\ 1$& $\sigma_{A_{1}}^{12 + 4} = \{1,5\}$ \\
$12 + 4$ & $\sigma_{A_{2}}^{12 + 4} = \{1,4,5,8\}$ \\
\hline
$(-1;1/2,\,1/3,\,1/9) $& $\sigma^{18 + 9} = \{1,3,5,7,9\}$ \\
$18,\ 1$& $\sigma_{A_{1}}^{18 + 9} = \{1,5\}$ \\
$18 + 9$ & $\sigma_{A_{2}}^{18 + 9} = \{1,5,7\}$ \\
\hline
\end{tabular}
\end{table}

\begin{table}[t]\centering\renewcommand{\arraystretch}{1.22}
\begin{tabular}{|ll|}\hline
$(-1;1/2,\,1/5,\,1/5) $& $\sigma^{10 + 5} = \{1,3,5\}$ \\
$10,\ 1$& $\sigma_{A_{1}}^{10 + 5} = \{1,5\}$ \\
$10 + 5$ & $\sigma_{A_{2}}^{10 + 5} = \{1,3,5\}$ \\
\hline
$(-1;1/2,\,2/5,\,1/15) $& $\sigma^{30 + 15} = \{1,3,5,7,9,11,13,15\}$ \\
$30,\ 1$& $\sigma_{A_{1}}^{30 + 15} = \{7,11\}$ \\
$30 + 15$ & $\sigma_{A_{2}}^{30 + 15} = \{1,5,7,11\}$ \\
\hline
$(-1;1/2,\,1/11,\,4/11) $& $\sigma^{22} = \{1,-,21\}$ \\
$22,\ 1$& $\sigma_{A_{1}}^{22} = \{7,11,15\}$ \\
$22$ & $\sigma_{A_{2}}^{22} = \{3,5,7,11,15,17,19\}$ \\
\hline
$(-2;1/2,\,1/2,\,6/7) $& $\sigma^{28} = \{1,-,27\}$ \\
$28,\ 2$& $\sigma_{A_{1}}^{28} = \{12,16,26\}$ \\
$28$ & $\bar\sigma_{A_{2}}^{28} = \{1,6,7,8,13,14,15,20,21,22,27\}$ \\
\hline
$(-1;1/2,\,1/4,\,1/5) $& $\sigma^{20 + 4} = \{1,2,3,4,6,7,8,11,12,16\}$ \\
$20,\ 1$& $\sigma_{A_{1}}^{20 + 4} = \{1,11\}$ \\
$20 + 4$ & $\sigma_{A_{2}}^{20 + 4} = \{1,3,4,7,8,11,12,16\}$ \\
\hline
$(-2;1/2,\,1/3,\,1/2) $& $\sigma^{24} = \{1,-,23\}$ \\
$24,\ 4$& $\sigma_{A_{1}}^{24} = \{16,20,22\}$ \\
$24$ & $\bar\sigma_{A_{2}}^{24} = \{3,6,9,12,15,18,21\}$ \\
\hline
\end{tabular}
\end{table}

\section{Discussions}\label{sec:discussions}

In the paper we continue the study of quantum modular properties of $\widehat Z^G$-invariants, extending the analysis to higher rank $G$. The results and conjectures of this paper lead to many further research questions and open questions, some of them we will hopefully report on in the future, which we list below.
\begin{itemize}\itemsep=0pt
 \item Conjecture \ref{conj_qmf} is plausible. Starting from the Definition \ref{dfn:blocks} of the rank-$r$ $\widehat Z$-invariants, after straightforwardly performing the contour integration in the directions spanned by all non-junction vertices, we are left with a rank $N=r\times n$ lattice sum in the integrand of the remaining contour integral. In the weakly-negative/positive case, the signature of the lattice is purely positive/negative. In particular, in the weakly-negative case we obtain a~sum over (derivatives of) rank $N$ false-theta-like function. It should be interesting to prove their quantum modularity explicitly. Similarly, for the weakly-positive case we expect to obtain a close cousin of higher depth mock modular form, though at present we do not have a universal recipe for defining $\widehat Z$-invariants for these cases.
 \item Beyond Conjecture~\ref{conj_qmf}, it would be very interesting to analyse quantum modularity of $\widehat Z$-invariants when the plumbed manifold is neither weakly-negative now weakly-positive, in other words when the space spanned by junction vertices has signature $(k,N-k)$ when~$k\neq 0$. For this purpose, it should be interesting to generalize the generalized error function \cite{Alexandrov:2016enp, Bringmann:2021dxg} to accommodate both the ``false'' as well as the ``mock'' directions.
 \item As mentioned in the introduction, Rademacher sum expressions are interesting for many purposes and are often available for holomorphic quantum modular forms of the kind we study here. It would be interesting to systematically develop the Rademacher sum techniques for general quantum modular forms. In terms of the physics on the field theory side, we wish to compare the $S^2\times S^1$ superconformal indices of the 3d theory $T[M_3]$, conjectured to be related to $\widehat Z$ by
 \[
 I^G(\tau)\sim \sum_b \widehat Z^G_b(\tau) \widehat Z^G_b(-\tau),
 \]
 with a summing over saddle point contributions from different gravity solutions. As argued in \cite{cheng20193d}, it is tempting to define $\widehat Z^G_b(M_3;-\tau)$ by identifying it with $\widehat Z^G_b(-M_3;\tau)$. On the gravity side, while we do not yet have a complete catalogue of supergravity solutions, the solutions described recently \cite{BenettiGenolini:2023rkq} in the ${\rm AdS}_4\times S^7$ context encouragingly take the form as geometries that might be matched with the different Rademacher contributions.
 \item Often, $\widehat Z$-invariants admit totally different expressions, arising from realizing $M_3$ not by plumbing but by surgery along knots \cite{Chung:2022ypb, Ekholm:2021irc, Ekholm:2020lqy,gukov2019two, Park:2019xey}, 
     or from alternative ways of expressing characters of logarithmic vertex algebras \cite{chengUpC}, leading to interesting $q$-series identities. While so far the analysis of quantum modularity relies mostly on the connection to lattice theta functions, it will be very interesting if modular properties can also be analyzed
 directly through these other expressions as well, as they are connected to yet different areas of mathematics and will lead to different applications.
 \item It will be very interesting to understand the nature of the recursive relation we observed in more concrete terms. We can think of the following routes for exploration. 1) Work out the recursion at higher rank in order to gain a more complete understanding of the recursive structure. 2) We already mentioned the analogy to the structure in higher rank Vafa--Witten theory \eqref{eqn:VW_recur}. It would be helpful to develop a similar interpretation for the 3d case. 3) Apart from the geometrical M-theory perspective, the Vafa--Witten recursion also admits an interpretation in terms of the reducible connections of the higher rank gauge group. From the ${\rm SL}(N,\mathbb C)$ Chern--Simons point of view, we believe it would be illuminating to work out the higher rank/higher depth analogue of \eqref{CS_transseries}, from which we should be able to see explicitly the role played by the lower rank flat connections. It is also desirable to compare with the resurgence analysis analogous to~\cite{Gukov:2016njj}. It will be particularly interesting to see what it means for the proposal in \cite{cheng20193d} to view the orbits of Weil representation as corresponding to the non-Abelian ${\rm SU}(2)$ flat connections on $M_3$, or relatedly to the different Wilson line insertions \cite{Gukov:2016gkn}. 
 \item According to the false-mock conjecture \cite{cheng20193d} 
  and its higher rank generalization, the recursion relation reported in Conjecture \ref{conj_rec} and Theorem \ref{thm:intro_brieskorn_recursive} should hold for $-M_3$, the orientation-flipped cousin of $M_3$, in a completely analogous way. It would be interesting to compute $\hat Z^G(-M_3)$ for higher rank $G$ and check it.
\end{itemize}

\appendix

\section{Special functions}\label{app:specialFn}

In this appendix we collect the definitions of the special functions used in the main text, as well as properties and relations that they satisfy. As for notations, we use throughout $q:={\rm e}^{2\pi {\rm i}\tau}$, where $\tau \in \BH$ and $v:=\operatorname{Im} \tau$. The functions
\[ 
E(u) := 2\int_0^u {\rm e}^{-\pi w^2} {\rm d}w, \qquad u\in \mathbb{R}
\]
and	
\be \label{def:M}
M(u):=\frac{{\rm i}}{\pi} \int_{\mathbb{R} - i u}
{\rm e}^{-\pi w^2 -2\pi {\rm i} u w} w^{-1} {\rm d}w, \qquad u\neq 0
\ee
are closely related to the error and the complementary error functions. A useful rewriting of~$M(u)$ is \cite{zwegers2008mock}
\begin{equation}
\label{eq:error_rewriting}
M(x\sqrt{v})={\rm i}\frac{x}{\sqrt{2}} q^{\frac{x^2}{4}} \int_{-\bar{\tau}}^{{\rm i}\infty} \frac{{\rm e}^{\frac{\pi {\rm i} x^2 w}{2}}}{\sqrt{-{\rm i}(w+\tau)}}{\rm d}w .
\end{equation}
They satisfy the relation
\[
M(u)=E(u)-\sgn(u),
\]
where
\[
\sgn(u):=\begin{cases}
	\hphantom{-}1 & \text{if}\ u>0, \\
	-1 & \text{if}\ u<0, \\
	\hphantom{-}0 &\text{if}\ u=0.
\end{cases}
\]

We also define
\[
M^\ast(u)=E(u)-\sgn^\ast(u),
\]
where
\[
\sgn^*(x) := \sgn (x) \quad \text{if}\quad x\neq 0\qquad\text{and} \qquad \sgn^*(0) :=1 .
\]

The generalised error function $E_2\colon\BR\times \BR^2 \to \BR$ is defined by
\[
E_2(\kappa ; \mathbf{u}) = E_2(\kappa ; u_1,u_2) := \int_{\BR^2} \sgn(w_1) \sgn(w_2+\kappa w_1)
{\rm e}^{-\pi\left( (w_1-u_1)^2 + (w_2-u_2)^2 \right)} {\rm d}w_1 {\rm d}w_2 .
\]
For $u_2,u_1-\kappa u_2 \neq 0$, the generalised complementary error function is
\be \label{def:M2}
M_2(\kappa ; \mathbf{u}) = M_2(\kappa ; u_1,u_2) := - \frac{1}{\pi^2} \int_{\mathbb{R} - {\rm i} u_2} \int_{\mathbb{R} - {\rm i} u_1}
\frac{{\rm e}^{ -\pi w_1^2 -\pi w_2^2 - 2\pi {\rm i}
		(u_1 w_1 + u_2 w_2) }}{w_2(w_1-\kappa w_2)} {\rm d}w_1{\rm d}w_2 .
\ee
These functions satisfy the following relation
\begin{align*} 
	M_2 (\kappa ; \mathbf{u})
	={}& E_2(\kappa ; u_1, u_2) + \sgn(u_1-\kappa u_2) \sgn(u_2) \\
	& -\sgn(u_2) E(u_1) - \sgn(u_1-\kappa u_2) E \left( \frac{\kappa u_1 + u_2 }{\sqrt{1+\kappa^2}} \right),
\end{align*}
 and
\begin{align}
	M_2^*(\kappa ; u_1, u_2) :={}& \sgn^*(u_1-\kappa u_2) \sgn^*(u_2)
	+ E_2(\kappa; u_1,u_2) \nonumber \\
	& - \sgn^*(u_2) E(u_1) - \sgn^*(u_1-\kappa u_2)
	E\left( \frac{\kappa u_1 + u_2}{\sqrt{1+\kappa^2}} \right) . \label{def:M2star}
\end{align}

The following identities hold for derivatives of the function $M_2(\kappa ; \mathbf{u})$ \cite{Alexandrov:2016enp}:
\begin{gather*}
M_2^{(0,1)}(\kappa ; \mathbf{u}) = \frac{2}{ \sqrt{1+\kappa^2} } {\rm e}^{- \frac{\pi (u_2+\kappa u_1)^2}{1+\kappa^2}}
M\left( \frac{u_1 - \kappa u_2}{\sqrt{1+\kappa^2}} \right),\\
M_2^{(1,0)}(\kappa ; \mathbf{u}) = 2 {\rm e}^{-\pi u_1^2} M(u_2) +
\frac{2\kappa}{\sqrt{1+\kappa^2}}
{\rm e}^{- \frac{\pi (u_2+\kappa u_1)^2}{1+\kappa^2}}
M\left( \frac{u_1 - \kappa u_2}{\sqrt{1+\kappa^2}} \right) .
\end{gather*}

\subsection*{Error function complements as integrals of theta functions}
Let $\theta_{i} ( \bmu,\mathbf{w} )$ be the following theta functions
\begin{align}
	\theta_1(\boldsymbol{\mu} ; \mathbf{w}) &= \sum_{\mathbf{n} \in \boldsymbol{\mu} + \BZ^2} (2n_1 + n_2) n_2 {\rm e}^{\frac{\pi {\rm i}}{2} \left( 3(2n_1 + n_2)^2 w_1 + n_2^2 w_2 \right)},\nonumber\\
	\theta_2(\boldsymbol{\mu} ; \mathbf{w}) &= \sum_{\mathbf{n} \in \boldsymbol{\mu} + \BZ^2} (3n_1 + 2n_2) n_1
	{\rm e}^{\frac{\pi {\rm i}}{2} \left( (3n_1 + 2n_2)^2 w_1 + 3n_1^2 w_2 \right)},\nonumber\\
	\theta_3 (\boldsymbol{\mu};\mathbf{w}) &= \sum_{\mathbf{n} \in \boldsymbol{\mu} + \BZ^2} (2n_1 + n_2)
	{\rm e}^{\frac{\pi {\rm i}}{2} \left( 3(2n_1 + n_2)^2 w_1 + n_2^2 w_2 \right)},\nonumber\\
	\theta_4 (\boldsymbol{\mu};\mathbf{w}) &= \sum_{\mathbf{n} \in \boldsymbol{\mu} + \BZ^2} (3n_1 + 2n_2)
	{\rm e}^{\frac{\pi {\rm i}}{2} \left( (3n_1 + 2n_2)^2 w_1 + 3n_1^2 w_2 \right)},\nonumber\\
	\theta_5 (\boldsymbol{\mu};\mathbf{w}) &= \sum_{\mathbf{n} \in \boldsymbol{\mu} + \BZ^2} n_1 \,
	{\rm e}^{\frac{\pi {\rm i}}{2} \left( (3n_1 + 2n_2)^2 w_1 + 3n_1^2 w_2 \right)} .\label{def:thetaellalw}
\end{align}
We can rewrite the error function complement $M_{2}\left( \kappa,\mathbf{u} \right)$ from equation \eqref{def:M2} as an iterated Eichler integral like in \cite{bringmann2018higher}
\begin{align*}
	&M_2\bigl(\sqrt{3};\sqrt{3v}(2n_1+n_2),\sqrt{v}n_2\bigr)\\
	&\qquad=-\frac{\sqrt{3}}{2}(2n_1+n_2)n_2q^{Q(\boldsymbol{n})}\int_{-\overline{\tau}}^{{\rm i} \infty}
	\frac{{\rm e}^{\frac{3\pi {\rm i}}{2}(2n_1+n_2)^2w_1}}{\sqrt{-{\rm i}(w_1+\tau)}}\int_{w_1}^{{\rm i} \infty}
	\frac{{\rm e}^{\frac{\pi {\rm i} n_2^2 w_2}{2}}}{\sqrt{-{\rm i}(w_2+\tau)}}{\rm d}w_2 {\rm d}w_1
	\\
	&\phantom{\qquad=}{}
	-\frac{\sqrt{3}}{2}(3n_1+2n_2)n_1q^{Q(\boldsymbol{n})} \int_{-\overline{\tau}}^{{\rm i} \infty}
	\frac{{\rm e}^{\frac{\pi {\rm i}}{2}(3n_1+2n_2)^2w_1}}{\sqrt{-{\rm i}(w_1+\tau)}}
	\int_{w_1}^{{\rm i} \infty}
	\frac{{\rm e}^{\frac{3\pi {\rm i} n_1^2 w_2}{2}}}{\sqrt{-{\rm i}(w_2+\tau)}}{\rm d}w_2 {\rm d}w_1,
\end{align*}
whereby
\begin{gather*}
	\sum_{\mathbf{n} \in \boldsymbol{\mu} + \BZ^2}
	M_2\bigl(\sqrt{3};\sqrt{3v}(2n_1+n_2),\sqrt{v}n_2\bigr)
	\\
	\qquad= -\frac{\sqrt{3}}{2}\int_{-\bar{\tau}}^{{\rm i} \infty}
	\int_{w_1}^{{\rm i} \infty}\frac{\theta_1({\boldsymbol\mu},
		{\bf w})+\theta_2({\boldsymbol\mu},
		{\bf w})}{\sqrt{-{\rm i}(w_1+\tau)}\sqrt{-{\rm i}(w_2+\tau)}}{\rm d}w_2{\rm d}w_1
\end{gather*}
and
\begin{gather*}
	\frac{1}{2\pi {\rm i}}\left[\frac{\partial}{\partial z}\left(M_2\left(\sqrt 3;\sqrt{3v}(2n_1+n_2),\sqrt{v}\left(n_2-\frac{2\operatorname{Im}(z)}{v}\right)\right){\rm e}^{2\pi {\rm i}n_2z}\right)\right]_{z=0}\\
	\qquad=\frac{\sqrt{3}}{2\pi}(2n_1+n_2)\int_{-\bar{\tau}}^{{\rm i}\infty} \frac{{\rm e}^{\frac{3{\rm i}\pi}{2}(2n_1+n_2)^2w_1}}{\sqrt{-{\rm i}(w_1+\tau)}}\int_{w_1}^{{\rm i}\infty} \frac{{\rm e}^{\frac{{\rm i}\pi n_2^2w_2}{2}}}{(-{\rm i}(w_2+\tau))^{\frac32}}{\rm d}w_2{\rm d}w_1\\
	\qquad\phantom{=}{}-\frac{\sqrt{3}}{4\pi}(3n_1+2n_2)\int_{-\bar{\tau}}^{{\rm i}\infty} \frac{{\rm e}^{-\frac{{\rm i}\pi}{2}(3n_1+2n_2)^2w_1}}{\sqrt{-{\rm i}(w_1+\tau)}}\int_{w_1}^{{\rm i}\infty} \frac{{\rm e}^{\frac{3{\rm i}\pi n_1^2w_2}{2}}}{(-{\rm i}(w_2+\tau))^{\frac32}}{\rm d}w_2{\rm d}w_1\\
	\qquad\phantom{=}{}-\frac{\sqrt{3}n_1}{4\pi}\int_{-\bar{\tau}}^{{\rm i}\infty} \frac{{\rm e}^{\frac{{\rm i}\pi}{2}(3n_1+2n_2)^2w_1}}{(-(w_1+\tau))^\frac32}\int_{w_1}^{{\rm i}\infty}\frac{{\rm e}^{\frac{3{\rm i}\pi n_1^2w_2}{2}}}{\sqrt{-{\rm i}(w_2+\tau)}}{\rm d}w_2{\rm d}w_1,
\end{gather*}
which implies
\begin{gather*}
	\frac{1}{2\pi {\rm i}}
	\sum_{\mathbf{n} \in \boldsymbol{\mu} + \BZ^2}
	\left[\frac{\partial}{\partial z}
	\left(M_2\left(\sqrt 3;\sqrt{3v}(2n_1+n_2),\sqrt{v}
	\left(n_2-\frac{2\operatorname{Im}(z)}{v}\right)\right)
	{\rm e}^{2\pi {\rm i}n_2z}\right)\right]_{z=0}\\
	\qquad=\frac{\sqrt{3}}{4 \pi}
	\int_{-\bar{\tau}}^{{\rm i} \infty}
	\int_{w_1}^{{\rm i} \infty}
	\frac{2\theta_3({\bmu}, {\bf w})-
		\theta_4({\bmu}, {\bf w})}{\sqrt{-{\rm i}(w_1+\tau)}(-{\rm i}(w_2+\tau)
		)^{\frac{3}{2}}}{\rm d}w_2{\rm d}w_1 \\
	\qquad\phantom{=}{}+\frac{\sqrt{3}}{4 \pi}\int_{-\bar{\tau}}^{{\rm i} \infty}\int_{w_1}^{{\rm i} \infty}\frac{\theta_5({\bmu}, {\bf w})}{(-{\rm i}(w_1+\tau))^{\frac{3}{2}}\sqrt{-{\rm i}(w_2+\tau)}}{\rm d}w_2{\rm d}w_1 .
\end{gather*}

\section{Proofs}
\subsection{Proof of Lemma \ref{lem:split_2_terms}}\label{pf_lem:split_2_terms}

Let $a, b\in W$ be the Weyl group elements whose action on a root $\vec k=\sum_{i=1,2} k_i^r \vec \alpha_i $ reads
\begin{gather}
	a\colon\ \vec{k} \mapsto (k_2^r-k_1^r)\vec{\al}_1 +k_2^r \vec{\al}_2,
	\qquad
	b\colon\ \vec{k} \mapsto k_1^r \vec{\al}_1 + (k_1^r-k_2^r) \vec{\al}_2 .
	\label{eq:abaction}
\end{gather}
They represent reflections with respect to the planes orthogonal to the simple roots $\vec\alpha_1$ resp.\ $\vec\alpha_2$.
In terms of these, we have $W=\{{\bf 1}, a, b, ab, ba, aba=bab\}$ and $W_+=\{{\bf 1}, ab, ba\}$.

From $0\leq \langle \vec k, \vec\omega_i\rangle$ for $i=1,2$, we conclude that at least one of the triple $\vec{k}$, $a\bigl(\vec{k}\bigr)$ and $b\bigl(\vec{k}\bigr)$ is in~$\bar P^+$. Evoking the identity
\begin{equation*}
	(-1)^{\ell(w')} F^{(\bos')} (\tau) = F^{(\bos)} (\tau)
\end{equation*}
for $\bos=\bigl(\vec{s},\vec k,m,D\bigr)$ and $\bos'=\bigl(w'(\vec{s}),w'(\vec k),m,D\bigr)$, from now on we assume that $\vec{k}\in \bar P^+$ without loss of generality.

In the sum over $\vec n$ in \eqref{dfn:gen_false}, write $\vec n=D\vec{m}+w\bigl(\vec{k}\bigr)$ for $\vec m\in \Lambda$. We have
\begin{equation*}
	\vec{m}+w\bigl(\vec{k}\bigr)\in \bar P^+ \quad \Leftrightarrow \quad m_i \geq \xi_{w,i}	, \quad i=1,2,
\end{equation*}
where $\xi_{w,i}$ are defined by
\begin{equation*}
	\xi_{w,i}:= \biggl\lceil - \frac{w\bigl(\vec{k}\bigr)\vert_i}{D} \biggr\rceil .
\end{equation*}

Since $0\leq \langle \vec k, \vec\omega_i\rangle < D$, we have
\begin{equation*}
	\big|\bigl(w\bigl(\vec{k}\bigr)\lvert_1-w\bigl(\vec{k}\bigr)\lvert_2\bigr)\big|\leq \max (2k_1+k_2, k_1+2k_2)< 3D,
\end{equation*} and hence $|\xi_{w,1}-\xi_{w,2}|\leq 3$.
The function $\min (n_i)$ is then given by
\begin{equation*}
	\min \bigl(Dm_i + w\bigl(\vec{k}\bigr)\vert_i\bigr) = \begin{cases}
		Dm_i + w\bigl(\vec{k}\bigr)\vert_i, & \text{for } m_i<m_{j}
		, \, i,j\in\{1,2\}, \\
		Dn + \min (w\bigl(\vec{k}\bigr)\vert_i), & \text{for } m_1=m_2=n
		.
	\end{cases}
\end{equation*}

For a given $w$, write the sum in \eqref{dfn:gen_false} as $F^{(\bos)} = \sum_w (-1)^{\ell(w)}F^{(\bos)}_w$. We now discuss $F^{(\bos)}_w$ in the following two cases.
\begin{itemize}\itemsep=0pt
	\item
	Case 1: $w\bigl(\vec{k}\bigr)\lvert_2\geq w\bigl(\vec{k}\bigr)\lvert_1$. \\
	In this case $\xi_{w,1}\geq \xi_{w,2}$ and
	\begin{align*}
		F_w^{(\bos)}(\tau) = & \sum_{\substack{m_2\geq m_1\geq \xi_{w,1}\\
				m_1\equiv m_2 (\mathrm{mod}3)}} \bigl(Dm_1 + w\bigl(\vec{k}\bigr)\vert_1\bigr) q^{p_{w,\vec{m}}}
		+ \sum_{\substack{m_1>m_2\geq \xi_{w,2}\\m_1\equiv m_2 (\mathrm{mod} 3)}} \bigl(Dm_2 + w\bigl(\vec{k}\bigr)\vert_2\bigr) q^{p_{w,\vec{m}}},
	\end{align*}
	where ${p_{w,\vec{m}}} = \frac{1}{2m}|-w(\vec\sigma) + m(\vec m)|^2$. By redefining the summation indices in the above equation in the following way
	\begin{equation*}
		\left( n_{1},n_{2} \right):=\begin{cases}
			\left(\frac{1}{3}\left( m_{1}-m_{2} \right),m_{2} \right), & m_{1}> m_{2},\\
			\left(\frac{1}{3}\left( m_{2}-m_{1} \right),m_{1} \right), & m_{2}\ge m_{1},
		\end{cases}
	\end{equation*}
	and shifting the summation ranges by $\xi_{w,1}$ resp.\ $\xi_{w,2}$, $F^{(\bos )}_w(\tau)$ can be rewritten as
	\begin{align*}
		F_w^{(\bos)} (q) ={}& \sum_{n_1,n_2\geq 0}
		\bigl(D(n_2+\xi_{w,1})+w\bigl(\vec{k}\bigr)\vert_1\bigr)q^{p_{w,(n_2+\xi_{w,1},3n_1+n_2+\xi_{w,1})}}
		 \\
		& +
		\bigl(D(n_2+\xi_{w,2})+w\bigl(\vec{k}\bigr)\vert_2\bigr)
		q^{p_{w,(3n_1+3+n_2+\xi_{w,2},n_2+\xi_{w,2})}} .
	\end{align*}
	Introducing then the quadratic form
	\begin{equation*}
		Q(\bon):= Q(n_1,n_2) = \frac{
			1}{2}|(n_2,3n_1+n_2)|^2=
		\bigl(3n_{1}^2+3n_1n_2+n_2^2\bigr),
	\end{equation*}
	we can write the function $F_w^{(\bos)} (q)$ in terms of this notation
	\begin{equation}\label{eqn:lem11_1}
		F^{(\bos)}_w (q)=
		\sum_{i=1,2} \sum_{n_1,n_2\geq 0} \bigl(D(n_2+\xi_{w,i})+w\bigl(\vec{k}\bigr)\vert_i\bigr)q^{mQ(\bon+\bal_w^{(i)})}
	\end{equation}
	with $ \bal_w^{(i)}$ given by \eqref{eq:alphawidef0A} with $x=0$.
	
	\item Case 2: $w\bigl(\vec{k}\bigr)\lvert_2< w\bigl(\vec{k}\bigr)\lvert_1$. \\
	This case can be treated analogously to Case 1, and gives \eqref{eqn:lem11_1} with $\bal_w^{(i)}$ given by \eqref{eq:alphawidef0A} with $x=1$.
\end{itemize}

The pairs of Weyl group elements $w,w'\in W$
\begin{equation*}
	(w,w')= (1,aba), (a,ba), (b,ab)
\end{equation*}
satisfy $w\bigl(\vec{k}\bigr)\lvert_i=-w'\bigl(\vec{k}\bigr)\lvert_j$ for $i\neq j$.

Since the condition $w\bigl(\vec{k}\bigr)\lvert_2\geq w\bigl(\vec{k}\bigr)\lvert_1$ is satisfied if and only if $w'\bigl(\vec{k}\bigr)\lvert_2\geq w'\bigl(\vec{k}\bigr)\lvert_1$, we have the relations
\[
\bal_{w'}^{(1)}=\brho-\bal_w^{(2)} \equiv \bar\bal_w^{(2)}, \qquad \bal_{w'}^{(2)}=\brho-\bal_w^{(1)} \equiv \bar\bal_w^{(1)}.
\]

Moreover, by shifting the summand we obtain
\begin{equation*}
		F_w^{(\bos)} (\tau)= \sum_{i=1,2} \sum_{\bon \in \mathbb{N}_0^2+\bal_w^{(i)}}
		\left(Dn_2+\frac{w(\vec{s})\vert_i D}{m}\right)q^{mQ(\bon)} .
\end{equation*}
Summing over all $w$, we arrive at the expressions in Lemma \ref{lem:split_2_terms}.

\subsection{Proof of Lemma \ref{lem:shifting}}
\label{pf_lem:shifting}

In the first step we consider ${\bal '}=\bal + (\delta \alpha_1,0)$. Then a routine computation shows that
\begin{gather*}
F_{0,\bal' }- F_{0,\bal}= \sum_{0\leq k\leq \delta \alpha_1 -1} q^{\frac{3}{4}(\alpha_1 +k)^2} \sum_{n\in \ZZ} \operatorname{sgn}(n+\tfrac{1}{2}) q^{\bigl(n+\alpha_2+\frac{3}{2}(\alpha_1 +k)\bigr)^2},\\
F_{1,\bal' }- F_{1,\bal}= -\sum_{0\leq k\leq \delta \alpha_1 -1} q^{\frac{3}{4}(\alpha_1 +k)^2} \sum_{n\in \ZZ} \operatorname{sgn}(n+\tfrac{1}{2}) (n+\alpha_2)q^{\bigl(n+\alpha_2+\frac{3}{2}(\alpha_1 +k)\bigr)^2},
\end{gather*}
while
\[
\sum_{n\in \ZZ} \operatorname{sgn}(n+\tfrac{1}{2}) q^{(n+\alpha_2+\tfrac{3}{2}(\alpha_1 +k))^2} - {\tilde\theta}^1\left[1,\alpha_2+\frac{3}{2}(\alpha_1 +k)\right](\t) \in \ZZ[q]
\]
since $ \operatorname{sgn}\bigl(n+\tfrac{1}{2}\bigr)- \operatorname{sgn}\bigl(n+\alpha_2+\tfrac{3}{2}(\alpha_1 +k)\bigr)$ has finite support. Similarly,
\begin{gather*}
\sum_{n\in \ZZ} \operatorname{sgn}\left(n+\tfrac{1}{2}\right) (n+\alpha_2)q^{(n+\alpha_2+\tfrac{3}{2}(\alpha_1 +k))^2} + \frac{3}{2}(\alpha_1 +k){\tilde\theta}^1\left[1,\alpha_2+\frac{3}{2}(\alpha_1 +k)\right](\t)\\
\qquad {}- {\tilde\theta}\left[1,\alpha_2+\frac{3}{2}(\alpha_1 +k)\right](\t) \in \ZZ[q] .
\end{gather*}
Second, we consider $\bb= {\bal '}+ (0,\delta \alpha_2) =\bal + (\delta \alpha_1,\delta \alpha_2)$. We have
\begin{gather*}
F_{0,\bb}- F_{0,\bal'}= \sum_{0\leq k\leq \delta \alpha_2 -1} q^{\frac{1}{4}(\alpha_2 +k)^2} \sum_{n\in \ZZ} \operatorname{sgn}\bigl(n+\tfrac{1}{2}\bigr) q^{3(n+\alpha_1'+\frac{1}{2}(\alpha_2 +k))^2},\\
F_{1,\bb}- F_{1,\bal'}= -\sum_{0\leq k\leq \delta \alpha_2 -1} q^{\frac{1}{4}(\alpha_2 +k)^2} (k+\alpha_2) \sum_{n\in \ZZ} \operatorname{sgn}\bigl(n+\tfrac{1}{2}\bigr) q^{3(n+\alpha_1'+\frac{1}{2}(\alpha_2 +k))^2}
\end{gather*}
and
\[
\sum_{n\in \ZZ} \operatorname{sgn}\bigl(n+\tfrac{1}{2}\bigr) q^{3(n+\alpha_1'+\frac{1}{2}(\alpha_2 +k))^2} - \tilde{\theta^1}\left[3,\alpha_1'+\frac{1}{2}(\alpha_2 +k)\right] \in \ZZ[q].
\]
Combining the above two steps proves the statement.

\subsection{Proof of Lemma \ref{lem:vanishingmainterm}}
\label{pf_lem:vanishingmainterm}

First note
\[
\sum_{\boldsymbol{\ell}\in (\ZZ/k\bar m)^2 } \ex\!\left(\frac{h}{k}Q(\bol+\bal)\right)\,
= \sum_{\boldsymbol{\ell}\in (\ZZ/k\bar m)^2 }\ex\!\left(\frac{h}{k}Q(\bol+{\bf 1}-\bal)\right)
\]
and the statement for $\nu=1$ immediately follows since $\eta_1(\bal)+\eta_1({\bf 1}-\bal)=0$ for all $\bal \in \CS$. Similarly, from the above identity we have
\begin{equation}\label{eqn:prooflem131}
\sum_{\bal\in\CS} \eta_0(\bal) \sum_{\boldsymbol{\ell}\in (\ZZ/k\bar m)^2 }\ex\left(\frac{h}{k}Q(\bol+\bal)\right)
= 2 \sum_{\bal\in\ss}\eta_0 (\bal) \sum_{\boldsymbol{\ell}\in (\ZZ/k\bar m)^2 }\ex\left(\frac{h}{k}Q(\bol+\bal)\right)\end{equation}
since $\eta_0(\bal)=\eta_0({\bf 1}-\bal)$ for all $\bal \in \CS$.

More generally, the sum over $\bol$ is invariant if one replaces the $\bal$ in summand with any~$\bal'$ as long as $\bal + \bal'\!\in \ZZ^2$ or $\bal -\bal'\in \ZZ^2$, as one can simultaneously shift $\bol$. Here we choose~\smash{$\bal_w^{(i)'}\! = \frac{1}{ m} \mathbf{a}_w^{(i)}$}, where
\begin{equation*}
\mathbf{a}_w^{(1)}=
\left(\Delta w(\vec{\sigma}), -
{w(\vec{\sigma})\lvert_1}{}\right), \qquad
\mathbf{a}_w^{(2)}=
\left(-\Delta w(\vec{\sigma}),
-{w(\vec{\sigma})\lvert_2}\right)
\end{equation*}
satisfy $\mathbf{a} \in \ZZ^2$ and
\begin{equation}\label{eqn:norm_a}
Q\left( \mathbf{a} \right)=\frac{1}{3} \left(\sigma _1^2+\sigma _2 \sigma _1+\sigma _2^2\right)=\frac{1}{2} |\boldsymbol{\sigma}|^2.
\end{equation}

Let $\left< \cdot,\cdot \right>_{Q}$ be twice the inner product induced by the quadratic form $Q$
\begin{equation*}
\left<\mathbf{v},\mathbf{w}\right>_Q:=Q(\mathbf{v}+\mathbf{w})-Q(\mathbf{v})-Q(\mathbf{w})=3\left(2v_1+v_2\right)w_1+\left(3v_1+2v_2\right)w_2=\left<\mathbf{w},\mathbf{v}\right>_Q.
\end{equation*}
Splitting the sum over $\bol$ into a sum over $\mathbf{N}$ and $\boldsymbol{\nu}$ by writing $\bol=\mathbf{N}+k\boldsymbol{\nu}$, we arrive at
\begin{align*}
\sum_{\boldsymbol{\ell}\in (\ZZ/k\bar m)^2 }\ex\left(\frac{h}{k}Q(\bol+\bal)\right) ={}& \sum_{\boldsymbol{\ell}\in (\ZZ/k\bar m)^2 }\ex\left(\frac{h}{k}Q(\bol+\frac{\mathbf{a}}{ m})\right)\\
={}& \sum_{\boldsymbol{\nu}\in (\ZZ/\bar m)^2 } \sum_{\boldsymbol{N}\in (\ZZ/k)^2 }\ex\left(\frac{h}{k}Q\left(\mathbf{N}+k\boldsymbol{\nu}+\frac{\mathbf{a}}{m}\right)\right)\\
={}& \ex\left( \frac{h}{k m^2} Q(\mathbf{a}) \right) \sum_{\boldsymbol{\nu}\in (\ZZ/\bar m)^2 } \ex\left(\frac{h/\delta}{\bar m}\left<\boldsymbol{\nu},\mathbf{a}\right>_Q \right) \\&\times \sum_{\boldsymbol{N}\in (\ZZ/k)^2 } \ex\left(\frac{h}{k}Q(\mathbf{N})\right) \ex\left(\frac{h}{km}\left<\mathbf{N},\mathbf{a}\right>_Q\right) .
\end{align*}
Focus on the factor \smash{$\sum_{\boldsymbol{\nu}\in (\ZZ/\bar m)^2 } \ex\left(\frac{h/\delta}{\bar m}\left<\boldsymbol{\nu},\mathbf{a}\right>_Q \right) $}, we see that the sum vanishes unless $\bar m\lvert 3a_1, a_2$, which is equivalent to $\vec \sigma \in \bar m \Lambda$, in which case
\begin{equation*}
 \sum_{\boldsymbol{\nu}\in (\ZZ/\bar m)^2 } \ex\left(\frac{h/\delta}{\bar m}\left<\boldsymbol{\nu},\mathbf{a}\right>_Q \right)=\bar m^2.
\end{equation*}

As a result, next we study the factor
\[\sum_{\boldsymbol{N}\in (\ZZ/k)^2 } \ex\left(\frac{h}{k}Q(\mathbf{N})\right) \ex\left(\frac{h}{km}\left<\mathbf{N},\mathbf{a}\right>_Q\right),\]
 when $\vec \sigma \in \bar m \Lambda$. Let $\delta^*$ be the modular inverse of $\delta$ mod $k$. This exists because $\delta$ is a divisor of $h$, which is coprime with $k$. Shifting the summation over ${\boldsymbol N}$ to ${\boldsymbol N}-\delta^*{\frac{\mathbf{a}}{\bar m}} $, we can cancel the~$\left<\mathbf{N},\mathbf{a}\right>_Q$ term and arrive at the result that
\[
\sum_{\boldsymbol{\ell}\in (\ZZ/k\bar m)^2 }\ex\left(\frac{h}{k}Q(\bol+\bal)\right) = \bar m^2 \ex\left( {\frac{h/\delta}{k}}\left(\delta^\ast+{\frac{1}{\delta}}\right) Q(\mathbf{a})\right),
\]
where $c$ and $c'$ are $\mathbf{a}$-independent constants that only depend on $h$, $k$ and $m$, $D$. Using the fact~$Q(\mathbf{a})$ is a constant over $\CS$ (see \eqref{eqn:norm_a}), we obtain from \eqref{eqn:prooflem131}
\begin{equation*}
\sum_{\bal\in\CS} \eta_0(\bal) \sum_{\boldsymbol{\ell}\in (\ZZ/k\bar m)^2 }\ex\left(\frac{h}{k}Q(\bol+\bal)\right)
= 2 c \ex\left( {\frac{c'}{2}}|\boldsymbol{\sigma}|^2 \right) \sum_{\bal\in\ss}\eta_0 (\bal),
\end{equation*}
which vanishes as a result of $\sum_{w\in W_+} w(\vec v) =0$ for any $\vec v$, and hence $ \sum_{\bal\in\ss}\eta_0 (\bal)=0$.

\subsection{Proof of Lemma \ref{lem:vanishingtermsWB-odd}}
\label{pf_lem:vanishingtermsWB-odd}

\begin{proof}
	We first rewrite, using $a_{i}:=m \bal_{i}$ and writing $\bol= \mathbf{N}+k\boldsymbol{\nu}$
	\begin{gather*}
		\sum_{0\leq \ell_1,\ell_2 < k\bar m}B_n\left( \frac{\ell_{1}+\alpha_{1}}{k\bar{m}} \right)\ex\left({\frac{h}{k}Q( \bol+\bal )}\right) \\
	\qquad	 =\ex\left(\frac{h}{k}Q( \bal )\right)\sum_{0\leq \mathbf{N}<k}
	\ex\left(\frac{{h/\delta}}{k\bar m}( \bar{m}\delta Q( \mathbf{N} )+3N_{1}( 2a_{1}+a_{2} )+N_{2}( 3a_{1}+2a_{2} ) )\right)
	\\
\phantom{\qquad	 =}{}	\times \sum_{0\leq \boldsymbol{\nu}<\bar{m}}B_{n}\left( \frac{N_{1}+k\nu_{1}+\alpha_{1}}{\bar{m}} \right)\ex\left(\frac{{h/\delta}}{\bar{m}}( 3\nu_{1}( 2a_{1}+a_{2} )+\nu_{2}( 3a_{1}+2a_{2} ) ) \right) .
\end{gather*}
The sum over $\nu_2$ shows that the quantity vanishes when $3a_{1}+2a_{2}$ is not divisible by $\bar m$. For both \smash{$\bal = \bal^{(1)}_w$} or \smash{$\bal = \bal^{(2)}_w$}, the condition is equivalent to the condition $\bar m\lvert \sum_{i=1,2} w(\vec \sigma)\lvert_i$. Writing $ \sum_{i=1,2} w(\vec \sigma)\lvert_i =\bar m y$, we write
	\begin{gather*}
	\bal_w^{(1)} = (\alpha_1,\alpha_2)~{\rm mod}~(0,1),\qquad
	\bal_w^{(2)} = \left(1-\alpha_1,1-\alpha_2-{\frac{y}{\delta}}\right)~{\rm mod}~(0,1).
\end{gather*}
	
Invoking the reflection property \eqref{reflectionpropertyBernoulli} of the Bernoulli polynomials, we have for $\bal = \bal_w^{(2)}$
	\begin{align*}
	& \sum_{\substack{0\leq \ell_1< k\bar m\\\ell_2\in \ZZ/k\bar m}} B_n\left( \frac{\ell_{1}+1-\alpha_{1}}{k\bar{m}} \right)
	\ex\left({\frac{h}{k}Q\left( \bol+{\bf 1}-\bal -\frac{y}{\delta}(0,1)\right)}\right)
	\\
 &\qquad =
	\sum_{\substack{0\leq \ell_1< k\bar m\\\ell_2\in \ZZ/k\bar m}}B_n\left(1- \frac{\ell_{1}+\alpha_{1}}{k\bar{m}} \right)\ex\left({\frac{h}{k}Q\left( \bol+\bal +\frac{y}{\delta}(0,1) \right)}\right) \\
 &\qquad= 	- \sum_{\substack{0\leq \ell_1< k\bar m\\\ell_2\in \ZZ/k\bar m}}B_n\left( \frac{\ell_{1}+\alpha_{1}}{k\bar{m}} \right)\ex\left({\frac{h}{k}Q\left( \bol+\bal +{y}\left(\frac{1}{\delta}-\delta^\ast\right)(0,1) \right)}\right) \\
 &\qquad	= - \sum_{\substack{0\leq \ell_1< k\bar m\\\ell_2\in \ZZ/k\bar m}}B_n\left( \frac{\ell_{1}+\alpha_{1}}{k\bar{m}} \right)\ex\left({\frac{h}{k}Q\left( \bol+\bal \right)}\right).
	\end{align*}
	
Going from the first to the second line, we have relabeled $\bol$ by $(k\bar m-1){\bf 1}-\bol$. Going to the third line, we have invoked the reflection property \eqref{reflectionpropertyBernoulli} of the Bernoulli polynomials, and shifted $\ell_2$ in the sum by $\delta^\ast y$, where $\delta^\ast \delta \equiv 1 ~~(k) $. In the last step, we used that $\left<\bal,(0,1)\right>_Q = -{\frac{1}{\delta} } y$. From this, we immediately see that the contributions from \smash{$\bal_w^{(1)}$} and \smash{$\bal_w^{(2)}$} cancel.
\end{proof}

\subsection{Proof of Lemma \ref{lem:Estar_vs_starless}}\label{apx:proofEstar_vs_starless}

For $\nu=0$ and $\varepsilon\in \{ 1,-1 \}$, we have
\begin{align*}
 \mathbb{E}_{0}^{( \varrho )}( \tau )
 ={}& \frac{1}{2}\sum_{\varepsilon\in\{1,-1\}}\sum_{\bmu\in \tilde{\CS}}\eta_{0}( \bmu )\sum_{\bon \in \bmu + \BZ}q^{-Q( \bon )}M_{2}\bigl( \sqrt{3};\sqrt{3v}( 2\varepsilon n_{1}+n_{2} ), \sqrt{v}n_{2} \bigr)\\
 ={}& \frac{1}{2}\sum_{\varepsilon\in\{1,-1\}}\sum_{\bmu\in\CS}\eta_{0}( \bmu )\left(\sum_{\bon\in\bmu+\BN_{0}^{2}}q^{-Q\left( \bon \right)}M_{2}\bigl( \sqrt{3};\sqrt{3v}( 2\varepsilon n_{1}+n_{2} ), \sqrt{v}n_{2} \bigr)\right.\\
 &\left.+\sum_{\bon\in ( 1-\mu_{1},\mu_{2} )+\BN_{0}^{2}}q^{-Q( -n_{1},n_{2} )}
 M_{2}\bigl( \sqrt{3};\sqrt{3v}( 2\varepsilon n_{1}+n_{2} ), \sqrt{v}n_{2} \bigr)\right) \\
 ={}&\frac{1}{2}\sum_{\bmu\in\CS}\eta_{0}( \bmu )\left( \sum_{\mathbf{n}\in \bmu + \BN_{0}^{2}} q^{-Q( \bon )}\bigl( M_{2}^{*}\bigl( \sqrt{3};\sqrt{3v}( 2n_{1}+n_{2} ),\sqrt{v}n_{2} \bigr)\right.\\
 & \left.+\delta_{n_{1},0}( 1-\delta_{n_{2},0} )M\bigl( 2\sqrt{v}n_{2} \bigr)+\delta_{n_{2},0}( 1-\delta_{n_{1},0} )M\bigl( 2\sqrt{3v}n_{1} \bigr)-\delta_{n_{1},0}\delta_{n_{2},0}\bigr)\right. \\
 & + \sum_{\mathbf{n}\in ( 1-\mu_{1},\mu_{2} ) + \BN_{0}^{2}} q^{-Q( -n_{1},n_{2} )}\bigl( M_{2}^{*}\bigl( \sqrt{3};\sqrt{3v}( -2n_{1}+n_{2} ),\sqrt{v}n_{2} \bigr)\\
 & -\delta_{n_{1},0}( 1-\delta_{n_{2},0} )M\bigl( 2\sqrt{v}n_{2} \bigr)-\delta_{n_{2},0}( 1-\delta_{n_{1},0} )M\bigl( 2\sqrt{3v}n_{1} \bigr)+\delta_{n_{1},0}\delta_{n_{2},0}\bigr)\bigr) \\
 ={}& \BE_{0}^{*( \varrho )}( \tau )+\frac{1}{2}\!\sum_{\bmu\in\tilde{\CS}}\eta( \bmu )\!\left( \sum_{\bon\in\bmu+\BN_{0}^{2}}\!+\!\sum_{\bon\in( 1,1 )-\bmu+\BN_{0}^{2}}\!-\!\sum_{\bon\in(1-\mu_{1},\mu_{2})+\BN_{0}^{2}}\!-\!\sum_{\bon\in(\mu_{1},1-\mu_{2})+\BN_{0}^{2}} \right)\\
 &\times\bigl( \delta_{n_{1},0}( 1-\delta_{n_{2},0} )M\bigl( 2\sqrt{v}n_{2} \bigr)+\delta_{n_{2},0}( 1-\delta_{n_{1},0} )M\bigl( 2\sqrt{3v}n_{1} \bigr)-\delta_{n_{1},0}\delta_{n_{2},0} \bigr)q^{-Q( \bon )}\\
 ={}& \BE_{0}^{*( \varrho )}(\tau)+\frac{1}{2}\sum_{\bmu \in \tilde \CS}\eta_{0}(\bmu)X_{0}( \bmu ).
\end{align*}

For $\nu=1$, we have
\begin{align*}
 \BE_{1}^{( \varrho )}( \tau )
 ={}& \frac{1}{2}\sum_{\bmu\in\CS}\eta_{1}( \bmu )\sum_{\bon\in\bmu+\BN_{0}^{2}}n_{2}M_{2}\bigl( \sqrt{3};\sqrt{3v}( 2n_{1}+n_{2} ) \bigr)q^{-Q( \bon )}\\
 &+\frac{1}{2}\sum_{\bmu\in\bar\CS}\bar\eta_{1}( \bmu )\sum_{\bon\in\bmu+\BN_{0}^{2}}n_{2}M_{2}\bigl( \sqrt{3};\sqrt{3v}(-2n_{1}+n_{2} ) \bigr)q^{-Q( -n_{1},n_{2} )}\\
 &+\frac{1}{4\pi\sqrt{v}}\sum_{\bmu\in\CS}\eta_{1}( \bmu )\sum_{\bon\in\bmu+\BN_{0}^{2}}{\rm e}^{-\pi( 3n_{1}+2n_{2} )^{2}v}M\bigl( \sqrt{3v}n_{1} \bigr)q^{-Q( \bon )}\\
 &+\frac{1}{4\pi\sqrt{v}}\sum_{\bmu\in\bar\CS}\bar\eta_{1}( \bmu )\sum_{\bon\in\bmu+\BN_{0}^{2}}{\rm e}^{-\pi( -3n_{1}+2n_{2} )^{2}v}M\bigl( -\sqrt{3v}n_{1} \bigr)q^{-Q( -n_{1},n_{2} )}\\
 ={}& \frac{1}{2}\sum_{\bmu\in\CS}\eta_{1}( \bmu )\Bigg[ \sum_{\bon\in\bmu+\BN_0^2}\bigl( n_{2}M_{2}\bigl( \sqrt{3};\sqrt{3v}( 2n_{1}+n_{2} ),\sqrt{v}n_{2} \bigr)+\frac{1}{4\pi\sqrt{v}}{\rm e}^{-\pi( 3n_{1}+2n_{2} )^{2}v}\\
 &\times M\bigl( \sqrt{3v}n_{1} \bigr)\bigr)q^{-Q( \bon )}+ \sum_{\bon\in( 1-\mu_{1},\mu_{2} )+\BN_0^2}\bigl( n_{2}M_{2}\bigl( \sqrt{3};\sqrt{3v}( -2n_{1}+n_{2} ),\sqrt{v}n_{2} \bigr)\\
 &+\frac{1}{4\pi\sqrt{v}}{\rm e}^{-\pi( -3n_{1}+2n_{2} )^{2}v}M\bigl( -\sqrt{3v}n_{1} \bigr)\bigr)q^{-Q( -n_{1},n_{2} )}\bigg]\\
 ={}& \BE_{1}^{*( \varrho )}( \tau )+\frac{1}{2}\sum_{\bmu\in\tilde\CS}\eta_{1}( \bmu )X_{1}( \bmu ).
\end{align*}

\subsection{Proof of Proposition \ref{prop:Estarless_eichler}}\label{app:proofPropIntegral}

Following \cite{bringmann2018higher}, we can rewrite $\mathbb{E}^{(\bos)}_{0}$ and $\mathbb{E}^{(\bos)}_{1}$ as
\[
	\BE_0^{(\bos)}(\tau) =
 -\frac{\sqrt{3}}{4} \sum_{\boldsymbol{\alpha} \in \tilde{\CS}}
	\eta_0(\boldsymbol{\alpha})
	\int_{-\bar{\tau}}^{{\rm i} \infty}\int_{z_1}^{{\rm i} \infty}
	\frac{\theta_1({\boldsymbol\alpha},
	{\bf z})+\theta_2({\boldsymbol\alpha},
	{\bf z})}{\sqrt{-{\rm i}(z_1+\tau)}\sqrt{-{\rm i}(z_2+\tau)}}{\rm d}z_2{\rm d}z_1
\]
and
\begin{align*}
\BE_1^{(\bos)}(\tau) ={}&\frac{\sqrt{3}}{8 \pi}
\sum_{\boldsymbol{\alpha} \in \tilde{\CS}}
\int_{-\bar{\tau}}^{{\rm i} \infty}
\int_{z_1}^{{\rm i} \infty}\frac{2\theta_3({\bal}, {\bf z})-
\theta_4({\bal},
{\bf z})}{\sqrt{-{\rm i}(z_1+\tau)}(-{\rm i}(z_2+\tau))^{\frac{3}{2}}}
{\rm d}z_2{\rm d}z_1 \\
 &+\frac{\sqrt{3}}{8 \pi}\sum_{\boldsymbol{\alpha} \in \tilde{\CS}}
\int_{-\bar{\tau}}^{{\rm i} \infty}
\int_{z_1}^{{\rm i} \infty}\frac{\theta_5({\bal},
{\bf z})}{(-{\rm i}(z_1+\tau))^{\frac{3}{2}}\sqrt{-{\rm i}(z_2+\tau)}} {\rm d}z_2{\rm d}z_1 .
\end{align*}
The functions $\theta_\ell({\bal}, {\bf z})$ are defined in equations \eqref{def:thetaellalw} and can be equivalently written as
\begin{align*}
\theta_1(\bal,\mathbf{z}) &=
\frac{1}{m^2} \sum_{\delta\in\ZZ/2}
\theta^1_{m,m(2\al_1+\al_2+\delta)}\left(\frac{3z_1}{m}\right)
\theta^1_{m,m(\al_2+\delta)}\left(\frac{z_2}{m}\right),
\\
\theta_2(\bal,\mathbf{z}) &=
\frac{1}{m^2} \sum_{\delta\in\ZZ/2}
\theta^1_{m,m(3\al_1+2\al_2+\delta)}\left(\frac{z_1}{m}\right)
\theta^1_{m,m(\al_1+\delta)}\left(\frac{3z_2}{m}\right),
\\
\theta_3 (\bal,\mathbf{z}) &=
\frac{1}{m} \sum_{\delta\in\ZZ/2}
\theta^1_{m,m(2\al_1+\al_2+\delta)}\left(\frac{3z_1}{m}\right)
\theta^0_{m,m(\al_2+\delta)}\left(\frac{z_2}{m}\right),
\\
\theta_4 (\bal,\mathbf{z}) &=
\frac{1}{m} \sum_{\delta\in\ZZ/2}
\theta^1_{m,m(3\al_1+2\al_2+\delta)}\left(\frac{z_1}{m}\right)
\theta^0_{m,m(\al_1+\delta)}\left(\frac{3z_2}{m}\right),
\\
\theta_5 (\bal,\mathbf{z}) &=
\frac{1}{m} \sum_{\delta\in\ZZ/2}
\theta^0_{m,m(3\al_1+2\al_2+\delta)}\left(\frac{z_1}{m}\right)
\theta^1_{m,m(\al_1+\delta)}\left(\frac{3z_2}{m}\right).
\end{align*}
Most of these terms however sum to zero as proved in the following lemma.
\begin{lem}
Using the above definitions,
 \begin{gather*}
 \sum_{\bal\in \tilde{\CS}}\eta_{0}( \bal )\theta_{1}( \bal,\mathbf{z} ) = 0,\qquad
 \sum_{\bal\in \tilde{\CS}}\theta_{3}( \bal,\mathbf{z} ) = 0,\qquad
 \sum_{\bal\in \tilde{\CS}}\theta_{5}( \bal,\mathbf{z} ) = 0.
 \end{gather*}
\end{lem}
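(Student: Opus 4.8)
The plan is to establish all three identities by a single mechanism. I would first rewrite each summand using the product expressions for $\theta_1(\bal,\mathbf z)$, $\theta_3(\bal,\mathbf z)$, $\theta_5(\bal,\mathbf z)$ in terms of the one-dimensional theta functions $\theta^0_{m,r}$, $\theta^1_{m,r}$ recorded immediately above the lemma, then substitute the explicit shift vectors $\bal_w^{(1)}$, $\bal_w^{(2)}$ from \eqref{eq:alphawidef0A} and reduce every theta index modulo $2m$. A short computation shows that, after absorbing the integer contributions of $x$ and $\xi_{w,i}$ into the summation over $\delta\in\BZ/2$, each index becomes a pairing of $w(\vec\sigma)$ against one of $\vec\rho$, $\Delta\vec\omega$, $\vec\omega_i$, $\vec\alpha_i$. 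The only structural inputs needed are the parities $\theta^1_{m,-r}=-\theta^1_{m,r}$ and $\theta^0_{m,-r}=\theta^0_{m,r}$, the dependence of $\theta^{0}_{m,r},\theta^{1}_{m,r}$ on $r$ only modulo $2m$, and the action of the rotation subgroup $W_+=\{1,r,r^2\}$, $r:=ab$, on the $A_2$ lattices, in particular $r^{-1}\vec\alpha_2=\vec\alpha_1$ and $r^{-1}\vec\omega_1=-\vec\omega_2$.

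The cleanest case is $\theta_5$. Here $m(3\al_1+2\al_2)\equiv-\langle w(\vec\sigma),\vec\rho\rangle$ and $m\al_1\equiv\pm\langle\Delta\vec\omega,w(\vec\sigma)\rangle\pmod{2m}$, with sign $+$ for $\bal_w^{(1)}$ and $-$ for $\bal_w^{(2)}$. Thus in $\theta_5(\bal_w^{(1)})$ and $\theta_5(\bal_w^{(2)})$ the even factor $\theta^0$ attached to $\vec\rho$ coincides, while the odd factor $\theta^1$ attached to $\Delta\vec\omega$ changes sign; hence $\theta_5(\bal_w^{(1)})+\theta_5(\bal_w^{(2)})=0$ for every $w\in W_+$, and summing over $W_+$ gives $\sum_{\bal\in\tilde{\CS}}\theta_5(\bal,\mathbf z)=0$.

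For $\theta_3$ and $\theta_1$ the two tensor factors carry the functionals $\vec\omega_i$ (in the odd factor $\theta^1$) and $\vec\alpha_i$, and the cancellation is no longer within a fixed $w$; instead I would pair the summand $\bal_w^{(1)}$ with $\bal_{rw}^{(2)}$. Using $r^{-1}\vec\omega_1=-\vec\omega_2$ the first index $-\langle w(\vec\sigma),\vec\omega_2\rangle$ of $\bal_w^{(1)}$ turns into $+\langle w(\vec\sigma),\vec\omega_2\rangle$ for $\bal_{rw}^{(2)}$, i.e.\ it flips sign, whereas $r^{-1}\vec\alpha_2=\vec\alpha_1$ shows the second index $-\langle w(\vec\sigma),\vec\alpha_1\rangle$ is preserved. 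Oddness of the first factor then gives $\theta_3(\bal_{rw}^{(2)})=-\theta_3(\bal_w^{(1)})$, so the three pairs $(\bal_w^{(1)},\bal_{rw}^{(2)})$, $w\in W_+$, cancel. The identical index bookkeeping applies to $\theta_1$, both of whose factors are of type $\theta^1$ but whose second index is again preserved by the pairing; the extra weight is handled by $\eta_0(\bal_w^{(1)})=\tfrac{D}{m}\langle w(\vec s),\vec\alpha_1\rangle=\tfrac{D}{m}\langle rw(\vec s),\vec\alpha_2\rangle=\eta_0(\bal_{rw}^{(2)})$, which is once more $r^{-1}\vec\alpha_2=\vec\alpha_1$. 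This yields $\sum_{\bal\in\tilde{\CS}}\eta_0(\bal)\theta_1(\bal,\mathbf z)=0$.

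The step I expect to require the most care is precisely the reduction of the theta indices modulo $2m$: each $\bal_w^{(i)}$ carries integer shifts $x\in\{0,1\}$ and $\xi_{w,i}\in\BZ$ that enter as integer multiples of $m$, and these must be absorbed consistently — the same shift appears in both tensor factors of a given $\theta_\ell(\bal_w^{(i)})$ — by relabeling $\delta\in\BZ/2$, after which $2m$-periodicity isolates the clean $w(\vec\sigma)$-pairings above. Once this reduction is carried out the sign and coefficient matchings are immediate, and I anticipate no further obstacle.
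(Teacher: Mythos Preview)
Your proposal is correct and follows essentially the same route as the paper: reduce each $\theta_\ell(\bal_w^{(i)})$ to its dependence on $\bal_w^{(i)}\bmod\ZZ^2$ (absorbing the integer shifts $x,\xi_{w,i}$ into the $\delta$-sum, which works because the induced shifts in the two tensor factors always differ by an even multiple of $m$), identify the resulting theta indices as pairings $\langle w(\vec\sigma),\,\cdot\,\rangle$, and then exhibit pairwise cancellations over $\tilde\CS$ using the oddness $\theta^1_{m,m\delta+A}=-\theta^1_{m,m\delta-A}$ together with the $W_+$-action. Your choice of rotation element $r=ab$ for the $\theta_1$ and $\theta_3$ pairings is the one that actually produces termwise cancellation; the paper's displayed $ba$ is a slip.
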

\begin{proof}
 Due to the symmetries of the theta series and the sum over $\delta$, we only need to focus on the non-integer part of the $\bal$ defined in equation \eqref{eq:alphawidef0A}. By direct computation one can see that
 \begin{align*}
 &\eta_{0}\bigl( \bal^{( 1 )}_{w} \bigr)\theta_{1}\bigl( \bal^{( 1 )}_{w},\mathbf{z} \bigr) = -\eta_{0}\bigl( \bal^{( 1 )}_{baw} \bigr)\theta_{1}\bigl( \bal^{( 2 )}_{baw},\mathbf{z} \bigr),\\
& \theta_{3}\bigl( \bal^{( 1 )}_{w},\mathbf{z} \bigr) = -\theta_{3}\bigl( \bal^{( 2 )}_{baw},\mathbf{z} \bigr),\qquad
 \theta_{5}\bigl( \bal^{( 1 )}_{w},\mathbf{z} \bigr) = -\theta_{5}\bigl( \bal^{( 2 )}_{w},\mathbf{z} \bigr).
 \end{align*}
The result follows from the fact that $ba$ and $\id$ are in $W^{+}$.
\end{proof}

This yields
 \[ 
	\BE_\nu^{(\bos)}(\tau) =
	-\frac{\sqrt{3}}{4} (2\pi)^{-\nu} \sum_{\boldsymbol{\mu} \in \tilde \CS}
	\eta_\nu(\boldsymbol{\mu})
	\int_{-\bar{\tau}}^{{\rm i} \infty}\int_{z_1}^{{\rm i} \infty}
	\frac{ \Theta_\nu({\boldsymbol\mu};
	{\bf z})}{({-{\rm i}(z_1+\tau)})^{1/2} (-{\rm i}(z_2+\tau))^{\nu+1/2}}{\rm d}z_2{\rm d}z_1,
 \]
where
 \[
 \Theta_\nu(\boldsymbol{\mu}, \mathbf{z}) =
 (m)^{-2+\nu} \sum_{\delta\in\ZZ/2}
 \theta^1_{m,m(3\mu_1+2\mu_2+\delta)}\left(\frac{z_1}{m}\right)
 \theta^{1-\nu}_{m,m(\mu_1+\delta)}\left(\frac{3z_2}{m}\right) .
 \]
Substituting then the elements $\bal^{(1)}_{w}$ and $\bal^{(2)}_{w}$ of the set $\tilde\CS$ for each $w\in W^+$ and using the shift and symmetry properties of theta functions \smash{$\theta^\nu_{m,r}$} for $\nu=0,1$ allows to reduce the summation over $\boldsymbol{\mu} \in \tilde \CS$ to a summation over $\boldsymbol{w} \in W^+$ in the expression for \smash{$\BE_\nu^{(\bos)}(\tau)$} in terms of \smash{$ \Theta_{\nu,w}^{(\bos)}(\mathbf{z})$}.

\subsection{Proof of Lemma \ref{lem:rewritingBBE}}\label{pf_lem:rewritingBBE}

\begin{proof}
When $M_3=\Sigma(p_1,p_2,p_3)$ we have the unique $\bv=\bv_0$, $D=1$, and $m=p_1p_2p_3$. Using equation \eqref{s3:hombl-2}, Proposition \ref{prop:E01asymp} and Lemma \ref{lem:Estar_vs_starless}, we can express the rank two part of companion of the $\widehat{Z}$-invariant in terms of the functions \smash{$\BE_\nu^{(\bos)}(\tau)$} defined in \eqref{def:E01starless} as
	\begin{equation*}
	\widecheck{Z}^{{\rm SU}(3)}_{\bv_0}(\tau)=
	C\bigl(q^{-1}\bigr) \sum\limits_{\hat{w} \in
	W^{\otimes 3}}(-1)^{\ell(\hat w)}
 \bigl(\mathbb{E}^{(\bos_{\hat{w}})}_0(m\tau)+
 \mathbb{E}^{(\bos_{\hat{w}})}_1({m}\tau)\bigr),
	\end{equation*}
up to a one-dimensional piece, where we have \[\rho_{\hat w} = \bigl(\vec{\sigma}_{\hat w},\vec{k}_{\hat w},m,D\bigr) = (\vec{s}_{\hat w}, 0, p_1p_2p_3, 1).\]

Together with $\BE_{\nu,\tilde w}^{(\bos)} = -
\BE_{\nu,\tilde w (aba)}^{(\bos_{})}$ in the notation of \eqref{eq:abaction}, which can easily be seen from
\[
aba \, \vec\rho=-\vec\rho, \qquad aba \, \Delta \vec\omega = \Delta \vec\omega,
\]
we can extend the sum in Proposition \ref{prop:Estarless_eichler} to write
\[
	\BE_\nu^{(\bos)}(\tau) = {\frac{1}{2}} \sum_{{w} \in W} (-1)^{\ell(\hat w)} \BE_{\nu,w}^{(\bos)}(\tau).
\]
We then have the following identity
	\begin{gather*}
	C \bigl(q^{-1}\bigr)\sum_{\nu=0,1} \sum\limits_{\hat{w} \in
	W^{\otimes 3}}(-1)^{\ell(\hat w)}
\mathbb{E}^{(\bos_{\hat{w}})}_\nu(m\tau)
 \\ \qquad ={\frac{1}{2}}C (q^{-1})\sum_{\nu=0,1}
 \sum\limits_{\hat{w} \in
	W^{\otimes 3}} \sum_{{w} \in W} (-1)^{\ell(\hat w)} (-1)^{\ell(w)} \BE_{\nu,w}^{(\bos_{\hat w})}(m\tau) \\
	\qquad = {\frac{1}{2}}C \bigl(q^{-1}\bigr)\sum_{\nu=0,1}
 \sum\limits_{\hat{w} \in
	W^{\otimes 3}} \sum_{{w} \in W} (-1)^{\ell(w\hat w)}
	\BE_{\nu,e}^{(\bos_{w\hat w})}(m\tau)
	\\ \qquad = {\frac{1}{2}} |W|C (q^{-1})\sum_{\nu=0,1}
 \sum\limits_{\hat{w} \in
	W^{\otimes 3}} (-1)^{\ell(\hat w)}
	\BE_{\nu,e}^{(\bos_{\hat w})}(m\tau),
\end{gather*}
where we have used $\BE_{\nu,\tilde w}^{(\bos_{\hat w})} =
\BE_{\nu,e}^{(\bos_{\tilde w \hat w})}$ in the third line, which is manifest from \eqref{def:Theta}. Combining the above with the double Eichler integral expression in \eqref{Eichler_Enu_Com} for \smash{$\BE_{\nu,\tilde w}^{(\bos_{\hat w})}$} leads to the statement of the lemma.
\end{proof}

\subsection{Proof of Proposition \ref{prop:sphere_recursive}}\label{prf:sphere_recursive}
We take as our starting point Lemma \ref{lem:brieskorn_thetas}, which states
\begin{align*}
\widecheck{Z}^{{\rm SU}(3)}_{\bv_0}(M_3;\tau)
 ={}& z_{1d}+ \tfrac{|W|}{2m}C\bigl(q^{-1}\bigr) \sum\limits_{\hat{w}\in W^{\otimes 3}}(-1)^{\ell(\hat{w})} \sum_{\nu=0,1} \frac{\sqrt{3}}{4 \pi^{\nu}}	\left(\frac {3\Delta\vec{s}_{\hat w}}{m}\right)^{1-\nu}\\
 &\times\sum_{\delta\in\BZ/2} \bigl(\vartheta'_{w,\delta},\vartheta^{1-\nu}_{w,\delta}\bigr)^\ast(\tau),
\end{align*}
where the non-holomorphic double Eichler integral is of the theta functions
\begin{gather*}
 \vartheta'_{w,\delta}(\tau)=
 \theta^1_{m, m\delta+\langle \vec\rho, \vec\sigma\rangle}(\tau),\qquad
 \vartheta^{1-\nu}_{w,\delta}(\tau) = \theta^{1-\nu}_{m,m\delta + \langle \Delta\vec\omega, \vec{\sigma}\rangle}(3\tau),
 \end{gather*}
for
\begin{align*}
\vec\sigma_{\hat w} &= \vec{s}_{\hat{w}} = -\sum\limits_{i=1}^3 \bar{p}_i w_i(\vec{\rho}) .
\end{align*}

From $(aba)\vec\rho=-\vec\rho$, we have
\[
\vec\sigma_{(w_1(aba)^{\varepsilon_1}, w_2(aba)^{\varepsilon_2}, w_3(aba)^{\varepsilon_3} )}
= -\sum\limits_{i=1}^3 (-1)^{\varepsilon_i} \bar{p}_i w_i(\vec{\rho})
\]
for $\varepsilon_i\in \ZZ/2$. Then
\[
 \widecheck{Z}^{{\rm SU}(3)}_{\bv_0}(M_3;\tau)
 =z_{1d}+ \tfrac{|W|}{2m}C\bigl(q^{-1}\bigr)
 \sum\limits_{\hat{w}\in W_+^{\otimes 3}}\sum_{\nu=0,1} \frac{\sqrt{3}}{4 \pi^{\nu}} \sum_{\delta\in\BZ/2} \tilde\BE_{\nu,\delta}^{(\bos_{\hat{w}})}(\tau),
\]
where ${\tilde \BE_{\nu,\delta}}$ is the integral
\[
	{\tilde \BE_{\nu,\delta}^{(\rho_{\hat{w}})}}(\tau) := -
	\int_{-\bar{\tau}}^{{\rm i} \infty}\int_{z_1}^{{\rm i} \infty}
	\frac{ \tilde\Theta_{\nu,\delta}^{(\bos_{\hat{w}})}(\mathbf{z})}{({-{\rm i}(z_1+\tau)})^{1/2} (-{\rm i}(z_2+\tau))^{\nu+1/2}}{\rm d}z_2{\rm d}z_1
\]
of
\begin{align*}
 \tilde \Theta_{\nu,\delta}^{(\bos_{\hat{w}})}(\mathbf{z})={}& \sum_{\varepsilon_1,\varepsilon_2, \varepsilon_3\in \ZZ/2} (-1)^{\sum_{i}\varepsilon_i} {\bigg( \sum_{i}\tfrac{1}{p_i} (-1)^{\varepsilon_i} \langle \Delta\vec\omega, w_i(\vec{\rho})\rangle\bigg)^{1-\nu}} \\
 &\times \theta^1_{m,m\delta + \sum_{i} (-1)^{\varepsilon_i} \bar{p}_i \langle\vec{\rho}, w_i(\vec{\rho})\rangle} ({z_1}) \theta^{1-\nu}_{m,m\delta -\sum_{i} (-1)^{\varepsilon_i} \bar{p}_i \langle \Delta\vec\omega, w_i(\vec{\rho})\rangle}({3z_2}{m} )
\\
={}&2 \sum_{\varepsilon_1,\varepsilon_2\in \ZZ/2} \tilde \Theta_{\nu,\delta,(\varepsilon_1,\varepsilon_2)}^{(\bos_{\hat{w}})}(\mathbf{z}),
\end{align*}

\noindent where we have used that the summand is invariant under $(\varepsilon_1,\varepsilon_2,\varepsilon_3)\mapsto (1,1,1)+(\varepsilon_1,\varepsilon_2,\varepsilon_3)$, and we write

\begin{align*}
\tilde \Theta_{\nu,\delta,(\varepsilon_1,\varepsilon_2)}^{(\bos_{\hat{w}})}(\mathbf{z}) :={}&
\biggl((-1)^{\sum_{i}\varepsilon_i}
{\bigg(\sum_{i}\tfrac{1}{p_i} (-1)^{\varepsilon_i} \langle \Delta\vec\omega, w_i(\vec{\rho})\rangle\bigg)^{1-\nu}} \\
&\times\theta^1_{m,m\delta + \sum_{i} (-1)^{\varepsilon_i} \bar{p}_i \langle\vec{\rho}, w_i(\vec{\rho})\rangle} ({z_1}) \theta^{1-\nu}_{m,m\delta - \sum_{i} (-1)^{\varepsilon_i} \bar{p}_i \langle \Delta\vec\omega, w_i(\vec{\rho}) \rangle}({3z_2})\Bigr)\lvert_{\varepsilon_3=0}.
\end{align*}

To simplify notation, in this appendix we will often skip writing the arguments of the functions, with the understanding that $ \theta^{1-\nu}_r = \theta^{1-\nu}_r\left({3z_2}\right)$ and $\theta^1_r=\theta^1_r\left({z_1}\right)$.

Using
\[
\langle ab\vec\rho,\vec\rho \rangle= \langle ba\vec\rho,\vec\rho \rangle = \langle ab\vec\rho,\Delta\vec\omega \rangle
= -\langle ba\vec\rho,\Delta\vec\omega \rangle =-1
\]
and $\langle \vec\rho,\vec\rho \rangle = 2$, $\langle \vec\rho,\Delta\vec\omega \rangle =0$, as well as
\[
\theta^{1-\nu}_{m,m\delta+r}= (-1)^{\nu-1}\theta^{1-\nu}_{m,m\delta-r}\quad
\text{for all}\quad\delta\in \ZZ/2, \quad\nu\in \{0,1\},\quad r\in \ZZ/2m,
\]
it is straightforward to discuss the separate contributions individually.

{\em Case $1$: $\hat w=(e,e,e)$.}
From $\langle \vec\rho,\Delta\vec\omega \rangle =0$,
we see $ \tilde \Theta_{\nu,\delta}^{(\bos_{\hat{w}})}(\mathbf{z})=0$ for $\nu=0$, and
\begin{align*}
\tilde\Theta_{\nu,\delta}^{(\bos_{(e,e,e)})}(\mathbf{z})&={2}\theta^{1-\nu}_{m,m\delta } \sum_{\varepsilon_1,\varepsilon_2\in \ZZ/2} (-1)^{\sum_{i}\varepsilon_i}
\theta^1_{m,m\delta + 2 \sum_{i} {(-1)^{\varepsilon_i}\bar{p}_i} + 2\bar p_3}
\\
&={2} \theta^{1-\nu}_{m,m\delta } \sum_{\epsilon, \epsilon'}\epsilon\epsilon'
\theta^1_{m,m\delta + 2 \epsilon \bar p_1 + 2 \epsilon' \bar p_2 + 2\bar p_3}
\end{align*}
for $\nu=1$.

{\em Case $2$: $\hat w=(ab,e,e)$, $\hat w=(ba,e,e)$ and permutations.}
In the case of $\hat w=(ab,e,e)$, we have
\begin{gather*}
 \Theta_{\nu,\delta}^{(\bos_{(ab,e,e)})}(\mathbf{z}) =\left(\frac{1}{p_1}\right)^{1-\nu} \theta^{1-\nu}_{m,m\delta+\bar{p}_1}
 \sum_{\epsilon,\epsilon'\in \{1,-1\}} \epsilon \epsilon' \theta^1_{m,m\delta + \epsilon{\bar p_1}+ 2\epsilon'{\bar p_2}+2{\bar p_3}} .
\end{gather*}
Similarly, $\hat w=(ba,e,e)$ renders the same answer and we get
\begin{align*}
 \tilde\Theta_{\nu,\delta}^{(\bos_{(ba,e,e)})}(\mathbf{z})
&= \left(\frac{1}{p_1}\right)^{1-\nu}
 \theta^{1-\nu}_{m,m\delta+\bar p_1} \sum_{\epsilon,\epsilon'\in \{1,-1\}} \epsilon \epsilon' \theta^1_{m,m\delta + \epsilon\bar p_1+ 2\epsilon \bar p_2+ 2 \bar p_3} .
\end{align*}

All other six choices of $\hat w\in W_+^{\otimes 3}$ where only one of the three elements is different from $e\in W$ can be treated in exactly the same way, and we get the sum
\begin{gather*}\label{intro-PplusEvPerm}
\sum_{\substack{\hat w=(w_1,w_2,w_3)\\{\rm{one~of~the~}}w_i \neq e }} \tilde\Theta_{\nu,\delta}^{(\bos_{\hat w})}
 = 2 {\bf P}^+ \left(\left(\frac{1}{p_1}\right)^{1-\nu}
 \theta^{1-\nu}_{m,m\delta+\bar p_1} \sum_{\epsilon,\epsilon'\in \{1,-1\}} \epsilon\epsilon' \theta^1_{m,m\delta +\epsilon \bar p_1+ 2 \epsilon' \bar p_2+ 2\bar p_3 }\right),
\end{gather*}
where we denote by ${\bf P}^+$ by the group of even permutations of $(p_1,p_2,p_3)$.

{\em Case 3: $\hat w=(ab,ab,e)$, $\hat w=(ba,ba,e)$, $\hat w=(ab,ba,e)$, $\hat w=(ba,ab,e)$ and permutations.}
We observe that $(-1)^{\varepsilon} \langle \Delta\vec\omega, w(\vec{\rho}) \rangle$ is invariant under $\varepsilon\leftrightarrow \varepsilon+1$, $ab\leftrightarrow ba$.

From this, we obtain
\begin{gather*}
 \sum_{\varepsilon\in\ZZ/2}\Theta_{\nu,\delta,(\varepsilon,\varepsilon)}^{(\bos_{(ab,ab,e)})} +\Theta_{\nu,\delta,(\varepsilon,\varepsilon)}^{(\bos_{(ba,ba,e)})}
+ \Theta_{\nu,\delta,(\varepsilon,1+\varepsilon)}^{(\bos_{(ab,ba,e)})} +\Theta_{\nu,\delta,
(1+\varepsilon,\varepsilon)}^{(\bos_{(ba,ab,e)})} \\
 \qquad = 2 \left(\frac{1}{p_1}+\frac{1}{p_2}\right)^{1-\nu}
 \theta^{1-\nu}_{m,m\delta+\bar p_1+ \bar p_2} \sum_{\epsilon,\epsilon'\in \{1,-1\}} \epsilon \epsilon' \theta^1_{m,m\delta+ \epsilon\bar p_1+ \epsilon'\bar p_2+ 2\bar p_3}
\end{gather*}
and similarly
\begin{gather*}
\sum_{\varepsilon\in\ZZ/2}\Theta_{\nu,\delta,(\varepsilon,1+\varepsilon)}^{(\bos_{(ab,ab,e)})} +\Theta_{\nu,\delta,(\varepsilon,1+\varepsilon)}^{(\bos_{(ba,ba,e)})}
+ \Theta_{\nu,\delta,(\varepsilon,\varepsilon)}^{(\bos_{(ab,ba,e)})} +\Theta_{\nu,\delta,(\varepsilon,\varepsilon)}^{(\bos_{(ba,ab,e)})} \\
 \qquad = 2\left(\frac{1}{p_1}-\frac{1}{p_2}\right)^{1-\nu}
 \theta^{1-\nu}_{m,m\delta+\bar p_1-\bar p_2} \sum_{\epsilon,\epsilon'\in \{1,-1\}} \epsilon \epsilon' \theta^1_{m,m\delta+\epsilon \bar p_1+ \epsilon' \bar p_2+2\bar p_3} .
\end{gather*}

We also have images of the above under even permutations, corresponding to the cases where $w_1=e$ or $w_2=e$.

{\em Case 4: $\hat w=(ab,ab,ab)$, $\hat w=(ba,ba,ba)$.}
Similarly as before, we have
\begin{align*}
& \Theta_{\nu,\delta,(\varepsilon,\varepsilon)}^{(\bos_{(ab,ab,ab)})}= \Theta_{\nu,\delta,(\varepsilon,\varepsilon)}^{(\bos_{(ba,ba,ba)})} \left(\frac{(-1)^{\varepsilon+1}({\bar p_1}+{\bar p_2})-{\bar p_3}}{m}\right)^{1-\nu}\\
&\phantom{ \Theta_{\nu,\delta,(\varepsilon,\varepsilon)}^{(\bos_{(ab,ab,ab)})}= }{}\times
 \theta^{1-\nu}_{m,m\delta+(-1)^{\varepsilon}(\bar p_1+\bar p_2)+\bar p_3} \theta^1_{m,m\delta+ (-1)^{\varepsilon+1} (\bar p_1+\bar p_2) -\bar p_3},
\\
&\Theta_{\nu,\delta,(\varepsilon,\varepsilon+1)}^{(\bos_{(ab,ba,ab)})}=\Theta_{\nu,\delta,(\varepsilon,\varepsilon+1)}^{(\bos_{(ba,ab,ba)})}=
- \left(\frac{(-1)^{\varepsilon+1}({\bar p_1}+{\bar p_2})-{\bar p_3}}{m}\right)^{1-\nu}\\
&\phantom{\Theta_{\nu,\delta,(\varepsilon,\varepsilon+1)}^{(\bos_{(ab,ba,ab)})}=\Theta_{\nu,\delta,(\varepsilon,\varepsilon+1)}^{(\bos_{(ba,ab,ba)})}=}{}\times
 \theta^{1-\nu}_{m,m\delta+(-1)^{\varepsilon}(\bar p_1+\bar p_2)+\bar p_3} \theta^1_{m,m\delta+ (-1)^{\varepsilon+1} (\bar p_1-\bar p_2) -\bar p_3},
\\
&\Theta_{\nu,\delta,(\varepsilon+1,\varepsilon)}^{(\bos_{(ba,ab,ab)})}=\Theta_{\nu,\delta,(\varepsilon+1,\varepsilon)}^{(\bos_{(ab,ba,ba)})}=
- \left(\frac{(-1)^{\varepsilon+1}({\bar p_1}+{\bar p_2})-{\bar p_3}}{m}\right)^{1-\nu}\\
&\phantom{\Theta_{\nu,\delta,(\varepsilon+1,\varepsilon)}^{(\bos_{(ba,ab,ab)})}=\Theta_{\nu,\delta,(\varepsilon+1,\varepsilon)}^{(\bos_{(ab,ba,ba)})}=}{}\times
 \theta^{1-\nu}_{m,m\delta+(-1)^{\varepsilon}(\bar p_1+\bar p_2)+\bar p_3} \theta^1_{m,m\delta+ (-1)^{\varepsilon+1} (-\bar p_1+\bar p_2) -\bar p_3},
\\
&\Theta_{\nu,\delta,(\varepsilon+1,\varepsilon)}^{(\bos_{(ab,ab,ba)})}=\Theta_{\nu,\delta,(\varepsilon+1,\varepsilon)}^{(\bos_{(ba,ba,ab)})}=
- \left(\frac{(-1)^{\varepsilon+1}({\bar p_1}-{\bar p_2})-{\bar p_3}}{m}\right)^{1-\nu}\\
&\phantom{\Theta_{\nu,\delta,(\varepsilon+1,\varepsilon)}^{(\bos_{(ab,ab,ba)})}=\Theta_{\nu,\delta,(\varepsilon+1,\varepsilon)}^{(\bos_{(ba,ba,ab)})}=}{}\times
\theta^{1-\nu}_{m,m\delta+ (-1)^{\varepsilon}({\bar p_1}-{\bar p_2})+{\bar p_3}} \theta^1_{m,m\delta+ (-1)^{\varepsilon} (\bar p_1-\bar p_2) -\bar p_3},\\
& \Theta_{\nu,\delta,(\varepsilon,\varepsilon+1)}^{(\bos_{(ab,ab,ab)})}= \Theta_{\nu,\delta,(\varepsilon,\varepsilon+1)}^{(\bos_{(ba,ba,ba)})} = - \left(\frac{(-1)^{\varepsilon+1}({\bar p_1}-{\bar p_2})-{\bar p_3}}{m}\right)^{1-\nu}\\
&\phantom{ \Theta_{\nu,\delta,(\varepsilon,\varepsilon+1)}^{(\bos_{(ab,ab,ab)})}= \Theta_{\nu,\delta,(\varepsilon,\varepsilon+1)}^{(\bos_{(ba,ba,ba)})} =}{}\times
 \theta^{1-\nu}_{m,m\delta+(-1)^{\varepsilon}(\bar p_1-\bar p_2)+\bar p_3} \theta^1_{m,m\delta+ (-1)^{\varepsilon+1} (\bar p_1-\bar p_2) -\bar p_3}, \\
& \Theta_{\nu,\delta,(\varepsilon,\varepsilon)}^{(\bos_{(ba,ab,ab)})}= \Theta_{\nu,\delta,(\varepsilon,\varepsilon)}^{(\bos_{(ab,ba,ba)})} =- \left(\frac{(-1)^{\varepsilon}({\bar p_1}-{\bar p_2})-{\bar p_3}}{m}\right)^{1-\nu}\\
&\phantom{\Theta_{\nu,\delta,(\varepsilon,\varepsilon)}^{(\bos_{(ba,ab,ab)})}= \Theta_{\nu,\delta,(\varepsilon,\varepsilon)}^{(\bos_{(ab,ba,ba)})} =}{}\times
 \theta^{1-\nu}_{m,m\delta+(-1)^{\varepsilon+1}(\bar p_1-\bar p_2)+ \bar p_3} \theta^1_{m,m\delta+ (-1)^{\varepsilon+1} (\bar p_1+\bar p_2) -\bar p_3},
 \\
& \Theta_{\nu,\delta,(\varepsilon,\varepsilon)}^{(\bos_{(ab,ab,ba)})}= \Theta_{\nu,\delta,(\varepsilon,\varepsilon)}^{(\bos_{(ba,ba,ab)})} = \left(\frac{(-1)^{\varepsilon}({\bar p_1}+{\bar p_2})-{\bar p_3}}{m}\right)^{1-\nu}\\
&\phantom{ \Theta_{\nu,\delta,(\varepsilon,\varepsilon)}^{(\bos_{(ab,ab,ba)})}= \Theta_{\nu,\delta,(\varepsilon,\varepsilon)}^{(\bos_{(ba,ba,ab)})} =}{}\times
 \theta^{1-\nu}_{m,m\delta+(-1)^{\varepsilon+1}(\bar p_1+\bar p_2)+\bar p_3} \theta^1_{m,m\delta+ (-1)^{\varepsilon+1} (\bar p_1+\bar p_2) -\bar p_3} .
\end{align*}

Summing up, we get
\begin{gather*}
\Theta_{\nu,\delta,(\varepsilon,\varepsilon)}^{(\bos_{(ab,ab,ab)})} + \Theta_{\nu,\delta,(\varepsilon+1,\varepsilon)}^{(\bos_{(ba,ab,ab)})}+ \Theta_{\nu,\delta,(\varepsilon,\varepsilon+1)}^{(\bos_{(ab,ba,ab)})} +\Theta_{\nu,\delta,(\varepsilon+1,\varepsilon+1)}^{(\bos_{(ab,ab,ba)})} + (a\leftrightarrow b) \\
\qquad= 2\left(\frac{1}{p_1}+\frac{1}{p_2} +(-1)^{\varepsilon}\frac{1}{p_3}\right)^{1-\nu} \theta^{1-\nu}_{m,m\delta+(\bar p_1+\bar p_2)+(-1)^{\varepsilon}\bar p_3} \sum_{\epsilon, \epsilon'} \epsilon\epsilon' \theta^1_{m,m\delta+\epsilon \bar p_1+\epsilon '\bar p_2 -\bar p_3},\\
\Theta_{\nu,\delta,(\varepsilon,\varepsilon)}^{(\bos_{(ab,ba,ab)})} + \Theta_{\nu,\delta,(\varepsilon+1,\varepsilon)}^{(\bos_{(ba,ba,ab)})}+ \Theta_{\nu,\delta,(\varepsilon,\varepsilon+1)}^{(\bos_{(ab,ab,ab)})} +\Theta_{\nu,\delta,(\varepsilon+1,\varepsilon+1)}^{(\bos_{(ab,ba,ba)})} + (a\leftrightarrow b) \\
\qquad= 2
 \left(\frac{1}{p_1}-\frac{1}{p_2} +(-1)^{\varepsilon}\frac{1}{p_3}\right)^{1-\nu}
 \theta^{1-\nu}_{m,m\delta+(\bar p_1-\bar p_2)+(-1)^{\varepsilon}\bar p_3} \sum_{\epsilon, \epsilon'} \epsilon\epsilon' \theta^1_{m,m\delta+\epsilon \bar p_1+\epsilon' \bar p_2 -\bar p_3} .
\end{gather*}

Finally, summing up the contributions from all the above four cases, we define a set ${\cal R} \subset \ZZ/2m$, with
\[
{\cal R} = {\cal R}_0+{\cal R}_1+{\cal R}_2+{\cal R}_3
\]
and
\begin{gather*}
 {\cal R}_0 = \{0\},\qquad
 {\cal R}_1 = {\bf P}^+ \{\bar p_1\}, \qquad
 {\cal R}_2 = {\bf P}^+ \{\bar p_1+\bar p_2, \bar p_1-\bar p_2\},\\
 {\cal R}_3 = {\bf P}^+ \{\bar p_1+\bar p_2-\bar p_3, -\bar p_1-\bar p_2+\bar p_3\}.
\end{gather*}

For each $r\in {\cal R}$, we set $a_i^{(r)}:= 2-|r_i|$ if $r=\sum_i r_i \bar p_i$. For instance, we have \[\bigl(a_1^{(\bar p_1+\bar p_2)}, a_2^{(\bar p_1+\bar p_2)},a_3^{(\bar p_1+\bar p_2)}\bigr) =\bigl(a_1^{(\bar p_1-\bar p_2)}, a_2^{(\bar p_1-\bar p_2)},a_3^{(\bar p_1+\bar p_2)}\bigr) = (1,1,2).\]

Using the above definition, we can write
\begin{align*}
 \sum\limits_{\hat{w}\in W_+^{\otimes 3}} {\tilde \Theta_{\nu,\delta}}^{(\bos_{\hat{w}})} &= 2\sum_{r\in {\cal R}} \left(\frac{r}{m}\right)^{1-\nu} \theta^{1-\nu}_{m,m\delta+r} \sum_{\epsilon, \epsilon} \epsilon\epsilon' \theta^1_{m,m\delta+\epsilon \bar p_1 a_1^{(r)}
 +\epsilon' \bar p_2 a_2^{(r)} + \bar p_3 a_3^{(r)}} \\
 &=\sum_{r\in {\cal R}} \left(\frac{r}{m}\right)^{1-\nu} \theta^{1-\nu}_{m,m\delta+r} \sum_{\varepsilon_1,\varepsilon_2,\varepsilon_3\in \ZZ/2} (-1)^{\sum_i \varepsilon_i} \theta^1_{m,m\delta+\sum_i (-1)^{ \varepsilon_i}\bar p_i a_i^{(r)}
 } .
\end{align*}

From $(p_i,p_j)=1$, we see that $\Omega_m(\bar p_i)$ has precisely one non-zero entry in each row, since
\[
r+r' \equiv 0 (2\bar p_i),\qquad r-r'\equiv 0 (2 p_i)
\]
has a unique solution in $\ZZ/2m$ for $r'$ for any given $r\in \ZZ/2m$.

In particular, one can show that
\[
\sum_{r'\in \ZZ/2m} (\Omega_m(\bar p_i))_{m\delta +\sum_j (-1)^{ \varepsilon_j}\bar p_j a_j^{(r)},r'} X_{r'} =
X_{m\delta +\sum_j (-1)^{ \varepsilon_j+\delta_{i,j}+1}\bar p_j a_j^{(r)}}
\]
for all $\delta, a_j^{(r)}\in \ZZ$. Consider the representation of the metapletic group $\widetilde{SL}_2(\ZZ)$ corresponding to the subgroup $K=\{1,\bar p_1, \bar p_2,\bar p_3\}$ of the group of exact divisors. This representation is irreducible when all three $p_i$ are square-free. We have
\begin{align*}
 \bigl(P^{m+K}\theta_m\bigr)_{m\delta +\bar p_1 a_1+\bar p_2 a_2 +\bar p_3 a_3 }={}& \frac{1}{4}(\theta_{m,m\delta +\bar p_1 a_1+\bar p_2 a_2 + \bar p_3 a_3 } +
\theta_{m,m\delta +\bar p_1 a_1 -\bar p_2 a_2 - \bar p_3 a_3 }\\
&+\theta_{m,m\delta -\bar p_1 a_1+\bar p_2 a_2 - \bar p_3 a_3 }+\theta_{m,m\delta -\bar p_1 a_1-\bar p_2 a_2 + \bar p_3 a_3 }).
\end{align*}
Again using $\theta^1_{m,r}= -\theta^1_{m,-r}$, we see that
\begin{gather*}
\theta^{1,m+K}_{m\delta +\sum_i \bar p_i a_i^{(r)} }
:= \bigl(P^{m+K}\theta^1_m\bigr)_{m\delta+\sum_i \bar p_i a_i^{(r)} } = \frac{1}{4}\sum_{\epsilon, \epsilon} \epsilon\epsilon' \theta^1_{m,m\delta+\epsilon \bar p_1 a_1^{(r)}
 +\epsilon' \bar p_2 a_2^{(r)} + \bar p_3 a_3^{(r)}} .
\end{gather*}
As a result, we obtain the following expression
\begin{gather*}
 \sum\limits_{\hat{w}\in W_+^{\otimes 3}} {\tilde \Theta_{\nu,\delta}}^{(\bos_{\hat{w}})} = 8\sum_{r\in {\cal R}} \left(\frac{r}{m}\right)^{1-\nu} \theta^{1-\nu}_{m,m\delta+r} \theta^{1,m+K}_{m,m\delta+\sum_{i} \bar p_i a_i^{(r)}
 } .
\end{gather*}

\section{Tables}
In this appendix, we collect tables with computational data for
the examples presented in Section~\ref{sec:examples}.
Each of these tables is organized following in blocks with the same format, where each block specifies the contribution to the function
$\widecheck{Z}$ which comes from a generalised $A_2$ false theta function.
We remind the reader that the definition of a generalised
$A_2$ false theta function can be found in Section~\ref{sec:gena2};
this function is a building block for the companion function $\widehat{Z}$
and is associated to a set $\CS=\tilde\CS_{\hat{w}}$, which is in turn
determined with respect to $\widehat{Z}$ by a triplet of Weyl group
elements $\hat{w}$.

Symbolically, each block is organized in the following way

\begin{table}[h!]\centering \renewcommand{\arraystretch}{1.2}
\centering
\begin{tabular}{|lll|}
\hline
$\left(w_1,w_2,w_3\right)$ & $\bal^{(1)}_{1}$ & $\bal^{(2)}_{1}$\\
$\left(s_1,s_2\right)$ & $\bal^{(1)}_{ab}$ & $\bal^{(2)}_{ab}$\\
$\left(k_1,k_2\right)$ & $\bal^{(1)}_{ba}$ & $\bal^{(2)}_{ba}$\\\hline
\end{tabular}
\end{table}
\noindent using again the same notation (\ref{eq:abaction}) for Weyl group elements.
The first column contains the
triplet of Weyl elements $\hat{w}$ and the vectors $\vec{s}$ and
$\vec{k}$, while the second and third columns contain the values of
$\bal^{(1)}_{w}$, $\bal^{(2)}_{w}$, with $w$ restricted to elements of the rotation subgroup $W_+\subset W$.
\begin{landscape}
\begin{table}[th]\centering\small
 \begin{tabular}{|lll|lll|lll|}
 \hline
$(e, e, e)$ & $\bigl(0, -\frac{83}{140}\bigr)$ & $\bigl(1, -\frac{83}{140}\bigr)$ & $(e, e, a)$ & $\bigl(-\frac{1}{7}, -\frac{43}{140}\bigr)$ & $\bigl(\frac{8}{7}, -\frac{103}{140}\bigr)$ & $(e, e, b)$ & $\bigl(\frac{1}{7}, -\frac{103}{140}\bigr)$ & $\bigl(\frac{6}{7}, -\frac{43}{140}\bigr)$
\\
$(83, 83)$ & $\bigl(\frac{83}{140}, -\frac{83}{140}\bigr)$ & $\bigl(\frac{57}{140}, \frac{83}{70}\bigr)$ & $(43, 103)$ & $\bigl(\frac{83}{140}, -\frac{103}{140}\bigr)$ & $\bigl(\frac{57}{140}, \frac{73}{70}\bigr)$ & $(103, 43)$ & $\bigl(\frac{9}{20}, -\frac{43}{140}\bigr)$ & $\bigl(\frac{11}{20}, \frac{73}{70}\bigr)$
\\
$(0, 0)$ & $\bigl(-\frac{83}{140}, \frac{83}{70}\bigr)$ & $\bigl(\frac{223}{140}, -\frac{83}{140}\bigr)$ & $(0, 0)$ & $\bigl(-\frac{9}{20}, \frac{73}{70}\bigr)$ & $\bigl(\frac{29}{20}, -\frac{43}{140}\bigr)$ & $(0, 0)$ & $\bigl(-\frac{83}{140}, \frac{73}{70}\bigr)$ & $\bigl(\frac{223}{140}, -\frac{103}{140}\bigr)$ \\\hline
$(e, e, ab)$ & $\bigl(\frac{1}{7}, -\frac{83}{140}\bigr)$ & $\bigl(\frac{6}{7}, -\frac{23}{140}\bigr)$ & $(e, e, ba)$ & $\bigl(-\frac{1}{7}, -\frac{23}{140}\bigr)$ & $\bigl(\frac{8}{7}, -\frac{83}{140}\bigr)$ & $(e, e, aba)$ & $\bigl(0, -\frac{43}{140}\bigr)$ & $\bigl(1, -\frac{43}{140}\bigr)$
\\
$(83, 23)$ & $\bigl(\frac{43}{140}, -\frac{23}{140}\bigr)$ & $\bigl(\frac{97}{140}, \frac{53}{70}\bigr)$ & $(23, 83)$ & $\bigl(\frac{9}{20}, -\frac{83}{140}\bigr)$ & $\bigl(\frac{11}{20}, \frac{53}{70}\bigr)$ & $(43, 43)$ & $\bigl(\frac{43}{140}, -\frac{43}{140}\bigr)$ & $\bigl(\frac{97}{140}, \frac{43}{70}\bigr)$
\\
$(0, 0)$ & $\bigl(-\frac{9}{20}, \frac{53}{70}\bigr)$ & $\bigl(\frac{29}{20}, -\frac{83}{140}\bigr)$ & $(0, 0)$ & $\bigl(-\frac{43}{140}, \frac{53}{70}\bigr)$ & $\bigl(\frac{183}{140}, -\frac{23}{140}\bigr)$ & $(0, 0)$ & $\bigl(-\frac{43}{140}, \frac{43}{70}\bigr)$ & $\bigl(\frac{183}{140}, -\frac{43}{140}\bigr)$ \\\hline
$(e, a, e)$ & $\bigl(-\frac{1}{5}, -\frac{27}{140}\bigr)$ & $\bigl(\frac{6}{5}, -\frac{111}{140}\bigr)$ & $(e, a, b)$ & $\bigl(-\frac{2}{35}, -\frac{47}{140}\bigr)$ & $\bigl(\frac{37}{35}, -\frac{71}{140}\bigr)$ & $(e, a, ab)$ & $\bigl(-\frac{2}{35}, -\frac{27}{140}\bigr)$ & $\bigl(\frac{37}{35}, -\frac{51}{140}\bigr)$
\\
$(27, 111)$ & $\bigl(\frac{83}{140}, -\frac{111}{140}\bigr)$ & $\bigl(\frac{57}{140}, \frac{69}{70}\bigr)$ & $(47, 71)$ & $\bigl(\frac{9}{20}, -\frac{71}{140}\bigr)$ & $\bigl(\frac{11}{20}, \frac{59}{70}\bigr)$ & $(27, 51)$ & $\bigl(\frac{43}{140}, -\frac{51}{140}\bigr)$ & $\bigl(\frac{97}{140}, \frac{39}{70}\bigr)$
\\
$(0, 0)$ & $\bigl(-\frac{11}{28}, \frac{69}{70}\bigr)$ & $\bigl(\frac{39}{28}, -\frac{27}{140}\bigr)$ & $(0, 0)$ & $\bigl(-\frac{11}{28}, \frac{59}{70}\bigr)$ & $\bigl(\frac{39}{28}, -\frac{47}{140}\bigr)$ & $(0, 0)$ & $\bigl(-\frac{1}{4}, \frac{39}{70}\bigr)$ & $\bigl(\frac{5}{4}, -\frac{27}{140}\bigr)$
\\\hline
$(e, b, e)$ & $\bigl(\frac{1}{5}, -\frac{111}{140}\bigr)$ & $\bigl(\frac{4}{5}, -\frac{27}{140}\bigr)$ & $(e, b, a)$ & $\bigl(\frac{2}{35}, -\frac{71}{140}\bigr)$ & $\bigl(\frac{33}{35}, -\frac{47}{140}\bigr)$ & $(e, b, ba)$ & $\bigl(\frac{2}{35}, -\frac{51}{140}\bigr)$ & $\bigl(\frac{33}{35}, -\frac{27}{140}\bigr)$
\\
$(111, 27)$ & $\bigl(\frac{11}{28}, -\frac{27}{140}\bigr)$ & $\bigl(\frac{17}{28}, \frac{69}{70}\bigr)$ & $(71, 47)$ & $\bigl(\frac{11}{28}, -\frac{47}{140}\bigr)$ & $\bigl(\frac{17}{28}, \frac{59}{70}\bigr)$ & $(51, 27)$ & $\bigl(\frac{1}{4}, -\frac{27}{140}\bigr)$ & $\bigl(\frac{3}{4}, \frac{39}{70}\bigr)$
\\
$(0, 0)$ & $\bigl(-\frac{83}{140}, \frac{69}{70}\bigr)$ & $\bigl(\frac{223}{140}, -\frac{111}{140}\bigr)$ & $(0, 0)$ & $\bigl(-\frac{9}{20}, \frac{59}{70}\bigr)$ & $\bigl(\frac{29}{20}, -\frac{71}{140}\bigr)$ & $(0, 0)$ & $\bigl(-\frac{43}{140}, \frac{39}{70}\bigr)$ & $\bigl(\frac{183}{140}, -\frac{51}{140}\bigr)$
\\\hline
$(e, ab, a)$ & $\bigl(\frac{2}{35}, -\frac{43}{140}\bigr)$ & $\bigl(\frac{33}{35}, -\frac{19}{140}\bigr)$ & $(e, ba, b)$ & $\bigl(-\frac{2}{35}, -\frac{19}{140}\bigr)$ & $\bigl(\frac{37}{35}, -\frac{43}{140}\bigr)$ & $(e, aba, e)$ & $\bigl(0, -\frac{27}{140}\bigr)$ & $\bigl(1, -\frac{27}{140}\bigr)$
\\
$(43, 19)$ & $\bigl(\frac{27}{140}, -\frac{19}{140}\bigr)$ & $\bigl(\frac{113}{140}, \frac{31}{70}\bigr)$ & $(19, 43)$ & $\bigl(\frac{1}{4}, -\frac{43}{140}\bigr)$ & $\bigl(\frac{3}{4}, \frac{31}{70}\bigr)$ & $(27, 27)$ & $\bigl(\frac{27}{140}, -\frac{27}{140}\bigr)$ & $\bigl(\frac{113}{140}, \frac{27}{70}\bigr)$
\\
$(0, 0)$ & $\bigl(-\frac{1}{4}, \frac{31}{70}\bigr)$ & $\bigl(\frac{5}{4}, -\frac{43}{140}\bigr)$ & $(0, 0)$ & $\bigl(-\frac{27}{140}, \frac{31}{70}\bigr)$ & $\bigl(\frac{167}{140}, -\frac{19}{140}\bigr)$ & $(0, 0)$ & $\bigl(-\frac{27}{140}, \frac{27}{70}\bigr)$ & $\bigl(\frac{167}{140}, -\frac{27}{140}\bigr)$
\\\hline
$(a, e, e)$ & $\bigl(-\frac{1}{4}, -\frac{13}{140}\bigr)$ & $\bigl(\frac{5}{4}, -\frac{59}{70}\bigr)$ & $(a, e, b)$ & $\bigl(-\frac{3}{28}, -\frac{33}{140}\bigr)$ & $\bigl(\frac{31}{28}, -\frac{39}{70}\bigr)$ & $(a, e, ab)$ & $\bigl(-\frac{3}{28}, -\frac{13}{140}\bigr)$ & $\bigl(\frac{31}{28}, -\frac{29}{70}\bigr)$
\\
$(13, 118)$ & $\bigl(\frac{83}{140}, -\frac{59}{70}\bigr)$ & $\bigl(\frac{57}{140}, \frac{131}{140}\bigr)$ & $(33, 78)$ & $\bigl(\frac{9}{20}, -\frac{39}{70}\bigr)$ & $\bigl(\frac{11}{20}, \frac{111}{140}\bigr)$ & $(13, 58)$ & $\bigl(\frac{43}{140}, -\frac{29}{70}\bigr)$ & $\bigl(\frac{97}{140}, \frac{71}{140}\bigr)$
\\
$(0, 0)$ & $\bigl(-\frac{12}{35}, \frac{131}{140}\bigr)$ & $\bigl(\frac{47}{35}, -\frac{13}{140}\bigr)$ & $(0, 0)$ & $\bigl(-\frac{12}{35}, \frac{111}{140}\bigr)$ & $\bigl(\frac{47}{35}, -\frac{33}{140}\bigr)$ & $(0, 0)$ & $\bigl(-\frac{1}{5}, \frac{71}{140}\bigr)$ & $\bigl(\frac{6}{5}, -\frac{13}{140}\bigr)$
\\\hline
$(a, b, e)$ & $\bigl(-\frac{1}{20}, -\frac{41}{140}\bigr)$ & $\bigl(\frac{21}{20}, -\frac{31}{70}\bigr)$ & $(a, b, a)$ & $\bigl(-\frac{27}{140}, -\frac{1}{140}\bigr)$ & $\bigl(\frac{167}{140}, -\frac{41}{70}\bigr)$ & $(a, b, b)$ & $\bigl(\frac{13}{140}, -\frac{61}{140}\bigr)$ & $\bigl(\frac{127}{140}, -\frac{11}{70}\bigr)$
\\
$(41, 62)$ & $\bigl(\frac{11}{28}, -\frac{31}{70}\bigr)$ & $\bigl(\frac{17}{28}, \frac{103}{140}\bigr)$ & $(1, 82)$ & $\bigl(\frac{11}{28}, -\frac{41}{70}\bigr)$ & $\bigl(\frac{17}{28}, \frac{83}{140}\bigr)$ & $(61, 22)$ & $\bigl(\frac{1}{4}, -\frac{11}{70}\bigr)$ & $\bigl(\frac{3}{4}, \frac{83}{140}\bigr)$
\\
$(0, 0)$ & $\bigl(-\frac{12}{35}, \frac{103}{140}\bigr)$ & $\bigl(\frac{47}{35}, -\frac{41}{140}\bigr)$ & $(0, 0)$ & $\bigl(-\frac{1}{5}, \frac{83}{140}\bigr)$ & $\bigl(\frac{6}{5}, -\frac{1}{140}\bigr)$ & $(0, 0)$ & $\bigl(-\frac{12}{35}, \frac{83}{140}\bigr)$ & $\bigl(\frac{47}{35}, -\frac{61}{140}\bigr)$
\\\hline
$(a, b, ab)$ & $\bigl(\frac{13}{140}, -\frac{41}{140}\bigr)$ & $\bigl(\frac{127}{140}, -\frac{1}{70}\bigr)$ & $(a, b, aba)$ & $\bigl(-\frac{1}{20}, -\frac{1}{140}\bigr)$ & $\bigl(\frac{21}{20}, -\frac{11}{70}\bigr)$ & $(a, ab, e)$ & $\bigl(-\frac{1}{20}, -\frac{13}{140}\bigr)$ & $\bigl(\frac{21}{20}, -\frac{17}{70}\bigr)$
\\
$(41, 2)$ & $\bigl(\frac{3}{28}, -\frac{1}{70}\bigr)$ & $\bigl(\frac{25}{28}, \frac{43}{140}\bigr)$ & $(1, 22)$ & $\bigl(\frac{3}{28}, -\frac{11}{70}\bigr)$ & $\bigl(\frac{25}{28}, \frac{23}{140}\bigr)$ & $(13, 34)$ & $\bigl(\frac{27}{140}, -\frac{17}{70}\bigr)$ & $\bigl(\frac{113}{140}, \frac{47}{140}\bigr)$
\\
$(0, 0)$ & $\bigl(-\frac{1}{5}, \frac{43}{140}\bigr)$ & $\bigl(\frac{6}{5}, -\frac{41}{140}\bigr)$ & $(0, 0)$ & $\bigl(-\frac{2}{35}, \frac{23}{140}\bigr)$ & $\bigl(\frac{37}{35}, -\frac{1}{140}\bigr)$ & $(0, 0)$ & $\bigl(-\frac{1}{7}, \frac{47}{140}\bigr)$ & $\bigl(\frac{8}{7}, -\frac{13}{140}\bigr)$
\\\hline
$(b, e, e)$ & $\bigl(\frac{1}{4}, -\frac{59}{70}\bigr)$ & $\bigl(\frac{3}{4}, -\frac{13}{140}\bigr)$ & $(b, e, a)$ & $\bigl(\frac{3}{28}, -\frac{39}{70}\bigr)$ & $\bigl(\frac{25}{28}, -\frac{33}{140}\bigr)$ & $(b, e, ba)$ & $\bigl(\frac{3}{28}, -\frac{29}{70}\bigr)$ & $\bigl(\frac{25}{28}, -\frac{13}{140}\bigr)$
\\
$(118, 13)$ & $\bigl(\frac{12}{35}, -\frac{13}{140}\bigr)$ & $\bigl(\frac{23}{35}, \frac{131}{140}\bigr)$ & $(78, 33)$ & $\bigl(\frac{12}{35}, -\frac{33}{140}\bigr)$ & $\bigl(\frac{23}{35}, \frac{111}{140}\bigr)$ & $(58, 13)$ & $\bigl(\frac{1}{5}, -\frac{13}{140}\bigr)$ & $\bigl(\frac{4}{5}, \frac{71}{140}\bigr)$
\\
$(0, 0)$ & $\bigl(-\frac{83}{140}, \frac{131}{140}\bigr)$ & $\bigl(\frac{223}{140}, -\frac{59}{70}\bigr)$ & $(0, 0)$ & $\bigl(-\frac{9}{20}, \frac{111}{140}\bigr)$ & $\bigl(\frac{29}{20}, -\frac{39}{70}\bigr)$ & $(0, 0)$ & $\bigl(-\frac{43}{140}, \frac{71}{140}\bigr)$ & $\bigl(\frac{183}{140}, -\frac{29}{70}\bigr)$
\\\hline
$(b, a, e)$ & $\bigl(\frac{1}{20}, -\frac{31}{70}\bigr)$ & $\bigl(\frac{19}{20}, -\frac{41}{140}\bigr)$ & $(b, a, a)$ & $\bigl(-\frac{13}{140}, -\frac{11}{70}\bigr)$ & $\bigl(\frac{153}{140}, -\frac{61}{140}\bigr)$ & $(b, a, b)$ & $\bigl(\frac{27}{140}, -\frac{41}{70}\bigr)$ & $\bigl(\frac{113}{140}, -\frac{1}{140}\bigr)$
\\
$(62, 41)$ & $\bigl(\frac{12}{35}, -\frac{41}{140}\bigr)$ & $\bigl(\frac{23}{35}, \frac{103}{140}\bigr)$ & $(22, 61)$ & $\bigl(\frac{12}{35}, -\frac{61}{140}\bigr)$ & $\bigl(\frac{23}{35}, \frac{83}{140}\bigr)$ & $(82, 1)$ & $\bigl(\frac{1}{5}, -\frac{1}{140}\bigr)$ & $\bigl(\frac{4}{5}, \frac{83}{140}\bigr)$
\\
$(0, 0)$ & $\bigl(-\frac{11}{28}, \frac{103}{140}\bigr)$ & $\bigl(\frac{39}{28}, -\frac{31}{70}\bigr)$ & $(0, 0)$ & $\bigl(-\frac{1}{4}, \frac{83}{140}\bigr)$ & $\bigl(\frac{5}{4}, -\frac{11}{70}\bigr)$ & $(0, 0)$ & $\bigl(-\frac{11}{28}, \frac{83}{140}\bigr)$ & $\bigl(\frac{39}{28}, -\frac{41}{70}\bigr)$
\\\hline
$(b, a, ba)$ & $\bigl(-\frac{13}{140}, -\frac{1}{70}\bigr)$ & $\bigl(\frac{153}{140}, -\frac{41}{140}\bigr)$ & $(b, a, aba)$ & $\bigl(\frac{1}{20}, -\frac{11}{70}\bigr)$ & $\bigl(\frac{19}{20}, -\frac{1}{140}\bigr)$ & $(b, ba, e)$ & $\bigl(\frac{1}{20}, -\frac{17}{70}\bigr)$ & $\bigl(\frac{19}{20}, -\frac{13}{140}\bigr)$
\\
$(2, 41)$ & $\bigl(\frac{1}{5}, -\frac{41}{140}\bigr)$ & $\bigl(\frac{4}{5}, \frac{43}{140}\bigr)$ & $(22, 1)$ & $\bigl(\frac{2}{35}, -\frac{1}{140}\bigr)$ & $\bigl(\frac{33}{35}, \frac{23}{140}\bigr)$ & $(34, 13)$ & $\bigl(\frac{1}{7}, -\frac{13}{140}\bigr)$ & $\bigl(\frac{6}{7}, \frac{47}{140}\bigr)$
\\
$(0, 0)$ & $\bigl(-\frac{3}{28}, \frac{43}{140}\bigr)$ & $\bigl(\frac{31}{28}, -\frac{1}{70}\bigr)$ & $(0, 0)$ & $\bigl(-\frac{3}{28}, \frac{23}{140}\bigr)$ & $\bigl(\frac{31}{28}, -\frac{11}{70}\bigr)$ & $(0, 0)$ & $\bigl(-\frac{27}{140}, \frac{47}{140}\bigr)$ & $\bigl(\frac{167}{140}, -\frac{17}{70}\bigr)$
\\\hline
$(ab, a, e)$ & $\bigl(\frac{1}{20}, -\frac{27}{140}\bigr)$ & $\bigl(\frac{19}{20}, -\frac{3}{70}\bigr)$ & $(ba, b, e)$ & $\bigl(-\frac{1}{20}, -\frac{3}{70}\bigr)$ & $\bigl(\frac{21}{20}, -\frac{27}{140}\bigr)$ & $(aba, e, e)$ & $\bigl(0, -\frac{13}{140}\bigr)$ & $\bigl(1, -\frac{13}{140}\bigr)$
\\
$(27, 6)$ & $\bigl(\frac{13}{140}, -\frac{3}{70}\bigr)$ & $\bigl(\frac{127}{140}, \frac{33}{140}\bigr)$ & $(6, 27)$ & $\bigl(\frac{1}{7}, -\frac{27}{140}\bigr)$ & $\bigl(\frac{6}{7}, \frac{33}{140}\bigr)$ & $(13, 13)$ & $\bigl(\frac{13}{140}, -\frac{13}{140}\bigr)$ & $\bigl(\frac{127}{140}, \frac{13}{70}\bigr)$
\\
$(0, 0)$ & $\bigl(-\frac{1}{7}, \frac{33}{140}\bigr)$ & $\bigl(\frac{8}{7}, -\frac{27}{140}\bigr)$ & $(0, 0)$ & $\bigl(-\frac{13}{140}, \frac{33}{140}\bigr)$ & $\bigl(\frac{153}{140}, -\frac{3}{70}\bigr)$ & $(0, 0)$ & $\bigl(-\frac{13}{140}, \frac{13}{70}\bigr)$ & $\bigl(\frac{153}{140}, -\frac{13}{140}\bigr)$
\\\hline
\end{tabular}
\caption{$\boldsymbol{\alpha}$ of $M\bigl( -1;\frac{1}{4},\frac{3}{5}, \frac{1}{7} \bigr)$ for values of $\vec{s}$ with components in $\{1,\dots,139\}$.} \label{tab:SstarBriesk}
 \end{table}

\begin{table}[th]\centering\small
\begin{tabular}{|lll|lll|lll|}
 \hline
 $(e, e, e)$ & $\bigl(0, -\frac{19}{20}\bigr)$ & $\bigl(1, -\frac{19}{20}\bigr)$ & $(e, e,aba)$ & $\bigl(0, -\frac{9}{20}\bigr)$ & $\bigl(1, -\frac{9}{20}\bigr)$ & $(e, a, a)$ & $\bigl(-\frac{3}{4}, \frac{11}{20}\bigr)$ & $\bigl(\frac{7}{4}, -\frac{17}{10}\bigr)$
 \\
$(19, 19)$ & $\bigl(\frac{19}{20}, -\frac{19}{20}\bigr)$ & $\bigl(\frac{1}{20}, \frac{19}{10}\bigr)$ & $(9, 9)$ & $\bigl(\frac{9}{20}, -\frac{9}{20}\bigr)$ & $\bigl(\frac{11}{20}, \frac{9}{10}\bigr)$ & $(-11, 34)$ & $\bigl(\frac{19}{20}, -\frac{17}{10}\bigr)$ & $\bigl(\frac{1}{20}, \frac{23}{20}\bigr)$
\\
$(0, 0)$ & $\bigl(-\frac{19}{20}, \frac{19}{10}\bigr)$ & $\bigl(\frac{39}{20}, -\frac{19}{20}\bigr)$ & $(0, 0)$ & $\bigl(-\frac{9}{20}, \frac{9}{10}\bigr)$ & $\bigl(\frac{29}{20}, -\frac{9}{20}\bigr)$ & $(0, 0)$ & $\bigl(-\frac{1}{5}, \frac{23}{20}\bigr)$ & $\bigl(\frac{6}{5}, \frac{11}{20}\bigr)$
\\\hline
$(e, a, ab)$ & $\bigl(-\frac{1}{4}, \frac{1}{20}\bigr)$ & $\bigl(\frac{5}{4}, -\frac{7}{10}\bigr)$ & $(e, b, b)$ & $\bigl(\frac{3}{4}, -\frac{17}{10}\bigr)$ & $\bigl(\frac{1}{4}, \frac{11}{20}\bigr)$ & $(e, b, ba)$ & $\bigl(\frac{1}{4}, -\frac{7}{10}\bigr)$ & $\bigl(\frac{3}{4}, \frac{1}{20}\bigr)$
\\
$(-1, 14)$ & $\bigl(\frac{9}{20}, -\frac{7}{10}\bigr)$ & $\bigl(\frac{11}{20}, \frac{13}{20}\bigr)$ & $(34, -11)$ & $\bigl(\frac{1}{5}, \frac{11}{20}\bigr)$ & $\bigl(\frac{4}{5}, \frac{23}{20}\bigr)$ & $(14, -1)$ & $\bigl(\frac{1}{5}, \frac{1}{20}\bigr)$ & $\bigl(\frac{4}{5}, \frac{13}{20}\bigr)$
\\
$(0, 0)$ & $\bigl(-\frac{1}{5}, \frac{13}{20}\bigr)$ & $\bigl(\frac{6}{5}, \frac{1}{20}\bigr)$ & $(0, 0)$ & $\bigl(-\frac{19}{20}, \frac{23}{20}\bigr)$ & $\bigl(\frac{39}{20}, -\frac{17}{10}\bigr)$ & $(0, 0)$ & $\bigl(-\frac{9}{20}, \frac{13}{20}\bigr)$ & $\bigl(\frac{29}{20}, -\frac{7}{10}\bigr)$
\\\hline
$(e, ab, a)$ & $\bigl(\frac{1}{4}, -\frac{9}{20}\bigr)$ & $\bigl(\frac{3}{4}, \frac{3}{10}\bigr)$ & $(e, ab, ab)$ & $\bigl(\frac{3}{4}, -\frac{19}{20}\bigr)$ & $\bigl(\frac{1}{4}, \frac{13}{10}\bigr)$ & $(e, ba, b)$ & $\bigl(-\frac{1}{4}, \frac{3}{10}\bigr)$ & $\bigl(\frac{5}{4}, -\frac{9}{20}\bigr)$
\\
$(9, -6)$ & $\bigl(-\frac{1}{20}, \frac{3}{10}\bigr)$ & $\bigl(\frac{21}{20}, \frac{3}{20}\bigr)$ & $(19, -26)$ & $\bigl(-\frac{11}{20}, \frac{13}{10}\bigr)$ & $\bigl(\frac{31}{20}, -\frac{7}{20}\bigr)$ & $(-6, 9)$ & $\bigl(\frac{1}{5}, -\frac{9}{20}\bigr)$ & $\bigl(\frac{4}{5}, \frac{3}{20}\bigr)$
\\
$(0, 0)$ & $\bigl(-\frac{1}{5}, \frac{3}{20}\bigr)$ & $\bigl(\frac{6}{5}, -\frac{9}{20}\bigr)$ & $(0, 0)$ & $\bigl(-\frac{1}{5}, -\frac{7}{20}\bigr)$ & $\bigl(\frac{6}{5}, -\frac{19}{20}\bigr)$ & $(0, 0)$ & $\bigl(\frac{1}{20}, \frac{3}{20}\bigr)$ & $\bigl(\frac{19}{20}, \frac{3}{10}\bigr)$
\\\hline
$(e, ba, ba)$ & $\bigl(-\frac{3}{4}, \frac{13}{10}\bigr)$ & $\bigl(\frac{7}{4}, -\frac{19}{20}\bigr)$ & $(e, aba, e)$ & $\bigl(0, \frac{1}{20}\bigr)$ & $\bigl(1, \frac{1}{20}\bigr)$ & $(e, aba, aba)$ & $\bigl(0, \frac{11}{20}\bigr)$ & $\bigl(1, \frac{11}{20}\bigr)$
\\
$(-26, 19)$ & $\bigl(\frac{1}{5}, -\frac{19}{20}\bigr)$ & $\bigl(\frac{4}{5}, -\frac{7}{20}\bigr)$ & $(-1, -1)$ & $\bigl(-\frac{1}{20}, \frac{1}{20}\bigr)$ & $\bigl(\frac{21}{20}, -\frac{1}{10}\bigr)$ & $(-11, -11)$ & $\bigl(-\frac{11}{20}, \frac{11}{20}\bigr)$ & $\bigl(\frac{31}{20}, -\frac{11}{10}\bigr)$
\\
$(0, 0)$ & $\bigl(\frac{11}{20}, -\frac{7}{20}\bigr)$ & $\bigl(\frac{9}{20}, \frac{13}{10}\bigr)$ & $(0, 0)$ & $\bigl(\frac{1}{20}, -\frac{1}{10}\bigr)$ & $\bigl(\frac{19}{20}, \frac{1}{20}\bigr)$ & $(0, 0)$ & $\bigl(\frac{11}{20}, -\frac{11}{10}\bigr)$ & $\bigl(\frac{9}{20}, \frac{11}{20}\bigr)$
\\\hline
$(a, e, e)$ & $\bigl(-\frac{1}{5}, -\frac{11}{20}\bigr)$ & $\bigl(\frac{6}{5}, -\frac{23}{20}\bigr)$ & $(a, e, aba)$ & $\bigl(-\frac{1}{5}, -\frac{1}{20}\bigr)$ & $\bigl(\frac{6}{5}, -\frac{13}{20}\bigr)$ & $(a, a, a)$ & $\bigl(-\frac{19}{20}, \frac{19}{20}\bigr)$ & $\bigl(\frac{39}{20}, -\frac{19}{10}\bigr)$
\\
$(11, 23)$ & $\bigl(\frac{19}{20}, -\frac{23}{20}\bigr)$ & $\bigl(\frac{1}{20}, \frac{17}{10}\bigr)$ & $(1, 13)$ & $\bigl(\frac{9}{20}, -\frac{13}{20}\bigr)$ & $\bigl(\frac{11}{20}, \frac{7}{10}\bigr)$ & $(-19, 38)$ & $\bigl(\frac{19}{20}, -\frac{19}{10}\bigr)$ & $\bigl(\frac{1}{20}, \frac{19}{20}\bigr)$
\\
$(0, 0)$ & $\bigl(-\frac{3}{4}, \frac{17}{10}\bigr)$ & $\bigl(\frac{7}{4}, -\frac{11}{20}\bigr)$ & $(0, 0)$ & $\bigl(-\frac{1}{4}, \frac{7}{10}\bigr)$ & $\bigl(\frac{5}{4}, -\frac{1}{20}\bigr)$ & $(0, 0)$ & $\bigl(0, \frac{19}{20}\bigr)$ & $\bigl(1, \frac{19}{20}\bigr)$
\\\hline
$(a, a, ab)$ & $\bigl(-\frac{9}{20}, \frac{9}{20}\bigr)$ & $\bigl(\frac{29}{20}, -\frac{9}{10}\bigr)$ & $(a, b, b)$ & $\bigl(\frac{11}{20}, -\frac{13}{10}\bigr)$ & $\bigl(\frac{9}{20}, \frac{7}{20}\bigr)$ & $(a, b, ba)$ & $\bigl(\frac{1}{20}, -\frac{3}{10}\bigr)$ & $\bigl(\frac{19}{20}, -\frac{3}{20}\bigr)$
\\
$(-9, 18)$ & $\bigl(\frac{9}{20}, -\frac{9}{10}\bigr)$ & $\bigl(\frac{11}{20}, \frac{9}{20}\bigr)$ & $(26, -7)$ & $\bigl(\frac{1}{5}, \frac{7}{20}\bigr)$ & $\bigl(\frac{4}{5}, \frac{19}{20}\bigr)$ & $(6, 3)$ & $\bigl(\frac{1}{5}, -\frac{3}{20}\bigr)$ & $\bigl(\frac{4}{5}, \frac{9}{20}\bigr)$
\\
$(0, 0)$ & $\bigl(0, \frac{9}{20}\bigr)$ & $\bigl(1, \frac{9}{20}\bigr)$ & $(0, 0)$ & $\bigl(-\frac{3}{4}, \frac{19}{20}\bigr)$ & $\bigl(\frac{7}{4}, -\frac{13}{10}\bigr)$ & $(0, 0)$ & $\bigl(-\frac{1}{4}, \frac{9}{20}\bigr)$ & $\bigl(\frac{5}{4}, -\frac{3}{10}\bigr)$
\\\hline
$(a, ab, a)$ & $\bigl(\frac{1}{20}, -\frac{1}{20}\bigr)$ & $\bigl(\frac{19}{20}, \frac{1}{10}\bigr)$ & $(a, ab, ab)$ & $\bigl(\frac{11}{20}, -\frac{11}{20}\bigr)$ & $\bigl(\frac{9}{20}, \frac{11}{10}\bigr)$ & $(a, ba, b)$ & $\bigl(-\frac{9}{20}, \frac{7}{10}\bigr)$ & $\bigl(\frac{29}{20}, -\frac{13}{20}\bigr)$ \\
$(1, -2)$ & $\bigl(-\frac{1}{20}, \frac{1}{10}\bigr)$ & $\bigl(\frac{21}{20}, -\frac{1}{20}\bigr)$ & $(11, -22)$ & $\bigl(-\frac{11}{20}, \frac{11}{10}\bigr)$ & $\bigl(\frac{31}{20}, -\frac{11}{20}\bigr)$ & $(-14, 13)$ & $\bigl(\frac{1}{5}, -\frac{13}{20}\bigr)$ & $\bigl(\frac{4}{5}, -\frac{1}{20}\bigr)$
\\
$(0, 0)$ & $\bigl(0, -\frac{1}{20}\bigr)$ & $\bigl(1, -\frac{1}{20}\bigr)$ & $(0, 0)$ & $\bigl(0, -\frac{11}{20}\bigr)$ & $\bigl(1, -\frac{11}{20}\bigr)$ & $(0, 0)$ & $\bigl(\frac{1}{4}, -\frac{1}{20}\bigr)$ & $\bigl(\frac{3}{4}, \frac{7}{10}\bigr)$
\\\hline
$(a, ba, ba)$ & $\bigl(-\frac{19}{20}, \frac{17}{10}\bigr)$ & $\bigl(\frac{39}{20}, -\frac{23}{20}\bigr)$ & $(a, aba, e)$ & $\bigl(-\frac{1}{5}, \frac{9}{20}\bigr)$ & $\bigl(\frac{6}{5}, -\frac{3}{20}\bigr)$ & $(a, aba, aba)$ & $\bigl(-\frac{1}{5}, \frac{19}{20}\bigr)$ & $\bigl(\frac{6}{5}, \frac{7}{20}\bigr)$
\\
$(-34, 23)$ & $\bigl(\frac{1}{5}, -\frac{23}{20}\bigr)$ & $\bigl(\frac{4}{5}, -\frac{11}{20}\bigr)$ & $(-9, 3)$ & $\bigl(-\frac{1}{20}, -\frac{3}{20}\bigr)$ & $\bigl(\frac{21}{20}, -\frac{3}{10}\bigr)$ & $(-19, -7)$ & $\bigl(-\frac{11}{20}, \frac{7}{20}\bigr)$ & $\bigl(\frac{31}{20}, -\frac{13}{10}\bigr)$
\\
$(0, 0)$ & $\bigl(\frac{3}{4}, -\frac{11}{20}\bigr)$ & $\bigl(\frac{1}{4}, \frac{17}{10}\bigr)$ & $(0, 0)$ & $\bigl(\frac{1}{4}, -\frac{3}{10}\bigr)$ & $\bigl(\frac{3}{4}, \frac{9}{20}\bigr)$ & $(0, 0)$ & $\bigl(\frac{3}{4}, -\frac{13}{10}\bigr)$ & $\bigl(\frac{1}{4}, \frac{19}{20}\bigr)$
\\\hline
$(b, e, e)$ & $\bigl(\frac{1}{5}, -\frac{23}{20}\bigr)$ & $\bigl(\frac{4}{5}, -\frac{11}{20}\bigr)$ & $(b, e, aba)$ & $\bigl(\frac{1}{5}, -\frac{13}{20}\bigr)$ & $\bigl(\frac{4}{5}, -\frac{1}{20}\bigr)$ & $(b, a, a)$ & $\bigl(-\frac{11}{20}, \frac{7}{20}\bigr)$ & $\bigl(\frac{31}{20}, -\frac{13}{10}\bigr)$
\\
$(23, 11)$ & $\bigl(\frac{3}{4}, -\frac{11}{20}\bigr)$ & $\bigl(\frac{1}{4}, \frac{17}{10}\bigr)$ & $(13, 1)$ & $\bigl(\frac{1}{4}, -\frac{1}{20}\bigr)$ & $\bigl(\frac{3}{4}, \frac{7}{10}\bigr)$ & $(-7, 26)$ & $\bigl(\frac{3}{4}, -\frac{13}{10}\bigr)$ & $\bigl(\frac{1}{4}, \frac{19}{20}\bigr)$
\\
$(0, 0)$ & $\bigl(-\frac{19}{20}, \frac{17}{10}\bigr)$ & $\bigl(\frac{39}{20}, -\frac{23}{20}\bigr)$ & $(0, 0)$ & $\bigl(-\frac{9}{20}, \frac{7}{10}\bigr)$ & $\bigl(\frac{29}{20}, -\frac{13}{20}\bigr)$ & $(0, 0)$ & $\bigl(-\frac{1}{5}, \frac{19}{20}\bigr)$ & $\bigl(\frac{6}{5}, \frac{7}{20}\bigr)$
\\\hline
$(b, a, ab)$ & $\bigl(-\frac{1}{20}, -\frac{3}{20}\bigr)$ & $\bigl(\frac{21}{20}, -\frac{3}{10}\bigr)$ & $(b, b, b)$ & $\bigl(\frac{19}{20}, -\frac{19}{10}\bigr)$ & $\bigl(\frac{1}{20}, \frac{19}{20}\bigr)$ & $(b, b, ba)$ & $\bigl(\frac{9}{20}, -\frac{9}{10}\bigr)$ & $\bigl(\frac{11}{20}, \frac{9}{20}\bigr)$
\\
$(3, 6)$ & $\bigl(\frac{1}{4}, -\frac{3}{10}\bigr)$ & $\bigl(\frac{3}{4}, \frac{9}{20}\bigr)$ & $(38, -19)$ & $\bigl(0, \frac{19}{20}\bigr)$ & $\bigl(1, \frac{19}{20}\bigr)$ & $(18, -9)$ & $\bigl(0, \frac{9}{20}\bigr)$ & $\bigl(1, \frac{9}{20}\bigr)$
\\
$(0, 0)$ & $\bigl(-\frac{1}{5}, \frac{9}{20}\bigr)$ & $\bigl(\frac{6}{5}, -\frac{3}{20}\bigr)$ & $(0, 0)$ & $\bigl(-\frac{19}{20}, \frac{19}{20}\bigr)$ & $\bigl(\frac{39}{20}, -\frac{19}{10}\bigr)$ & $(0, 0)$ & $\bigl(-\frac{9}{20}, \frac{9}{20}\bigr)$ & $\bigl(\frac{29}{20}, -\frac{9}{10}\bigr)$
\\\hline
$(b, ab, a)$ & $\bigl(\frac{9}{20}, -\frac{13}{20}\bigr)$ & $\bigl(\frac{11}{20}, \frac{7}{10}\bigr)$ & $(b, ab, ab)$ & $\bigl(\frac{19}{20}, -\frac{23}{20}\bigr)$ & $\bigl(\frac{1}{20}, \frac{17}{10}\bigr)$ & $(b, ba, b)$ & $\bigl(-\frac{1}{20}, \frac{1}{10}\bigr)$ & $\bigl(\frac{21}{20}, -\frac{1}{20}\bigr)$
\\
$(13, -14)$ & $\bigl(-\frac{1}{4}, \frac{7}{10}\bigr)$ & $\bigl(\frac{5}{4}, -\frac{1}{20}\bigr)$ & $(23, -34)$ & $\bigl(-\frac{3}{4}, \frac{17}{10}\bigr)$ & $\bigl(\frac{7}{4}, -\frac{11}{20}\bigr)$ & $(-2, 1)$ & $\bigl(0, -\frac{1}{20}\bigr)$ & $\bigl(1, -\frac{1}{20}\bigr)$
\\
$(0, 0)$ & $\bigl(-\frac{1}{5}, -\frac{1}{20}\bigr)$ & $\bigl(\frac{6}{5}, -\frac{13}{20}\bigr)$ & $(0, 0)$ & $\bigl(-\frac{1}{5}, -\frac{11}{20}\bigr)$ & $\bigl(\frac{6}{5}, -\frac{23}{20}\bigr)$ & $(0, 0)$ & $\bigl(\frac{1}{20}, -\frac{1}{20}\bigr)$ & $\bigl(\frac{19}{20}, \frac{1}{10}\bigr)$
\\\hline
$(b, ba, ba)$ & $\bigl(-\frac{11}{20}, \frac{11}{10}\bigr)$ & $\bigl(\frac{31}{20}, -\frac{11}{20}\bigr)$ & $(b, aba, e)$ & $\bigl(\frac{1}{5}, -\frac{3}{20}\bigr)$ & $\bigl(\frac{4}{5}, \frac{9}{20}\bigr)$ & $(b, aba, aba)$ & $\bigl(\frac{1}{5}, \frac{7}{20}\bigr)$ & $\bigl(\frac{4}{5}, \frac{19}{20}\bigr)$
\\
$(-22, 11)$ & $\bigl(0, -\frac{11}{20}\bigr)$ & $\bigl(1, -\frac{11}{20}\bigr)$ & $(3, -9)$ & $\bigl(-\frac{1}{4}, \frac{9}{20}\bigr)$ & $\bigl(\frac{5}{4}, -\frac{3}{10}\bigr)$ & $(-7, -19)$ & $\bigl(-\frac{3}{4}, \frac{19}{20}\bigr)$ & $\bigl(\frac{7}{4}, -\frac{13}{10}\bigr)$
\\
$(0, 0)$ & $\bigl(\frac{11}{20}, -\frac{11}{20}\bigr)$ & $\bigl(\frac{9}{20}, \frac{11}{10}\bigr)$ & $(0, 0)$ & $\bigl(\frac{1}{20}, -\frac{3}{10}\bigr)$ & $\bigl(\frac{19}{20}, -\frac{3}{20}\bigr)$ & $(0, 0)$ & $\bigl(\frac{11}{20}, -\frac{13}{10}\bigr)$ & $\bigl(\frac{9}{20}, \frac{7}{20}\bigr)$
\\\hline
\end{tabular}
\end{table}

\begin{table}[th]\centering\small
\begin{tabular}{|lll|lll|lll|}
 \hline
$(ab, e, e)$ & $\bigl(\frac{1}{5}, -\frac{19}{20}\bigr)$ & $\bigl(\frac{4}{5}, -\frac{7}{20}\bigr)$ & $(ab, e, aba)$ & $\bigl(\frac{1}{5}, -\frac{9}{20}\bigr)$ & $\bigl(\frac{4}{5}, \frac{3}{20}\bigr)$ & $(ab, a, a)$ & $\bigl(-\frac{11}{20}, \frac{11}{20}\bigr)$ & $\bigl(\frac{31}{20}, -\frac{11}{10}\bigr)$ \\
$(19, 7)$ & $\bigl(\frac{11}{20}, -\frac{7}{20}\bigr)$ & $\bigl(\frac{9}{20}, \frac{13}{10}\bigr)$ & $(9, -3)$ & $\bigl(\frac{1}{20}, \frac{3}{20}\bigr)$ & $\bigl(\frac{19}{20}, \frac{3}{10}\bigr)$ & $(-11, 22)$ & $\bigl(\frac{11}{20}, -\frac{11}{10}\bigr)$ & $\bigl(\frac{9}{20}, \frac{11}{20}\bigr)$
\\
$(0, 0)$ & $\bigl(-\frac{3}{4}, \frac{13}{10}\bigr)$ & $\bigl(\frac{7}{4}, -\frac{19}{20}\bigr)$ & $(0, 0)$ & $\bigl(-\frac{1}{4}, \frac{3}{10}\bigr)$ & $\bigl(\frac{5}{4}, -\frac{9}{20}\bigr)$ & $(0, 0)$ & $\bigl(0, \frac{11}{20}\bigr)$ & $\bigl(1, \frac{11}{20}\bigr)$
\\\hline
$(ab, a, ab)$ & $\bigl(-\frac{1}{20}, \frac{1}{20}\bigr)$ & $\bigl(\frac{21}{20}, -\frac{1}{10}\bigr)$ & $(ab, b, b)$ & $\bigl(\frac{19}{20}, -\frac{17}{10}\bigr)$ & $\bigl(\frac{1}{20}, \frac{23}{20}\bigr)$ & $(ab, b, ba)$ & $\bigl(\frac{9}{20}, -\frac{7}{10}\bigr)$ & $\bigl(\frac{11}{20}, \frac{13}{20}\bigr)$
\\
$(-1, 2)$ & $\bigl(\frac{1}{20}, -\frac{1}{10}\bigr)$ & $\bigl(\frac{19}{20}, \frac{1}{20}\bigr)$ & $(34, -23)$ & $\bigl(-\frac{1}{5}, \frac{23}{20}\bigr)$ & $\bigl(\frac{6}{5}, \frac{11}{20}\bigr)$ & $(14, -13)$ & $\bigl(-\frac{1}{5}, \frac{13}{20}\bigr)$ & $\bigl(\frac{6}{5}, \frac{1}{20}\bigr)$
\\
$(0, 0)$ & $\bigl(0, \frac{1}{20}\bigr)$ & $\bigl(1, \frac{1}{20}\bigr)$ & $(0, 0)$ & $\bigl(-\frac{3}{4}, \frac{11}{20}\bigr)$ & $\bigl(\frac{7}{4}, -\frac{17}{10}\bigr)$ & $(0, 0)$ & $\bigl(-\frac{1}{4}, \frac{1}{20}\bigr)$ & $\bigl(\frac{5}{4}, -\frac{7}{10}\bigr)$
\\\hline
$(ab, ab, a)$ & $\bigl(\frac{9}{20}, -\frac{9}{20}\bigr)$ & $\bigl(\frac{11}{20}, \frac{9}{10}\bigr)$ & $(ab, ab, ab)$ & $\bigl(\frac{19}{20}, -\frac{19}{20}\bigr)$ & $\bigl(\frac{1}{20}, \frac{19}{10}\bigr)$ & $(ab, ba, b)$ & $\bigl(-\frac{1}{20}, \frac{3}{10}\bigr)$ & $\bigl(\frac{21}{20}, \frac{3}{20}\bigr)$
\\
$(9, -18)$ & $\bigl(-\frac{9}{20}, \frac{9}{10}\bigr)$ & $\bigl(\frac{29}{20}, -\frac{9}{20}\bigr)$ & $(19, -38)$ & $\bigl(-\frac{19}{20}, \frac{19}{10}\bigr)$ & $\bigl(\frac{39}{20}, -\frac{19}{20}\bigr)$ & $(-6, -3)$ & $\bigl(-\frac{1}{5}, \frac{3}{20}\bigr)$ & $\bigl(\frac{6}{5}, -\frac{9}{20}\bigr)$
\\
$(0, 0)$ & $\bigl(0, -\frac{9}{20}\bigr)$ & $\bigl(1, -\frac{9}{20}\bigr)$ & $(0, 0)$ & $\bigl(0, -\frac{19}{20}\bigr)$ & $\bigl(1, -\frac{19}{20}\bigr)$ & $(0, 0)$ & $\bigl(\frac{1}{4}, -\frac{9}{20}\bigr)$ & $\bigl(\frac{3}{4}, \frac{3}{10}\bigr)$
\\\hline
$(ab, ba, ba)$ & $\bigl(-\frac{11}{20}, \frac{13}{10}\bigr)$ & $\bigl(\frac{31}{20}, -\frac{7}{20}\bigr)$ & $(ab, aba, e)$ & $\bigl(\frac{1}{5}, \frac{1}{20}\bigr)$ & $\bigl(\frac{4}{5}, \frac{13}{20}\bigr)$ & $(ab, aba, aba)$ & $\bigl(\frac{1}{5}, \frac{11}{20}\bigr)$ & $\bigl(\frac{4}{5}, \frac{23}{20}\bigr)$
\\
$(-26, 7)$ & $\bigl(-\frac{1}{5}, -\frac{7}{20}\bigr)$ & $\bigl(\frac{6}{5}, -\frac{19}{20}\bigr)$ & $(-1, -13)$ & $\bigl(-\frac{9}{20}, \frac{13}{20}\bigr)$ & $\bigl(\frac{29}{20}, -\frac{7}{10}\bigr)$ & $(-11, -23)$ & $\bigl(-\frac{19}{20}, \frac{23}{20}\bigr)$ & $\bigl(\frac{39}{20}, -\frac{17}{10}\bigr)$
\\
$(0, 0)$ & $\bigl(\frac{3}{4}, -\frac{19}{20}\bigr)$ & $\bigl(\frac{1}{4}, \frac{13}{10}\bigr)$ & $(0, 0)$ & $\bigl(\frac{1}{4}, -\frac{7}{10}\bigr)$ & $\bigl(\frac{3}{4}, \frac{1}{20}\bigr)$ & $(0, 0)$ & $\bigl(\frac{3}{4}, -\frac{17}{10}\bigr)$ & $\bigl(\frac{1}{4}, \frac{11}{20}\bigr)$
\\\hline
$(ba, e, e)$ & $\bigl(-\frac{1}{5}, -\frac{7}{20}\bigr)$ & $\bigl(\frac{6}{5}, -\frac{19}{20}\bigr)$ & $(ba, e, aba)$ & $\bigl(-\frac{1}{5}, \frac{3}{20}\bigr)$ & $\bigl(\frac{6}{5}, -\frac{9}{20}\bigr)$ & $(ba, a, a)$ & $\bigl(-\frac{19}{20}, \frac{23}{20}\bigr)$ & $\bigl(\frac{39}{20}, -\frac{17}{10}\bigr)$
\\
$(7, 19)$ & $\bigl(\frac{3}{4}, -\frac{19}{20}\bigr)$ & $\bigl(\frac{1}{4}, \frac{13}{10}\bigr)$ & $(-3, 9)$ & $\bigl(\frac{1}{4}, -\frac{9}{20}\bigr)$ & $\bigl(\frac{3}{4}, \frac{3}{10}\bigr)$ & $(-23, 34)$ & $\bigl(\frac{3}{4}, -\frac{17}{10}\bigr)$ & $\bigl(\frac{1}{4}, \frac{11}{20}\bigr)$
\\
$(0, 0)$ & $\bigl(-\frac{11}{20}, \frac{13}{10}\bigr)$ & $\bigl(\frac{31}{20}, -\frac{7}{20}\bigr)$ & $(0, 0)$ & $\bigl(-\frac{1}{20}, \frac{3}{10}\bigr)$ & $\bigl(\frac{21}{20}, \frac{3}{20}\bigr)$ & $(0, 0)$ & $\bigl(\frac{1}{5}, \frac{11}{20}\bigr)$ & $\bigl(\frac{4}{5}, \frac{23}{20}\bigr)$
\\\hline
$(ba, a, ab)$ & $\bigl(-\frac{9}{20}, \frac{13}{20}\bigr)$ & $\bigl(\frac{29}{20}, -\frac{7}{10}\bigr)$ & $(ba, b, b)$ & $\bigl(\frac{11}{20}, -\frac{11}{10}\bigr)$ & $\bigl(\frac{9}{20}, \frac{11}{20}\bigr)$ & $(ba, b, ba)$ & $\bigl(\frac{1}{20}, -\frac{1}{10}\bigr)$ & $\bigl(\frac{19}{20}, \frac{1}{20}\bigr)$
\\
$(-13, 14)$ & $\bigl(\frac{1}{4}, -\frac{7}{10}\bigr)$ & $\bigl(\frac{3}{4}, \frac{1}{20}\bigr)$ & $(22, -11)$ & $\bigl(0, \frac{11}{20}\bigr)$ & $\bigl(1, \frac{11}{20}\bigr)$ & $(2, -1)$ & $\bigl(0, \frac{1}{20}\bigr)$ & $\bigl(1, \frac{1}{20}\bigr)$
\\
$(0, 0)$ & $\bigl(\frac{1}{5}, \frac{1}{20}\bigr)$ & $\bigl(\frac{4}{5}, \frac{13}{20}\bigr)$ & $(0, 0)$ & $\bigl(-\frac{11}{20}, \frac{11}{20}\bigr)$ & $\bigl(\frac{31}{20}, -\frac{11}{10}\bigr)$ & $(0, 0)$ & $\bigl(-\frac{1}{20}, \frac{1}{20}\bigr)$ & $\bigl(\frac{21}{20}, -\frac{1}{10}\bigr)$
\\\hline
$(ba, ab, a)$ & $\bigl(\frac{1}{20}, \frac{3}{20}\bigr)$ & $\bigl(\frac{19}{20}, \frac{3}{10}\bigr)$ & $(ba, ab, ab)$ & $\bigl(\frac{11}{20}, -\frac{7}{20}\bigr)$ & $\bigl(\frac{9}{20}, \frac{13}{10}\bigr)$ & $(ba, ba, b)$ & $\bigl(-\frac{9}{20}, \frac{9}{10}\bigr)$ & $\bigl(\frac{29}{20}, -\frac{9}{20}\bigr)$
\\
$(-3, -6)$ & $\bigl(-\frac{1}{4}, \frac{3}{10}\bigr)$ & $\bigl(\frac{5}{4}, -\frac{9}{20}\bigr)$ & $(7, -26)$ & $\bigl(-\frac{3}{4}, \frac{13}{10}\bigr)$ & $\bigl(\frac{7}{4}, -\frac{19}{20}\bigr)$ & $(-18, 9)$ & $\bigl(0, -\frac{9}{20}\bigr)$ & $\bigl(1, -\frac{9}{20}\bigr)$
\\
$(0, 0)$ & $\bigl(\frac{1}{5}, -\frac{9}{20}\bigr)$ & $\bigl(\frac{4}{5}, \frac{3}{20}\bigr)$ & $(0, 0)$ & $\bigl(\frac{1}{5}, -\frac{19}{20}\bigr)$ & $\bigl(\frac{4}{5}, -\frac{7}{20}\bigr)$ & $(0, 0)$ & $\bigl(\frac{9}{20}, -\frac{9}{20}\bigr)$ & $\bigl(\frac{11}{20}, \frac{9}{10}\bigr)$
\\\hline
$(ba, ba, ba)$ & $\bigl(-\frac{19}{20}, \frac{19}{10}\bigr)$ & $\bigl(\frac{39}{20}, -\frac{19}{20}\bigr)$ & $(ba, aba, e)$ & $\bigl(-\frac{1}{5}, \frac{13}{20}\bigr)$ & $\bigl(\frac{6}{5}, \frac{1}{20}\bigr)$ & $(ba, aba, aba)$ & $\bigl(-\frac{1}{5}, \frac{23}{20}\bigr)$ & $\bigl(\frac{6}{5}, \frac{11}{20}\bigr)$
\\
$(-38, 19)$ & $\bigl(0, -\frac{19}{20}\bigr)$ & $\bigl(1, -\frac{19}{20}\bigr)$ & $(-13, -1)$ & $\bigl(-\frac{1}{4}, \frac{1}{20}\bigr)$ & $\bigl(\frac{5}{4}, -\frac{7}{10}\bigr)$ & $(-23, -11)$ & $\bigl(-\frac{3}{4}, \frac{11}{20}\bigr)$ & $\bigl(\frac{7}{4}, -\frac{17}{10}\bigr)$
\\
$(0, 0)$ & $\bigl(\frac{19}{20}, -\frac{19}{20}\bigr)$ & $\bigl(\frac{1}{20}, \frac{19}{10}\bigr)$ & $(0, 0)$ & $\bigl(\frac{9}{20}, -\frac{7}{10}\bigr)$ & $\bigl(\frac{11}{20}, \frac{13}{20}\bigr)$ & $(0, 0)$ & $\bigl(\frac{19}{20}, -\frac{17}{10}\bigr)$ & $\bigl(\frac{1}{20}, \frac{23}{20}\bigr)$
\\\hline
$(aba, e, e)$ & $\bigl(0, -\frac{11}{20}\bigr)$ & $\bigl(1, -\frac{11}{20}\bigr)$ & $(aba, e, aba)$ & $\bigl(0, -\frac{1}{20}\bigr)$ & $\bigl(1, -\frac{1}{20}\bigr)$ & $(aba, a, a)$ & $\bigl(-\frac{3}{4}, \frac{19}{20}\bigr)$ & $\bigl(\frac{7}{4}, -\frac{13}{10}\bigr)$
\\
$\bigl(11, 11\bigr)$ & $\bigl(\frac{11}{20}, -\frac{11}{20}\bigr)$ & $\bigl(\frac{9}{20}, \frac{11}{10}\bigr)$ & $(1, 1)$ & $\bigl(\frac{1}{20}, -\frac{1}{20}\bigr)$ & $\bigl(\frac{19}{20}, \frac{1}{10}\bigr)$ & $(-19, 26)$ & $\bigl(\frac{11}{20}, -\frac{13}{10}\bigr)$ & $\bigl(\frac{9}{20}, \frac{7}{20}\bigr)$
\\
$(0, 0)$ & $\bigl(-\frac{11}{20}, \frac{11}{10}\bigr)$ & $\bigl(\frac{31}{20}, -\frac{11}{20}\bigr)$ & $(0, 0)$ & $\bigl(-\frac{1}{20}, \frac{1}{10}\bigr)$ & $\bigl(\frac{21}{20}, -\frac{1}{20}\bigr)$ & $(0, 0)$ & $\bigl(\frac{1}{5}, \frac{7}{20}\bigr)$ & $\bigl(\frac{4}{5}, \frac{19}{20}\bigr)$
\\\hline
$(aba, a, ab)$ & $\bigl(-\frac{1}{4}, \frac{9}{20}\bigr)$ & $\bigl(\frac{5}{4}, -\frac{3}{10}\bigr)$ & $(aba, b, b)$ & $\bigl(\frac{3}{4}, -\frac{13}{10}\bigr)$ & $\bigl(\frac{1}{4}, \frac{19}{20}\bigr)$ & $(aba, b, ba)$ & $\bigl(\frac{1}{4}, -\frac{3}{10}\bigr)$ & $\bigl(\frac{3}{4}, \frac{9}{20}\bigr)$
 \\
$(-9, 6)$ & $\bigl(\frac{1}{20}, -\frac{3}{10}\bigr)$ & $\bigl(\frac{19}{20}, -\frac{3}{20}\bigr)$ & $(26, -19)$ & $\bigl(-\frac{1}{5}, \frac{19}{20}\bigr)$ & $\bigl(\frac{6}{5}, \frac{7}{20}\bigr)$ & $(6, -9)$ & $\bigl(-\frac{1}{5}, \frac{9}{20}\bigr)$ & $\bigl(\frac{6}{5}, -\frac{3}{20}\bigr)$
\\
$(0, 0)$ & $\bigl(\frac{1}{5}, -\frac{3}{20}\bigr)$ & $\bigl(\frac{4}{5}, \frac{9}{20}\bigr)$ & $(0, 0)$ & $\bigl(-\frac{11}{20}, \frac{7}{20}\bigr)$ & $\bigl(\frac{31}{20}, -\frac{13}{10}\bigr)$ & $(0, 0)$ & $\bigl(-\frac{1}{20}, -\frac{3}{20}\bigr)$ & $\bigl(\frac{21}{20}, -\frac{3}{10}\bigr)$
\\\hline
$(aba, ab, a)$ & $\bigl(\frac{1}{4}, -\frac{1}{20}\bigr)$ & $\bigl(\frac{3}{4}, \frac{7}{10}\bigr)$ & $(aba, ab, ab)$ & $\bigl(\frac{3}{4}, -\frac{11}{20}\bigr)$ & $\bigl(\frac{1}{4}, \frac{17}{10}\bigr)$ & $(aba, ba, b)$ & $\bigl(-\frac{1}{4}, \frac{7}{10}\bigr)$ & $\bigl(\frac{5}{4}, -\frac{1}{20}\bigr)$
\\
$(1, -14)$ & $\bigl(-\frac{9}{20}, \frac{7}{10}\bigr)$ & $\bigl(\frac{29}{20}, -\frac{13}{20}\bigr)$ & $(11, -34)$ & $\bigl(-\frac{19}{20}, \frac{17}{10}\bigr)$ & $\bigl(\frac{39}{20}, -\frac{23}{20}\bigr)$ & $(-14, 1)$ & $\bigl(-\frac{1}{5}, -\frac{1}{20}\bigr)$ & $\bigl(\frac{6}{5}, -\frac{13}{20}\bigr)$
\\
$(0, 0)$ & $\bigl(\frac{1}{5}, -\frac{13}{20}\bigr)$ & $\bigl(\frac{4}{5}, -\frac{1}{20}\bigr)$ & $(0, 0)$ & $\bigl(\frac{1}{5}, -\frac{23}{20}\bigr)$ & $\bigl(\frac{4}{5}, -\frac{11}{20}\bigr)$ & $(0, 0)$ & $\bigl(\frac{9}{20}, -\frac{13}{20}\bigr)$ & $\bigl(\frac{11}{20}, \frac{7}{10}\bigr)$
\\\hline
$(aba, ba, ba)$ & $\bigl(-\frac{3}{4}, \frac{17}{10}\bigr)$ & $\bigl(\frac{7}{4}, -\frac{11}{20}\bigr)$ & $(aba, aba, e)$ & $\bigl(0, \frac{9}{20}\bigr)$ & $\bigl(1, \frac{9}{20}\bigr)$ & $(aba, aba, aba)$ & $\bigl(0, \frac{19}{20}\bigr)$ & $\bigl(1, \frac{19}{20}\bigr)$
\\
$(-34, 11)$ & $\bigl(-\frac{1}{5}, -\frac{11}{20}\bigr)$ & $\bigl(\frac{6}{5}, -\frac{23}{20}\bigr)$ & $(-9, -9)$ & $\bigl(-\frac{9}{20}, \frac{9}{20}\bigr)$ & $\bigl(\frac{29}{20}, -\frac{9}{10}\bigr)$ & $(-19, -19)$ & $\bigl(-\frac{19}{20}, \frac{19}{20}\bigr)$ & $\bigl(\frac{39}{20}, -\frac{19}{10}\bigr)$
\\
$(0, 0)$ & $\bigl(\frac{19}{20}, -\frac{23}{20}\bigr)$ & $\bigl(\frac{1}{20}, \frac{17}{10}\bigr)$ & $(0, 0)$ & $\bigl(\frac{9}{20}, -\frac{9}{10}\bigr)$ & $\bigl(\frac{11}{20}, \frac{9}{20}\bigr)$ & $(0, 0)$ & $\bigl(\frac{19}{20}, -\frac{19}{10}\bigr)$ & $\bigl(\frac{1}{20}, \frac{19}{20}\bigr)$ \\ \hline
\end{tabular}
 \caption{$\bal$ of $M\bigl( -1;\frac{1}{3},\frac{1}{2}, \frac{1}{4} \bigr)$ for each $\vec s$.} \label{tab:balinrangePseudo}
\end{table}

\begin{table}[th]\centering\small
\begin{tabular}{|lll|lll|lll|}\hline
$(e, e, e)$ & $\bigl(0, -\frac{5}{4}\bigr)$ & $\bigl(1, -\frac{5}{4}\bigr)$ & $(e, e, aba)$ & $\bigl(0, -\frac{3}{4}\bigr)$ & $\bigl(1, -\frac{3}{4}\bigr)$ & $\left(e, aba, e\right)$ & $\left(0, -\frac{1}{4}\right)$ & $\left(1, -\frac{1}{4}\right)$
\\
$(5, 5)$ & $\bigl(\frac{5}{4}, -\frac{5}{4}\bigr)$ & $\bigl(-\frac{1}{4}, \frac{5}{2}\bigr)$ & $(3, 3)$ & $\bigl(\frac{3}{4}, -\frac{3}{4}\bigr)$ & $\bigl(\frac{1}{4}, \frac{3}{2}\bigr)$ & $(1, 1)$ & $\bigl(\frac{1}{4}, -\frac{1}{4}\bigr)$ & $\bigl(\frac{3}{4}, \frac{1}{2}\bigr)$
\\
$(0, 0)$ & $\bigl(-\frac{5}{4}, \frac{5}{2}\bigr)$ & $\bigl(\frac{9}{4}, -\frac{5}{4}\bigr)$ & $(0, 0)$ & $\bigl(-\frac{3}{4}, \frac{3}{2}\bigr)$ & $\bigl(\frac{7}{4}, -\frac{3}{4}\bigr)$ & $(0, 0)$ & $\bigl(-\frac{1}{4}, \frac{1}{2}\bigr)$ & $\bigl(\frac{5}{4}, -\frac{1}{4}\bigr)$
\\\hline
$(e, aba, aba)$ & $\bigl(0, \frac{1}{4}\bigr)$ & $\bigl(1, \frac{1}{4}\bigr)$ & $(a, a, a)$ & $\bigl(-\frac{5}{4}, \frac{5}{4}\bigr)$ & $\bigl(\frac{9}{4}, -\frac{5}{2}\bigr)$ & $(a, a, ab)$ & $\bigl(-\frac{3}{4}, \frac{3}{4}\bigr)$ & $\bigl(\frac{7}{4}, -\frac{3}{2}\bigr)$
\\
$(-1, -1)$ & $\bigl(-\frac{1}{4}, \frac{1}{4}\bigr)$ & $\bigl(\frac{5}{4}, -\frac{1}{2}\bigr)$ & $(-5, 10)$ & $\bigl(\frac{5}{4}, -\frac{5}{2}\bigr)$ & $\bigl(-\frac{1}{4}, \frac{5}{4}\bigr)$ & $(-3, 6)$ & $\bigl(\frac{3}{4}, -\frac{3}{2}\bigr)$ & $\bigl(\frac{1}{4}, \frac{3}{4}\bigr)$
\\
$(0, 0)$ & $\bigl(\frac{1}{4}, -\frac{1}{2}\bigr)$ & $\bigl(\frac{3}{4}, \frac{1}{4}\bigr)$ & $(0, 0)$ & $\bigl(0, \frac{5}{4}\bigr)$ & $\bigl(1, \frac{5}{4}\bigr)$ & $(0, 0)$ & $\bigl(0, \frac{3}{4}\bigr)$ & $\bigl(1, \frac{3}{4}\bigr)$
\\\hline
$(a, ab, a)$ & $\bigl(-\frac{1}{4}, \frac{1}{4}\bigr)$ & $\bigl(\frac{5}{4}, -\frac{1}{2}\bigr)$ & $(a, ab, ab)$ & $\bigl(\frac{1}{4}, -\frac{1}{4}\bigr)$ & $\bigl(\frac{3}{4}, \frac{1}{2}\bigr)$ & $(b, b, b)$ & $\bigl(\frac{5}{4}, -\frac{5}{2}\bigr)$ & $\bigl(-\frac{1}{4}, \frac{5}{4}\bigr)$
\\
$(-1, 2)$ & $\bigl(\frac{1}{4}, -\frac{1}{2}\bigr)$ & $\bigl(\frac{3}{4}, \frac{1}{4}\bigr)$ & $(1, -2)$ & $\bigl(-\frac{1}{4}, \frac{1}{2}\bigr)$ & $\bigl(\frac{5}{4}, -\frac{1}{4}\bigr)$ & $(10, -5)$ & $\bigl(0, \frac{5}{4}\bigr)$ & $\bigl(1, \frac{5}{4}\bigr)$
\\
$(0, 0)$ & $\bigl(0, \frac{1}{4}\bigr)$ & $\bigl(1, \frac{1}{4}\bigr)$ & $(0, 0)$ & $\bigl(0, -\frac{1}{4}\bigr)$ & $\bigl(1, -\frac{1}{4}\bigr)$ & $(0, 0)$ & $\bigl(-\frac{5}{4}, \frac{5}{4}\bigr)$ & $\bigl(\frac{9}{4}, -\frac{5}{2}\bigr)$
\\\hline
$(b, b, ba)$ & $\bigl(\frac{3}{4}, -\frac{3}{2}\bigr)$ & $\bigl(\frac{1}{4}, \frac{3}{4}\bigr)$ & $(b, ba, b)$ & $\bigl(\frac{1}{4}, -\frac{1}{2}\bigr)$ & $\bigl(\frac{3}{4}, \frac{1}{4}\bigr)$ & $(b, ba, ba)$ & $\bigl(-\frac{1}{4}, \frac{1}{2}\bigr)$ & $\bigl(\frac{5}{4}, -\frac{1}{4}\bigr)$
\\
$(6, -3)$ & $\bigl(0, \frac{3}{4}\bigr)$ & $\bigl(1, \frac{3}{4}\bigr)$ & $(2, -1)$ & $\bigl(0, \frac{1}{4}\bigr)$ & $\bigl(1, \frac{1}{4}\bigr)$ & $(-2, 1)$ & $\bigl(0, -\frac{1}{4}\bigr)$ & $\bigl(1, -\frac{1}{4}\bigr)$
\\
$(0, 0)$ & $\bigl(-\frac{3}{4}, \frac{3}{4}\bigr)$ & $\bigl(\frac{7}{4}, -\frac{3}{2}\bigr)$ & $(0, 0)$ & $\bigl(-\frac{1}{4}, \frac{1}{4}\bigr)$ & $\bigl(\frac{5}{4}, -\frac{1}{2}\bigr)$ & $(0, 0)$ & $\bigl(\frac{1}{4}, -\frac{1}{4}\bigr)$ & $\bigl(\frac{3}{4}, \frac{1}{2}\bigr)$
\\\hline
$(ab, a, a)$ & $\bigl(-\frac{1}{4}, \frac{1}{4}\bigr)$ & $\bigl(\frac{5}{4}, -\frac{1}{2}\bigr)$ & $(ab, a, ab)$ & $\bigl(\frac{1}{4}, -\frac{1}{4}\bigr)$ & $\bigl(\frac{3}{4}, \frac{1}{2}\bigr)$ & $(ab, ab, a)$ & $\bigl(\frac{3}{4}, -\frac{3}{4}\bigr)$ & $\bigl(\frac{1}{4}, \frac{3}{2}\bigr)$
\\
$(-1, 2)$ & $\bigl(\frac{1}{4}, -\frac{1}{2}\bigr)$ & $\bigl(\frac{3}{4}, \frac{1}{4}\bigr)$ & $(1, -2)$ & $\bigl(-\frac{1}{4}, \frac{1}{2}\bigr)$ & $\bigl(\frac{5}{4}, -\frac{1}{4}\bigr)$ & $(3, -6)$ & $\bigl(-\frac{3}{4}, \frac{3}{2}\bigr)$ & $\bigl(\frac{7}{4}, -\frac{3}{4}\bigr)$
\\
$(0, 0)$ & $\bigl(0, \frac{1}{4}\bigr)$ & $\bigl(1, \frac{1}{4}\bigr)$ & $(0, 0)$ & $\bigl(0, -\frac{1}{4}\bigr)$ & $\bigl(1, -\frac{1}{4}\bigr)$ & $(0, 0)$ & $\bigl(0, -\frac{3}{4}\bigr)$ & $\bigl(1, -\frac{3}{4}\bigr)$
\\\hline
$(ab, ab, ab)$ & $\bigl(\frac{5}{4}, -\frac{5}{4}\bigr)$ & $\bigl(-\frac{1}{4}, \frac{5}{2}\bigr)$ & $(ba, b, b)$ & $\bigl(\frac{1}{4}, -\frac{1}{2}\bigr)$ & $\bigl(\frac{3}{4}, \frac{1}{4}\bigr)$ & $(ba, b, ba)$ & $\bigl(-\frac{1}{4}, \frac{1}{2}\bigr)$ & $\bigl(\frac{5}{4}, -\frac{1}{4}\bigr)$
\\
$(5, -10)$ & $\bigl(-\frac{5}{4}, \frac{5}{2}\bigr)$ & $\bigl(\frac{9}{4}, -\frac{5}{4}\bigr)$ & $(2, -1)$ & $\bigl(0, \frac{1}{4}\bigr)$ & $\bigl(1, \frac{1}{4}\bigr)$ & $(-2, 1)$ & $\bigl(0, -\frac{1}{4}\bigr)$ & $\bigl(1, -\frac{1}{4}\bigr)$
\\
$(0, 0)$ & $\bigl(0, -\frac{5}{4}\bigr)$ & $\bigl(1, -\frac{5}{4}\bigr)$ & $(0, 0)$ & $\bigl(-\frac{1}{4}, \frac{1}{4}\bigr)$ & $\bigl(\frac{5}{4}, -\frac{1}{2}\bigr)$ & $(0, 0)$ & $\bigl(\frac{1}{4}, -\frac{1}{4}\bigr)$ & $\bigl(\frac{3}{4}, \frac{1}{2}\bigr)$
\\\hline
$(ba, ba, b)$ & $\bigl(-\frac{3}{4}, \frac{3}{2}\bigr)$ & $\bigl(\frac{7}{4}, -\frac{3}{4}\bigr)$ & $(ba, ba, ba)$ & $\bigl(-\frac{5}{4}, \frac{5}{2}\bigr)$ & $\bigl(\frac{9}{4}, -\frac{5}{4}\bigr)$ & $(aba, e, e)$ & $\bigl(0, -\frac{1}{4}\bigr)$ & $\bigl(1, -\frac{1}{4}\bigr)$
\\
$(-6, 3)$ & $\bigl(0, -\frac{3}{4}\bigr)$ & $\bigl(1, -\frac{3}{4}\bigr)$ & $(-10, 5)$ & $\bigl(0, -\frac{5}{4}\bigr)$ & $\bigl(1, -\frac{5}{4}\bigr)$ & $(1, 1)$ & $\bigl(\frac{1}{4}, -\frac{1}{4}\bigr)$ & $\bigl(\frac{3}{4}, \frac{1}{2}\bigr)$
\\
$(0, 0)$ & $\bigl(\frac{3}{4}, -\frac{3}{4}\bigr)$ & $\bigl(\frac{1}{4}, \frac{3}{2}\bigr)$ & $(0, 0)$ & $\bigl(\frac{5}{4}, -\frac{5}{4}\bigr)$ & $\bigl(-\frac{1}{4}, \frac{5}{2}\bigr)$ & $(0, 0)$ & $\bigl(-\frac{1}{4}, \frac{1}{2}\bigr)$ & $\bigl(\frac{5}{4}, -\frac{1}{4}\bigr)$
\\\hline
$(aba, e, aba)$ & $\bigl(0, \frac{1}{4}\bigr)$ & $\bigl(1, \frac{1}{4}\bigr)$ & $(aba, aba, e)$ & $\bigl(0, \frac{3}{4}\bigr)$ & $\bigl(1, \frac{3}{4}\bigr)$ & $(aba, aba, aba)$ & $\bigl(0, \frac{5}{4}\bigr)$ & $\bigl(1, \frac{5}{4}\bigr)$
\\
$(-1, -1)$ & $\bigl(-\frac{1}{4}, \frac{1}{4}\bigr)$ & $\bigl(\frac{5}{4}, -\frac{1}{2}\bigr)$ & $(-3, -3)$ & $\bigl(-\frac{3}{4}, \frac{3}{4}\bigr)$ & $\bigl(\frac{7}{4}, -\frac{3}{2}\bigr)$ & $(-5, -5)$ & $\bigl(-\frac{5}{4}, \frac{5}{4}\bigr)$ & $\bigl(\frac{9}{4}, -\frac{5}{2}\bigr)$
 \\
$(0, 0)$ & $\bigl(\frac{1}{4}, -\frac{1}{2}\bigr)$ & $\bigl(\frac{3}{4}, \frac{1}{4}\bigr)$ & $(0, 0)$ & $\bigl(\frac{3}{4}, -\frac{3}{2}\bigr)$ & $\bigl(\frac{1}{4}, \frac{3}{4}\bigr)$ & $(0, 0)$ & $\bigl(\frac{5}{4}, -\frac{5}{2}\bigr)$ & $\bigl(-\frac{1}{4}, \frac{5}{4}\bigr)$
\\\hline
\end{tabular}
\caption{$\bal$ of $M\bigl( -2;\frac{1}{2},\frac{1}{2},\frac{3}{4} \bigr)$ for each $\vec s$.}\label{tab:balinrangeGeneral}
\end{table}

\end{landscape}

\subsection*{Acknowledgments} We would like to thank Boudewijn Bosch,
Francesca Ferrari and Sergei Gukov for fruitful discussions.
The authors would also like to thank the referees for their suggestions.
The work of M.C.\ is supported by ERC starting grant H2020 \# 640159 and
NWO vidi grant (number 016.Vidi.189.182), and the Ministry of Science and Technology of
Taiwan (110-2115-M-001-018-MY3). The work of I.C.\ is partly supported by the
ERC starting grant H2020 \# 640159 and
NWO vidi grant (number 016.Vidi.189.182). The work of D.P.\ is supported by the NWO vidi grant (number 016.Vidi.189.182). I.C.\ and D.P.\ would like to thank Academia Sinica for hospitality during the final stage of the project.

\pdfbookmark[1]{References}{ref}
\LastPageEnding

\end{document}